\newcommand{\mc}{\mathcal}
\def\tl{\mathop{tl}}
\def\tw{\mathop{tw}}
\newlength{\defbaselineskip}
\definecolor{darkgreen}{rgb}{0,0.6,0}
 \newtheorem{definition}{Definition}
 \newtheorem{theorem}{Theorem}
 \newtheorem{proposition}{Proposition}
\begin{document}

\title{%
Tree decompositions and social graphs
}

\author{
Aaron B. Adcock
\thanks{
Department of Electrical Engineering,
Stanford University,
Stanford, CA 94305.
Email: aadcock@stanford.edu
}
\and
Blair D. Sullivan
\thanks{
Department of Computer Science,
North Carolina State University,
Raleigh, NC 27695.
Email: blair\_sullivan@ncsu.edu
}
\and
Michael W. Mahoney
\thanks{
International Computer Science Institute 
and Department of Statistics,
University of California at Berkeley,
Berkeley, CA 94720.
Email:  mmahoney@stat.berkeley.edu
}
}

\date{}
\maketitle

\begin{abstract}
\noindent
Recent work has established that large informatics graphs such as  
social and information networks have non-trivial tree-like structure when 
viewed at moderate size scales.
Here, we present results from the first detailed empirical evaluation of the 
use of tree decomposition (TD) heuristics for structure identification and 
extraction in social graphs. 
Although TDs have historically been used in structural graph theory and 
scientific computing, we show that---even with existing TD heuristics 
developed for those very different areas---TD methods can identify 
interesting structure in a wide range of realistic informatics graphs.  
Our main contributions are the following: we show that TD methods can identify structures that 
correlate strongly with the core-periphery structure of realistic networks, 
even when using simple greedy heuristics; we show that the peripheral bags 
of these TDs correlate well with low-conductance communities (when they 
exist) found using local spectral computations; and we show that several 
types of large-scale ``ground-truth'' communities, defined by demographic
metadata on the nodes of the network, are well-localized in the large-scale 
and/or peripheral structures of the TDs.  
Our other main contributions are the following: we provide detailed empirical results for TD heuristics on toy and synthetic networks to establish a baseline to understand better the behavior of the heuristics on more complex real-world networks; and we prove a theorem providing formal justification for the intuition that the only two impediments to low-distortion hyperbolic embedding are high tree-width and long geodesic cycles.
Our results suggest future directions for improved TD heuristics that are more appropriate for realistic social graphs.
\end{abstract}

\section{Introduction}
\label{sec:intro}

Understanding the properties of realistic informatics graphs such as large 
social and information networks and developing algorithmic and statistical 
tools to analyze such graphs is of continuing interest, and recent work has 
focused on identifying and exploiting what may be termed tree-like 
structure in these real-world graphs.
Since an undirected graph is a tree if any two vertices are connected by 
exactly one path, or equivalently if the graph is connected but has no cycles, 
real-world graphs are clearly not trees in any na\"{\i}ve sense of the 
word.
For example, realistic social graphs have non-zero clustering coefficient, 
indicating an abundance of cycles of length three.
There are, however, more sophisticated notions that can be used to 
characterize the manner in which a graph may be viewed as tree-like.
These are of interest since, e.g., graphs that are trees have many nice 
algorithmic and statistical properties, and the hope is that graphs that are 
tree-like inherit some of these nice properties.
In particular, $\delta$-hyperbolicity is a notion from geometric group 
theory that quantifies a way in which a graph is tree-like in terms of its 
distance or metric properties. 
Alternatively, tree decompositions (TDs) are tools from structural graph 
theory that quantify a way in which a graph is tree-like in terms of its 
cut or partitioning properties. 

Although TDs and $\delta$-hyperbolicity capture very different ways in 
which a general graph can be tree-like, recent empirical work (described 
in more detail below) has shown interesting connections between them.
In particular, for realistic social and information networks, these two 
notions of tree-likeness capture very similar structural properties, at 
least when the graphs are viewed at large size scales; and this structure 
is closely related to what may be termed the nested core-periphery or 
$k$-core structure of these networks. 
Recent work has also shown that computing $\delta$-hyperbolicity exactly 
is extremely expensive, that hyperbolicity is quite brittle and difficult 
to work with for realistic social graphs, and that common methods to 
approximate $\delta$ provide only a very rough guide to its extremal value and 
associated graph properties.
Motivated by this, as well as by the large body of work in linear algebra 
and scientific computing on practical methods for computing TDs, in this 
paper we present results from the first detailed empirical evaluation of 
the use of TD heuristics for structure identification and extraction in 
social graphs. 

A TD (defined more precisely below) is a specialized mapping of an arbitrary input graph $G$ into a tree $H$, where the nodes of $H$ (called bags) consist of overlapping subsets of vertices of $G$. 
Quantities such as the treewidth---the size of the bag in $H$ that contains 
the largest number of vertices from $G$---can be used to characterize how 
tree-like is $G$.
A single bag that contains every vertex from $G$ is a legitimate but trivial TD (since the width is as large as possible for a graph of the given size). 
Thus, one usually focuses on finding ``better'' TDs, where  better typically means 
minimizing the width. 
The problem of finding the treewidth of $G$ and of finding
an optimal TD of $G$ are both NP-hard, and thus most effort has focused on
developing heuristics, e.g., by constructing the TD iteratively by 
choosing greedily vertices of $G$ that minimize quantities such as the degree 
or fill.
Since we are interested in applying TDs on realistic graphs, it is these 
heuristics (to be described in more detail below) that we will use in this 
paper.

Our goals are to describe the behavior of TD heuristics on real-world and synthetic social graphs and to use these TD tools to identify and extract useful structure from these graphs.
In particular, (in Section \ref{sec:real_results}) we show the following.
We first show (in Section~\ref{sec:real_results-core_periphery}) that TD 
methods can identify large-scale structures in realistic networks that 
correlate strongly with the recently-described core-periphery structure 
of these networks, even when using simple greedy TD heuristics.
We do this by relating the global ``core-periphery'' structure of these 
networks, as captured using the $k$-core decomposition, with what we call 
the ``central-perimeter'' structure, which is a measure of the centrality 
or eccentricity of each bag in the TD. 
We also describe how small-scale structures such as the internal bag 
structure of TDs of these networks reflects---depending on the density 
and other properties of these networks---their clustering coefficient and 
other related clustering properties of the original networks.
We next show (in Section~\ref{sec:real_results-ncp}) that the peripheral 
bags of these TDs correlate well with low-conductance communities/clusters 
(when they exist) found using local spectral computations, in the sense 
that these low-conductance (i.e., good-conductance) communities/clusters 
occupy a small number of peripheral bags in the TDs. 
In particular, this shows that in graphs for which the so-called Network
Community Profile (NCP) Plot is upward-sloping (as, e.g., described
in~\cite{LLDM09_communities_IM}, and indicating the presence of good small 
and absence of good large clusters), the small-scale ``dips'' in the NCP are 
localized in clusters that are on a peripheral branch in the TD. 
We finally consider (in Section~\ref{sec:real_results-gt}) how several types 
of large-scale ``ground-truth'' communities/clusters, as defined by 
demographic metadata on the nodes of the network (and that are not 
good-conductance clusters), are localized in the TDs.
In particular, we look at two social network graphs consisting of friendship 
edges between students at a university, we use metadata associated with 
graduation year and residence hall information, and we show that clusters 
defined by these metadata are well-localized in the large-scale central 
and/or small-scale peripheral structures of the TDs.  

A significant challenge in applying existing TD heuristics---which 
have been developed for very different applications in scientific computing 
and numerical linear algebra---is that it can be difficult to determine 
whether one is observing a ``real'' property of the networks or an artifact 
of the particular TD heuristic that was used to examine the network.
Thus, to establish a baseline and to determine their behavior in idealized settings, we have first applied several existing TD heuristics to a range of toy and synthetic data.
(See Section~\ref{sec:toy_results} and Section~\ref{sec:synth_results}, 
respectively.) 
The toy data consist of a binary tree, a lattice, a cycle, a clique, and a dense random graph, i.e., graphs for which optimal TDs are known.
The synthetic data consists of Erd\H{o}s-R\'{e}nyi and power law random graph 
models, which help us understand the effect of noise/randomness on the TDs.
(Other random graph models exhibit similar properties, when their parameters are set to correspondingly sparse values.)
For these graphs, we place a particular emphasis on the properties of the 
TDs as the density parameters (i.e., the connection probability for the 
Erd\H{o}s-R\'{e}nyi graphs and the power law parameter for the power law 
graphs) are varied from very sparse to extremely sparse, and we are 
interested in how this relates to the large-scale core-periphery structure.

Our detailed empirical results for TD heuristics on toy and synthetic networks are important for understanding the behavior of these 
heuristics on more complex real-world networks; but our results 
on synthetic and real-world networks also suggest future directions for the 
development of TD heuristics that are more appropriate for social graph data.
Existing TD heuristics focus on producing minimum-width TDs, which 
are of interest in more traditional graph theory and linear algebra 
applications, but they are not well-optimized for finding structures of 
interest in social graphs.
Although it is beyond the scope of this paper, the development of TD 
heuristics that are more appropriate for social graph applications (e.g., 
understanding how the bag structure of those TDs relates to the output of 
recently-developed local spectral methods that find good small clusters in large
networks) is an important question raised by our results.

The remainder of this paper is organized as follows.  
In Section \ref{sec:background}, we present definitions from graph theory, a detailed discussion of tree decompositions and the algorithms for their construction, and a brief discussion of other prior related work.
Section \ref{sec:data} details the datasets we make use of throughout the paper.
The subsequent four sections provide our main empirical results.
In particular, in Section~\ref{sec:toy_results}, we consider several TD
heuristics applied to toy graphs; and in 
Section \ref{sec:synth_results}, we consider TD heuristics applied to 
synthetic random graphs.
Then, in Section \ref{sec:real_results}, 
we describe the results of applying TDs to a carefully-chosen suite of real-world social graphs.
In Section~\ref{sec:theory_new}, we prove a theoretical result 
connecting treewidth and treelength with the (very different) notion of 
$\delta$-hyperbolicity,
under an assumption on the length of the longest 
geodesic cycle in the graph.
Finally, in Section~\ref{sec:conc}, 
we provide a brief discussion of results and conclusion.

\section{Background and Related Work}
\label{sec:background}

In this section, we will review relevant graph theory, TD ideas, and computational methods, as well as relevant related work.

\subsection{Preliminaries on Graph Theory}
\label{sec:background-prelim}

Let $G=(V,E)$ be a \emph{graph} with vertex set $V$ and edge set 
$E \subseteq V \times V$.  
We often refer to graphs as \emph{networks} and vertices as {\em nodes}, and we will model social and information networks by undirected graphs.
We note that TDs are themselves graphs (constructed from other input graphs).  
The \emph{degree} of a vertex $v$, denoted $d(v)$, is defined as the number 
of vertices that are adjacent to $v$ (or the sum of the weights of adjacent
edges, if the graph is weighted).
The average degree is denoted $\bar{d}$.  
A graph is called \emph{connected} if there exists a path between any two
vertices.  
A graph is called a \emph{tree} if it is connected and has no cycles.  
A vertex in a tree is called a \emph{leaf} if it has degree 1.  
A graph $H = (S,F)$ is a subgraph of $G$ if $S \subseteq V, F \subseteq E$.  
An \emph{induced subgraph} of $G$ on a set of vertices $S \subseteq V$ is 
the graph $G[S] := (S, S \times S \cap E)$.  
Unless otherwise specified, our analyses will always consider the 
\emph{giant component}, i.e., the largest connected subgraph of $G$.

The \emph{diameter} of a graph is the maximum distance between any two 
vertices, and the \emph{eccentricity} of a vertex is the maximum distance 
between that vertex and all other vertices in the graph.  Note that the 
maximum eccentricity of a graph is equal to the diameter.  
The \emph{clustering coefficient} of a vertex is the ratio of the number of 
edges present among its neighbors to the maximum possible number of such edges; 
when we refer to the clustering coefficient of a network, we use the average of the 
clustering coefficient of all its vertices. 
A \emph{cut} is a partitioning of a network's vertex set into two pieces.  
The \emph{volume} of a cut is the sum over vertices in the smaller piece
of the number of incident edges, and the \emph{surface area} 
of a cut is the number of edges with one end-point in each piece.  In this 
case, the \emph{conductance} of a cut---one of the most important measures 
for assessing the quality of a cut---is the surface area divided by the 
volume (that is, we will be following the conventions used in previous 
work~\cite{LLDM09_communities_IM,Jeub15}).

Finally, we will refer to the ``core-periphery'' 
structure of a network.  
Following prior work~\cite{LLDM09_communities_IM,Adcock13_icdm,Jeub15}, 
we use the \emph{$k$-core decomposition} to identify these core nodes.  
The $k$-core of a network $G$ is the maximal induced subgraph 
$H \subseteq G$ such that every node in $H$ has degree at least $k$.  
The $k$-core has the advantage of being easily computable in 
$O(V + E)$ time \cite{Batagelj02,Batagelj03,BZ11}.

\subsection{Preliminaries on Tree Decompositions}
\label{sec:tdprelim}

TDs are combinatorial objects that describe 
specialized mappings of cuts in a network to nodes of a tree. 
Although originally introduced in the context of structural graph 
theory (the proof of the Graph Minors 
Theorem~\cite{Robertson86}), TDs have gained attention in the 
broader community due to their use in efficient algorithms for certain 
NP-hard problems. 
In particular, there are polynomial-time algorithms for solving many such 
problems on all graphs that have TDs whose width (defined below) is bounded 
from above by a constant~\cite{Arnborg89,Bern87}. 
These algorithms have been applied to problems in constraint satisfaction, 
computational biology, linear algebra, probabilistic 
networks, and machine learning~\cite{KosterHK02, Lagergren96, Hicks05, ZhaoMC06, ZhaoCC07, Liu06, Lauritzen88, KS01, Chen04}.  

\begin{definition}
\label{def:td}
A \emph{tree decomposition (TD)} of a graph $G=(V,E)$ is a pair 
$$ \left ( X = \{X_i : i \in I \}, T = (I, F) \right ) ,$$ with each 
$X_i \subseteq V$, and $T$ a tree with the following properties: 
\begin{enumerate}[noitemsep,nolistsep]
\item
$\cup_{i \in I} X_i = V$, 
\item
For all $(v,w) \in E, \exists i \in I$ with $v,w \in X_i$, and
\item
For all $v \in V$, $\{ i\in I : v \in X_i\}$ forms a connected subtree of T. 
\end{enumerate}
The $X_i$ are called the \emph{bags} of the tree decomposition. 
\end{definition}

\noindent
The third condition of the definition is a continuity requirement that 
allows the TD to be used in dynamic programming algorithms for many NP-hard 
problems.%
\footnote{Alternatively, the bags and edges of the TD form separators 
(cuts) in the graph.  The set of vertices contained in any bag, or 
intersection of two adjacent bags, form a separator in $G$.  This 
structural property is important as it allows TDs to be thought of as 
a method of organizing cuts in a network.  This is also related to how 
the treewidth of a network is used to measure how tree-like a network 
is.  Intuitively, a tree has a treewidth of 1 because the graph can be 
separated by the removal of a single edge (or vertex) in the network, 
whereas a cycle requires two edges to be cut and thus has a treewidth 
of 2.  TDs with large widths require larger numbers of vertices to separate 
a network into two disconnected pieces.}
It is equivalent to requiring that for all $i,j,k \in I$, if $j$ is on the 
path from $i$ to $k$ in $T$, then $X_i \cap X_k \subseteq X_j$.%
\footnote{A related aspect of the definition of a TD 
is the overlapping nature of the bags of a TD. Vertices in the 
graph will appear in many bags in the TD.  
This is particularly true of high degree or high $k$-core nodes~\cite{Adcock13_icdm}.} 
The quality of a tree decomposition is often measured in 
terms of its largest bag size. 

\begin{definition} 
Let $\mathcal{T} = \left (\{X_i\}, T=(I,F) \right )$  be a tree 
decomposition of a graph $G$. 
The \emph{width} of $\mathcal{T}$ is defined to be 
$\max_{i \in I} |X_i| - 1$, and the \emph{treewidth} of $G$, denoted 
$tw(G)$, is the minimum width over all valid tree decompositions of $G$. 
A tree decomposition whose width is equal to the treewidth is often referred to as \emph{optimal}.
\end{definition}

\noindent
By this definition, trees have the minimum possible treewidth of $1$ (their 
bags contain the edges of the original tree and thus have size $2$); but, 
in contrast to $\delta$-hyperbolicity (see Section~\ref{sec:background-related}), an $n$-vertex clique is the least 
tree-like graph (attaining the maximum treewidth of $n-1$). 
In fact, the only valid TDs of a clique have all vertices in a single bag. 
Since TDs remain valid under taking subgraphs (once you delete any vertices 
no longer present), if $W$ is any 
complete subgraph of $G$, then every TD of $G$ has some bag that contains all the 
vertices of $W$~\cite{Bodlaender93}. 

Two other canonical examples (to which we will return in detail below) 
are the cycle and the grid, which have vastly differing treewidths. 
All cycles (regardless of the number of vertices) have treewidth $2$ (see 
Figure~\ref{fig:td_example} below). 
The $n \times n$ planar grid, on the other hand, has treewidth $n$, and thus 
it is not tree-like by this measure. 
Grids are particularly noteworthy in the discussion of TDs due to a result 
(described in more detail below) showing that they are essentially the only 
obstruction to having bounded treewidth.%
\footnote{In particular, in the so-called Grid Minor Theorem, Robertson and 
Seymour showed that every graph of treewidth at least $k$ contains a 
$f(k) \times f(k)$ grid as a graph minor, for some integer-valued function $f$.
The original estimate of the function $f$ gave an exponential relationship 
between the treewidth and the grid size, and although several results 
greatly improved the relationship, the question of whether or not it held 
for any polynomial function $f$ remained open for over 25 years. 
Recently, Chekuri and Chuzhoy proved that there is a universal constant 
$\delta > 0$ so that all graphs of treewidth at least $k$ have a grid-minor 
of size $\Omega(k^\delta) \times \Omega(k^\delta)$~\cite{Chekuri14}, 
resolving this conjecture.}

Finding a TD for a given graph whose width is minimal (equal to the 
treewidth) is an NP-hard problem~\cite{Seymour94,Hicks05}. 
Most methods (including those discussed here) for 
constructing TDs were designed to minimize width, as most 
prior work focused on using these structures to reduce computational cost 
for an algorithm/application.%
\footnote{In the context of understanding the intermediate-scale 
structure of real networks and improving inference (e.g., link 
prediction, overlapping community detection, etc.), there are likely more 
appropriate objective functions, although their general identification 
and development is left as future work.} 
Also, although treewidth is a graph invariant, TDs of a network are not 
unique, even under the condition of having minimum width. 
See, e.g., Figure \ref{fig:td_example} below, which shows several distinct 
minimum width TDs of a cycle.

Finally, although it is not standard, we will abuse the term \emph{width} to 
apply it directly to a bag of a tree decomposition (in which case, it takes the value 
of the cardinality of the set minus one), so that we can talk about the 
\emph{maximum width} (which is the equivalent to the usual definition of width), and 
\emph{median width} of a decomposition (which is the median of the widths of the bags).
We will use the term \emph{center} to refer to 
the bag (or bags) associated with the node(s) of minimum eccentricity in the tree underlying a TD, 
and the term \emph{perimeter} for bags associated with nodes of relatively high eccentricity.  
We do this to help provide a framework for discussing the connection in many social and information networks 
between the core (resp. periphery) of the network and the central (resp. perimeter) bags of its 
TD computed with certain heuristics. Note that by definition, a tree will have at most two bags at its center.

\subsection{Constructing Tree Decompositions}
\label{sec:constructing_TD}

Here, we give a brief overview of existing algorithms for constructing 
TDs; more comprehensive surveys can be found in~\cite{Hicks05,Bodlaender10}. 
The algorithms for finding low-width TDs are generally divided into two 
classes: ``theoretical'' and ``computational.''
The former category includes, for example, the linear algorithm of 
Bodlaender~\cite{Bodlaender96}, which checks if a TD of width at most $k$ 
exists (for a fixed constant $k$), and the approximation algorithms of 
Amir~\cite{Amir10}.
These are generally considered (practically) intractable
due to very large hidden constants and complexity of implementation---e.g., 
Bodlaender's algorithm was shown by R\"ohrig~\cite{Rohrig98} to have too 
high a computational cost even when $k = 4$.  
The approximation algorithms of Amir have been tested on graphs with up to 
several hundred vertices, but they require hours of running time even at this 
size scale. 
There has also been work on exact algorithms, the most computational of 
which is perhaps the QuickTree algorithm of Shoikhet and Geiger, which was 
tested on graphs with up to about 100 vertices and treewidth 11~\cite{ShoikhetG97}.
Thus, in practice, most computational work requires the use of heuristic 
approaches (i.e., those which offer no worst-case guarantee on their maximum 
deviation from optimality). 
Since we are interested in applying TDs to real network data, we will focus 
on these ``practical'' algorithms in the remainder of this paper.
We used INDDGO~\cite{Groer12,inddgo}, an 
open source software suite for computing TDs and numerous graph and TD 
parameters.

\subsubsection{Chordal Graph Decomposition}

A common method for constructing TDs is based on 
algorithms for decomposing {\em chordal graphs}. 

\begin{definition} 
A graph $G$ is \emph{chordal} if it has no induced cycles of length greater 
than three (equivalently, every cycle in $G$ with length at least four, has 
a chord).
\end{definition}

\noindent
Chordal graphs are characterized by the existence of an 
ordering  $\pi = (v_1, \ldots, v_n)$ of their vertices so that for each 
$v_i$, the set of its neighbors $v_j$ with $j > i$ form a clique. 
This is a \emph{perfect elimination ordering}, and it gives 
a straightforward construction for a TD (also called the clique-tree) of a 
chordal graph, with bags consisting of the sets of higher-indexed neighbors 
of each vertex.

For a general graph $G$, one common approach for finding TDs is to first 
find a chordal graph $H$ containing $G$, then use the associated TD (since, 
as mentioned earlier, TDs remain valid for all subgraphs on the same 
vertex set).
The typical approach is via {\em triangulation}, a process that uses a 
permutation of the vertex set (called the {\em elimination ordering}) to 
guide the addition of edges, which are referred to as \emph{fill} edges. 
Chordal completions are not unique.  
For example, the complete graph formed on the vertices of $G$ is chordal and 
contains $G$ (although, it is a trivial or the ``worst'' 
triangulation, in the sense that it has the most fill edges and largest 
possible clique subgraph among all triangulations). 

We will use the notation $G^+_\pi$ to denote the triangulation of $G$ using 
ordering $\pi$.  An outline of the process is given in 
Algorithm~\ref{alg:triangulate}. 
The process for finding a TD $T_\pi$ using an elimination order $\pi$ and 
Gavril's algorithm (\cite{Gavril74}) for decomposing chordal graphs is given 
in Algorithm~\ref{alg:gavril}. 
We may refer to the {\em width} of an ordering, by which we mean 
the treewidth of the chordal graph $G^+_\pi$. 
The literature includes several slight variants on Gavril's construction 
routine (such as Algorithm 2 in~\cite{Bodlaender10}), but the overall process 
and width of the TD produced is the same for~each. 

\begin{algorithm}[ht!]
\caption{Triangulate a graph $G$ into a chordal graph $G^+_\pi$}
\label{alg:triangulate}
\begin{algorithmic}[1]
\Statex \textsc{Input:} Graph $G=(V,E)$, and $\pi = (v_1, \ldots, v_n)$, a permutation of $V$ 
\Statex \textsc{Output:} Chordal graph $G^+_\pi \supseteq G$, for which $\pi$ is a perfect elimination ordering 
\State Initialize $G^+_\pi= (V',E')$ with $V' = V$ and $E' = E$
\For{$i = 1$ to $n$} 
\State Let $N_i = \{v_j\, |\, j > i \text{ and } (v_i, v_j) \in E\} $
\For{$\{x,y\}  \subseteq N_i$} 
\If{$x \neq y$ and $(x,y) \not\in N_i$} 
\State $E' = E' \cup \{(x,y)\}$
\EndIf
\EndFor
\EndFor 
\State \textbf{return}  $G^+_\pi$
\end{algorithmic}
\end{algorithm}

\begin{algorithm}[ht!]
\caption{Construct a TD $T_\pi$ of a graph $G$ using elimination ordering $\pi$ and Gavril's algorithm}
\label{alg:gavril}
\begin{algorithmic}[1]
\Statex \textsc{Input:} Graph $G=(V,E)$, $\pi$ a permutation of $V$
\Statex \textsc{Output:} TD $T_\pi=(X, (I,F))$ with $(I,F)$ a tree, and bags $X = \{X_i\}$, $X_i \subseteq V$ 
\State Initialize $T = (X, (I,F))$ with $X = I = F = \emptyset$, $n=|V|$
\State Create an empty $n$-long array $t[]$
\State Use Algorithm 1 to create a triangulation $G^+_{\pi}$ using $\pi$.
\State Let $k = 1$, $I = \{1\}$, $X_1 = \{\pi_n\}$, $t[\pi_n] = 1$
\For{$i = n-1$ to $1$}
\State Find $B_i = \{\text{neighbors of }\pi_i  \text{ in } G^+_\pi\} \cap \{\pi_{i+1}, \ldots, \pi_n\}$
\State Find $m = j$ such that $j \leq k$ for all $\pi_k \in B_i$
\If{$B_i = X_{t[m]}$}
\State $X_{t[m]} = X_{t[m]} \cup \{\pi_i\}$; $t[\pi_i] = t[m]$
\Else
\State $k=k+1$
\State $I = I \cup \{k\}$; $X_{k} = B_i \cup \{\pi_i\}$
\State $F = F \cup \{(k, t[m])\}$; $t[\pi_i] = k$
\EndIf
\EndFor
\State \textbf{return}  $T_\pi = (X, (I,F))$
\end{algorithmic}
\end{algorithm}

Perhaps surprisingly, there always exists some elimination ordering which 
produces an optimal TD (one of minimum width), and this may co-occur with 
high fill. 
The following theorem (see~\cite{Bodlaender10})
presents the connections between treewidth, triangulations, and elimination 
orderings. 

\begin{theorem}~\cite{Bodlaender10}
Let $G = (V,E)$ be a graph, and let $k \le n$ be a non-negative integer.  
Then the following are equivalent.
\begin{enumerate}[noitemsep]
\item $G$ has treewidth at most $k$.
\item $G$ has a triangulation $H$ s.t. any complete subgraph of $H$ (clique) has at most $k+1$ vertices.
\item There is an elimination ordering $\pi$, such that $G^+_\pi$ does not contain any clique on $k+2$ vertices as a subgraph.
\item There is an elimination ordering $\pi$, such that no vertex $v \in V$ has more than $k$ neighbors in
$G^+_\pi$ which occur later in $\pi$.
\end{enumerate}
\end{theorem}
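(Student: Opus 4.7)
The plan is to prove the four-way equivalence by establishing the cycle of implications $(1)\Rightarrow(2)\Rightarrow(3)\Rightarrow(4)\Rightarrow(1)$. This ordering is natural because each step exploits only the most immediate consequence of the preceding condition, and the final implication can be handled constructively using Gavril's procedure (Algorithm~\ref{alg:gavril}) already described in the paper.

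For $(1)\Rightarrow(2)$, given a TD $\mathcal{T}$ of $G$ of width at most $k$, I would build $H$ by adding an edge between every pair of vertices that co-occur in some bag of $\mathcal{T}$. The key lemma to invoke is the standard fact that every clique of $H$ must lie entirely inside a single bag (a consequence of the running-intersection property together with Helly's property on subtrees of a tree); since each bag has at most $k+1$ vertices, the clique number of $H$ is at most $k+1$. To show $H$ is chordal, I would take any cycle of length $\ge 4$ in $H$ and use the subtree-intersection axiom to locate a vertex whose bag-subtree separates two others along the cycle, forcing a chord. For $(2)\Rightarrow(3)$, a chordal graph $H$ admits a perfect elimination ordering $\pi$; I would use this same $\pi$ as an elimination order for $G$. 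The crucial observation is that because $\pi$ is a PEO of $H$ and $E(G)\subseteq E(H)$, every fill edge Algorithm~\ref{alg:triangulate} can add to $G$ already lies in $H$, so $G^+_\pi\subseteq H$ and hence its clique number is also at most $k+1$, ruling out a $(k+2)$-clique.

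For $(3)\Rightarrow(4)$, I would use the fact that $\pi$ is a PEO of $G^+_\pi$ by construction of Algorithm~\ref{alg:triangulate}: the later neighborhood $N_i$ of $v_i$ in $G^+_\pi$ is always a clique. If $|N_i|\ge k+1$, then $N_i\cup\{v_i\}$ would be a $(k+2)$-clique, contradicting~(3); thus $|N_i|\le k$, which is exactly~(4). For $(4)\Rightarrow(1)$, I would invoke Algorithm~\ref{alg:gavril} directly: the bag associated to $v_i$ is $B_i\cup\{\pi_i\}$, and condition~(4) bounds $|B_i|\le k$, so every bag has size at most $k+1$ and the width is at most $k$. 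The remaining work is to verify that the output of Algorithm~\ref{alg:gavril} genuinely satisfies the three axioms of Definition~\ref{def:td}, in particular the running-intersection condition; I would do this by induction on $i$ descending from $n$ to $1$, using the fact that when $\pi_i$ is eliminated, it is attached as a new bag to the bag containing its lowest-indexed later neighbor, and the PEO property prevents any vertex from ``reappearing'' after its own elimination.

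The main obstacle I anticipate is the chordality argument in $(1)\Rightarrow(2)$ and the parallel verification of the subtree-intersection property in $(4)\Rightarrow(1)$. Both rest on the same structural insight --- that co-occurrence in bags of a TD encodes a consistent subtree assignment to each vertex --- but carrying this out cleanly requires committing to a particular formulation (subtree-intersection, Helly on trees, or equivalently the path version $X_i\cap X_k\subseteq X_j$ for $j$ on the path between $i$ and $k$) and sticking with it throughout. The other implications reduce to short combinatorial checks once the correct invariant is in place.
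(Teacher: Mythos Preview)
The paper does not prove this theorem: it is stated with a citation to~\cite{Bodlaender10} and no proof is given, so there is nothing in the paper to compare your proposal against. Your cycle $(1)\Rightarrow(2)\Rightarrow(3)\Rightarrow(4)\Rightarrow(1)$ is the standard argument and is correct as outlined; in particular, the key steps you flag as potential obstacles (chordality of the bag-completion in $(1)\Rightarrow(2)$ via the Helly property of subtrees, and validity of Gavril's output in $(4)\Rightarrow(1)$) are exactly the right places to do the work, and both go through by the subtree-intersection formulation you describe.
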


\noindent
Thus, if one can produce a ``good'' elimination ordering (i.e.,
one with a small maximum clique), it is easy to construct a low-width TD,
and such an ordering always exists if the treewidth is bounded. 
The intuition behind fill-reducing orderings to minimize width follows 
from the idea that in order to produce a large clique that wasn't already in 
the network, one ``should'' have to add many fill edges.

\subsubsection{Ordering Heuristics}

Here, we describe the landscape of heuristics for creating 
elimination orderings, focusing on those used in our 
empirical evaluations. 
A more detailed analysis of heuristics as well as theoretical connections between chordal graphs and TDs is available~\cite{Bodlaender10}.
The space of all possible elimination orderings is $O(n!)$ 
for a graph on $n$ vertices, making it impractical to search using brute 
force techniques. 
One possibility for exploring the space is to apply a stochastic local 
search approach like simulated annealing, 
but since this is relatively slow, it is 
not common in practice.  

The first class of specialized methods are known as triangulation 
recognition heuristics, which include lexicographic breadth-first-search 
(\textsc{lex-m}) and maximum cardinality search 
(\textsc{mcs})~\cite{Bodlaender10,Rose75,Berry02,Berry04,RoseTarjanLueker1976,TarjanYannakakis1984,Tarjan85}. 
These methods are guaranteed to provide a perfect elimination ordering for 
chordal graphs, so many believed they would produce low-fill and/or 
low-width orderings for more general graphs.  
In~\cite{Bodlaender10}, the authors report good results with respect to 
width when using these methods on graphs which are already chordal or have 
regular structures, but poor results compared to the greedy heuristics when 
even small amounts of randomness is added to the network. 
Further empirical evaluation in \cite{Groer12} supports these claims. 
Additionally, these heuristics are too computationally expensive to run on 
very large graphs.

A large set of additional heuristics uses the idea of splitting the graph 
(using a small separator), recursively decomposing the resulting pieces, 
and then ``gluing'' the solutions into a single TD~\cite{Amir10, BeckerG01, Bodlaender95,Bouchitte04,Lagergren96,Reed92}.
To quote Bodlaender and Koster~\cite{Bodlaender10}, ``they are significantly 
more complex, significantly slower, and often give bounds that are higher 
than those of simple algorithms.'' 
We do, however, use a related approach that finds a set of nested graph 
partitions, but instead of decomposing the resulting pieces, it places the 
separators into an elimination ordering. 
This approach is called \emph{nested dissection} \cite{George73, Gilbert86}, 
and it is quite popular for computing fill-reducing orderings for sparse 
matrices in numerical linear algebra. 
The algorithm recursively finds a small vertex separator (bisector) in a 
graph, and it ensures that in the resulting elimination ordering, the vertices 
in the two components formed by the bisection all appear before the vertices in 
the separator. 
We use the ``node nested dissection'' algorithm implemented in 
\textsc{METIS}~\cite{karypis98_metis} (called through INDDGO), and we refer 
to this heuristic as \textsc{metnnd}. 
In \textsc{METIS}, the recursion is stopped when the components are smaller 
than a certain size, and some version of minimum degree ordering is then 
applied to the remaining pieces. 
The software ``grows'' each bisection using a greedy node-based 
strategy. 
Since the algorithm is searching for bisections, there is a 
tunable ``balance'' condition (determining how close to 50/50 the split 
needs to be), although for all computations reported in this paper, we left 
the parameter at its default value.

Perhaps the most popular class of elimination ordering routines are 
\emph{greedy} heuristics, named because they make greedy 
decisions to pick the subsequent node in the elimination ordering.  
There are innumerable variations, but the most common use two 
basic concepts:  
choosing a vertex to minimize \emph{fill} (how many new edges will be added 
to the graph if a vertex is chosen to be next in the ordering); or choosing 
a vertex of minimum degree (low-degree vertices have small neighborhoods, 
which also limits the potential fill).  
When applied in their most rudimentary forms, these are the 
\textsc{mindeg}~\cite{Markowitz57}  and \textsc{minfill} orderings. 
Both of these indirectly limit the size of cliques produced in 
the final triangulation (although they were originally designed to minimize 
the number of fill edges added during the triangulation, a quantity which is 
not always correlated).  
For additional heuristics combining these strategies and incorporating 
additional local information, see \cite{Bodlaender10}. 

Even though keeping updated vertex degrees for \textsc{mindeg} during 
triangulation (greedy orderings make their decisions based on a partially 
triangulated graph at each step) is significantly less computationally 
intensive than computing current vertex fills for \textsc{minfill}, there have 
been efforts to reduce further the complexity.  
In particular, the approximate minimum degree or \textsc{amd} 
heuristic~\cite{Amestoy04, Amestoy96}.  
This heuristic computes an upper bound on the degree of each node in each 
pass using techniques based on the quotient graph for matrix factorization, 
and it has been shown to be significantly faster and of similar quality (in 
terms of fill and width minimization). 
We use \textsc{amd} interchangeably with the traditional \textsc{mindeg}, 
especially on larger networks.

\subsection{Additional Related Work}
\label{sec:background-related}

For completeness, we provide here a brief overview of the large body of 
additional related work.  
As already mentioned, TDs played an important role in the proving of the 
graph minor theorem \cite{Robertson86}, but they have also become popular in 
theoretical computer science, as many NP-hard optimization problems have a 
polynomial time algorithm for graphs with bounded treewidth \cite{Arnborg89}.
In addition, bounding the treewidth of the underlying graph of probabilistic 
graphical models allows for fast inference computations \cite{Koller09}.
Additional overviews of TDs and their uses in discrete optimization are 
available \cite{Bod93,Bod05,Bod06,BK07}; and one can also learn more about 
the uses of these methods in numerical linear algebra and sparse matrix 
computations \cite{BP93}, as well as connections with triangulation methods: 
triangulation of minimum treewidth \cite{Amir01}, empirical work on 
treewidth computations \cite{koster01}, the minimum degree heuristic and 
connections with triangulation \cite{Berry03}, and a survey of triangulation 
methods \cite{Heg06}.
Finally, the treewidth of random graphs for various parameter settings 
has been studied~\cite{WLCX11,Gao12}.

A different notion of tree-likeness is provided by $\delta$-hyperbolicity.%
\footnote{Our prior work focused on the use of $\delta$-hyperbolicity~\cite{Adcock13_openmp,Adcock13_icdm,Adcock14_diss}.  It can be a useful tool for describing and analyzing real networks, even though it is expensive to compute, but aside from our theoretical result in Section~\ref{sec:theory_new} relating it to treewidth and treelength, it is not our focus in this paper.}
Early more mathematical work did not consider graphs and 
networks \cite{Gro87,Alonso90}, but more recent more applied work 
has \cite{Jonckheere08,Jonckheere11,CFHM13_IM}.
Computing $\delta$ exactly is very expensive \cite{Adcock13_openmp}, and 
sampling-based methods to approximate it provide only a very rough guide to 
its value and properties \cite{Adcock13_icdm}. 
For many references on $\delta$-hyperbolicity in network analysis, 
see~\cite{CFHM13_IM} (and the more recent paper~\cite{VS14}) and references 
therein.
There has been work on trying to relate hyperbolicity and 
TD-based ideas, often going beyond treewidth to consider other metrics such 
as treelength or chordality or the expansion properties of the graph~\cite{BKM01,WZ11,Dourisboure05,lokshtanov,GM09,KLNS12,Dra13,Ata16,abuata_diss}.

Recent work in network analysis and community structure analysis has pointed 
to some sort of ``core-periphery'' structure in many real
networks~\cite{LLDM09_communities_IM, ST08, Rombach14, Adcock13_icdm, Jeub15}; 
and recently this has been related to the $k$-core decomposition---see, 
e.g.,~\cite{Adcock13_icdm} and references therein.
The $k$-core decomposition is of interest more generally, and additional 
references for $k$-core decompositions, including their use in visualization 
and in larger-scale applications, include 
\cite{Sei83,Alvarez-H05,Alvarez-H08,HJMA08,Pajek03,CKCO11,Colomer13}.
Questions of well-connected or expander-like cores are of particular 
interest in applications having to do with diffusion processes, influential 
spreaders, and related questions of social contagion~\cite{Kitsak10, Ugander12}.

There are a few other papers that have used TDs to investigate the structural 
properties of social and information networks: e.g., to look at the 
tree-likeness of internet latency and bandwidth~\cite{RMKBGA09}; to compare 
hyperbolicity and treewidth on internet networks~\cite{MSL11}; and 
to examine the relationship between hyperbolicity, treewidth, and the 
core-periphery structure in a much wider range of social and information 
networks~\cite{Adcock13_icdm}.  
In particular, \cite{MSL11} concludes that the hyperbolicity is 
small in the networks they examined but the treewidth is relatively large, 
presumably due to a highly connected core; and \cite{Adcock13_icdm} 
concludes that many real social and information networks do have a tree-like 
structure, with respect to both metric-based hyperbolicity and (in spite of 
the large treewidth) the cut-based TDs, that corresponds to the 
core-periphery structure of the network.  
Finally, very recently we became aware of~\cite{MAIK14} and~\cite{CM93}.

\section{Network Datasets}
\label{sec:data}

We have examined the empirical performance of existing TD heuristics on a 
broad set of real-world social and information networks as well as a large corpus of synthetic graphs.  
The real-world networks have been chosen to be representative of a broad 
range of networks, as analyzed in prior
work~\cite{LLDM09_communities_IM,Adcock13_icdm,Jeub15}, and the synthetic 
graphs have been chosen to illustrate the behavior of TD methods in 
controlled settings.  
See Table~\ref{tbl:networks-basic-stats} for a summary of the networks we
have considered.  
The real-world graphs are connected, but we are interested in 
parameter values for the synthetic graphs which might cause the instances 
to be \emph{disconnected}.  In these cases, we work with the giant component, and the statistics 
in Table~\ref{tbl:networks-basic-stats} are for this connected subgraph.

\begin{table}[!htb]
\begin{center}
{\footnotesize
\begin{tabular}{l|r|r|r|r|r|r|}
Network & $n_c$ & $k_{l}$ & $k_{m}$ & $\bar{d}$ & $\bar{C}$ & $D$  \\
\hline \hline
\multicolumn{7}{l}{ER Random Graphs} \\ 
\hline \hline
\textsc{ER(1.6)} & 3210 & 1 & 2 & 2.16 & 0.00 & 38  \\
\textsc{ER(1.8)} & 3617 & 1 & 2 & 2.28 & 9.30 $\times 10^{-4}$ & 34 \\
\textsc{ER(2)}   & 4001 & 1 & 2 & 2.39 & 9.11 $\times 10^{-4}$ & 30 \\
\textsc{ER(4)}   & 4879 & 1 & 3 & 4.05 & 8.96 $\times 10^{-3}$ & 15 \\
\textsc{ER(8)}   & 4998 & 1 & 5 & 8.04 & 1.59 $\times 10^{-3}$ & 7 \\
\textsc{ER(16)}  & 5000 & 4 & 11 & 16.1 & 3.13 $\times 10^{-3}$ & 5 \\
\textsc{ER(32)}  & 5000 & 7 & 23 & 32.1 & 6.39 $\times 10^{-3}$ & 4 \\
\hline \hline
\multicolumn{7}{l}{PL Random Graphs} \\ 
\hline \hline
\textsc{PL(2.50)}  & 4895 & 1 & 4 & 2.78 & 2.46 $\times 10^{-3}$ & 18 \\
\textsc{PL(2.75)}  & 4650 & 1 & 2 & 2.43 & 6.99 $\times 10^{-4}$ & 22 \\
\textsc{PL(3.00)}  & 4071 & 1 & 2 & 2.24 & 1.18 $\times 10^{-3}$ & 29 \\
\hline \hline
\multicolumn{7}{l}{SNAP Social Graphs} \\ 
\hline \hline
\textsc{CA-GrQc}     &  4158 & 1 & 43 &  6.46 & .665  & 17 \\
\textsc{CA-AstroPh}  & 17903 & 1 & 56 & 22.0  & .669  & 14 \\
\textsc{as20000102}  &  6474 & 1 & 12 &  3.88 & .399  &  9 \\
\textsc{Gnutella09}  &  8104 & 1 & 10 &  6.42 & .0137 & 10 \\
\textsc{Email-Enron} & 33696 & 1 & 43& 10.7  & .708  & 13 \\
\hline \hline
\multicolumn{7}{l}{FB Social Graphs} \\ 
\hline \hline
\textsc{FB-Caltech}  &  762 & 1 & 35 & 43.7 & .426 & 6 \\
\textsc{FB-Haverford} & 1446 & 1 & 63 & 82.4 & .327 & 6\\
\textsc{FB-Lehigh}   &  5073 & 1 & 62 & 78.2 & .270 & 6 \\
\textsc{FB-Rice}     & 4083 & 1 & 72 & 90.5 & .300 & 6 \\
\textsc{FB-Stanford} & 11586 & 1 & 91 & 98.1 & .252 & 9 \\
\hline \hline
\multicolumn{7}{l}{Miscellaneous Graphs} \\ 
\hline \hline
\textsc{PowerGrid}   &  4941 & 1 & 5 & 2.67 & 0.107 & 46 \\
\textsc{Polblogs}    &  1222 & 1 & 36 & 27.4 & 0.360 &  8 \\
\textsc{PlanarGrid}  &  2500 & 2 & 2 & 3.92 & 0.00  & 98 \\
\textsc{road-TX} & 1379917 & 1 & 3 & 1.39 & 0.0209 & 1054 \\
\textsc{web-Stanford} & 281903 & 1 & 71 & 8.20 & $2.89 \times 10^{-3}$ & 674 \\
\end{tabular}
}
\caption{
Statistics of analyzed networks: nodes in giant component $n_c$; 
$k_l$ the lowest $k$-core;
$k_m$ the maximum $k$-core; 
average degree $\bar{d}=2E/N$;  
average clustering coefficient $\bar{C}$; 
and diameter~$D$.
}
\label{tbl:networks-basic-stats}
\end{center}
\end{table}

\textbf{Erd\H{o}s-R\'{e}nyi (\textsc{ER}) graphs.}  
Although \textsc{ER} graphs are often criticized for their inability to model
pertinent properties of realistic networks, \emph{extremely} sparse
\textsc{ER} graphs have several structural inhomogeneities that are 
important for understanding tree-like structure in realistic
networks~\cite{Adcock13_icdm}.%
\footnote{The same is true for many other less unrealistic random graph 
models, assuming their parameters are set to analogously sparse values (which they are often not).}  
In particular, in the
extremely sparse regime of $1/n<p<\log(n)/n$, \textsc{ER} graphs are
(w.h.p.) not even fully connected; \textsc{ER} graphs in this regime
have an upward-trending NCP (network community
profile)~\cite{LLDM09_communities_IM}; with respect to their $k$-core
structure, a shallow (but non-trivial) core-periphery structure
emerges~\cite{Adcock13_icdm, Percus08}; and with respect to their
metric properties (as measured with $\delta$-hyperbolicity), graphs in
this regime have non-trivial tree-like properties~\cite{Adcock13_icdm}.
Following previous work~\cite{Adcock13_icdm}, we set the target number
of vertices to $n=5000$, and we choose $p=\frac{d}{n}$ for various values
of $d$ from $d=1.6$ to $d=32$.  We denote these networks using
\textsc{ER($d$)}.  Table~\ref{tbl:networks-basic-stats} clearly shows
that, as a function of \emph{increasing} $d$, i.e., increasing $p$,
the size of the giant component increases to $5000$, the number of
edges increases dramatically, the clustering coefficient remains close
to zero, the average degree $\bar{d}$ increases, and the diameter
decreases dramatically.

\textbf{Power Law (PL) graphs.}  
We also considered the Chung-Lu model~\cite{ChungLu:2006}, an ER-like
random graph model parameterized to have a power law degree
distribution (in expectation) with power law (or heterogeneity)
parameter $\gamma$, which we vary between $2$ and $3$.  We denote
these networks using \textsc{PL($\gamma$)}.  We consider values of the
degree heterogeneity parameter $ \gamma\in \{ 2.50, 2.75, 3.00 \}$.
Table~\ref{tbl:networks-basic-stats} shows that, as a function of
\emph{decreasing} $\gamma$, the size of the giant component increases,
the average degree $\bar{d}$ increases, and the diameter decreases.
Although not shown in Table~\ref{tbl:networks-basic-stats}, as
$\gamma$ decreases, PL graphs also form a rather prominent, and
moderately-deep, $k$-core structure~\cite{Adcock13_icdm}.  These are
all trends that parallel the behavior of \textsc{ER} as $d$
increases.%
\footnote{For sparse ER graphs, this happens since there is not enough
  edges for concentration of measure to occur, i.e., for empirical
  quantities such as the empirical degrees to be very close to their
  expected value.  For PL graphs, an analogous lack of measure
  concentration occurs due to the exogenously-specified heterogeneity
  parameter~$\gamma$.}

\textbf{SNAP graphs.}  
We selected various social/information networks that were used in the 
large-scale empirical analysis that first established the upward-sloping NCP 
and associated nested core-periphery for a broad range of realistic social 
and information graphs~\cite{LLDM09_communities_IM}.  These are available at 
the SNAP website~\cite{snap-index}.  
In particular, the networks we considered are \textsc{CA-GrQc} and 
\textsc{CA-AstroPh} (two collaboration networks); \textsc{as20000102} (an 
autonomous system snapshot); \textsc{Gnutella09} (a peer-to-peer network 
from Gnutella); \textsc{Email-Enron} (an email network from the Enron 
database); as well as the Stanford Web network \textsc{web-Stanford} and the 
Texas road network \textsc{road-TX}. 
These networks are very sparse, e.g., fewer than ca. $10$ edges per node; 
and they exhibit substantial degree heterogeneity, moderately high 
clustering coefficients (except for \textsc{Gnutella09}, 
\textsc{web-Stanford}, and \textsc{road-TX}), and moderately small diameters.  
In addition, although not presented in Table~\ref{tbl:networks-basic-stats}
(and with the exception of \textsc{road-TX}), 
these graphs have a much stronger core-periphery structure, as measured by 
the $k$-core decomposition, than typical synthetic 
networks~\cite{Adcock13_icdm, Jeub15, LLDM09_communities_IM}.

\textbf{Facebook Networks.}  
We selected several representative Facebook graphs out of ca. 100
Facebook graphs from various American universities collected in ca. 2005
\cite{Traud12}.  These data sets range in size from around $700$ vertices
(\textsc{FB-Caltech}) to approximately $30,000$ vertices
(\textsc{FB-Texas84}).  In particular, we examine \textsc{FB-Caltech},
\textsc{FB-Rice}, \textsc{FB-Haverford}, \textsc{FB-Lehigh}, and
\textsc{FB-Stanford} in this paper.  These networks all arise via
similar generative procedures, and thus there are
strong similarities between them.  There are a few distinctive
networks, however, that are worth mentioning.  In particular, several
universities (\textsc{FB-Caltech, FB-Rice, FB-UCSC}) have a
particularly strong resident housing system, and it is known that this
manifests itself in structural properties of the graphs
\cite{Traud12}.  Below, we will use the meta-information associated
with this housing system to provide ``ground-truth'' clusters/communities for
comparison and evaluation.%
\footnote{See \cite{Jeub15} for how this affects the NCP of these networks.}
One important characteristic to observe
from Table~\ref{tbl:networks-basic-stats} is that these Facebook
networks, while sparse, are much denser than any of the SNAP graphs we
consider or that were considered
previously~\cite{LLDM09_communities_IM}.%
\footnote{Among the differences caused by the much higher density of
  Facebook networks is that these networks have a much deeper $k$-core 
  structure than the other real networks, and they tend to lack deep
  cuts, e.g., they lack even good very-imbalanced partitions such as those
  responsible for the upward-sloping
  NCP~\cite{LLDM09_communities_IM,Jeub15}.}

\textbf{Miscellaneous Networks.}  
We also selected a variety of real-world networks that, based on prior
work~\cite{LLDM09_communities_IM, Adcock13_icdm, Jeub15}, are known to have 
very different properties than the SNAP social graphs or the Facebook social 
graphs.  
In particular, we consider \textsc{Polblogs}, a political bloggers 
network~\cite{adamic05election} (a graph constructed from political blogs which are 
linked); the Western US power grid \textsc{PowerGrid} \cite{watts98collective}; and a 
two-dimensional $50 \times 50$ planar grid \textsc{PlanarGrid}.

\section{Tree decompositions of toy networks}
\label{sec:toy_results}

In this section,
we will describe the results of using a variety of TD heuristics on a set of 
very simple ``toy'' networks, on which the optimal-width TDs are known.  
The five toy networks we consider are a binary tree (\textsc{SmallBinary}), 
a small section of the two-dimensional planar grid (\textsc{SmallPlanar}), a 
cycle (\textsc{SmallCycle}), a clique (\textsc{SmallClique}), and an 
Erd\H{o}s-R\'{e}nyi graph with an edge probability of $p=0.5$ 
(\textsc{SmallER}).  
Each of these networks has $100$ nodes (except for \textsc{SmallCycle}, which 
has only $10$ nodes---the reason for this is that the principal change by
having a larger cycle is that the eccentricity of the decompositions becomes 
much larger, which simply makes it more difficult to visualize---and
\textsc{SmallBinary}, which has $128$ nodes to maintain symmetry).
In Figure~\ref{fig:toy_networks}, we provide visualizations%
\footnote{These and other visualizations were created with the GraphViz command \emph{neato}~\cite{Gansner00}, with the help of \cite{Davis11,Malisiewicz10}.}
of each of the 
five networks.

These very simple network topologies illustrate in a controlled way the 
behavior of different TD heuristics in a range of 
settings.  
For example, while \textsc{SmallBinary} is a tree, the other graphs are not; 
the two-dimensional grid is quite different from a tree, as is 
\textsc{SmallCycle} (although, from the treewidth perspective it is fairly 
close to a tree), and both have high-quality well-balanced partitions; and 
both \textsc{SmallClique} and \textsc{SmallER} are expanders (not constant 
degree expanders, but expanders in the sense that they don't have any good 
partitions) and thus very non-tree-like (from the TD perspective), but each 
has important differences with respect to their respective TDs.  
We will focus on which types of structures different 
heuristics tend to capture, as well as how different heuristics deal with 
nodes (not bags) which are associated with the core or periphery of the 
original network.  
Importantly, these toy networks have basic constructions, and they (mostly) 
have known optimal width TDs---e.g., \textsc{SmallPlanar} and 
\textsc{SmallCycle} have several known equivalent minimum width TDs---and the 
\textsc{SmallER} network serves to illustrate some of the effects of 
randomness on a TD.  
The insights we obtain 
here can be used to interpret the output of TD heuristics in much more
complex synthetic and real~networks.

\subsection{TD properties of toy networks}

In Figures~\ref{fig:toy_td_mind}, \ref{fig:toy_td_metnnd}, 
and~\ref{fig:toy_td_lexm}, we show visualizations of TDs produced by various 
heuristics (the greedy \textsc{mindeg} in Figure~\ref{fig:toy_td_mind}; the 
\textsc{metnnd}, nested node dissection via METIS, in 
Figure~\ref{fig:toy_td_metnnd}; and \textsc{lexm} in 
Figure~\ref{fig:toy_td_lexm}) for each of these five toy networks.  
In these visualizations, the size of the bag corresponds with the bag's 
width, and the coloring is based on the fraction of edges present in the 
induced subgraph of the bag.  
In particular, if the nodes in the bag form a clique in the original 
network, then the fraction of edges present is $1.0$ and the bag is
dark red; while if the the nodes are completely disconnected in the 
original network, then the bag is dark blue.

From these figures, we see substantial differences between 
the TDs that different heuristics generate for these five toy networks.  
All heuristics give the same uninteresting results for 
\textsc{SmallClique}; but for all of the other networks, including 
\textsc{SmallBinary}, there are differences in the
decompositions produced by the different heuristics.  
Consider
\textsc{SmallPlanar},
\textsc{SmallCycle}, and \textsc{SmallER}.  
For both \textsc{SmallPlanar} and \textsc{SmallER}, \textsc{mindeg} and
\textsc{metnnd} return TDs with several prominent branches, while 
\textsc{lexm} returns a path for the TD.  
For \textsc{mindeg}, this is due to the tendency of the algorithm to pick
low-degree nodes on the ``outside'' of the network and then work its way 
around the outside of the network.  
For \textsc{metnnd}, this is due to the tendency of the algorithm to cut 
the networks repeatedly into smaller pieces and then recursively ``eat 
away'' at these smaller pieces to form the TD.  
On the other hand, the \textsc{lexm} heuristic works to produce a minimal 
triangulation using lexicographic labelings along paths.
This often results in a path-like TD, as the algorithm 
uses a breadth-first search through the network.  
For \textsc{SmallCycle}, \textsc{metnnd} returns a ``branchy''
TD, while both \textsc{mindeg} and \textsc{lexm} return path-like TDs.


\begin{figure*}[!htb]
\begin{center}
\begin{subfigure}[h]{0.19\textwidth}
\includegraphics[width=\textwidth]{./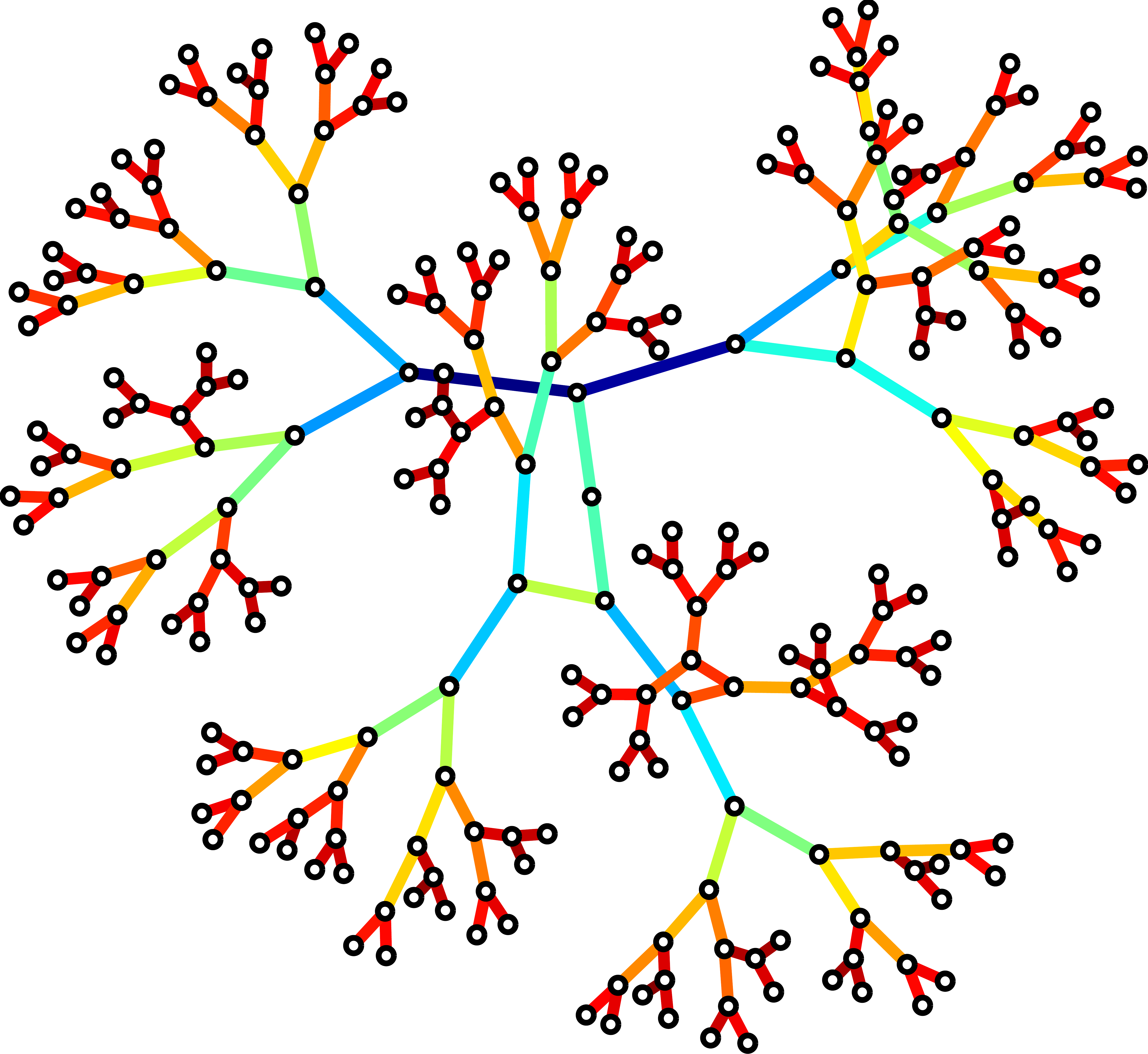}
\caption{\centering \textsc{SmallBinary}}
\end{subfigure}
\begin{subfigure}[h]{0.19\textwidth}
\includegraphics[width=\textwidth]{./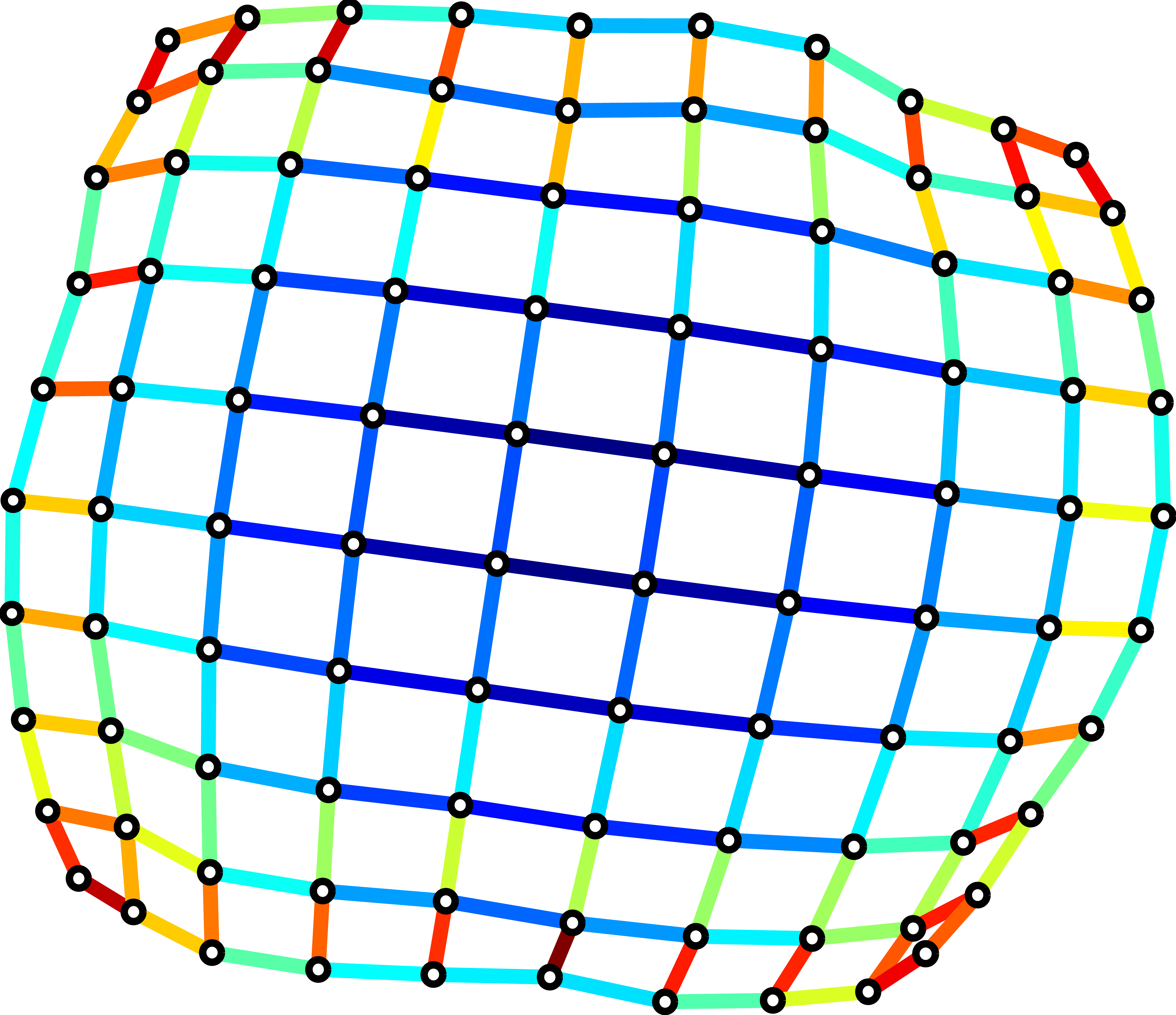}
\caption{\centering $10 \times 10$ \textsc{SmallPlanar}}
\end{subfigure}
\begin{subfigure}[h]{0.12\textwidth}
\includegraphics[width=\textwidth]{./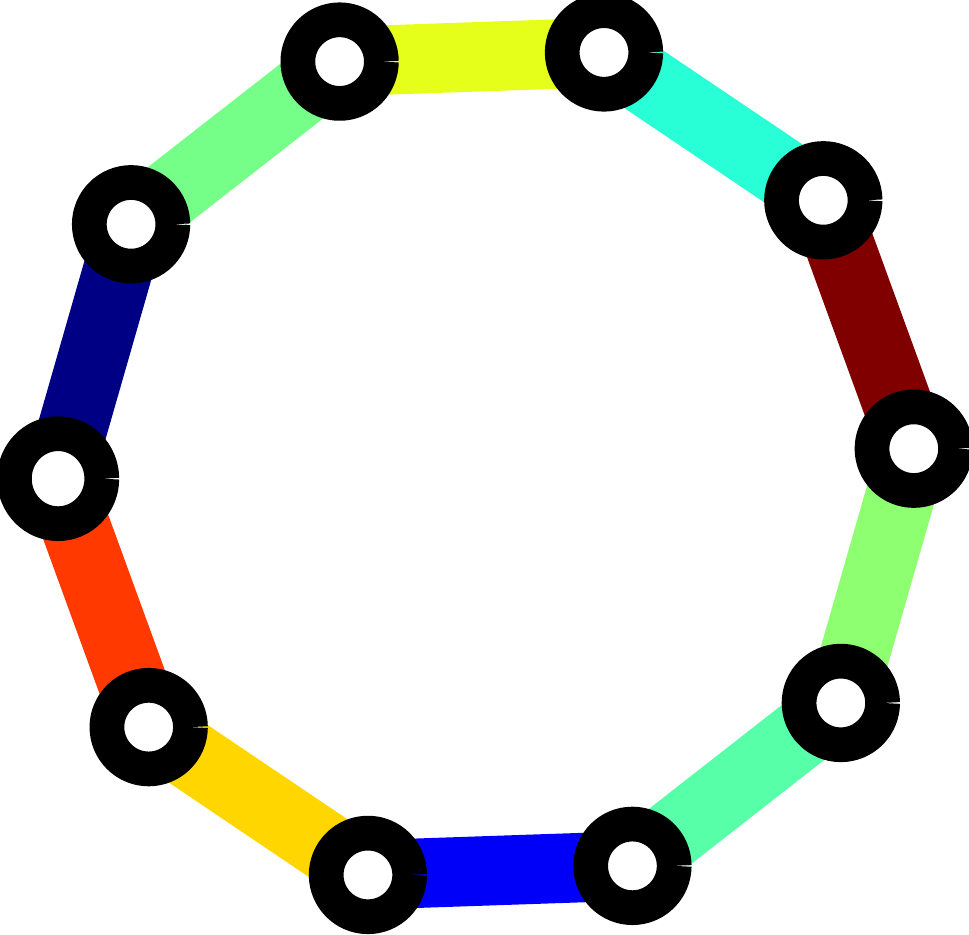}
\caption{\centering \textsc{SmallCycle}}
\end{subfigure}
\begin{subfigure}[h]{0.12\textwidth}
\includegraphics[width=\textwidth]{./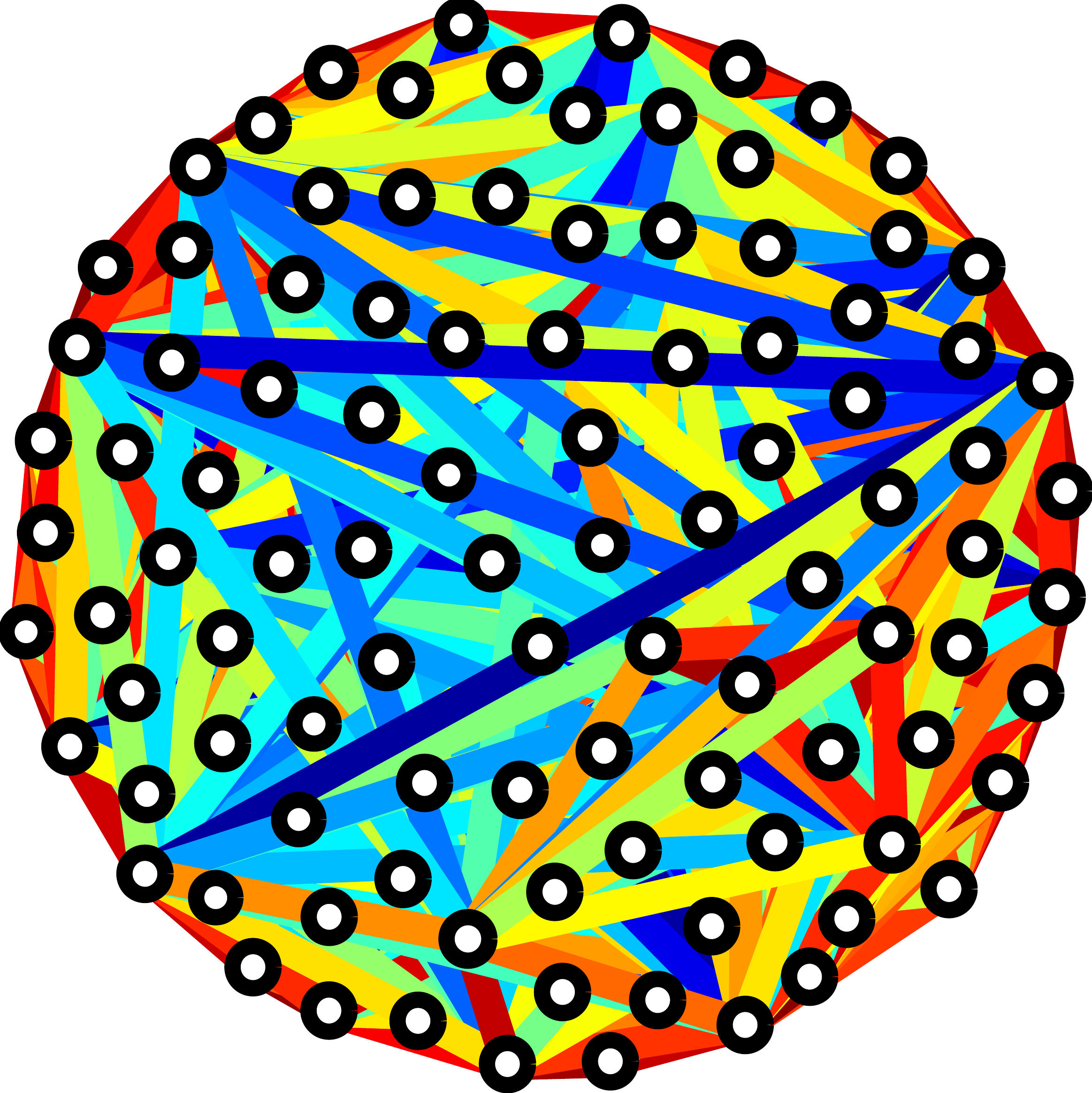}
\caption{\centering $100$ node \textsc{SmallClique}} 
\end{subfigure}
\begin{subfigure}[h]{0.12\textwidth}
\includegraphics[width=\textwidth]{./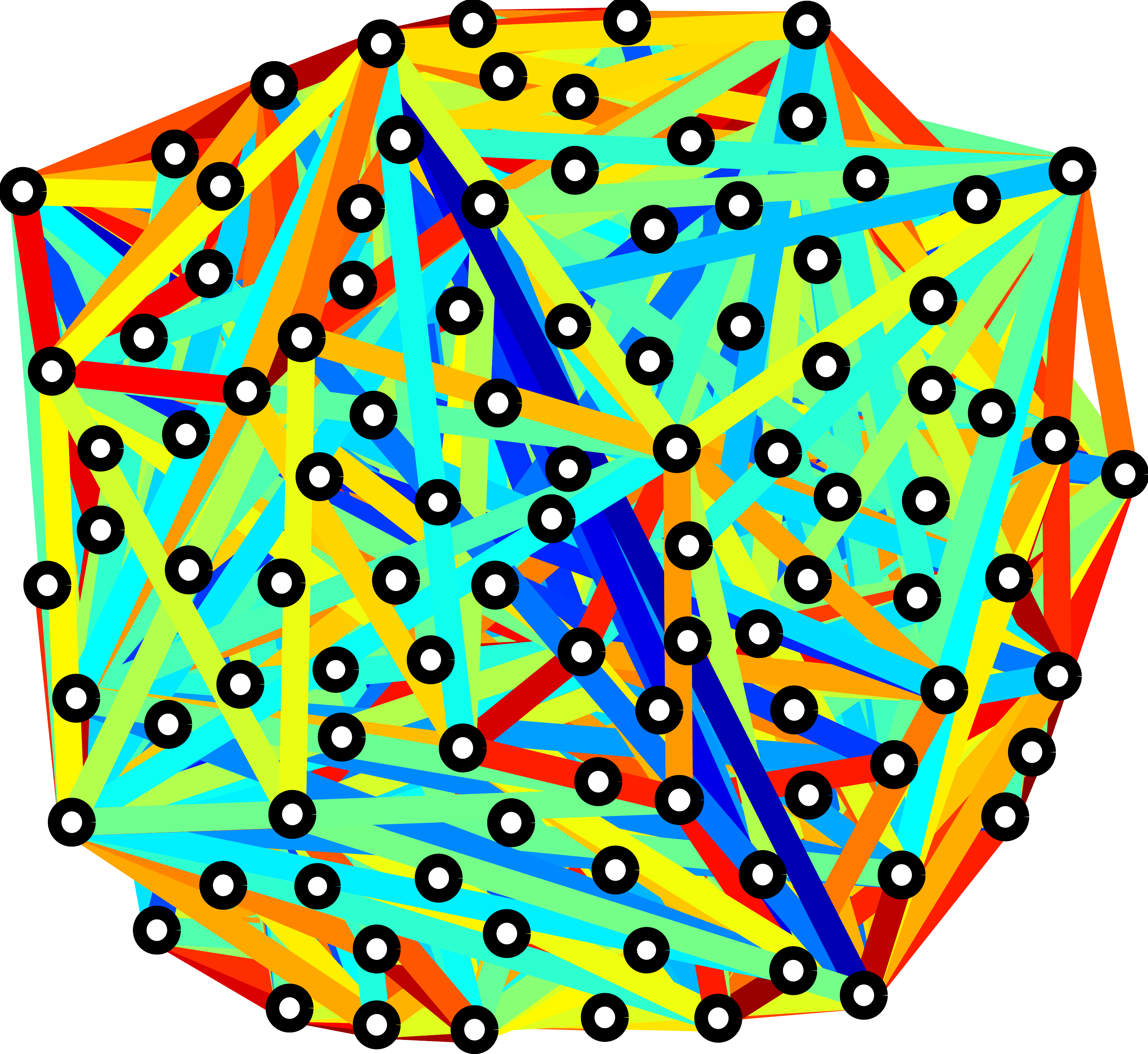}
\caption{\centering \textsc{SmallER}} 
\end{subfigure}
\caption{A set of small networks.  Edges are colored by their length in the planar embedding.}
\label{fig:toy_networks}
\end{center}
\begin{center}
\begin{subfigure}[h]{0.19\textwidth}
\includegraphics[width=\textwidth]{./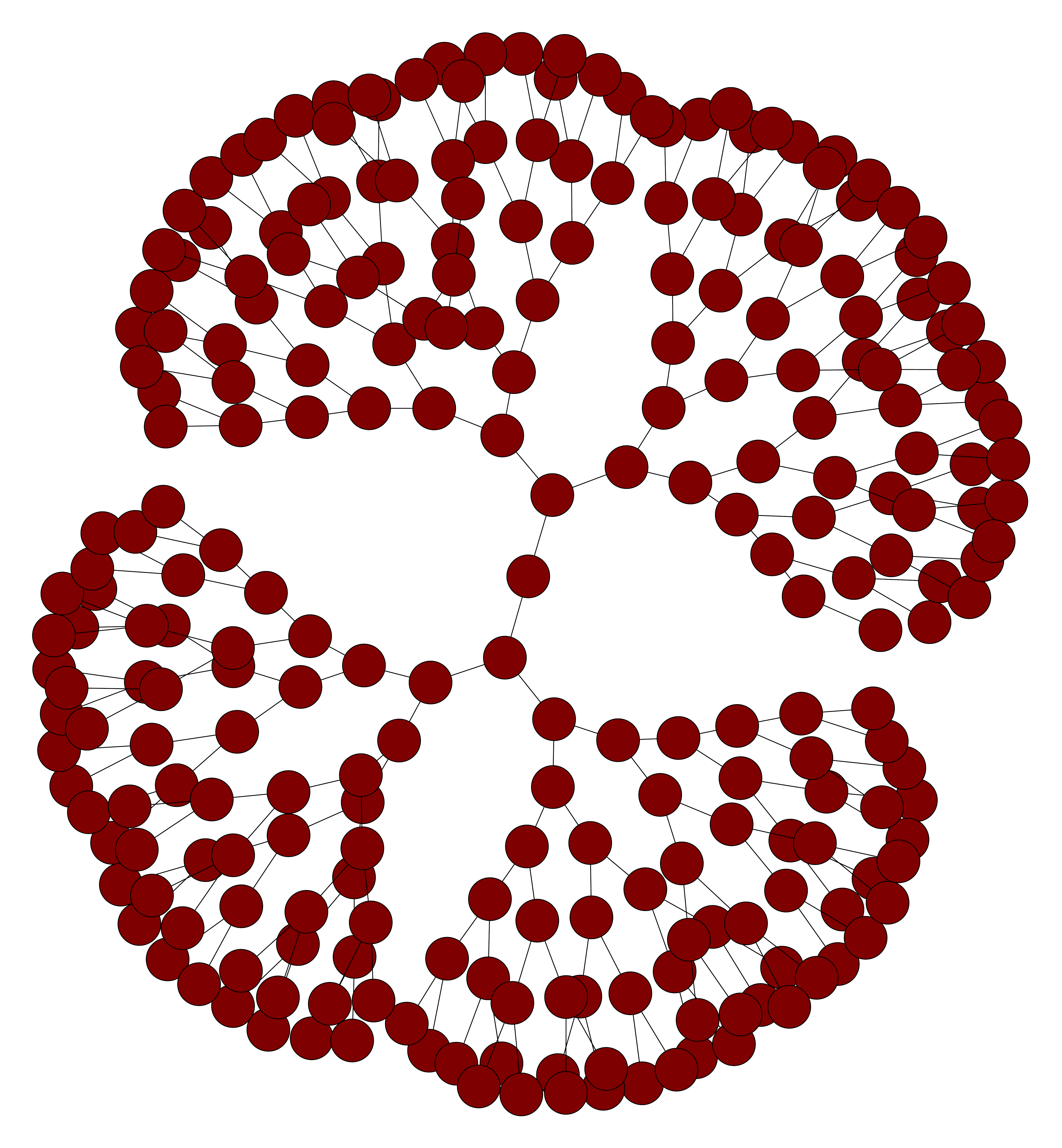}
\caption{\centering \textsc{SmallBinary}}
\end{subfigure}
\begin{subfigure}[h]{0.19\textwidth}
\includegraphics[width=\textwidth]{./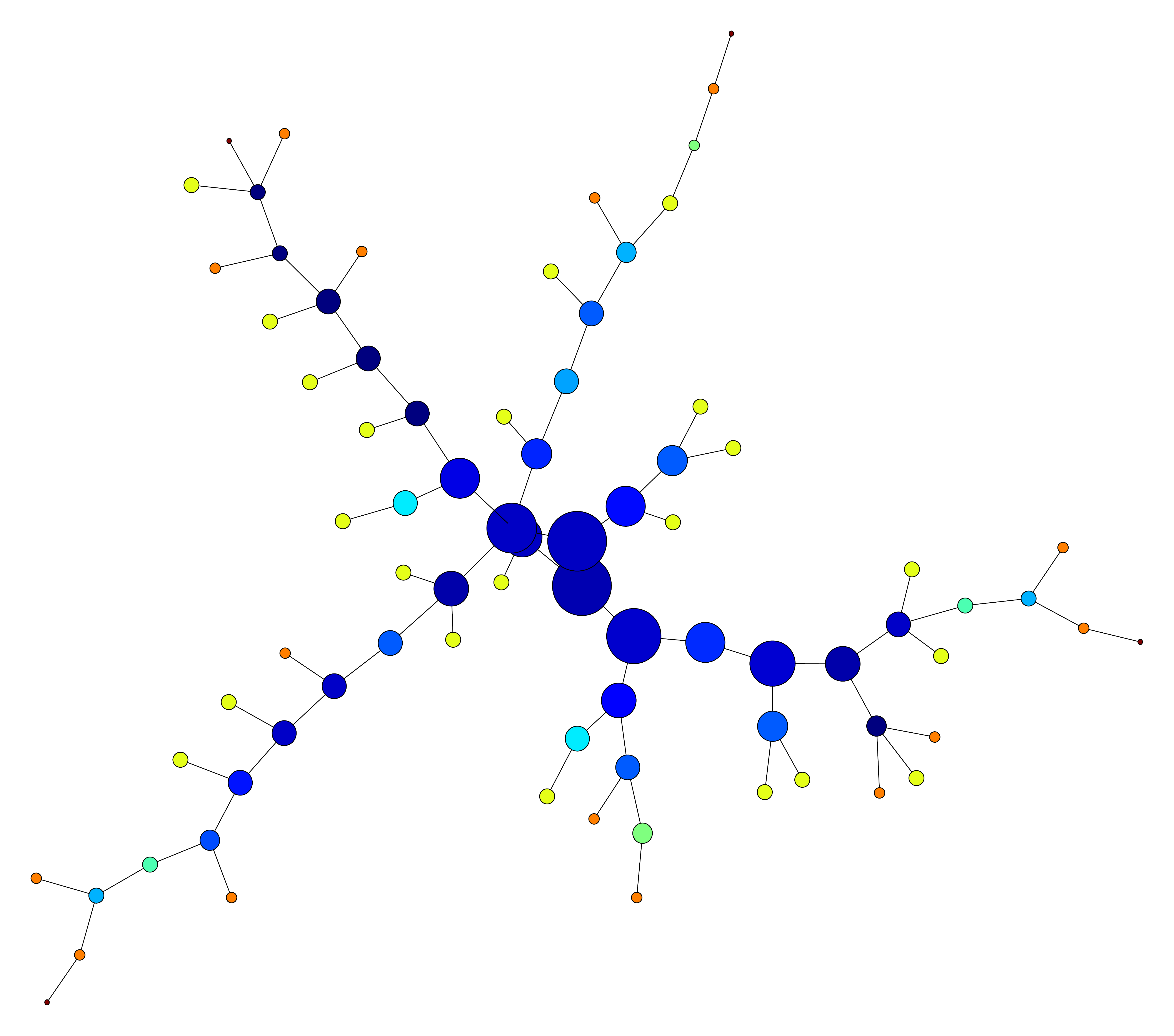}
\caption{\centering \textsc{SmallPlanar}}
\label{fig:mind_10_10_plane}
\end{subfigure}
\begin{subfigure}[h]{0.12\textwidth}
\includegraphics[width=\textwidth]{./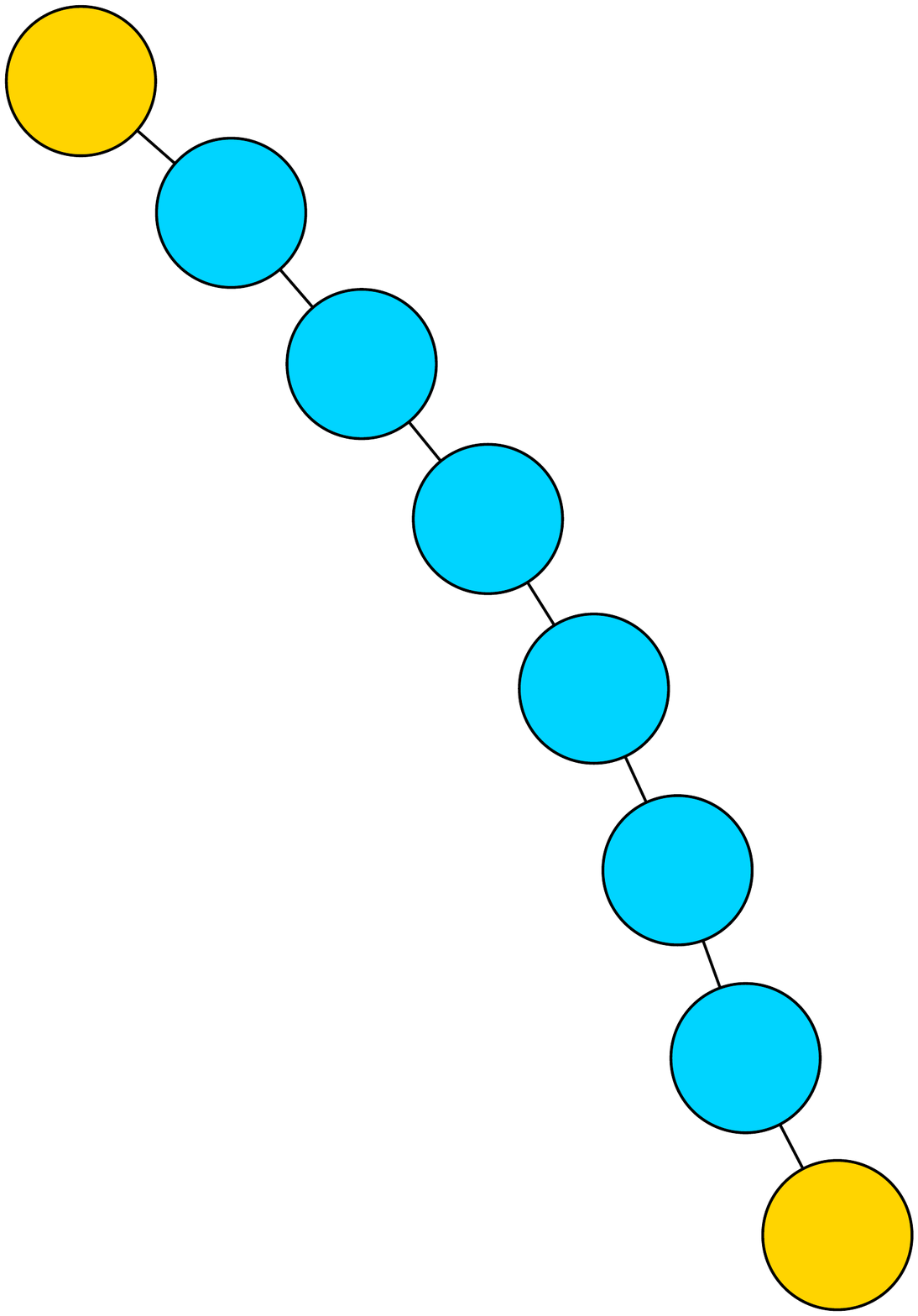}
\caption{\centering \textsc{SmallCycle}}
\end{subfigure}
\begin{subfigure}[h]{0.12\textwidth}
  \includegraphics[width=\textwidth]{./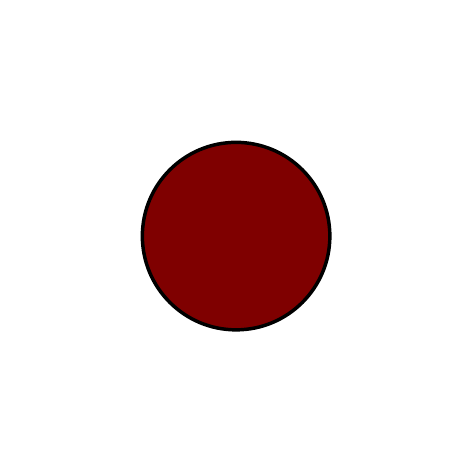}
\caption{\centering \textsc{SmallClique}} 
\end{subfigure}
\begin{subfigure}[h]{0.12\textwidth}
  \includegraphics[width=\textwidth]{./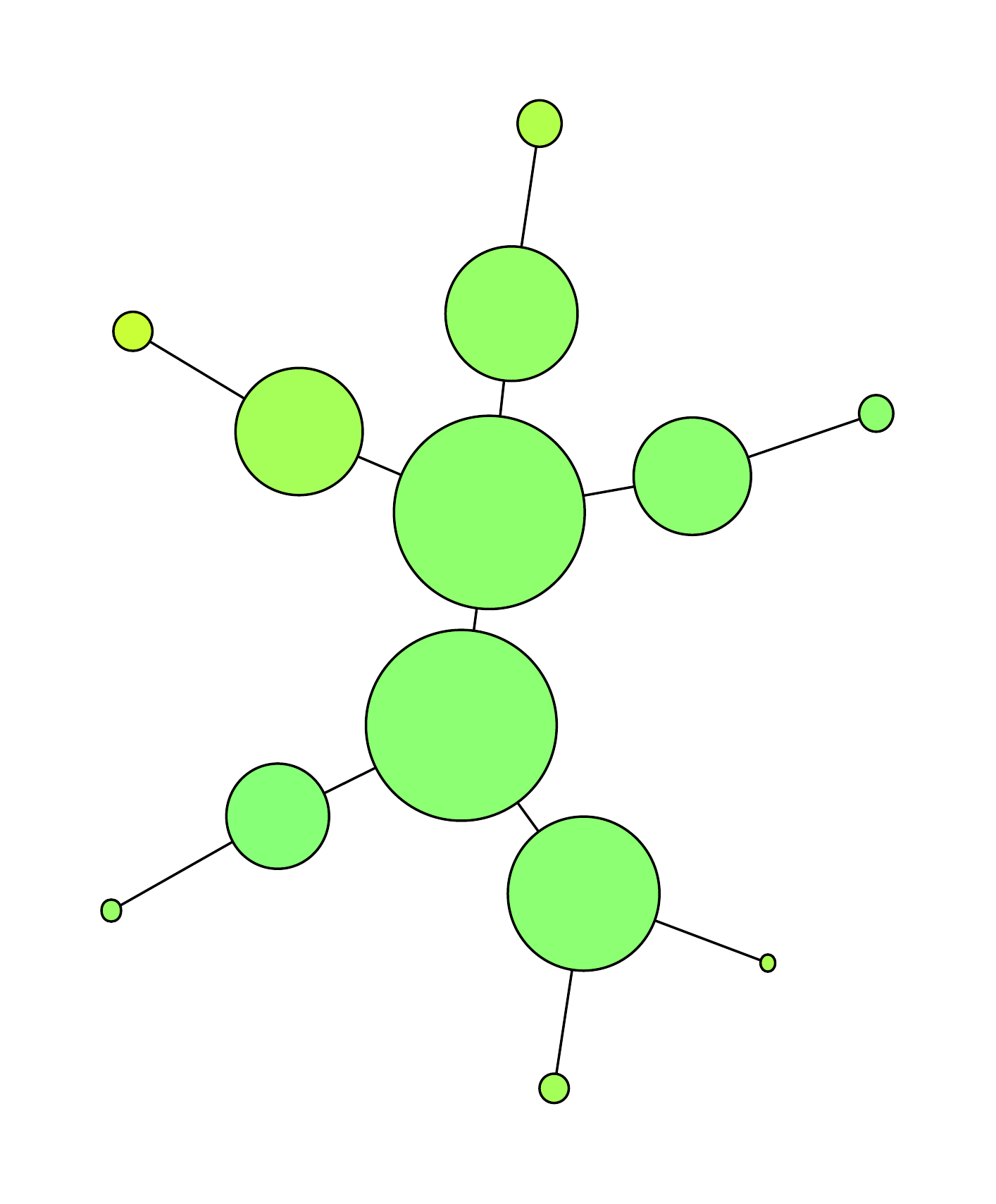}
\caption{\centering \textsc{SmallER}} 
\end{subfigure}
\caption{Greedy \textsc{mindeg} TDs of toy networks.  Bags are colored by the fraction of possible edges present in the bag, with red being denser and blue being less dense.}
\label{fig:toy_td_mind}
\end{center}
\begin{center}
\begin{subfigure}[h]{0.19\textwidth}
\includegraphics[width=\textwidth]{./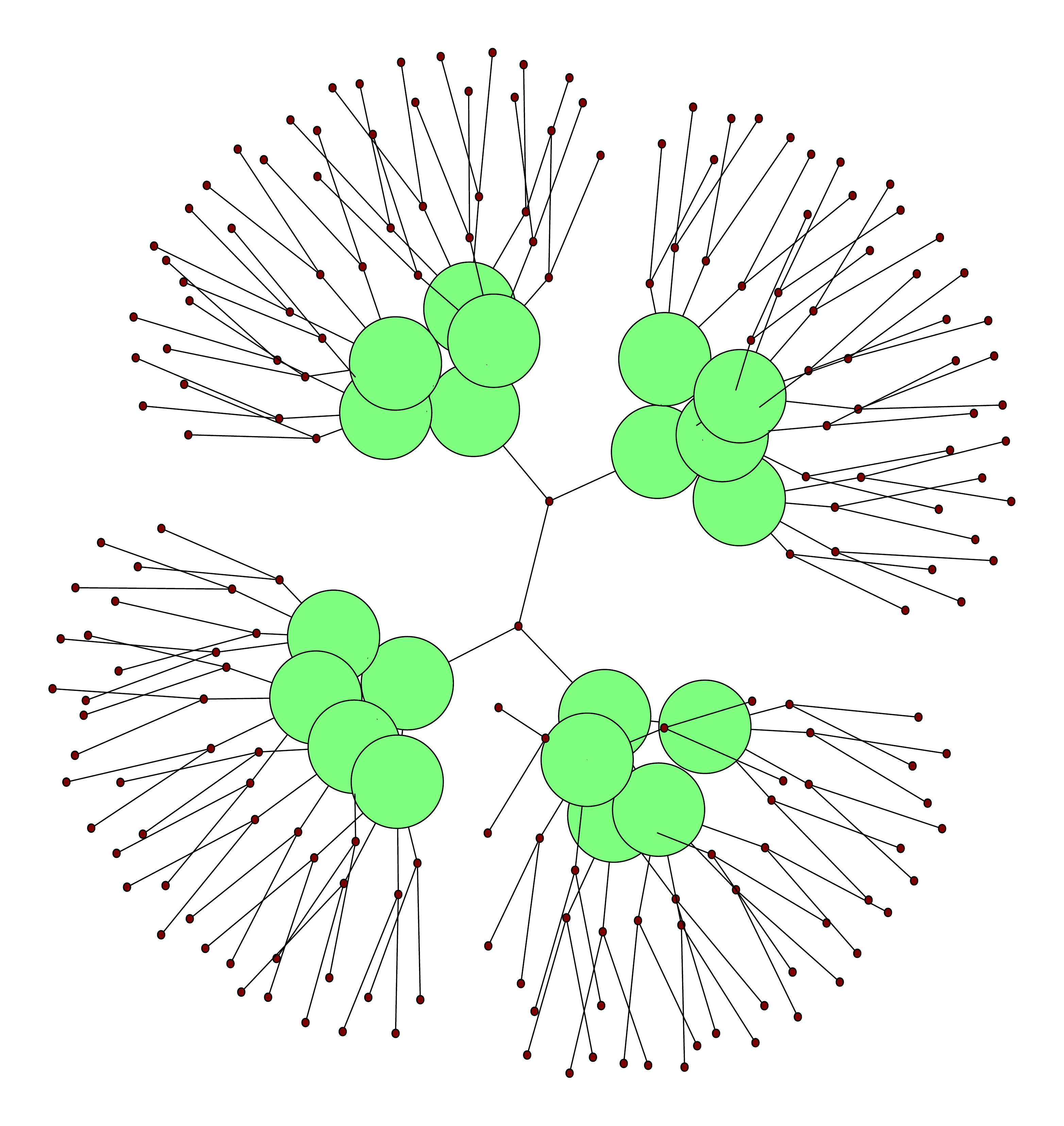}
\caption{\centering \textsc{SmallBinary}}
\end{subfigure}
\begin{subfigure}[h]{0.19\textwidth}
\includegraphics[width=\textwidth]{./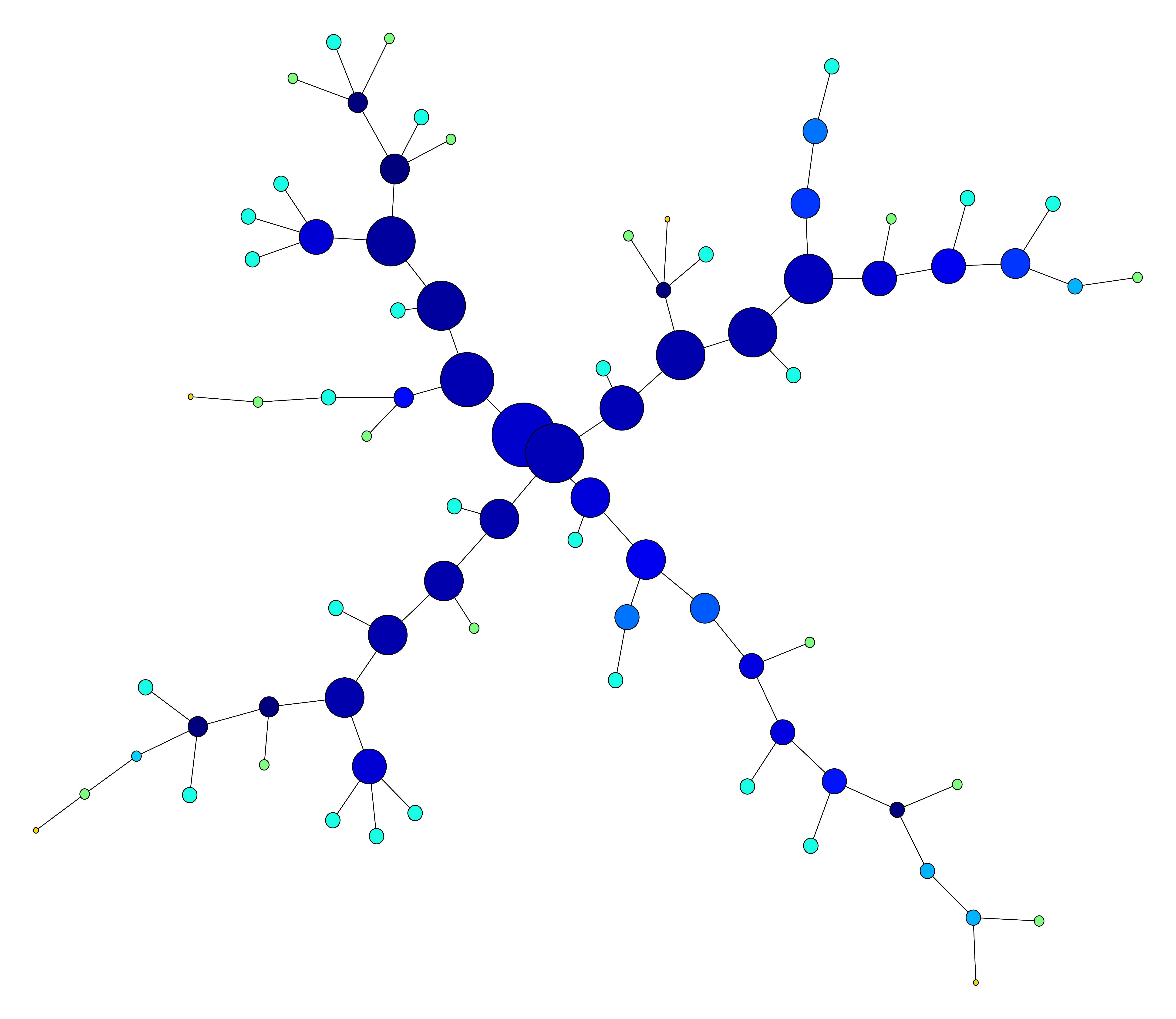}
\caption{\centering \textsc{SmallPlanar}}
\label{fig:metnnd_10_10_plane}
\end{subfigure}
\begin{subfigure}[h]{0.12\textwidth}
\includegraphics[width=\textwidth]{./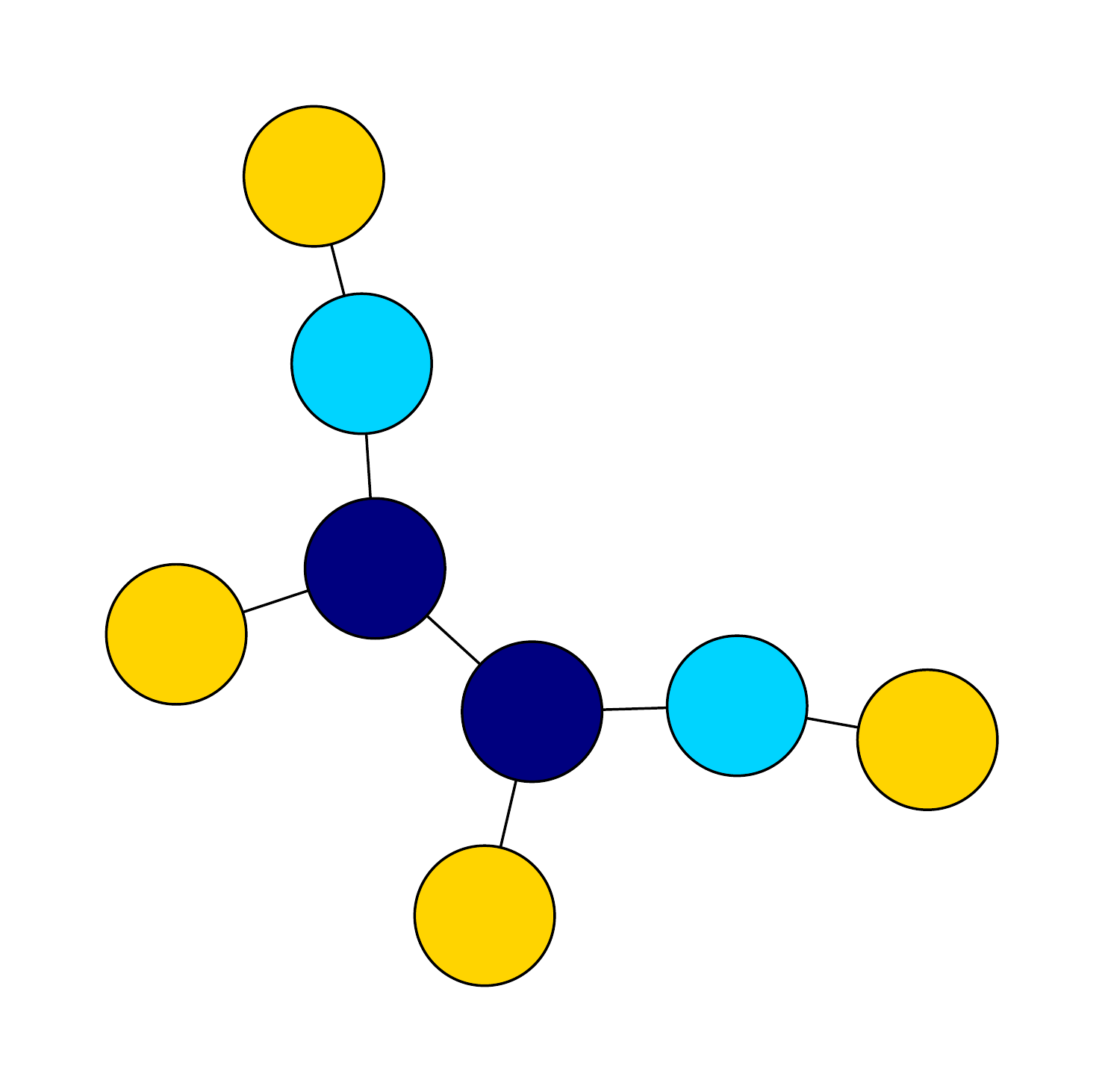}
\caption{\centering \textsc{SmallCycle}}
\end{subfigure}
\begin{subfigure}[h]{0.12\textwidth}
  \includegraphics[width=\textwidth]{./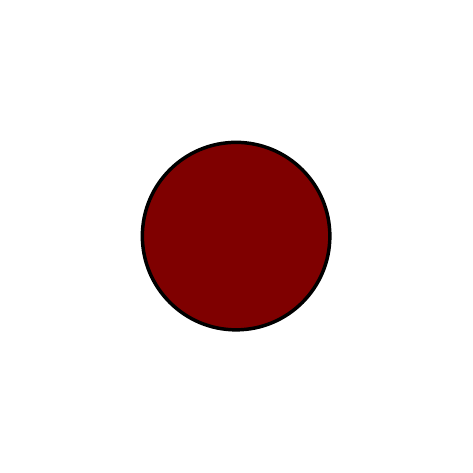}
\caption{\centering \textsc{SmallClique}} 
\end{subfigure}
\begin{subfigure}[h]{0.12\textwidth}
  \includegraphics[width=\textwidth]{./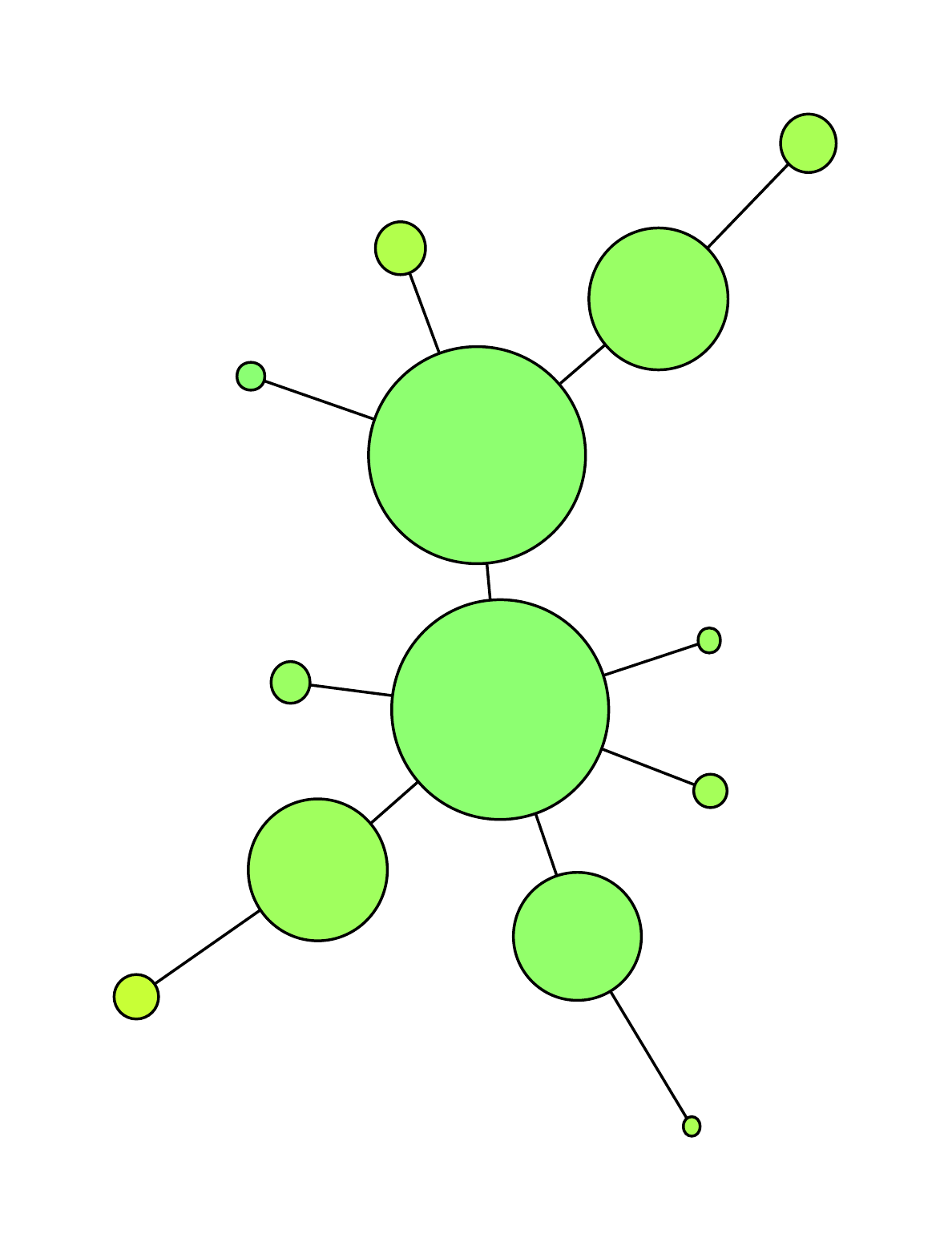}
\caption{\centering \textsc{SmallER}}
\end{subfigure}
\caption{\textsc{metnnd} (nested node dissection via METIS) TDs of toy networks.  Bags are colored by the fraction of possible edges present in the bag, with red being denser and blue being less~dense.}
\label{fig:toy_td_metnnd}
\end{center}
\begin{center}
\begin{subfigure}[h]{0.19\textwidth}
\includegraphics[width=\textwidth]{./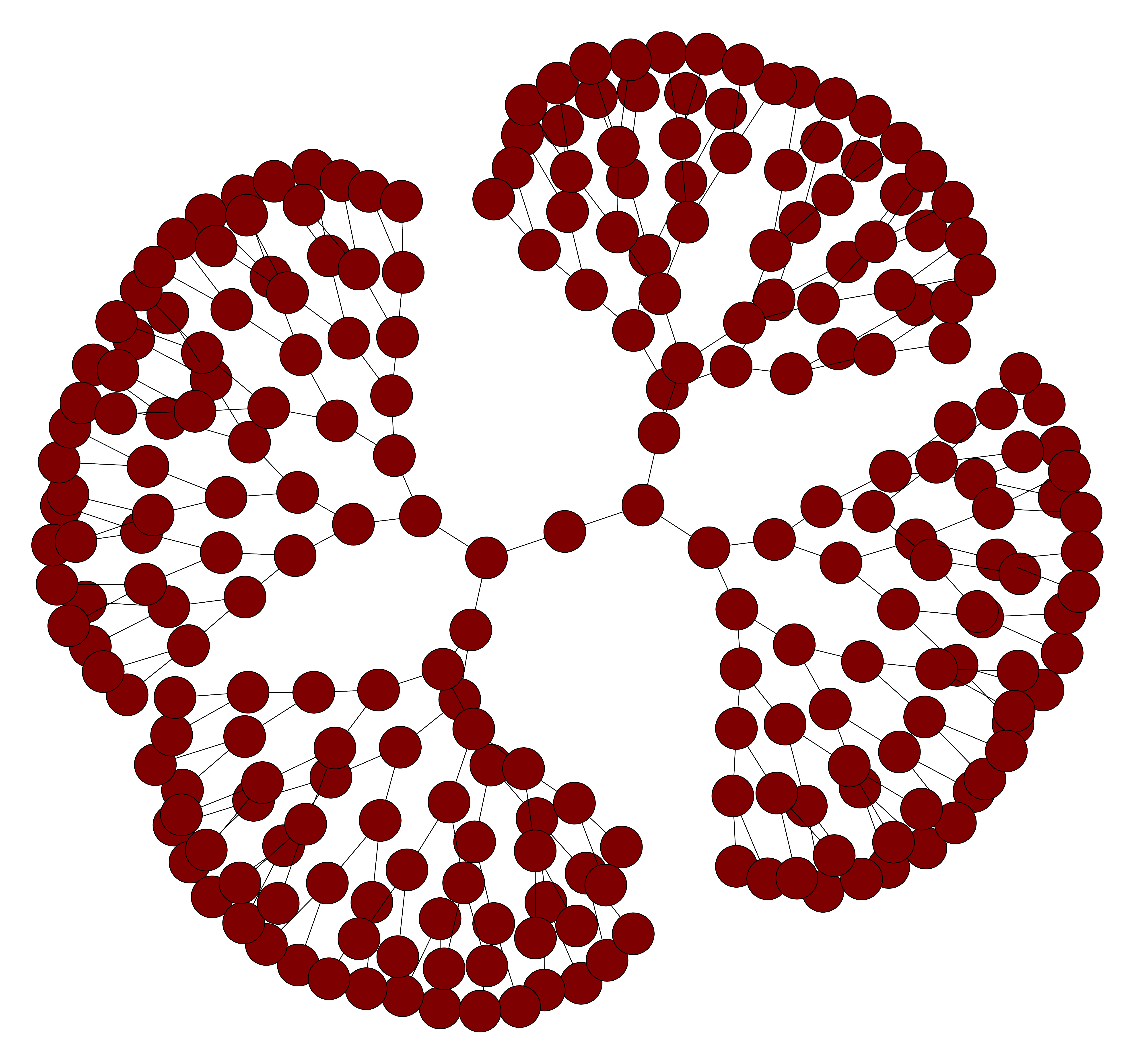}
\caption{\centering \textsc{SmallBinary}}
\end{subfigure}
\begin{subfigure}[h]{0.19\textwidth}
\includegraphics[width=\textwidth]{./figures/planar_10_10_grid_lexm_density.pdf}
\caption{\centering \textsc{SmallPlanar}}
\label{fig:lexm_10_10_plane}
\end{subfigure}
\begin{subfigure}[h]{0.12\textwidth}
\includegraphics[width=\textwidth]{./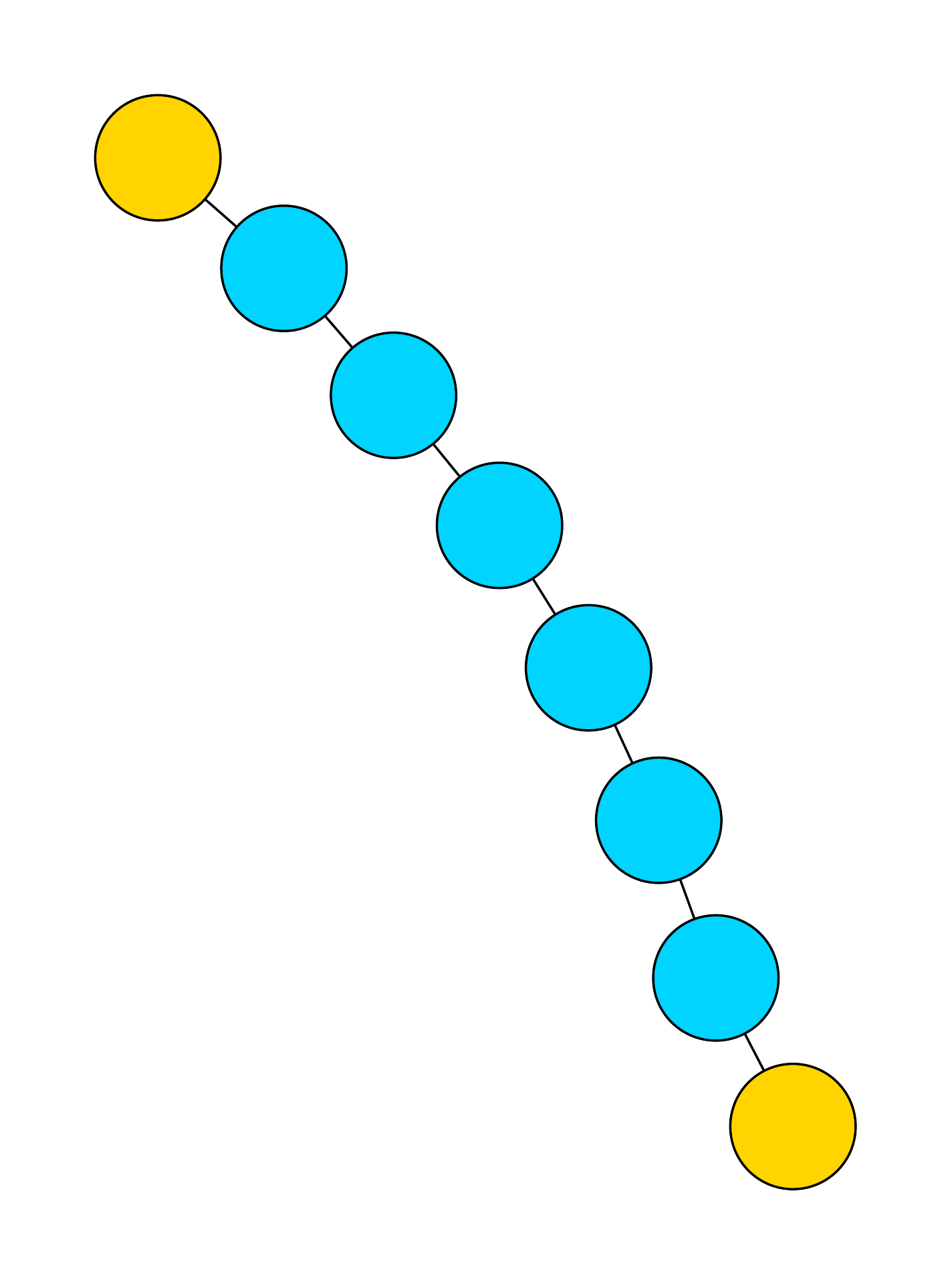}
\caption{\centering \textsc{SmallCycle}}
\end{subfigure}
\begin{subfigure}[h]{0.12\textwidth}
  \includegraphics[width=\textwidth]{./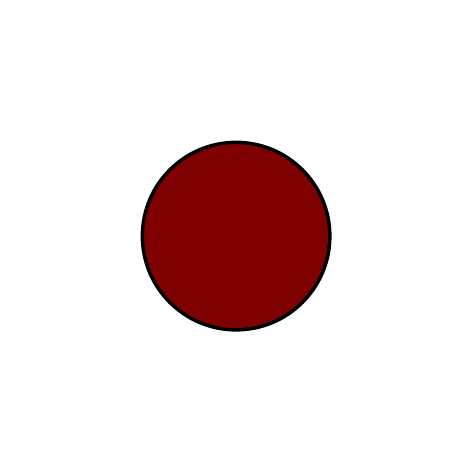}
\caption{\centering \textsc{SmallClique}} 
\end{subfigure}
\begin{subfigure}[h]{0.12\textwidth}
  \includegraphics[width=0.35\textwidth, angle=45]{./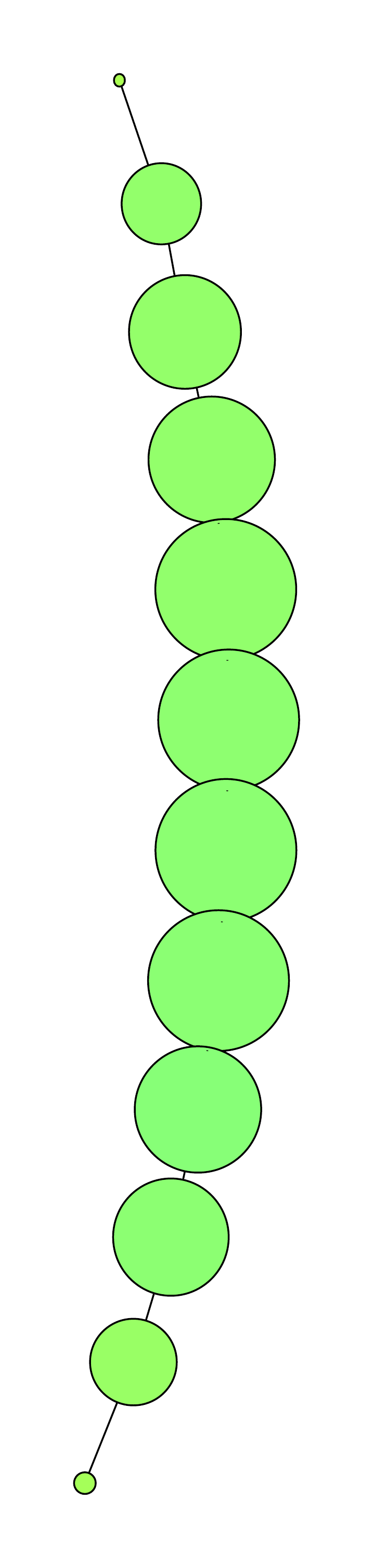}
\caption{\centering \textsc{SmallER}}
\end{subfigure}
\caption{\textsc{lexm} TDs of toy networks.  Bags are colored by the fraction of possible edges present in the bag, with red being denser and blue being less dense.}
\label{fig:toy_td_lexm}
\end{center}
\end{figure*}

More quantitatively, in Tables~\ref{tbl:toy_td_widths} and~\ref{tbl:toy_td_medians}, we
provide basic statistics for TD heuristics applied to each of these
networks.  Table \ref{tbl:toy_td_widths} shows a
summary of our results for the the (maximum) width of TDs produced by
various heuristics (for \textsc{SmallER}, the width given is averaged
over five different instantiations of the network).  We
ignore the issue of tie-breaker choices (e.g., in \textsc{mindeg},
choosing among non-unique minimum degree nodes).  On the whole, the
heuristics do a good job of finding optimal width TDs on
\textsc{SmallBinary}, \textsc{SmallClique}, and \textsc{SmallCycle}
(with the exception of \textsc{metnnd}).  The greedy heuristics have
trouble finding the optimal width TD on \textsc{SmallPlanar}, while
\textsc{lexm} and \textsc{mcs} both find an optimal decomposition on
the grid.  On \textsc{SmallER}, we observe that the greedy heuristics
and \textsc{metnnd} outperform \textsc{lexm} and \textsc{mcs}; this is
in agreement with previously-reported results~\cite{Bodlaender10}.

Table \ref{tbl:toy_td_medians} shows a summary of our results for the
\emph{median} width of TDs produced by various heuristics 
(as defined in Section~\ref{sec:tdprelim}).  The median
width is potentially more useful for revealing structure in realistic
network data since, e.g., it can be used to see whether a TD is
dominated by larger bags or by smaller bags.%
\footnote{Using medians rather than eccentricity can result in different central bags.  However in most of the networks that we studied, the results were very similar.  In particular, the biggest changes occurred in the FB networks where the median shifted towards the heavier end of the path-like TD.  However, these bags were still a part of the thick trunk of the network and thus the results were very similar.  In other networks, the median bag was very close the central eccentric bag, and the main difference is that the median bag tended to have more whisker branches (a branch consisting of one or two bags of small width).  This does not substantially change any of our analysis.}  
If a network is
dominated by bags of small size (such as \textsc{SmallBinary,
  SmallCycle, SmallPlanar}), depending on the internal structure of the
bag, this can indicate several things.  For example, the small bags
could consist of tight clusters or cliques, indicating that the
network has many tightly connected but \emph{small} groups of nodes.
Alternatively, if a small bag's structure is mostly disconnected, this
may indicate the bag is related to small cycles (an example is given~below).  

For \textsc{SmallCycle} and \textsc{SmallPlanar}, the small
bags are cyclical, while for \textsc{SmallBinary} the
small bags all consist of 2-cliques.  \textsc{SmallClique} and the
\textsc{SmallER} have large median widths (though this is trivial in
the case of the clique).  The 100-clique is both trivial and too large
of a clique to be realistic, but \textsc{SmallER} has interesting
bags.  
The results 
in Table~\ref{tbl:toy_td_widths} show the
small median widths of \textsc{SmallPlanar, SmallCycle} and
\textsc{SmallBinary} and the large median widths of \textsc{SmallER}
and \textsc{SmallClique}.  Table~\ref{tbl:toy_td_widths} also
demonstrates that, while there are differences in the widths of the TDs produced by the
heuristics, these differences are reasonably small.%
\footnote{We will see below that most real networks have 
small median width, with smallest bags dominated by cliques, 
intermediate bags dominated by cycles, and with large, connected, 
central bags which resemble bags of \textsc{SmallER}.}  

\begin{table}[h]
\begin{center}
{\footnotesize
\begin{tabular}{l|r|r|r|r|r|r|}
Network & $n$ & $W_{\textsc{mindeg}}$ & $W_{\textsc{minfill}}$ &
$W_{\textsc{nnd}}$ & $W_{\textsc{mcs}}$ & $W_{\textsc{lexm}}$
\\ \hline \hline 
\textsc{SmallBinary}  & 128 &   1 &   1 &   3 &   1 &   1 \\ 
\textsc{SmallPlanar}  & 100 &  13 &  13 &  14 &  10 &  10 \\ 
\textsc{SmallCycle}   &  10 &   3 &   3 &   3 &   3 &   3 \\ 
\textsc{SmallClique} & 100 & 100 & 100 & 100 & 100 & 100 \\ 
\textsc{SmallER}     & 100 &  86 &  85 &  86 &  91 &  89 \\
\end{tabular}
}
\end{center}
\caption{TD heuristic \emph{maximum} widths. The widths of the
  \textsc{SmallClique} and \textsc{SmallER} are relatively large
  (they grow linearly with the network size), the width of 
  \textsc{SmallPlanar} network is of an intermediate size (they 
  grow with the square root of the network size), and the widths of 
  \textsc{SmallBinary} and \textsc{SmallCycle} are small (they stay 
  constant with the size of the networks). The greedy heuristics 
  find smaller width decompositions on \textsc{SmallER}, while 
  \textsc{lexm} and \textsc{mcs} perform better on \textsc{SmallPlanar}. }
\label{tbl:toy_td_widths}
\end{table}

\begin{table}[!htb]
\begin{center}
{\footnotesize
\begin{tabular}{l|r|r|r|r|r|r|}
Network & $n$ & $\tilde{W}_{\textsc{mindeg}}$ & $\tilde{W}_{\textsc{minfill}}$ & $\tilde{W}_{\textsc{nnd}}$ &
$\tilde{W}_{\textsc{mcs}}$ & $\tilde{W}_{\textsc{lexm}}$ \\
\hline \hline
\textsc{SmallBinary}  & 128 &   1 &   1 &   1 &   1 &   1 \\
\textsc{SmallPlanar}  & 100 &   5 &   5 &   5 &  10 &   8 \\ 
\textsc{SmallCycle}   &  10 &   3 &   3 &   3 &   3 &   3 \\
\textsc{SmallClique} & 100 & 100 & 100 & 100 & 100 & 100\\
\textsc{SmallER}     & 100 &  52 &  51 &  49 &  85 &  80 \\
\end{tabular}}
\end{center}
\caption{TD heuristic \emph{median} widths.  This 
  quantity is much smaller than the corresponding widths in several 
  of the networks (although it remains large with \textsc{SmallER}),  
  indicating that these networks are dominated by bags 
  which are much smaller than the largest bag in the network. }
\label{tbl:toy_td_medians}
\end{table}

\begin{figure}[!htb]
\begin{center}
\begin{subfigure}[h]{0.25\textwidth}
\includegraphics[width=\textwidth]{./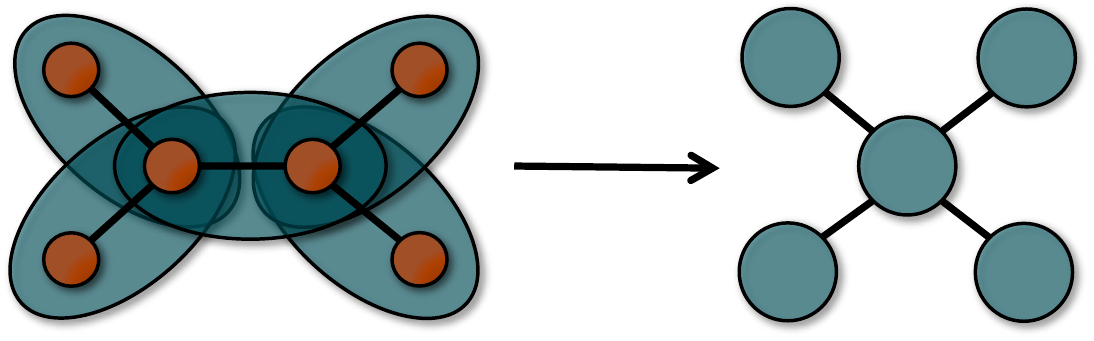}
\caption{ A tree (left) and a optimal TD (right).}
\label{fig:td_example-tree}
\end{subfigure}
\begin{subfigure}[h]{0.2\textwidth}
\includegraphics[width=\textwidth]{./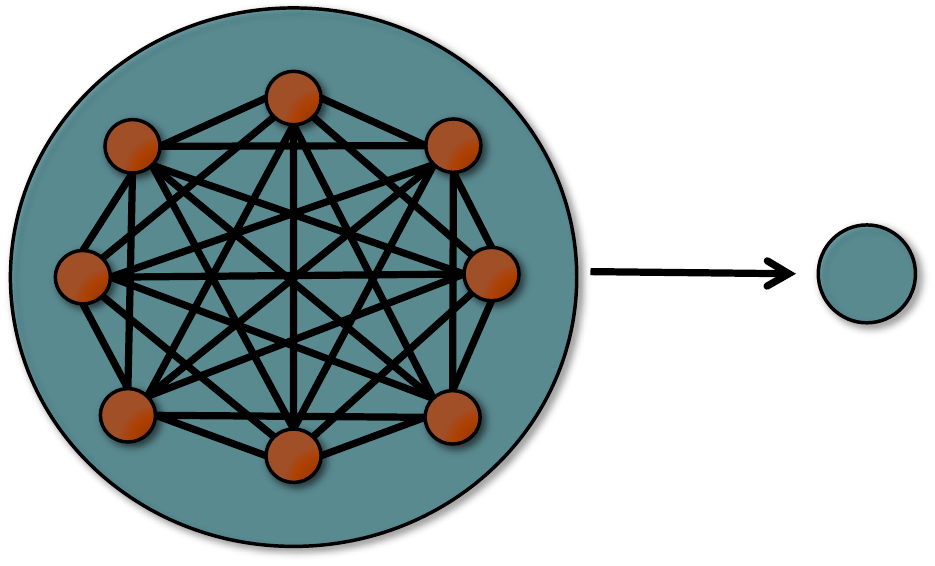}
\caption{ A clique (left) and a optimal TD (right).}
\label{fig:td_example-clique}
\end{subfigure}
\begin{subfigure}[h]{0.45\textwidth}
\includegraphics[width=\textwidth]{./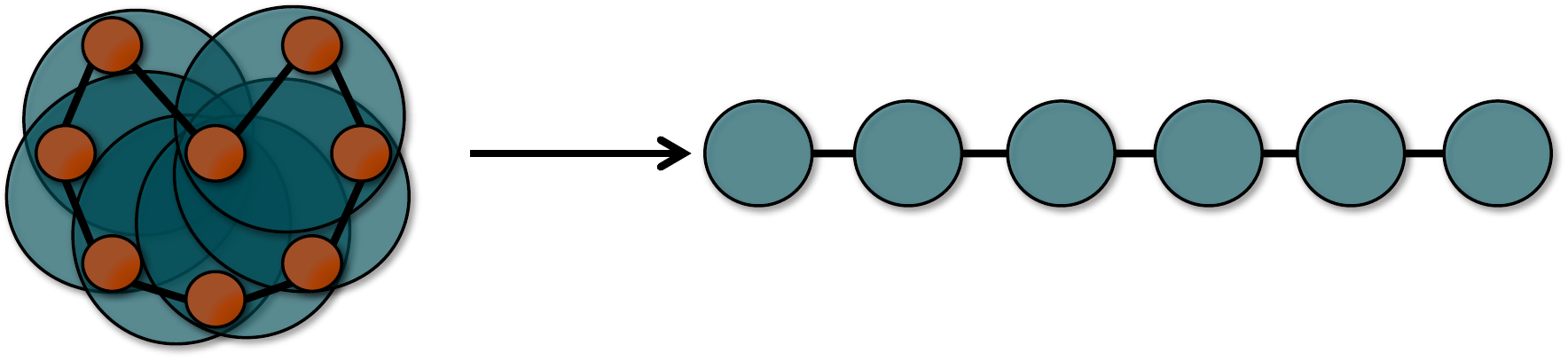}
\caption{An optimal TD of a cycle which  is similar
  to the decomposition found by \textsc{mindeg}.  The center node is
  placed in every bag of the decomposition.}
\label{fig:td_example-ring_mind}
\end{subfigure}
\begin{subfigure}[h]{0.45\textwidth}
\includegraphics[width=\textwidth]{./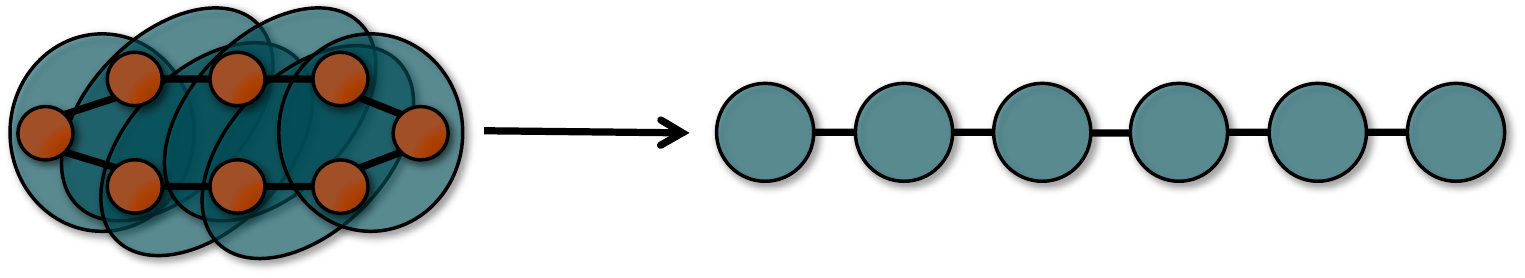}
\caption{ An optimal TD of the cycle which is similar
  to the decomposition found by \textsc{lexm}. Note
  the cycle is flattened and the bags are formed across the
  decomposition.}
\label{fig:td_example-ring_lexm}
\end{subfigure}
\begin{subfigure}[h]{0.45\textwidth}
\includegraphics[width=\textwidth]{./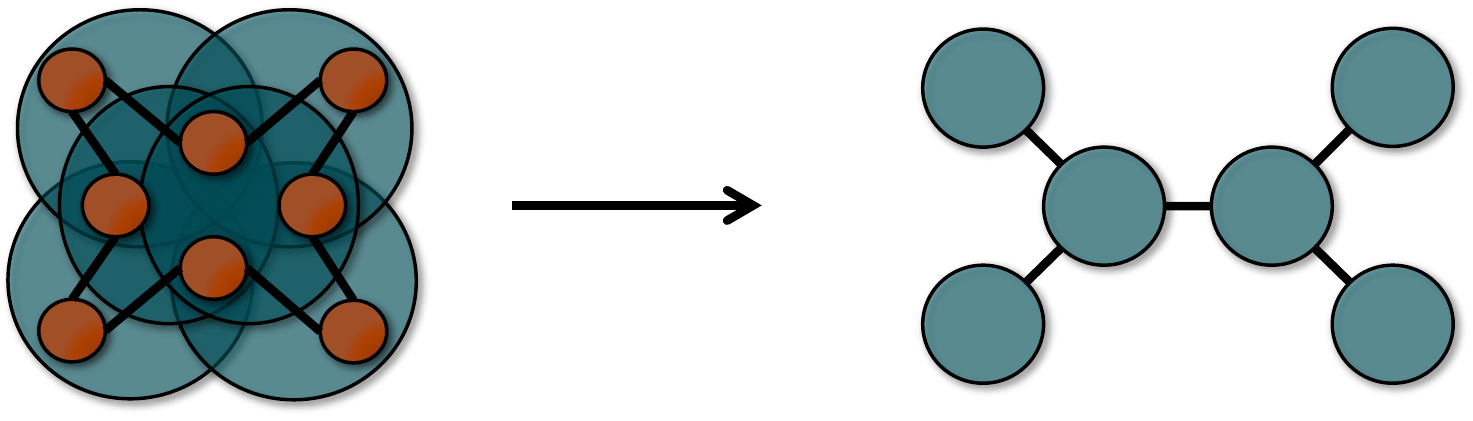}
\caption{ An optimal TD of the cycle which is similar to the
  decomposition found by \textsc{metnnd}.  The cycle is ``pinched'' in
  several places, forming central bags; the remaining pieces of
  the cycle can then be decomposed recursively (pinched in again) or
  using the methods in (b) and (c).}
\label{fig:td_example-ring_metnnd}
\end{subfigure}
\caption{Example TDs.  The tree and the
  clique have a standard optimal TD.  The cycle has many possible minimum width TDs, though all
  place disconnected nodes in the bags.}
\label{fig:td_example}
\end{center}
\end{figure}

\subsection{TDs on clique-like and cycle-like toy networks}

An important aspect of TD heuristics is the difference
between their behavior on (denser) clique-like graphs and (sparser)
cycle-like graphs.  In Figure \ref{fig:td_example}, we illustrate 
this.  First, for reference, in
Figures~\ref{fig:td_example-tree} and~\ref{fig:td_example-clique} we give 
canonical minimum width TDs for a tree and a clique.  To understand the difference between 
cycles and cliques, recall that there are many ways of producing
a TD on a cycle (three of these are illustrated in
Figures~\ref{fig:td_example-ring_mind},
\ref{fig:td_example-ring_lexm}, and \ref{fig:td_example-ring_metnnd}).
One simple way is to produce a tree which is a path.  This can be done
by taking a node $v$ and placing it and its two neighbors in a bag at
one end of the path.  Then, keeping $v$ in every bag, progress around
the cycle sequentially forming the next bag by including $v$ and the
two nodes of the next edge (see
Figure~\ref{fig:td_example-ring_mind}).  Another method produces a
path by ``flattening'' the cycle, and places each edge in a bag with
the node from the other side (see
Figure~\ref{fig:td_example-ring_lexm}).  The \textsc{metnnd} heuristic
``pinches'' the cycle at a few points, and the produces branches from
each of those points (see Figure~\ref{fig:td_example-ring_metnnd}).

There are many differences between these TD heuristics, but an important point is 
that the nodes in the cycle must be placed in bags with nodes they are \emph{not} neighbors of in the original graph.
Different TD heuristics are very different in terms of how 
they make this decision, and its effect can be seen in the TDs constructed 
by these heuristics.

Another important consideration is the interior structure of 
each bag that is produced by a TD heuristic.  
Recall that in \textsc{SmallClique}, the only valid TD (which does not contain unnecessary bags) 
is a single bag containing the entire network.  
Relatedly, if the network is a \emph{k-tree}, formed by overlapping cliques 
(rather than overlapping edges, as in a normal or 2-tree), then the TD will 
have bags which consist of the individual cliques.  
Thus, with cliques, it is the \emph{local structure} (local in the original graph, 
in the sense that it is driven by neighbors of a given node in the original 
graph, in contrast with what is going on in, e.g., \textsc{SmallCycle}) that 
drives the bag formation.
With cycles, on the other hand, this local structure is partially ``lost'' in 
the bags of the TD.  
This is of interest since, as already mentioned, the interior structure of 
bags of different widths is important for understanding what is creating the 
properties of the TD.

As an example, we observe that, for all of the heuristics, the larger bags 
on \textsc{SmallPlanar} have many disconnected nodes and only a few edges.  
This is a signature of ``cyclical'' behavior; and, 
indeed, from the TD perspective, the grid ``looks like'' a set of small, 
regular, overlapping cycles.  
The structure of the TD is formed by the heuristic's method of moving across 
the grid and closing cycles.  
This suggests that a simple metric to measure whether the interior of a bag 
is driven by cycles or is driven by small, tightly connected clusters: 
measure the fraction of edges present in the bag, i.e., the \emph{edge 
density} of the bag.
(We will do this below, and this is why we color many of the visualizations by the density of the bag.)

\subsection{TDs on well-partitionable and poorly-partitionable toy networks}

\textsc{SmallPlanar} (for which there exist good well-balanced partitions) 
and \textsc{SmallER} (for which there do not exist good well-balanced 
partitions) also illustrate differences between the TD heuristics.
For both graphs, the greedy heuristics and \textsc{metnnd} have significantly
smaller median widths than maximum widths.  
This is indicative of heterogeneity in the network: there 
are nodes which are so entangled with other nodes that they must appear 
together in a large bag,  
but there are also nodes which are connected to only a small number of other 
nodes and only need to appear in a few very small bags.  
This can partially be explained by the tendency of the greedy heuristics to 
work from the ``boundary'' of the graph (e.g., 
boundary nodes have smaller degrees) and to pick points to ``eat into'' the 
graph.

This is illustrated in Figure \ref{fig:bag_in_plane} for
\textsc{SmallPlanar}.  Using \textsc{mindeg} as an example, recall
that heuristic works by successively picking a minimum degree node in
the network; thus, when applied to \textsc{SmallPlanar}, it will pick
each of the corner vertices of the grid first.  This then forms small
bags at each corner and, depending on whether it is picking
non-unique nodes at random or in an ordered fashion, it will then
proceed to work in from the periphery of the network.  Indeed, in
Figure \ref{fig:toy_td_mind} and \ref{fig:toy_td_metnnd}, we see that
the TDs for these heuristics have four major arms with small leaves
containing nodes from the border of the grid.  Figure
\ref{fig:bag_in_plane} provides a visualization of where the nodes
from these bags (one of the peripheral bags and one of the central
bags in the TD) are in \textsc{SmallPlanar}.  (See, in particular
Figures~\ref{fig:mind_bag_upper} and~\ref{fig:mind_bag_central} for
\textsc{mindeg}.)

\begin{figure}[!htb]
\begin{center}
\begin{subfigure}[h]{0.20\textwidth}
  \includegraphics[width=\textwidth]{./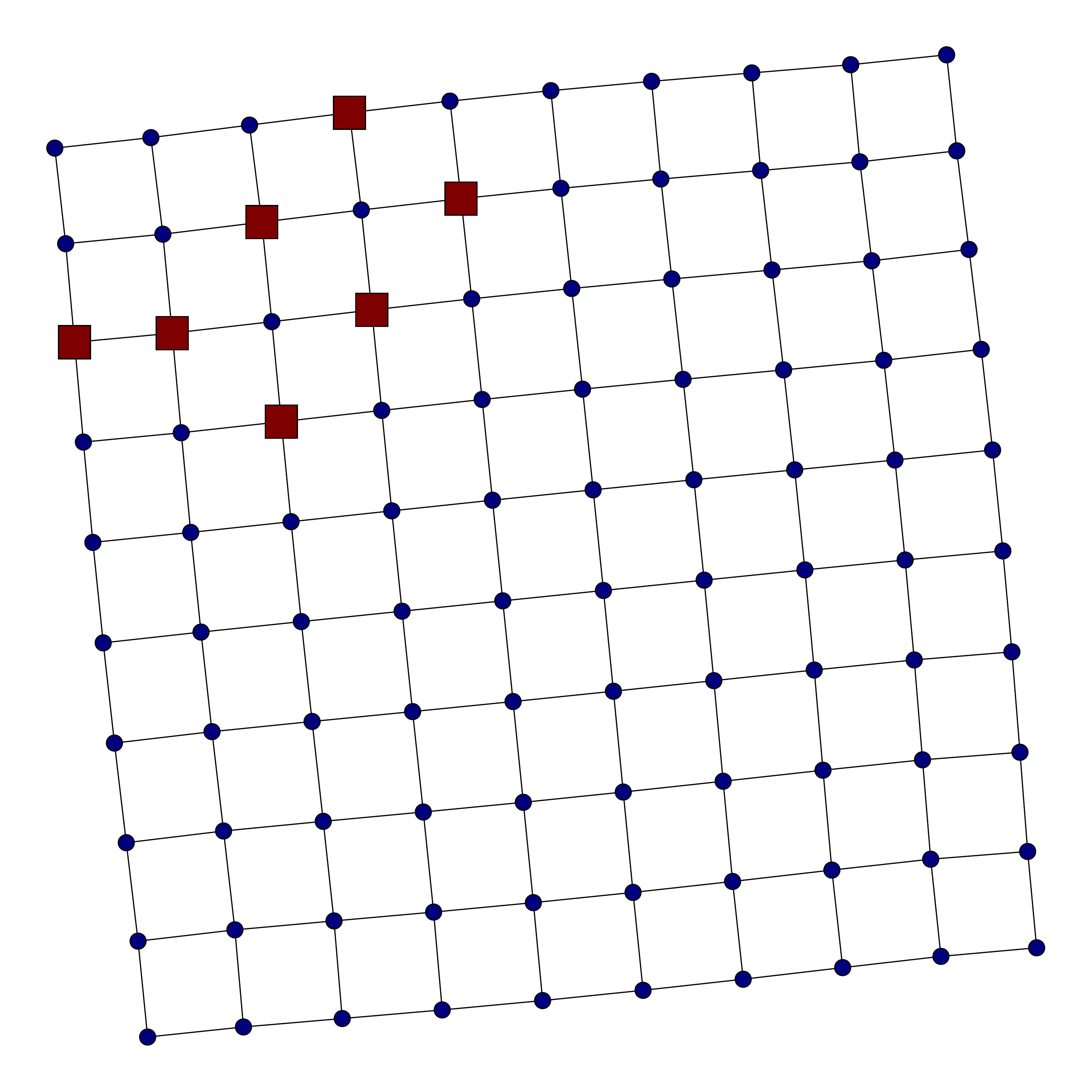}
  \caption{\centering \textsc{mindeg} bag from upper left arm of Figure
    \ref{fig:mind_10_10_plane}. }
  \label{fig:mind_bag_upper}
\end{subfigure}
\begin{subfigure}[h]{0.20\textwidth}
  \includegraphics[width=\textwidth]{./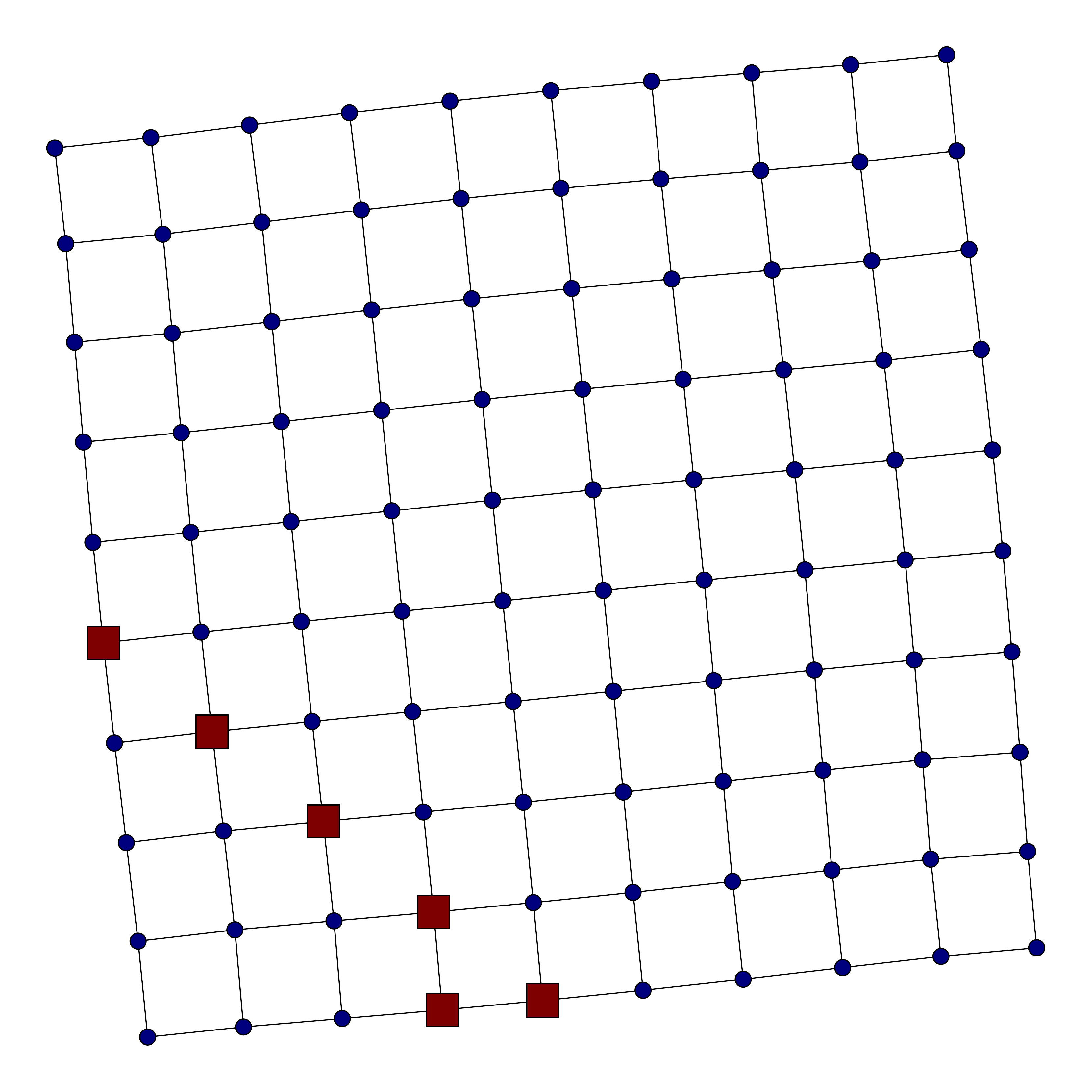}
  \caption{\centering \textsc{lexm} bag from lower right of Figure
    \ref{fig:lexm_10_10_plane}. }
  \label{fig:lexm_bag_lower}
\end{subfigure}
\begin{subfigure}[h]{0.20\textwidth}
  \includegraphics[width=\textwidth]{./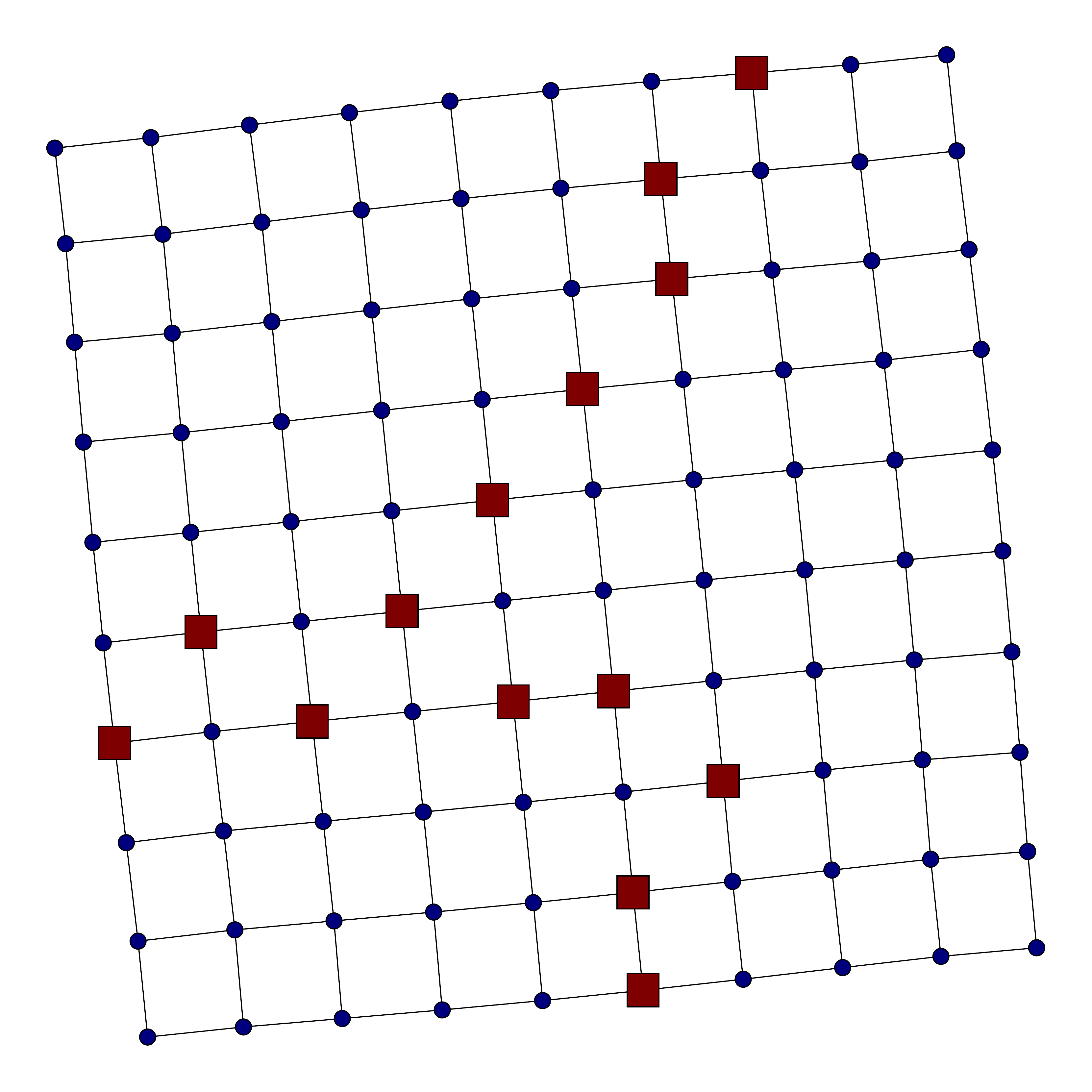}
  \caption{\centering \textsc{mindeg} central bag in Figure
    \ref{fig:mind_10_10_plane}. }
  \label{fig:mind_bag_central}
\end{subfigure}
\begin{subfigure}[h]{0.20\textwidth}
  \includegraphics[width=\textwidth]{./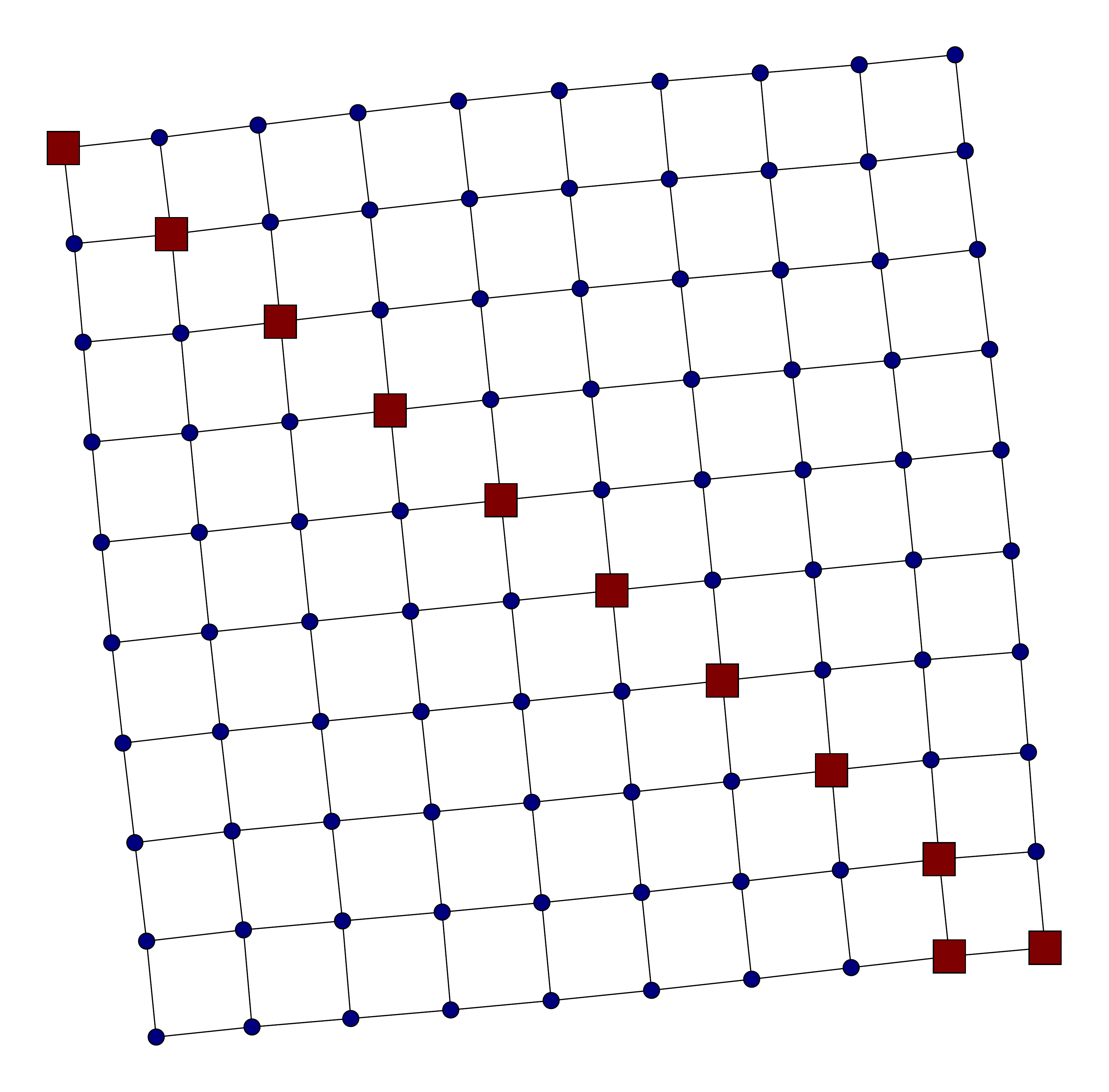}
  \caption{\centering \textsc{lexm} central bag in Figure
    \ref{fig:lexm_10_10_plane}. }
  \label{fig:lexm_bag_central}
\end{subfigure}
\caption{Representative bags from an arm in \textsc{mindeg} and an arm
  in \textsc{lexm}, as well as the associated central bags.  In
  \textsc{mindeg}, each arm progresses from a different corner of
  \textsc{SmallPlanar}. However, when these bag lines converge, the
  central bags end-up containing pieces of each line, as in Figure
  \ref{fig:mind_bag_central}.  In \textsc{lexm}, the line proceeds
  diagonally across the grid from the lower left corner to the upper
  right in a regular manner, as in Figure \ref{fig:lexm_bag_central}.  
  This results in smaller central bags and produces a path 
  decomposition. See the main text for more details. }
\label{fig:bag_in_plane}
\end{center}
\end{figure}

The \textsc{lexm} and \textsc{mcs} heuristics, in contrast, find the
minimum possible width for the grid, but the TDs---as illustrated in
Figure~\ref{fig:toy_td_lexm} for \textsc{lexm}---that are produced are
long, path-like trunks.  This is due to they way that \textsc{lexm}
picks a starting node and then works across the graph in the style of
a breadth-first search.  With \textsc{SmallPlanar}, it starts at one
corner and then moves across the network to form a minimal
triangulation.  Although the (maximum) width is minimal, the median
widths of these networks are relatively large, as most of the bags are
roughly the same size (see Figures~\ref{fig:lexm_bag_lower}
and~\ref{fig:lexm_bag_central} for the results of \textsc{lexm}).  

With \textsc{SmallER} (which is harder to visualize since it doesn't embed
well in two dimensions), the \textsc{mindeg} and \textsc{metnnd} algorithms 
also eat in from the ``boundary'' of the network, where here ``boundary'' 
means nodes with \emph{slightly} smaller degrees or \emph{slightly} better 
cuts (slightly smaller due to random fluctuations).  
As with the very different \textsc{SmallPlanar}, this produces several arms 
and then a few central bags.  
In \textsc{SmallER}, the greedy heuristics produce a better TD in terms of 
width than the \textsc{lexm} and \textsc{mcs} heuristics, both search based 
heuristics.  
These similarities and (substantial) 
differences between TD heuristics in \textsc{SmallER} (compared with 
\textsc{SmallPlanar}) are apparent in Figures~\ref{fig:toy_td_mind}, 
\ref{fig:toy_td_metnnd}, and~\ref{fig:toy_td_lexm}.

\subsection{Summary of TD results on toy networks}

Overall, the greedy heuristics, e.g., \textsc{mindeg} or \textsc{metnnd}, 
seem to produce a better representation of the large-scale structure of 
\textsc{SmallPlanar} and \textsc{SmallER} than the \textsc{lexm} and 
\textsc{mcs} heuristics in two ways.  In
\textsc{SmallER}, the greedy heuristics find decompositions with both
smaller maximum as well as smaller median widths.  (Since most real networks have a
randomized aspect to their generation, this indicates that greedy
heuristics may be more useful.)  On \textsc{SmallPlanar}, the median
width is smaller and the greedy heuristics do a better job of
``capturing'' all four corners of the grid.  In other words, the
resulting tree decomposition has four branches, each of which is
tied to a specific corner of the network, while \textsc{lexm} and
\textsc{mcs} TDs capture two of the corner structures.  (Although the
maximum width is smaller with \textsc{lexm} and \textsc{mcs}, the
ability to capture what is an obvious visual feature of a simple
network is of potential interest.) 
In the rest of the paper, we will be considering significantly larger and 
more complicated networks than these toy examples.  
With these larger networks, the \textsc{metnnd} and \textsc{amd} heuristics, 
as implemented using INDDGO~\cite{Groer12}, are the most scalable, compared 
with the basic greedy algorithms (\textsc{mindeg} or \textsc{minfill}).  
The \textsc{amd} heuristic is very related to the \textsc{mindeg} heuristic 
(recall that \textsc{amd} picks minimum nodes based on an easy-to-compute 
\emph{approximation} of node degree), and it gives similar results to 
\textsc{mindeg}.  
The the most consistent difference between the two heuristics 
seems to be the number of central/overlapping bags produced.
Thus, we will often show results only for the \textsc{amd} heuristic as a 
matter of visual convenience.

\section{Tree decompositions of synthetic networks}
\label{sec:synth_results}

In this section, we will describe the results of using a variety of TD 
heuristics on a set of synthetic networks.  
We focus our attention on two simple classes of random graphs: the popular 
Erd\H{o}s-R\'{e}nyi (ER) random graphs (in 
Section~\ref{sec:synth_results-er}); and a power law (PL) extension of the 
basic ER model (in Section~\ref{sec:synth_results-pl}).  
(We emphasize, though, that similar qualitative results also hold for many
other random graph models---in their extremely sparse regimes.)
This will allow us to begin to understand how TDs behave in random graph models 
with a very simple random structure.  
Importantly, we will focus on \emph{extremely} sparse graphs.  
For the ER model, this means values of the connection probability $p$ that 
lead to the graph not even being fully-connected (in which case we will 
consider the giant component), while for the PL model this means values of 
the degree heterogeneity parameter $\gamma$ that are typically used to
describe many realistic networks and that lead to analogously sparse graphs.

ER graphs are often presented as ``strawmen,'' since they obviously do not 
provide a realistic model for many aspects of real-world networks (e.g., the 
heavy-tailed degree distributions and the non-zero clustering coefficient 
present in many real networks). 
Indeed, ``vanilla ER'' graphs that are often considered (e.g., ER graphs 
with densities that are sufficiently large that the graph is 
fully-connected) are \emph{not} tree-like---either by the metric notion of 
$\delta$-hyperbolicity or by the cut-based notion of TDs.
Recent work has shown, however, that with respect to their large-scale 
structure, \emph{extremely} sparse ER networks do capture several subtle but 
ubiquitous properties of interest in realistic networks: 
first, the small-scale versus large-scale isoperimetric structure of the 
NCP~\cite{LLDM09_communities_IM,Jeub15}; second, a size-resolved version of 
$\delta$-hyperbolicity that is consistent with large-scale metric 
tree-likeness~\cite{Adcock13_icdm}; and third, a non-trivial core-periphery 
structure with respect to $k$-core decompositions~\cite{Adcock13_icdm}.  
(In particular, in the sparsest regime of the ER networks that we consider, 
\textsc{ER(1.6)}, a very shallow core-periphery structure appears---whereas 
none exists at the higher densities.)  
Importantly, for all three of these properties, similar results were seen 
with other random graph models, such as PL random graphs in the regime of the degree heterogeneity parameter 
that is commonly-used.  
Prior work has also provided evidence that these extremely sparse random 
graphs have non-trivial tree-like structure (at least relative to much denser 
ER graphs) when viewed with respect to the cut-like notion of
tree-likeness~\cite{Adcock13_icdm}.

Here, we provide a much more detailed analysis of this phenomenon for TD 
heuristics applied to ER and PL graphs.  
We will be particularly interested in similarities between extremely sparse 
ER graphs and PL graphs with respect to the core-periphery structure (e.g., from $k$-core and related decompositions) of a network.  
Among other things, we show that this core-periphery structure is captured with the \textsc{amd} TD.  
Of particular interest is the how the core-periphery structure relates to 
central (low eccentricity) or perimeter (high eccentricity) bags in the TD.

\subsection{TDs of ER Networks}
\label{sec:synth_results-er}

Here, we give a summary of the empirical results of an analysis 
of TDs on ER random graphs, with an emphasis on the behavior as the 
connectivity parameter $p$ is varied.  
In the very sparse to extremely sparse regime, \textsc{ER} 
networks have non-trivial global structural changes as $p$ is
varied~\cite{erdos60random,bollobas85_rg}.  
In particular, for our subsequent results, there are three regimes of $p$ 
that are of interest: if $p < \frac{1}{n}$, then the largest connected 
component is $O(\log{n})$ in size, and the small components are likely 
trees; if $\frac{1}{n} < p < \frac{\log{n}}{n}$, then the graph has a giant 
component (i.e., a constant fraction of the size of the network is 
connected), and the remaining small components of size $O(\log{n})$ are 
likely trees; and if $p > \frac{\log{n}}{n}$, then almost surely the 
network is fully-connected, the degrees are very near their expected value,
and there are no good-conductance clusters (of any size).  
We are interested in these last two regimes, and we consider synthetic 
graphs (\textsc{ER(1.6)} through \textsc{ER(32)}---values of $p$ between
$1.6/5000$ and $32/5000$, for graphs with $n=5000$ nodes) that go from
extremely sparse to somewhat denser.  
Table~\ref{tbl:networks-basic-stats} provides basic statistics for
these graphs.

\subsubsection{Visualization and basic statistics}

We start with Figure~\ref{fig:er_td_viz} and 
Table~\ref{tbl:er_density_results}, which show the basic features of
the TDs of \textsc{ER} networks.  
Figure~\ref{fig:er_td_viz} presents a visualization of part of the output of 
a TD with the \textsc{amd} heuristic, colored by density of bag subgraph, 
for the sparsest (\textsc{ER(1.6)}) and densest (\textsc{ER(32)}) networks 
in our ER suite.  
Results are similar to 
those of \textsc{metnnd}.  
Observe that there is a much greater heterogeneity in the density of bags 
for \textsc{ER(1.6)} than for \textsc{ER(32)}.  
For the former, there are many small bags which are cliques; while for the 
latter, there are fewer small bags, and the bags are much sparser in 
general.  
This suggests (and we have verified by inspection) that the sparser \textsc{ER(1.6)} has greater structural 
heterogeneity than the denser \textsc{ER(32)}.

A more detailed understanding of this can be obtained from the summary
statistics in Table~\ref{tbl:er_density_results}.  Several
observations are worth making.  First, the number of bags in the TD
tends to decrease as the density $p$ increases (with the exception of
the sparsest regime, where the giant component is smaller).  This
is because the network is mostly placed into one bag, 
and only a few bags are needed to take care of the
remaining nodes.  Second, the TD itself, viewed as a graph, has
smaller diameter as the density $p$ increases.  Third, the maximum and
median width increases with the average degree of the network.
Indeed, the width increases quickly with the average degree, with the
largest bag (at the ``center'' of the TD) containing $77\%$ of the
nodes in the network in \textsc{ER(32)}. Finally, the median density
of the bags decreases dramatically as the density of the original
graph increases.  This initially-counterintuitive phenomenon is
easily-explained: for extremely sparse ER, the
TDs have many small bags, which only need small numbers of edges to have
a reasonably high edge density.  With the dense graphs, many nodes
have to be placed in each bag, and this requires quadratically more edges per bag to achieve a similar density.  

\begin{figure*}[!htb]
\begin{center}
\begin{subfigure}[h]{0.45\textwidth}
\includegraphics[width=\textwidth]{./figures/5000_0_00016_amd_density_box.pdf}
\caption{\centering \textsc{ER(1.6)}, the largest bag in this figure
  contains 80 nodes.}
\end{subfigure}
\begin{subfigure}[h]{0.45\textwidth}
\includegraphics[width=\textwidth]{./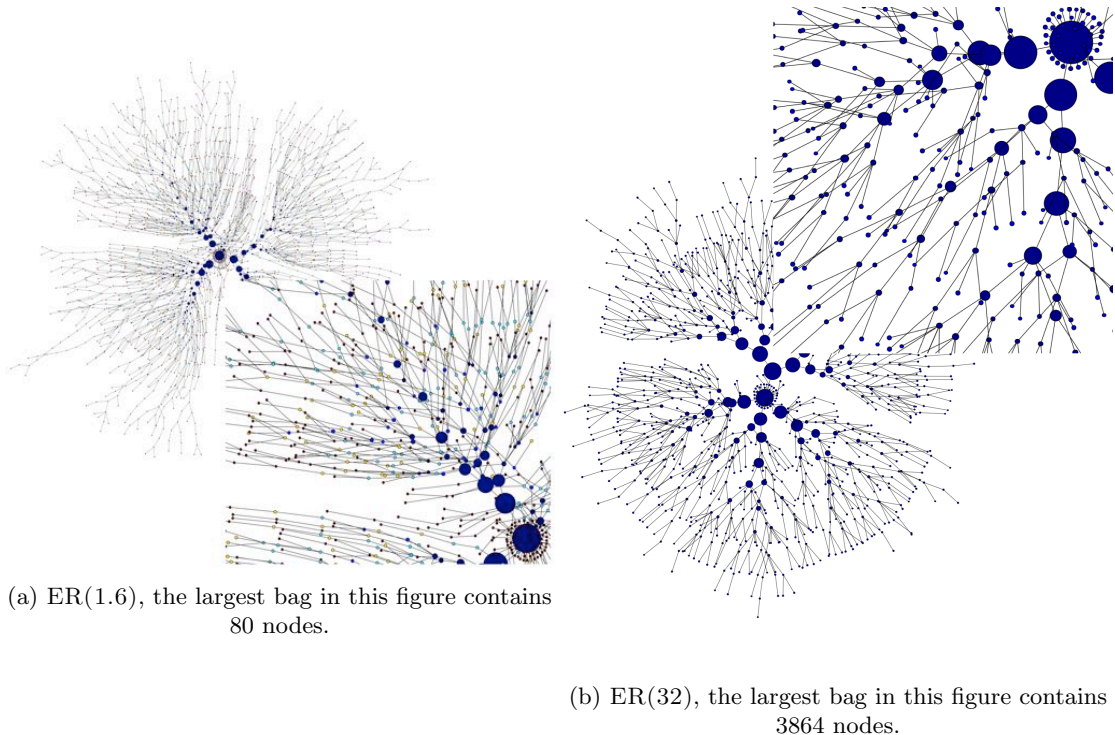}
\caption{\centering \textsc{ER(32)}, the largest bag in this figure
  contains 3864 nodes.}
\end{subfigure}
\caption{Visualization of \textsc{ER(1.6)} and \textsc{ER(32)}
  \textsc{amd} tree decompositions, colored by the density of the bag
  subgraph.  For visualization purposes, the two networks are not
  drawn to the same scale.  The bags in \textsc{ER(32)} have widths
  that are approximately 50 times larger that that of
  \textsc{ER(1.6)}.  The blowups show the upper-left corner of the
  visualization in greater details.  The blowups show the color of
  some of the smaller bags that are in the peripheral part of each
  TD. In \textsc{ER(1.6)}, many of the very the small bags are red
  (meaning they contain a clique, the vast majority of which are
  simply a single edge).  Slightly larger bags are light blue or
  yellow (indicating an edge density of ca. $0.25$ (light blue) to
  ca. $0.75$ (yellow)).  In \textsc{ER(32)}, all of the bags of the TD
  are dark blue, indicating that these bags are all very sparse,
  regardless of whether they are peripheral or central to the TD.
  The statistics in Table \ref{tbl:er_density_results} confirm this.}
\label{fig:er_td_viz}
\end{center}
\end{figure*}


\begin{table}
\centering
\begin{subtable}{.5\textwidth}
\centering

{\footnotesize
\begin{tabular}{l|r|r|r|r|r|}
Network & $N_{\textsc{amd}}$ & $E_{\textsc{amd}}$ & $W$ &
$\tilde{W}$ & $\tilde{D}$ \\
\hline \hline
\textsc{ER(1.6)} & 3127 & 44 &   79 &  1 & 1.0  \\
\textsc{ER(1.8)} & 3457 & 38 &  157 &  1 & 1.0  \\
\textsc{ER(2)}   & 3760 & 38 &  235 &  2 & 0.67 \\
\textsc{ER(4)}   & 3777 & 35 & 1093 &  3 & 0.40  \\
\textsc{ER(8)}   & 2787 & 29 & 2208 &  8 & 0.20  \\
\textsc{ER(16)}  & 1856 & 28 & 3142 & 17 & 0.12  \\
\textsc{ER(32)}  & 1136 & 22 & 3863 & 33 & 0.06 \\
\end{tabular}}

\caption{\textsc{ER} networks}
\label{tbl:er_density_results}
\end{subtable}%
\begin{subtable}{.5\textwidth}
\centering

{\footnotesize
\begin{tabular}{l|r|r|r|r|r|}
Network & $N_{\textsc{amd}}$ & $E_{\textsc{amd}}$ & $W$ &
$\tilde{W}$ & $\tilde{D}$ \\
\hline \hline
\textsc{PL(2.5)}  & 4672 & 32 & 219 & 1 & 1.0 \\
\textsc{PL(2.75)} & 4500 & 39 & 148 & 1 & 1.0 \\
\textsc{PL(3.0)}  & 3974 & 36 & 96  & 1 & 1.0 \\
\end{tabular}}

\caption{\textsc{PL} networks}
\label{tbl:pl_density_results}
\end{subtable}

\caption{Basic \textsc{amd} TD statistics for \textsc{ER} and \textsc{PL}
  networks. $N_{\textsc{amd}}$ gives the number of bags in the TD,
  $E_{\textsc{amd}}$ gives the maximum eccentricity
  (diameter) of the TD, $W$ and $\tilde{W}$ are the
  maximum and median width of the TD, and $\tilde{D}$ is the
  median bag density.}
\label{tbl:density_results}
\end{table}

We would next like to look in more detail at the
structure of the TDs generated on these different ER networks (e.g., what 
changes as we move from the central, large bags of the TD to the smaller, 
peripheral bags of the TD) as well as the internal structure of each bag.  
Recall, first, that, in a very sparse \textsc{ER} graph with expected degree greater 
than $2 \log 2$, but still sufficiently sparse, there are three different 
parts of the random network (two parts which may be viewed as core-like, one 
part which may be viewed as periphery-like)~\cite{Percus08}.  
The core-like part of these graphs is bi-connected, and it has an 
expander-like inner core (i.e., a set of nodes of ``higher'' degree), 
surrounded by an outer core which has long chains of nodes (forming sparse 
cycles).
The third, peripheral, part of the network consists of tree ``whiskers'' 
that hang off the biconnected core.  
A similar structure has been observed empirically when looking at 
low-conductance clusters/communities in a wide range of large social and 
information networks~\cite{LLDM09_communities_IM} and also when looking at 
the Gromov hyperbolicity and $k$-core properties of these real-world
networks~\cite{Adcock13_icdm}.  
This contrasts sharply with the denser \textsc{ER} graphs, which are much 
more regular in terms of their degree variability, core structure, etc.  
Our results (here on extremely sparse ER graphs and below on PL graphs and 
many real-world graphs) demonstrate that TD heuristics can reflect this 
core-periphery structure.

\subsubsection{Internal bag structure}

Next, Figure~\ref{fig:er_dense_bag}
presents visualizations of three typical \textsc{amd} bags for
\textsc{ER(1.6)} and \textsc{ER(32)}, respectively.  In each case, the
three bags are the most central (lowest eccentricity) bag in the TD
(which we call the central bag), a typical bag that is a leaf in the
TD (a periphery bag), and a typical bag that is in
between these two in the TD (an intermediate bag).  
The color-coding is by $k$-core number, with high
core nodes being red and low core nodes being blue.  
Note that the central bag
for \textsc{ER(1.6)} is disconnected and consists of almost all
singletons, while the central bag for \textsc{ER(32)} is
well-connected; and that the intermediate and peripheral bags for
\textsc{ER(1.6)} are small, the latter consisting of only a single
edge, while for \textsc{ER(32)} both the intermediate and the
peripheral bag have non-trivial internal structure.

\begin{figure}[!ht]
\begin{center}
\begin{subfigure}[h]{0.15\textwidth}
\includegraphics[width=\textwidth]{./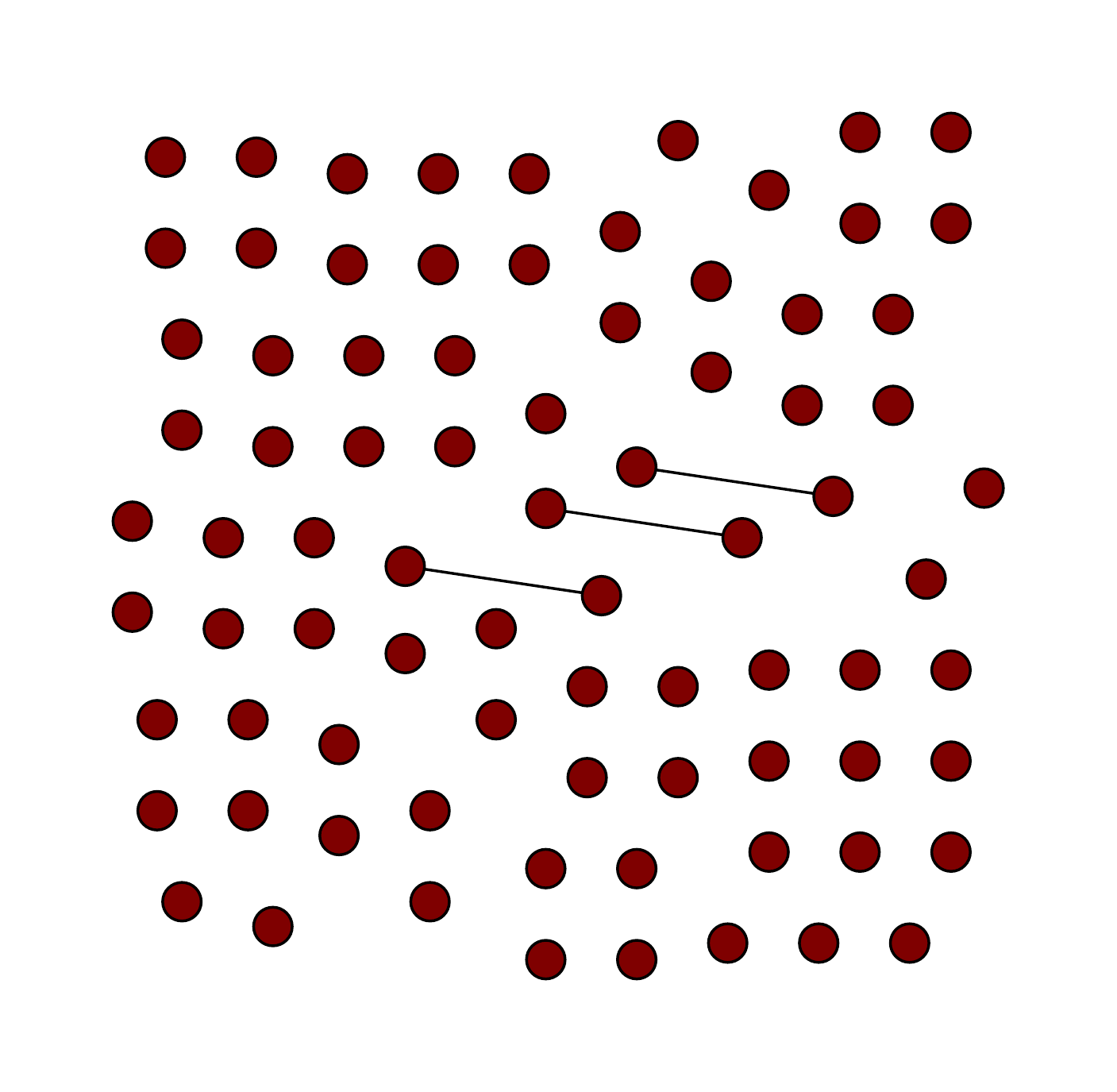}
\caption{\centering \textsc{ER(1.6)} central bag subgraph.}
\label{fig:er_sparse_bag-central}
\end{subfigure}
\begin{subfigure}[h]{0.15\textwidth}
\includegraphics[width=\textwidth]{./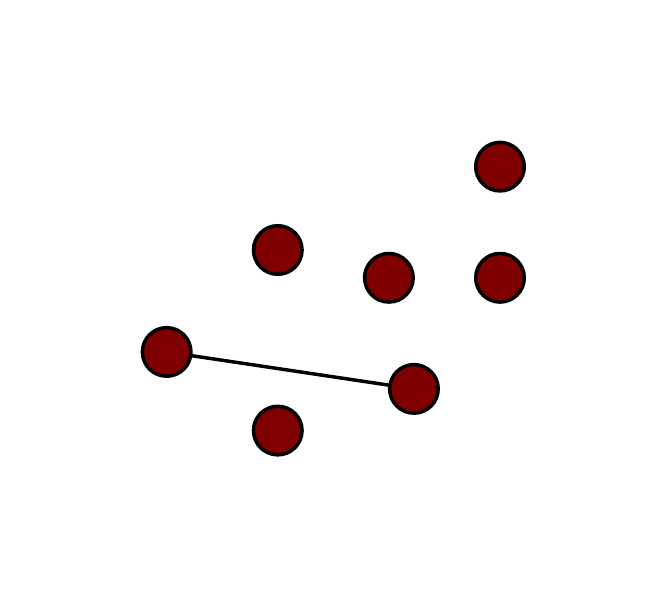}
\caption{\centering \textsc{ER(1.6)} intermediate bag subgraph.}
\end{subfigure}
\begin{subfigure}[h]{0.15\textwidth}
\includegraphics[width=\textwidth]{./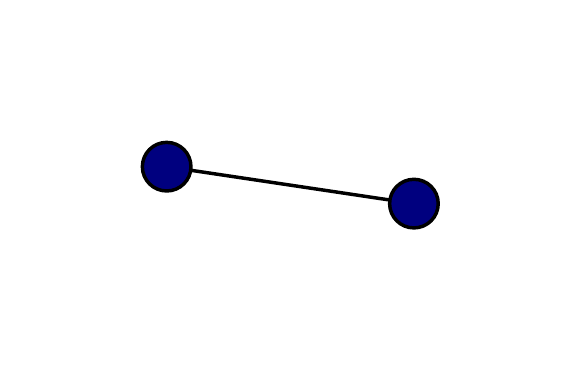}
\caption{\centering \textsc{ER(1.6)} peripheral bag subgraph.}
\end{subfigure}
\begin{subfigure}[h]{0.15\textwidth}
\includegraphics[width=\textwidth]{./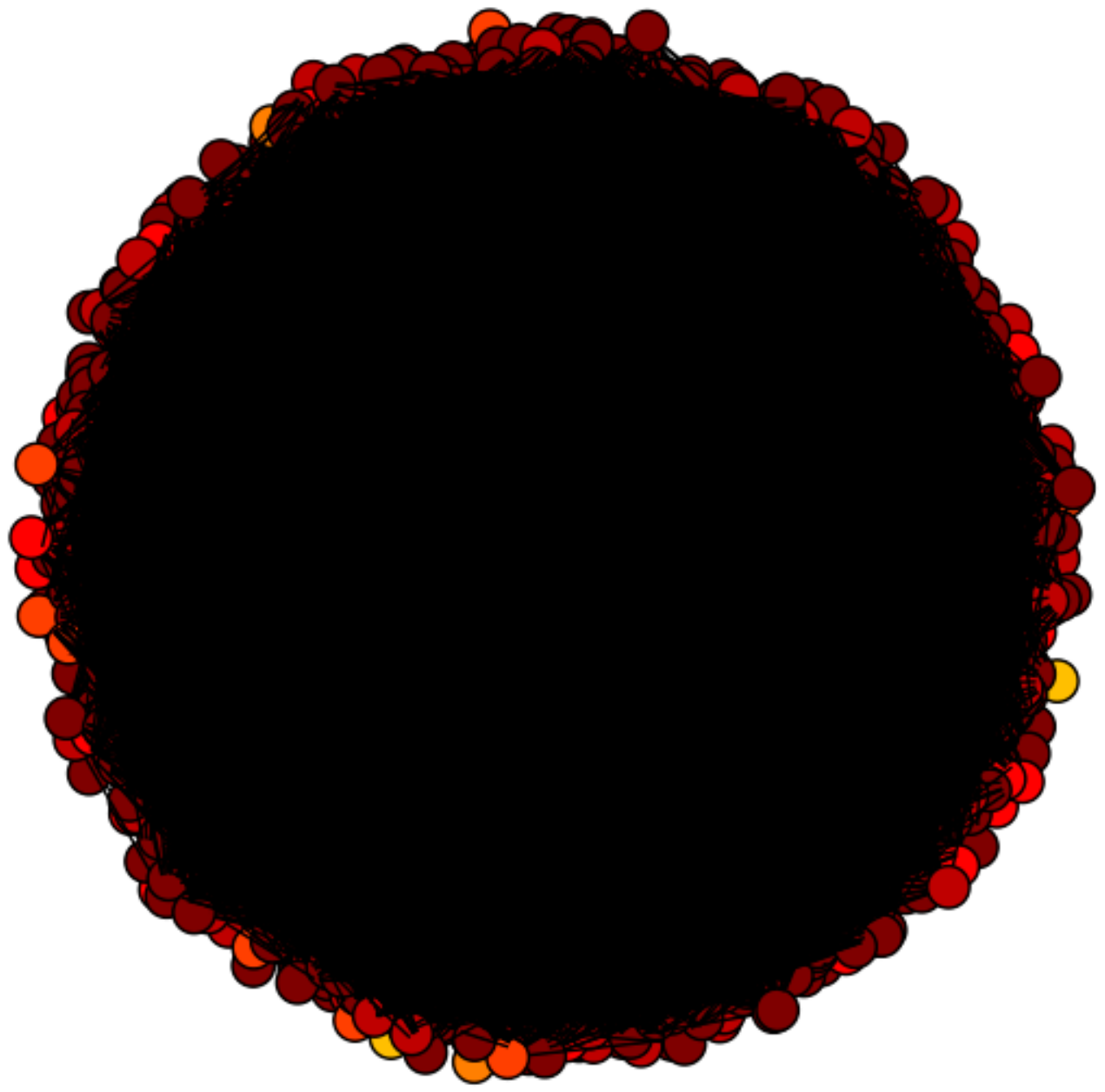}
\caption{\centering \textsc{ER(32)} central bag subgraph.}
\end{subfigure}
\begin{subfigure}[h]{0.15\textwidth}
\includegraphics[width=\textwidth]{./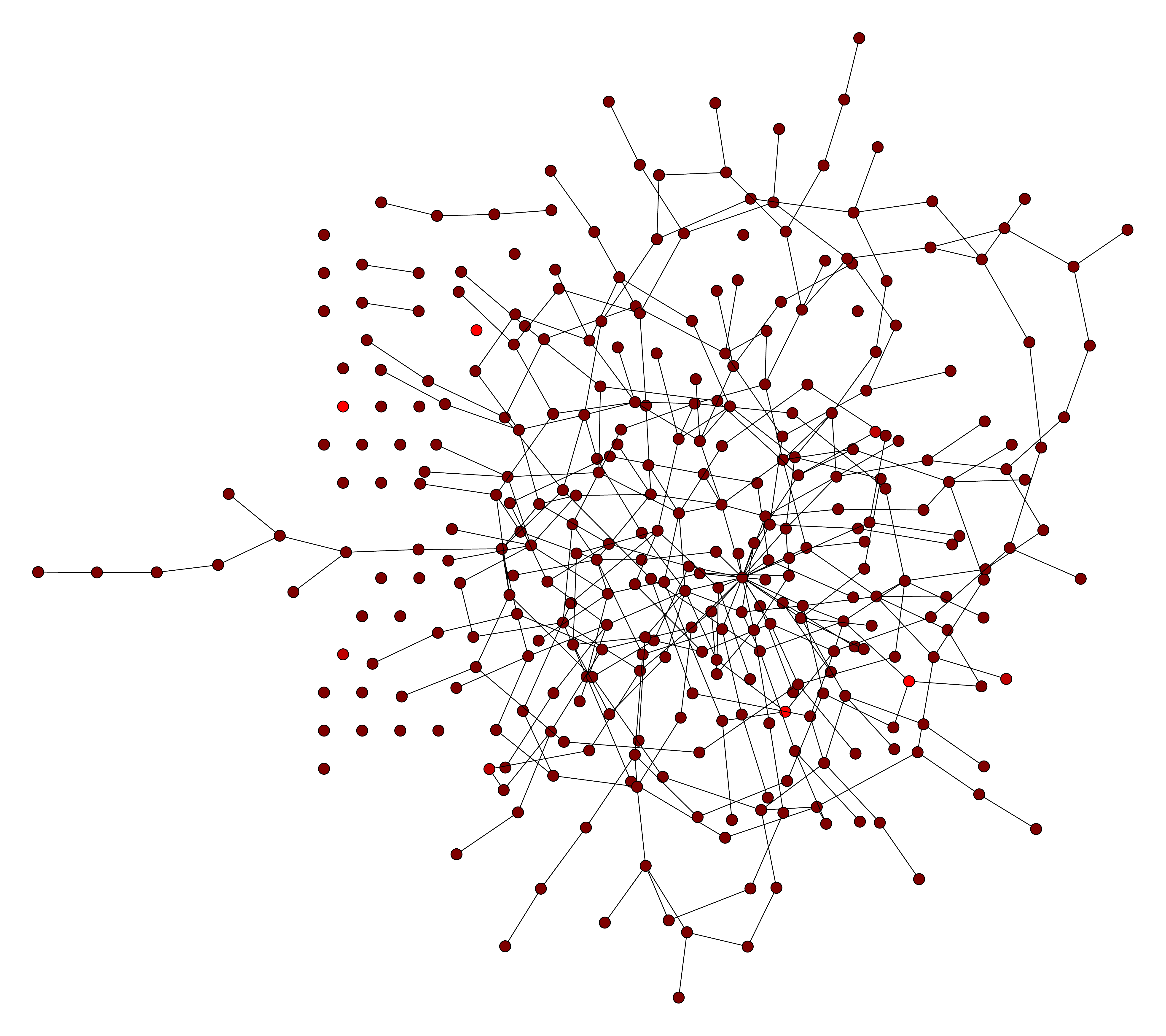}
\caption{\centering \textsc{ER(32)} intermediate bag subgraph.}
\end{subfigure}
\begin{subfigure}[h]{0.15\textwidth}
\includegraphics[width=\textwidth]{./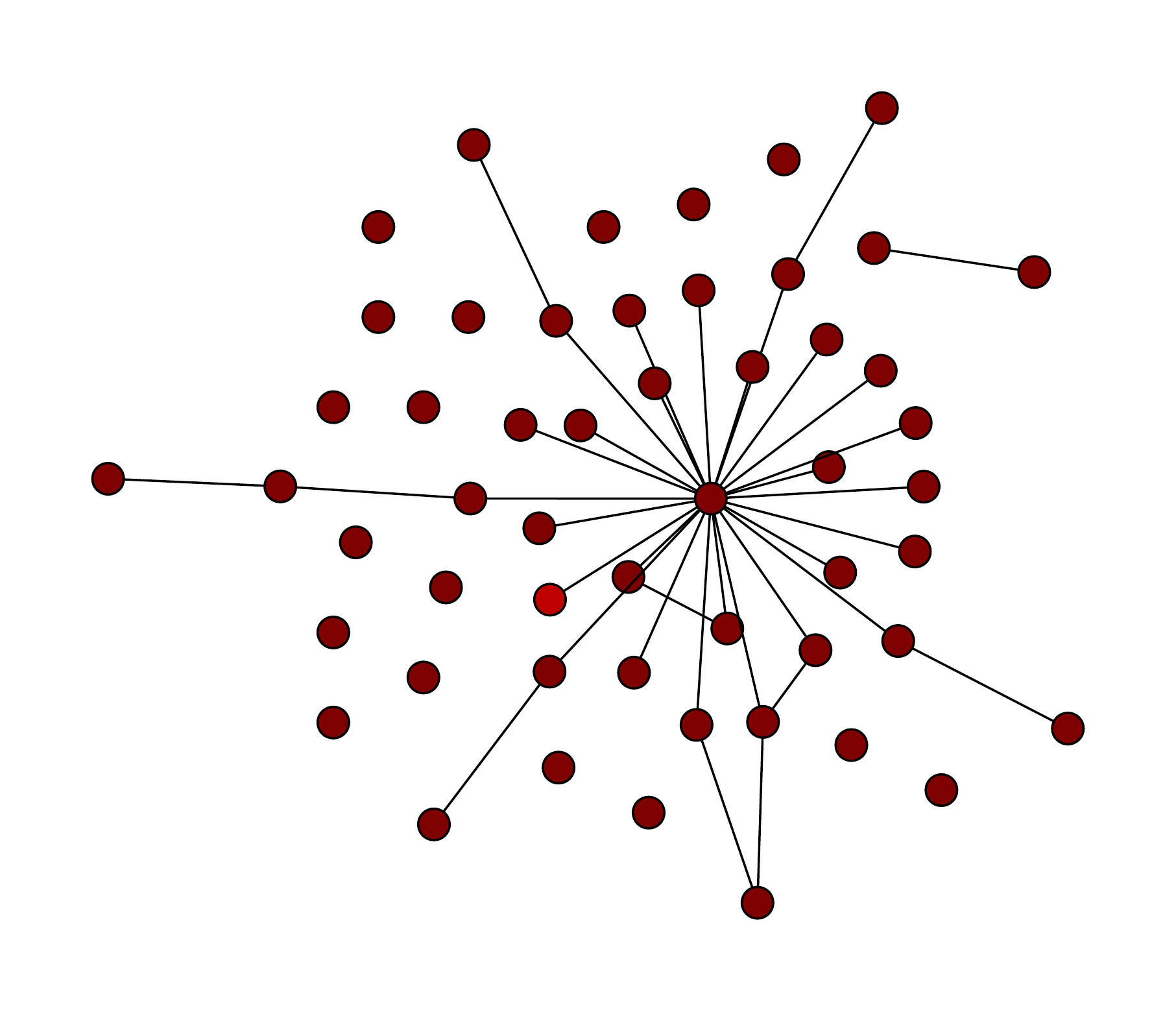}
\caption{\centering \textsc{ER(32)} peripheral  bag subgraph.}
\end{subfigure}
\caption{Bag subgraphs of a \textsc{amd} TD of \textsc{ER(1.6)} and \textsc{ER(32)}
  graphs, colored by the $k$-core number of the node (red is high $k$,
  blue is low $k$).  The central bag is the largest bag in the TD and
  one of the bags of minimum eccentricity; the peripheral bag is a
  leaf in the TD graph, and it achieves the minimum width in the TD;
  the intermediate bag is in between these two extremes.}
\label{fig:er_dense_bag}
\end{center}
\end{figure}

The increased density of \textsc{ER(32)} over \textsc{ER(1.6)} is the
obvious cause of these differences, but it is worth considering what
structures, produced by the increased density, affect the formation of
the \textsc{amd} TD.  Recall from the toy graphs that
heuristic TDs of cycles produced bags which had disconnected
nodes. There were several different ways of producing the
decomposition, but any TD of a small width on a cycle includes
disconnected nodes in most bags.  The more complex
\textsc{SmallPlanar} has many small overlapping cycles.  In that case,
the heuristics have to put many nonadjacent nodes into a bag.  
Essentially, cycles force distant nodes into the same
bag, and many overlapping cycles will force many distant nodes
into the same bag.

This intuition suggests (and we have confirmed by inspection) that a bag 
with many disconnected nodes, as in the central bag of \textsc{ER(1.6)} 
shown in Figure~\ref{fig:er_sparse_bag-central}, is due to a large number of
overlapping cyclical structures.
The intermediate bags of \textsc{ER(1.6)} contain nodes from the long, 
overlapping cycles of the outer core (and as these cycles do not overlap as 
much in periphery, these bags have fewer nodes), while the peripheral bags 
each contain a single edge, capturing the small trees on the periphery of
the network (see also Figure \ref{fig:er_td_viz}).  
The coloring of the nodes indicates the core-periphery structure of the 
subgraph induced by the bags.  
In \textsc{ER(1.6)} there is only a 1-core (blue) and a 2-core (red), thus 
the red nodes in the central bags are all in the 2-core, while 
the peripheral trees are in the 1-core, which agrees with \cite{Percus08}.

On the other hand, in \textsc{ER(32)}, whose core-periphery structure spans 
from a 7-core (blue) to a 23-core (red) although almost all of the nodes 
(94\%) are in the 23-core, the central bag contains a relatively 
tightly-connected mass of 77\% of the nodes in the network.  
This begins to look more like \textsc{SmallER}, which is a very dense ER 
network.  
The intermediate bags contain sparser structures (with some of the 
disconnected nodes and edges that are indicative of cyclical structures); 
and, although the peripheral bags still contain the smallest structures, in 
\textsc{ER(32)} they no longer contain only a single edge.
This indicates that even the sparsest regions contain cycles and other 
complicated structures (but very few triangles, which agrees with the small 
clustering coefficient of these networks).

\subsubsection{Large-scale organization}

To provide a more quantitative evaluation of these ideas and to 
characterize better the large-scale organization of these synthetic 
networks, consider Figures~\ref{fig:bag_hist}, \ref{fig:bag_density},
and~\ref{fig:bag_k_core}.  
These figures plot bag cardinality histograms, average bag density 
versus bag cardinality (this is width + 1), and average 
$k$-core versus bag eccentricity for two ER networks (as well as a suite of 
PL and real-world networks).  We will refer to other subfigures below, but 
for now consider only Figures~\ref{fig:bag_hist-er_sparse} 
and~\ref{fig:bag_hist-er_dense},
Figures~\ref{fig:bag_density-er_sparse}
and~\ref{fig:bag_density-er_dense}, and
Figures~\ref{fig:bag_k_core-er_sparse}
and~\ref{fig:bag_k_core-er_dense} for results on \textsc{ER(1.6)} and
\textsc{ER(32)}, respectively.

\begin{figure*}[!htb]
\begin{center}
\begin{subfigure}[h]{0.24\textwidth}
  \includegraphics[width=\textwidth]{./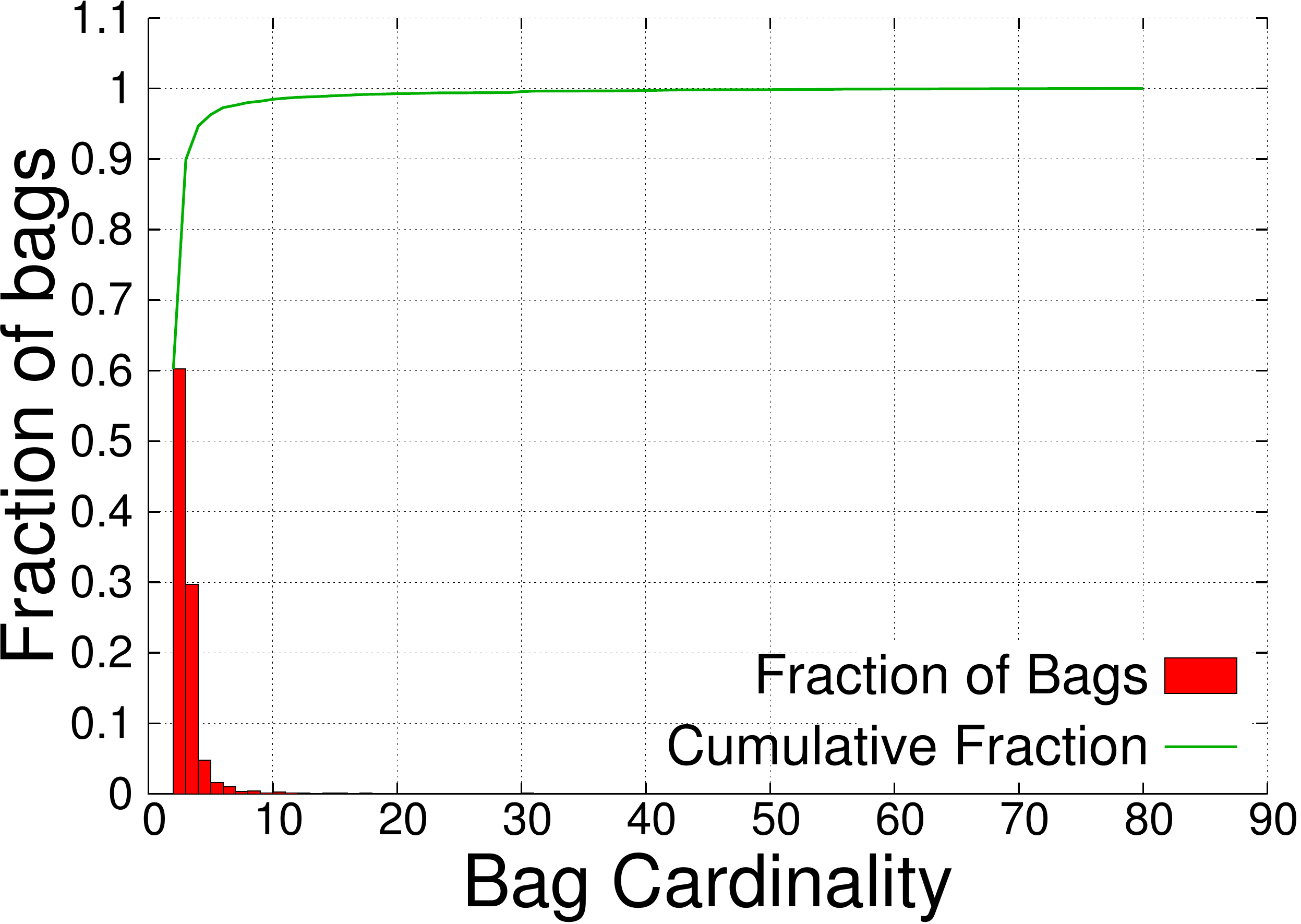}
\caption{\centering \textsc{ER(1.6)}.}
\label{fig:bag_hist-er_sparse}
\end{subfigure}
\begin{subfigure}[h]{0.24\textwidth}
\includegraphics[width=\textwidth]{./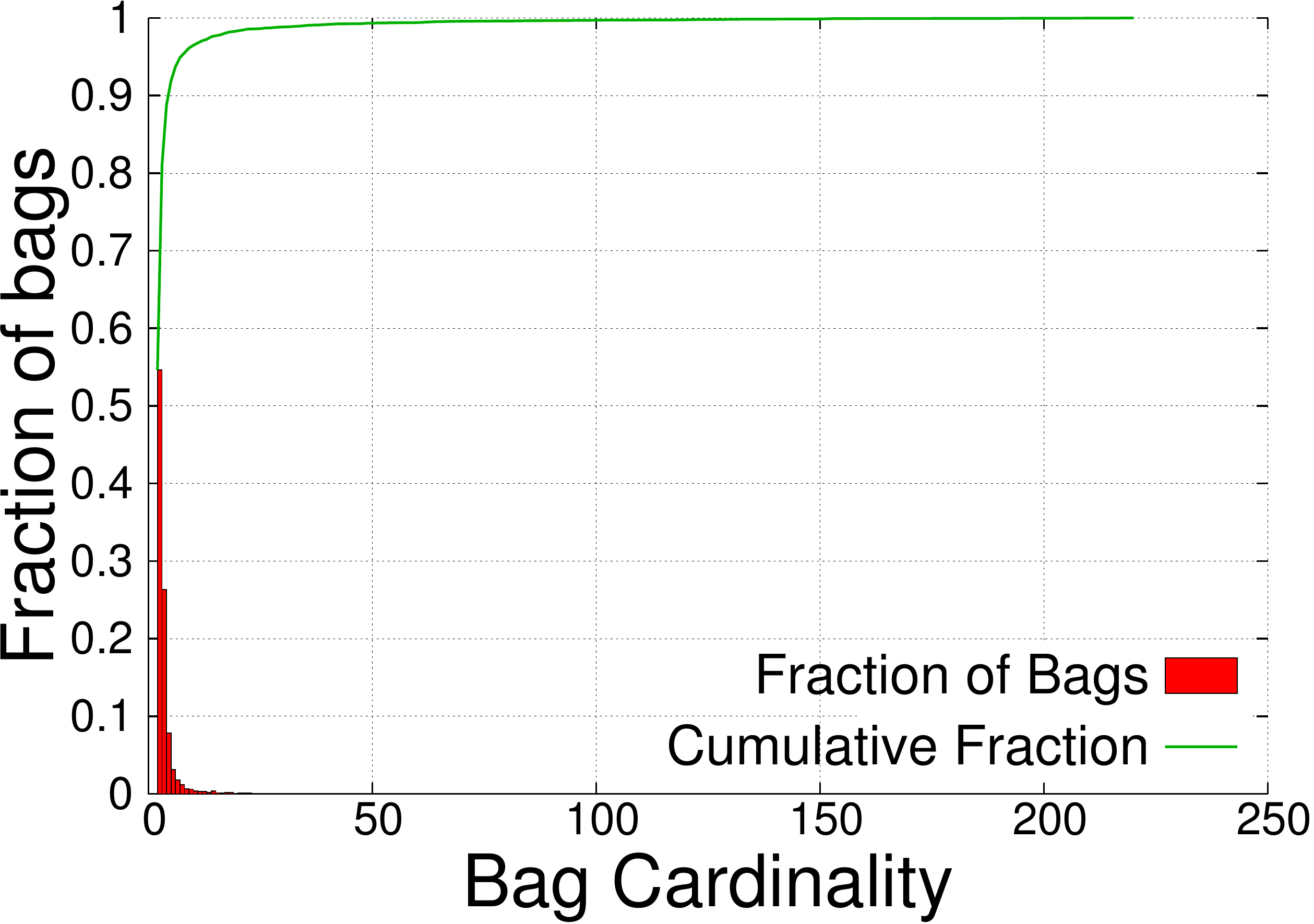}
\caption{\centering \textsc{PL(2.5)}.}
\end{subfigure}
\begin{subfigure}[h]{0.24\textwidth}
\includegraphics[width=\textwidth]{./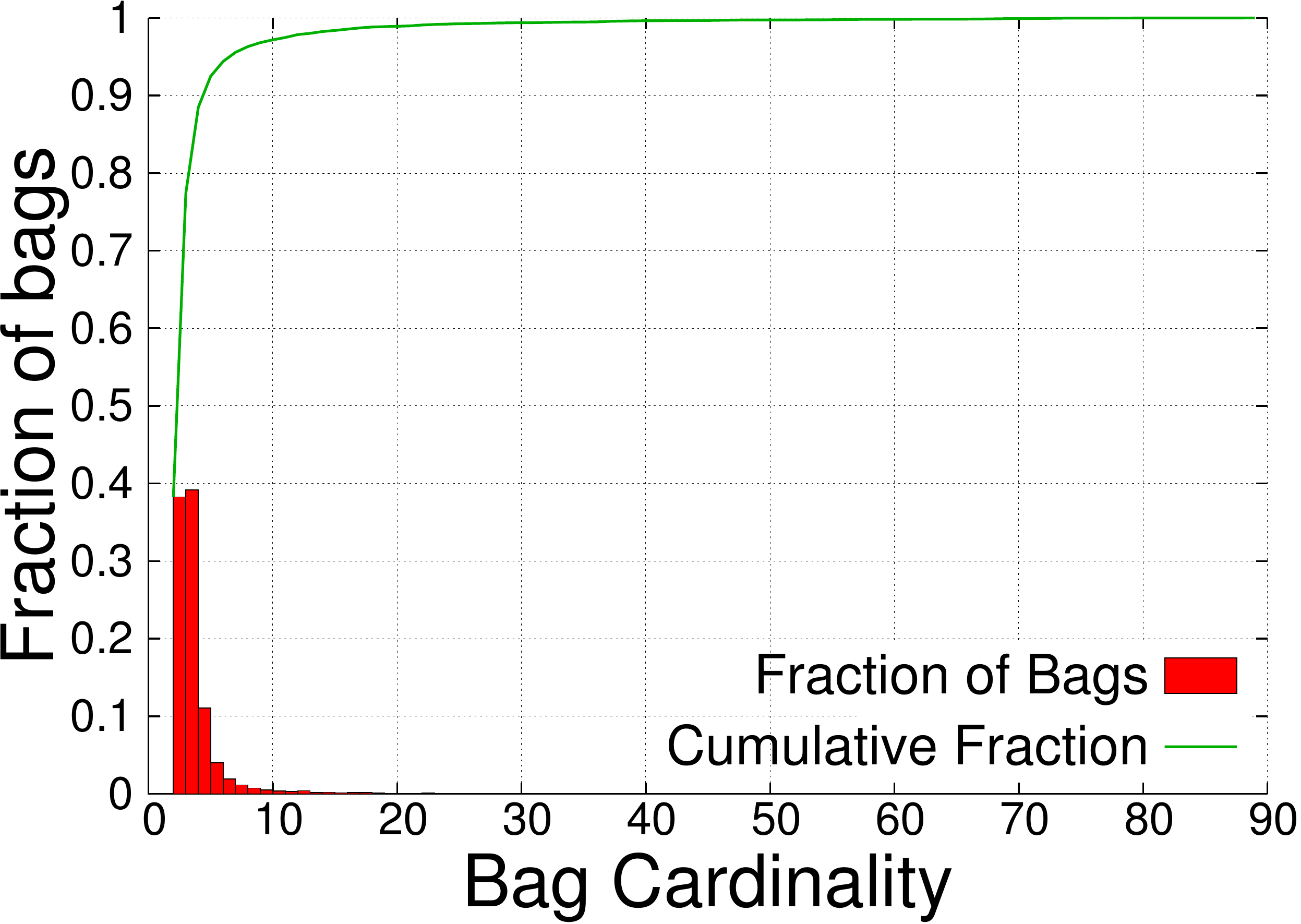}
\caption{\centering \textsc{as20000102}.}
\end{subfigure}
\begin{subfigure}[h]{0.24\textwidth}
\includegraphics[width=\textwidth]{./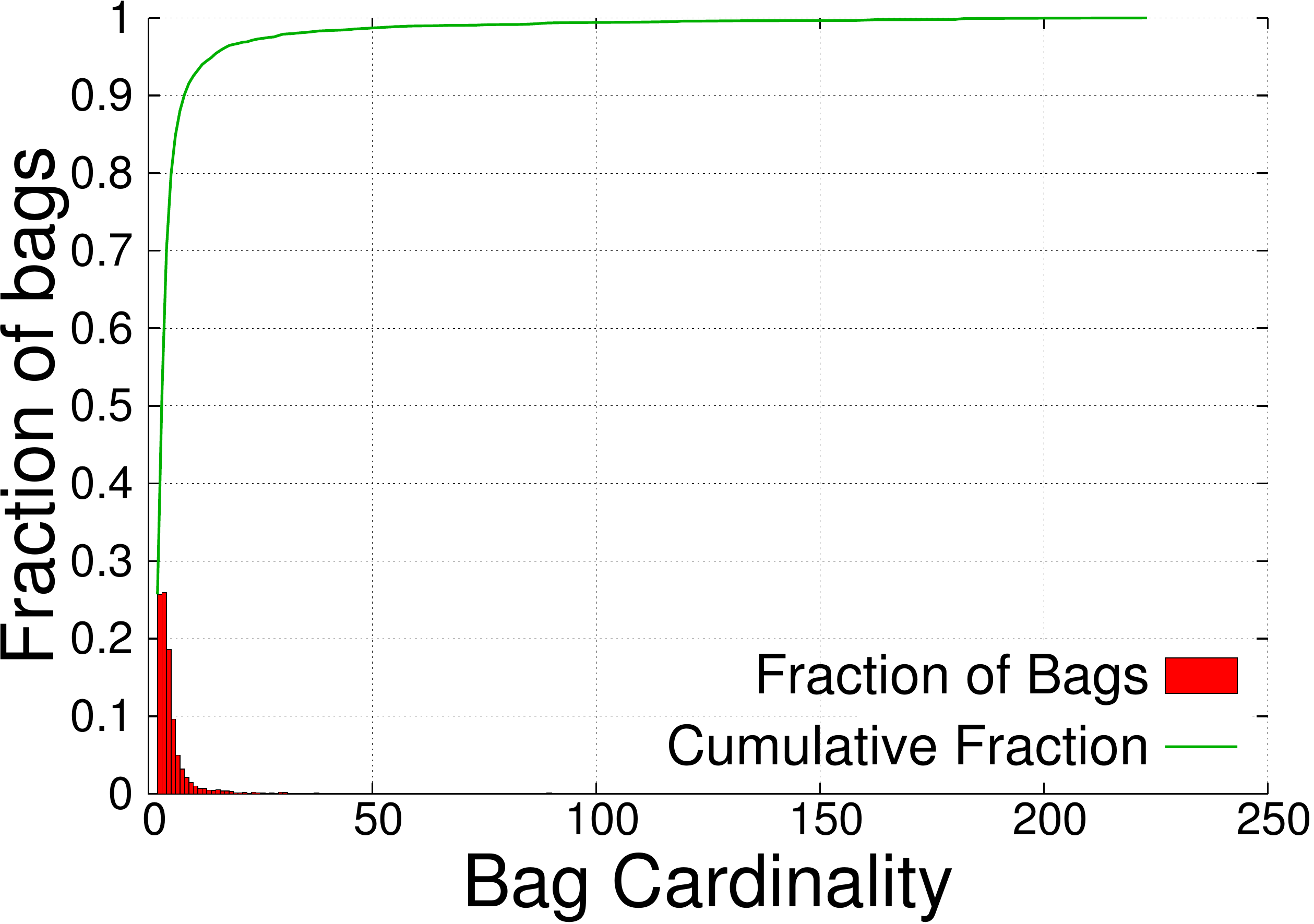}
\caption{\centering \textsc{CA-GrQc}.}
\end{subfigure}
\begin{subfigure}[h]{0.24\textwidth}
  \includegraphics[width=\textwidth]{./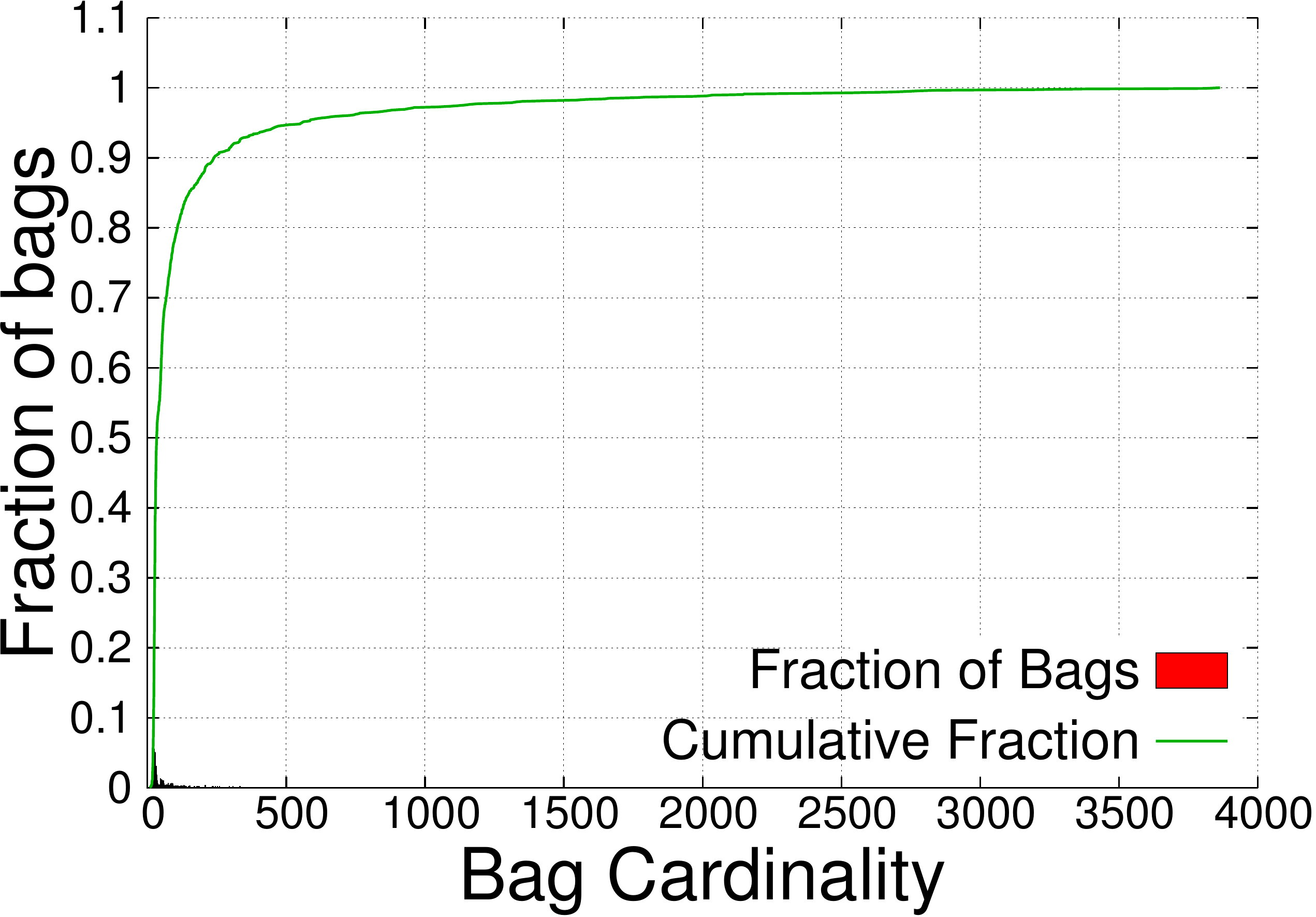}
\caption{\centering \textsc{ER(32)}.}
\label{fig:bag_hist-er_dense}
\end{subfigure}
\begin{subfigure}[h]{0.24\textwidth}
  \includegraphics[width=\textwidth]{./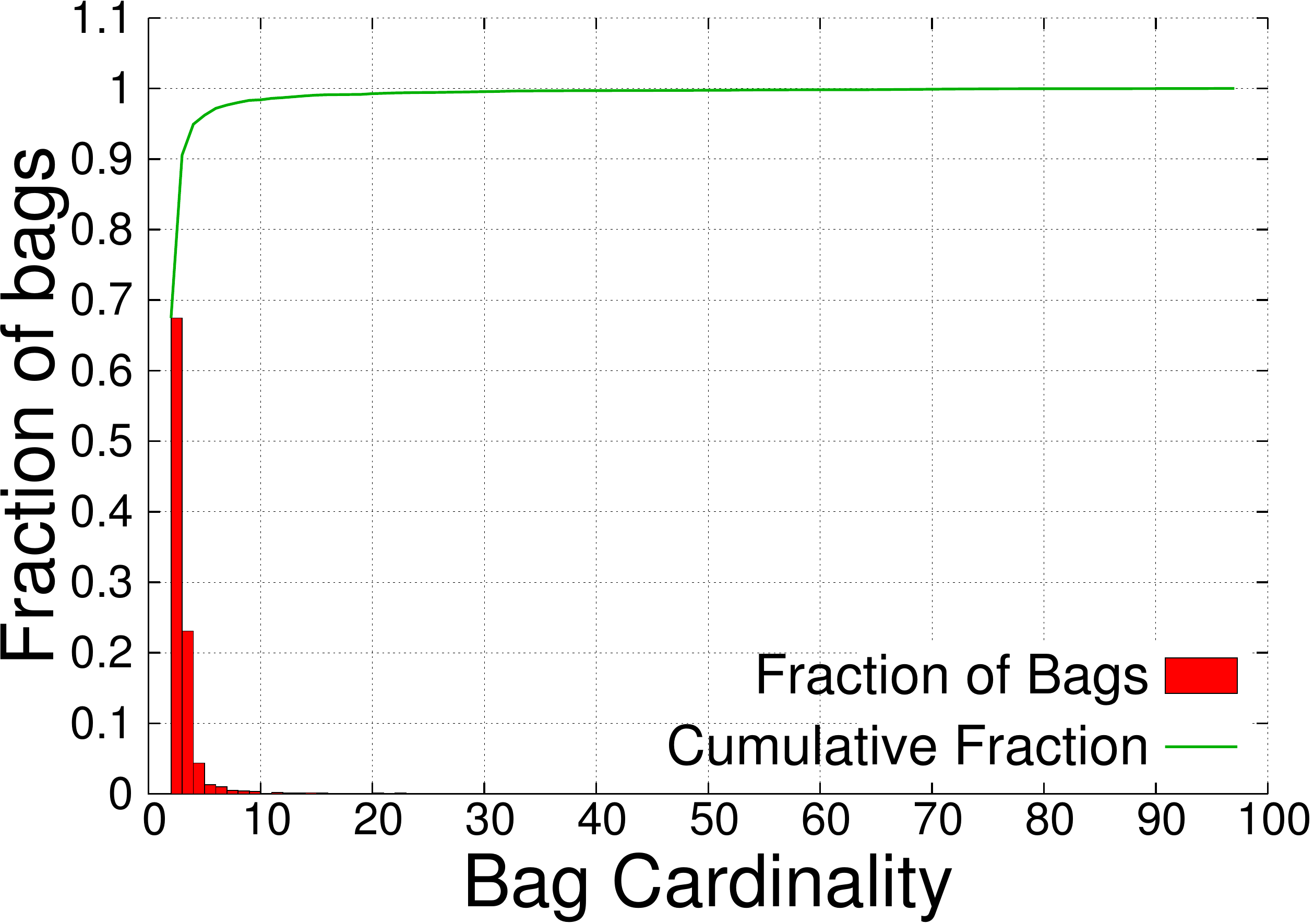}
\caption{\centering \textsc{PL(3.0)}.}
\end{subfigure}
\begin{subfigure}[h]{0.24\textwidth}
  \includegraphics[width=\textwidth]{./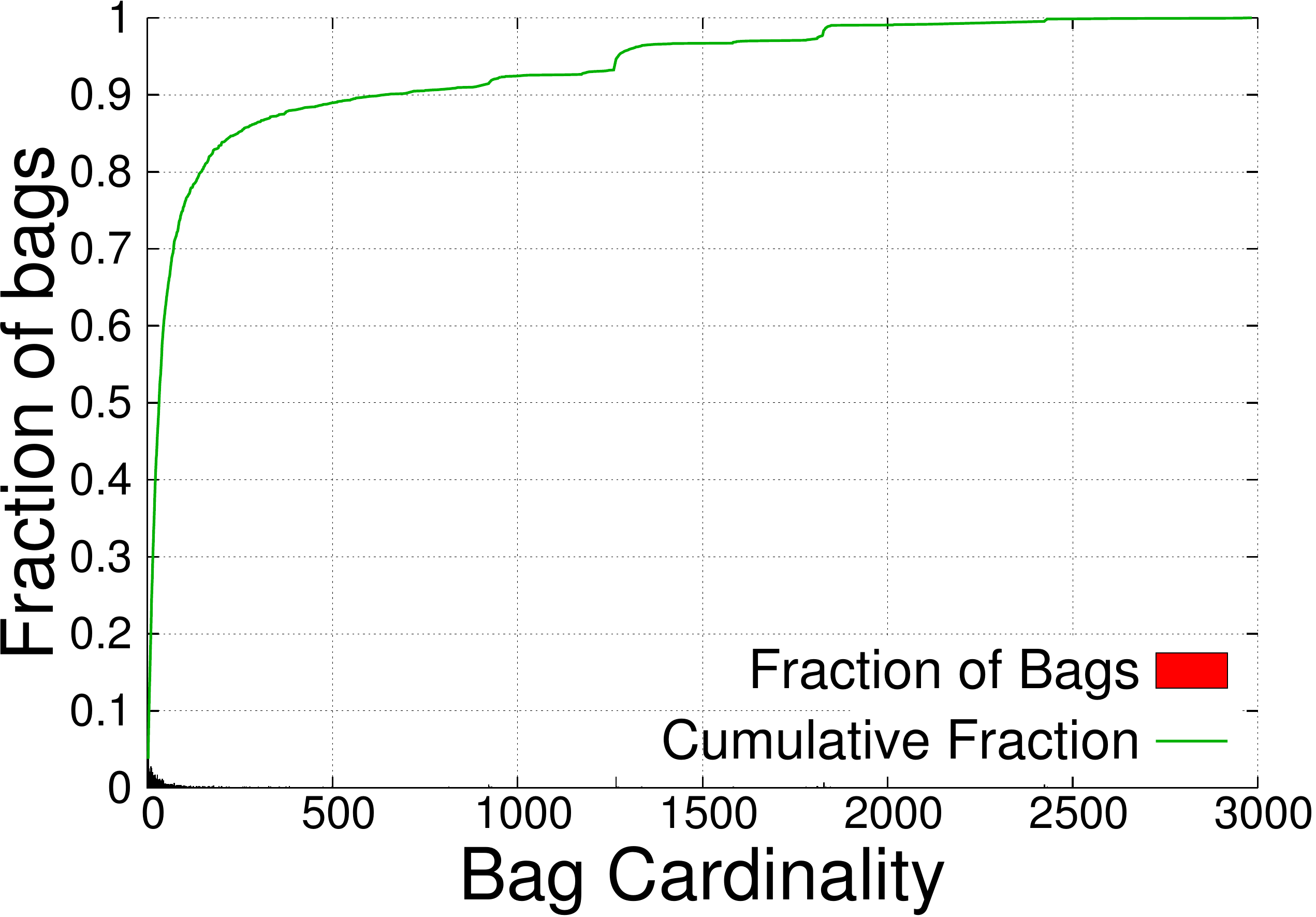}
\caption{\centering \textsc{FB-Lehigh}.}
\end{subfigure}
\begin{subfigure}[h]{0.24\textwidth}
  \includegraphics[width=\textwidth]{./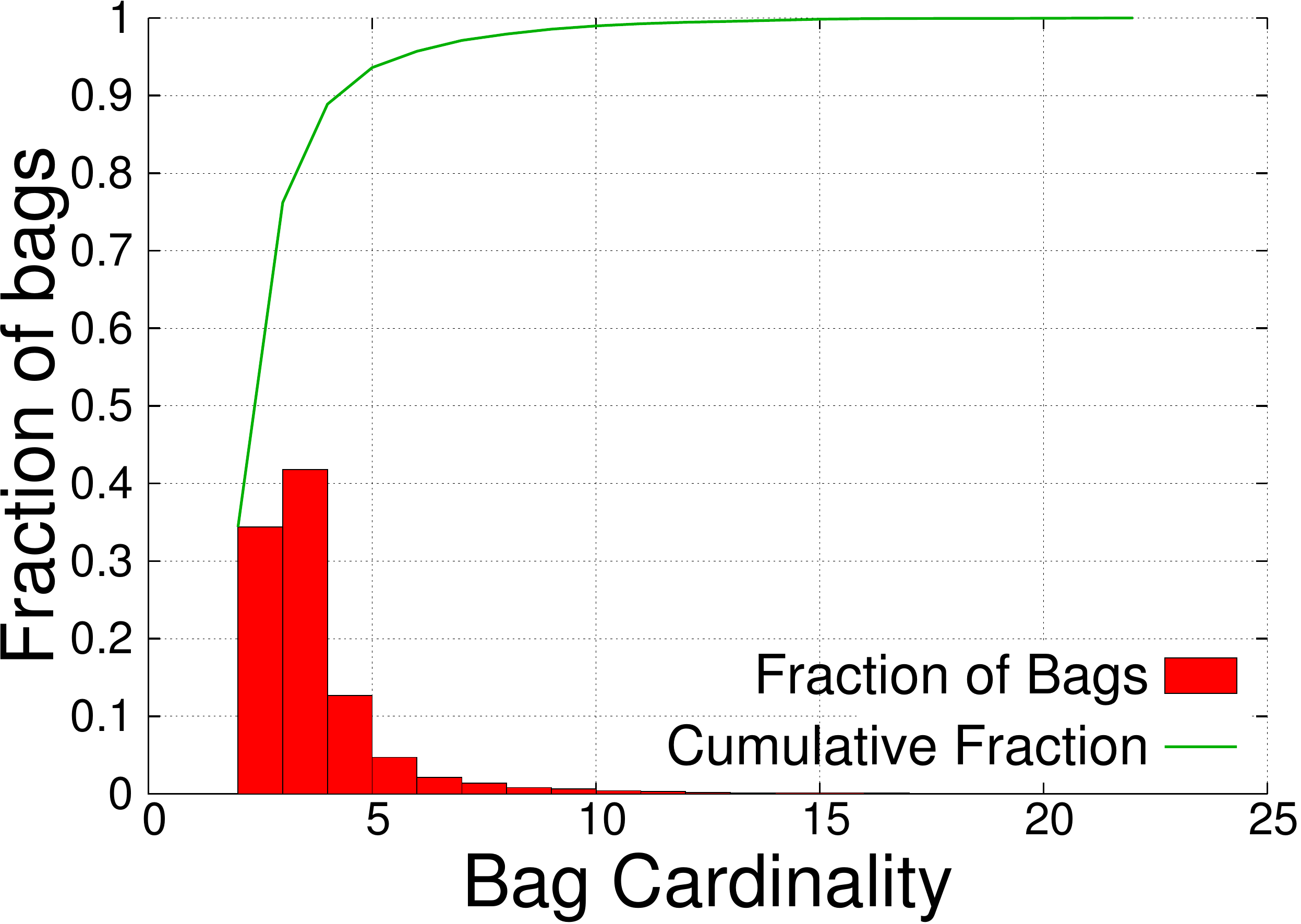}
\caption{\centering \textsc{PowerGrid}.}
\end{subfigure}
\caption{Bag cardinality histograms with cumulative fraction of bags
  for a representative set of networks.  For all of the networks,
  there are many more small cardinality bags than large cardinality
  bags.  This is consistent with a TD structure which has a few
  central bags which quickly taper and branch off into many small
  peripheral bags.  As we will see in Figures \ref{fig:bag_density}
  and \ref{fig:bag_k_core}, in networks with a strong core-periphery
  structure (e.g., not \textsc{PowerGrid}), these peripheral bags tend
  to have a low average $k$-core and high relative density.
  \textsc{FB-Lehigh} has the most large bags, due to the tendency of
  the \textsc{FB} networks to form long, path-like trunks in its TDs.
  \textsc{PowerGrid} has the smallest tapering effect; although the
  largest bags are still at the center of the decomposition, there is
  only a small change in size from the largest bags to the smallest,
  presumably since this network has the weakest core-periphery
  structure.}
\label{fig:bag_hist}
\end{center}
\end{figure*}

\begin{figure*}[!htb]
\begin{center}
\begin{subfigure}[h]{0.24\textwidth}
  \includegraphics[width=\textwidth]{./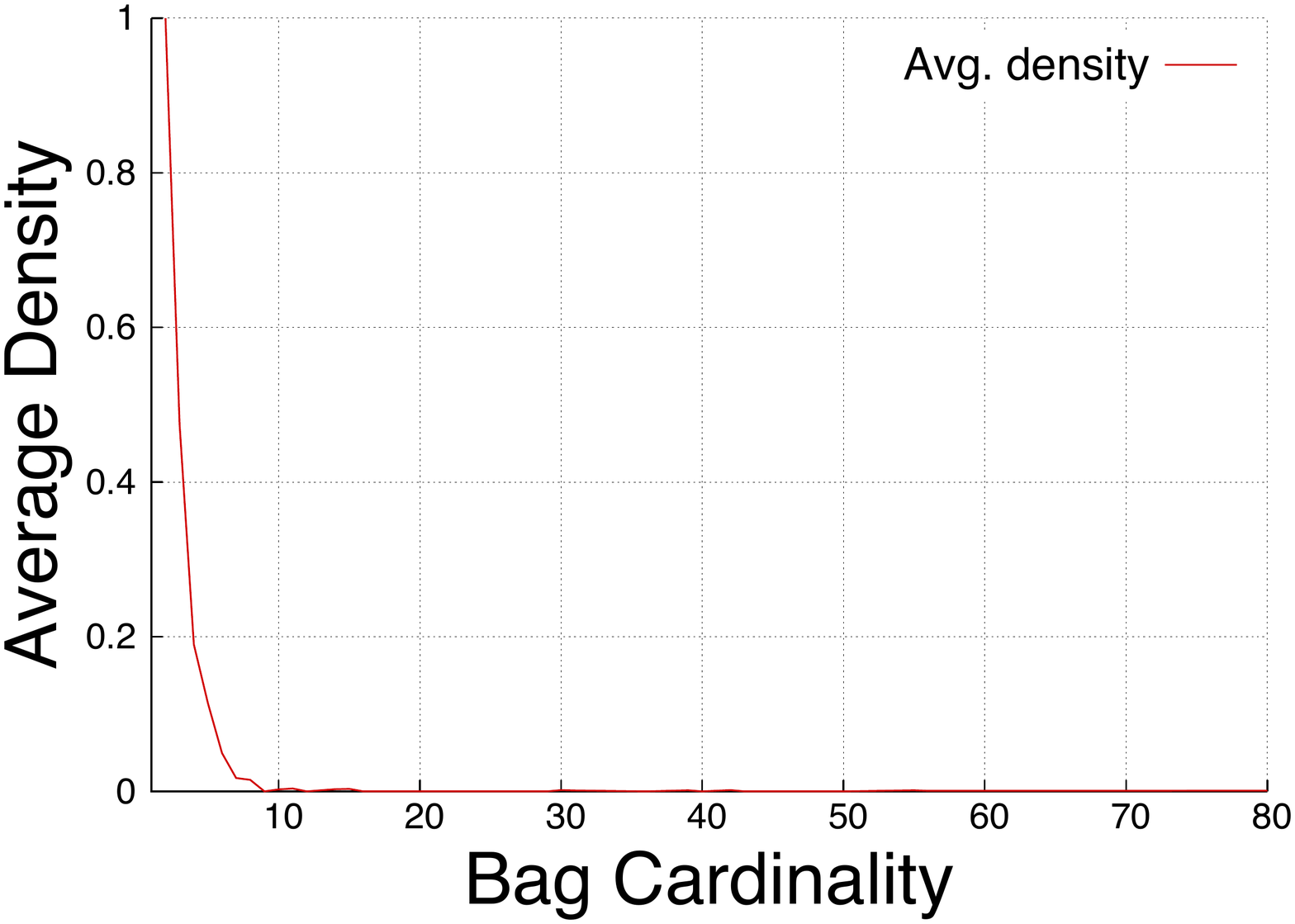}
\caption{\centering \textsc{ER(1.6)}}
\label{fig:bag_density-er_sparse}
\end{subfigure}
\begin{subfigure}[h]{0.24\textwidth}
\includegraphics[width=\textwidth]{./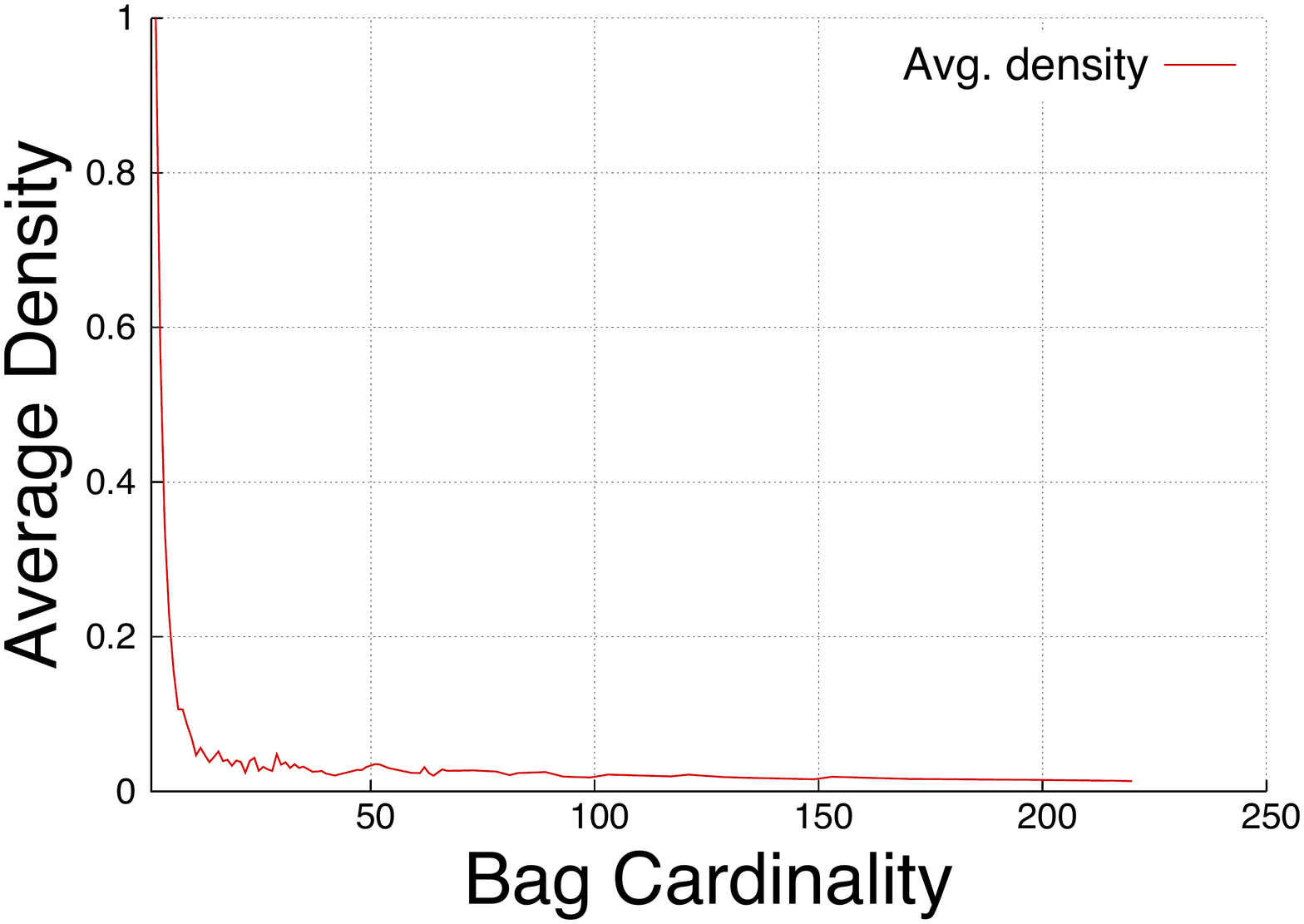}
\caption{\centering \textsc{PL(2.5)}}
\end{subfigure}
\begin{subfigure}[h]{0.24\textwidth}
\includegraphics[width=\textwidth]{./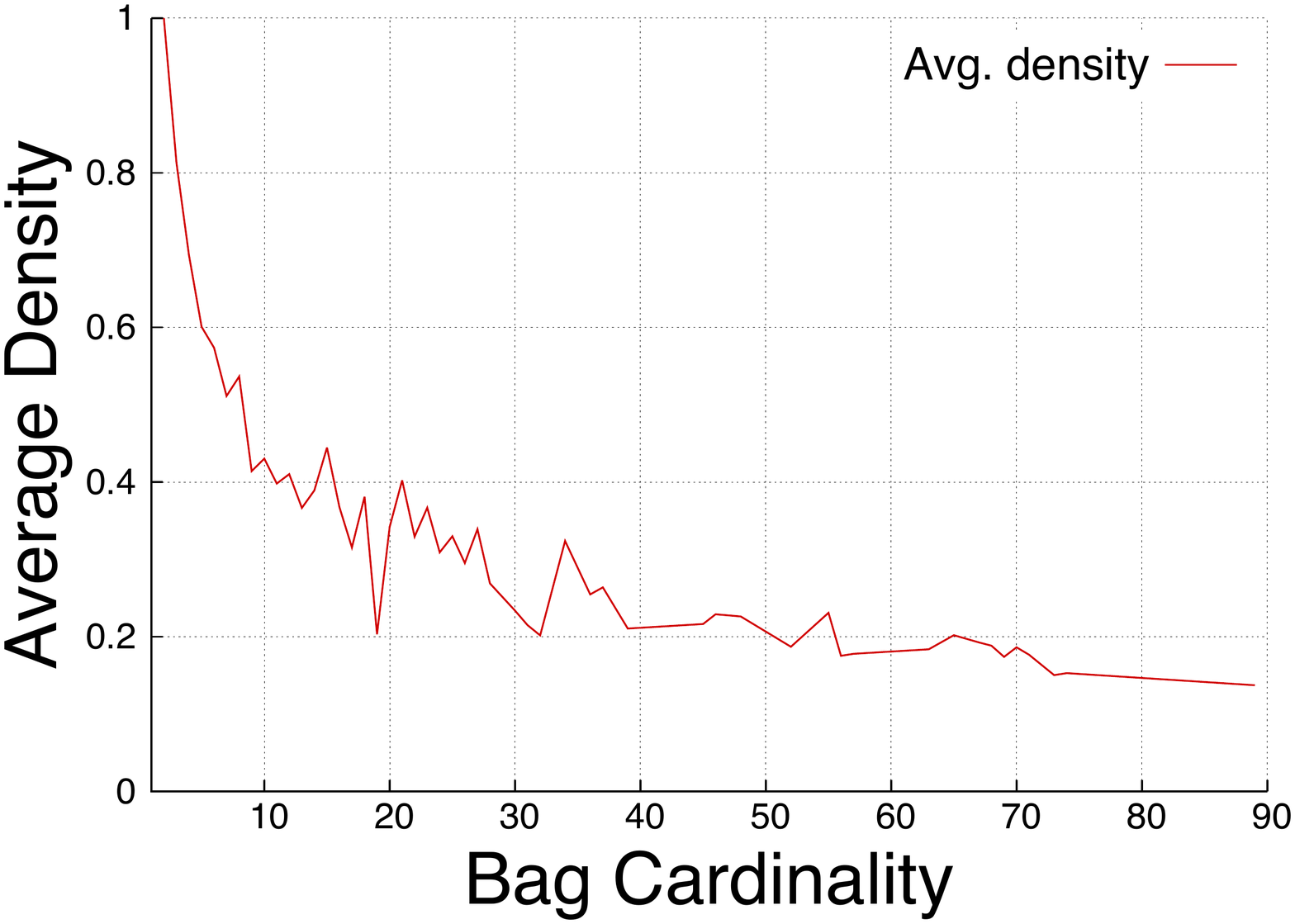}
\caption{\centering \textsc{as20000102}}
\end{subfigure}
\begin{subfigure}[h]{0.24\textwidth}
\includegraphics[width=\textwidth]{./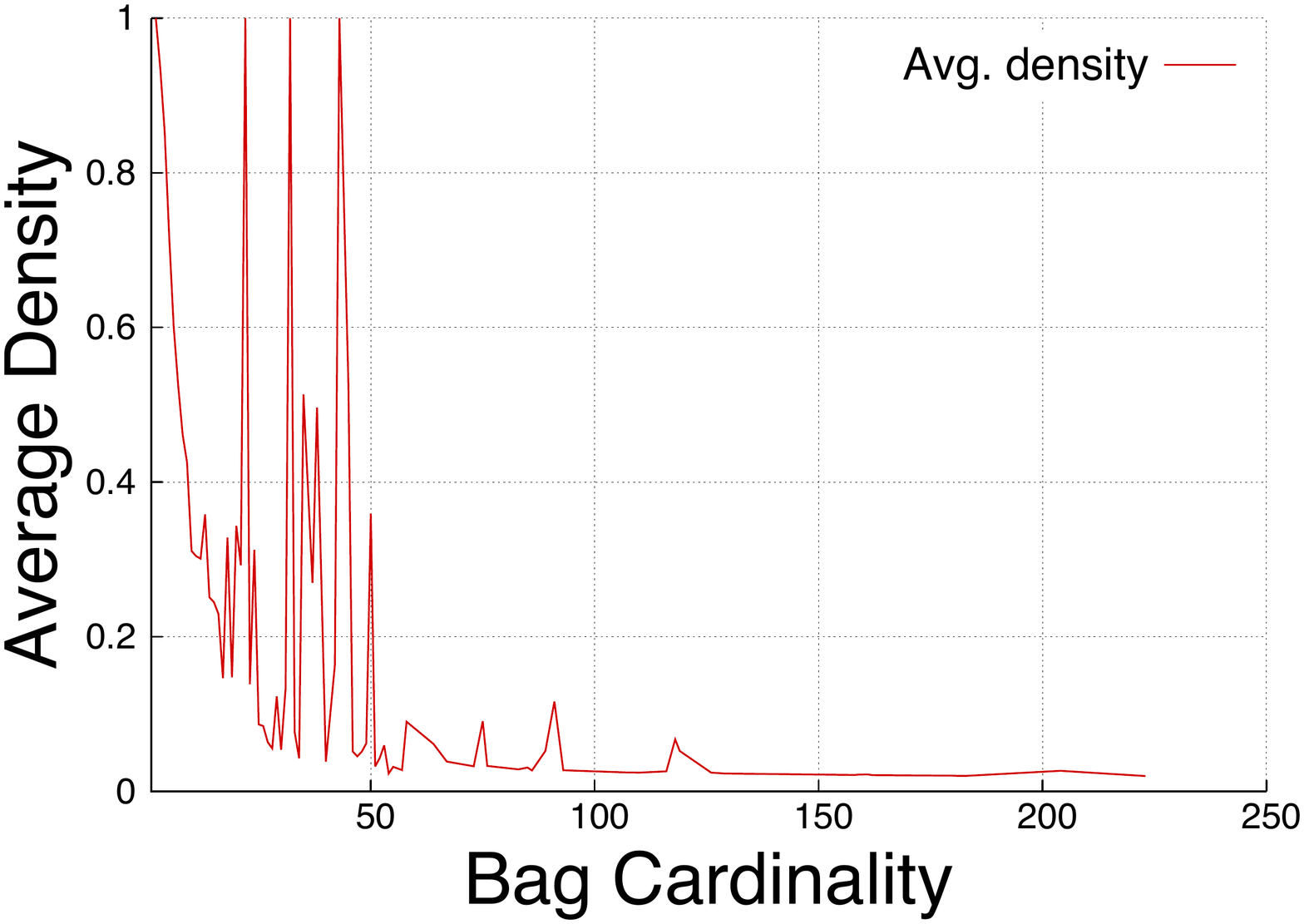}
\caption{\centering \textsc{CA-GrQc}}
\end{subfigure}
\begin{subfigure}[h]{0.24\textwidth}
  \includegraphics[width=\textwidth]{./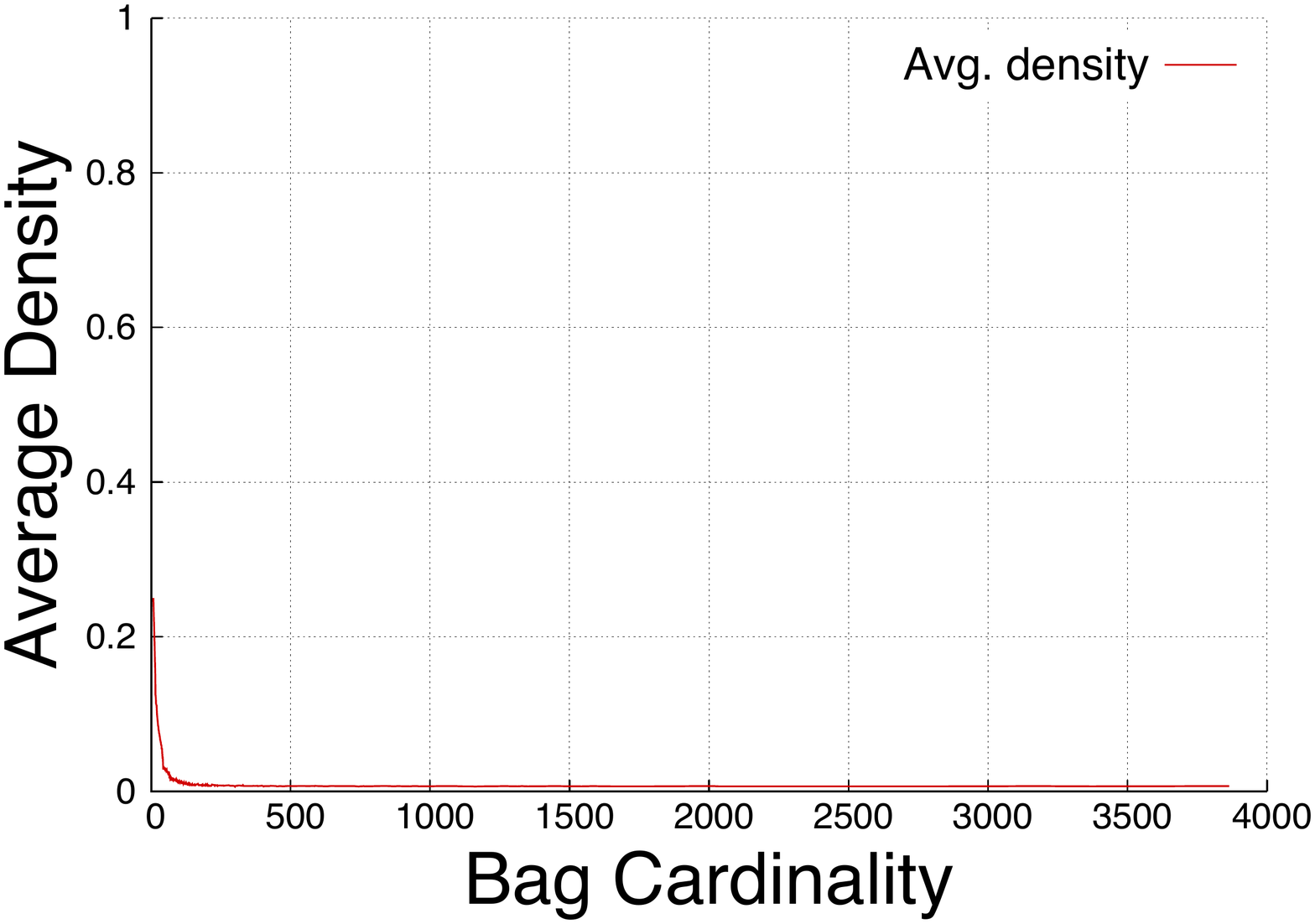}
\caption{\centering \textsc{ER(32)}}
\label{fig:bag_density-er_dense}
\end{subfigure}
\begin{subfigure}[h]{0.24\textwidth}
  \includegraphics[width=\textwidth]{./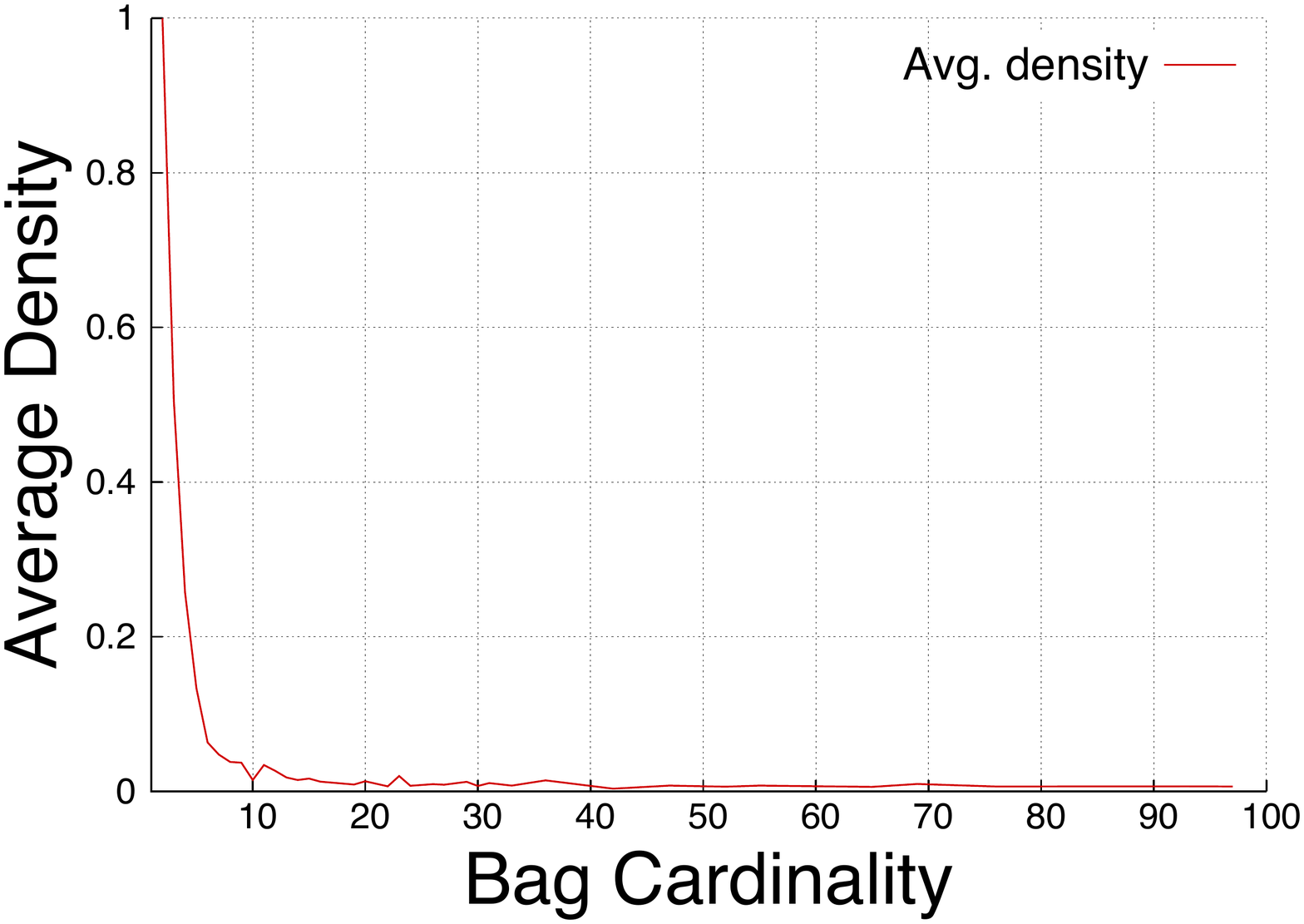}
\caption{\centering \textsc{PL(3.0)}}
\end{subfigure}
\begin{subfigure}[h]{0.24\textwidth}
  \includegraphics[width=\textwidth]{./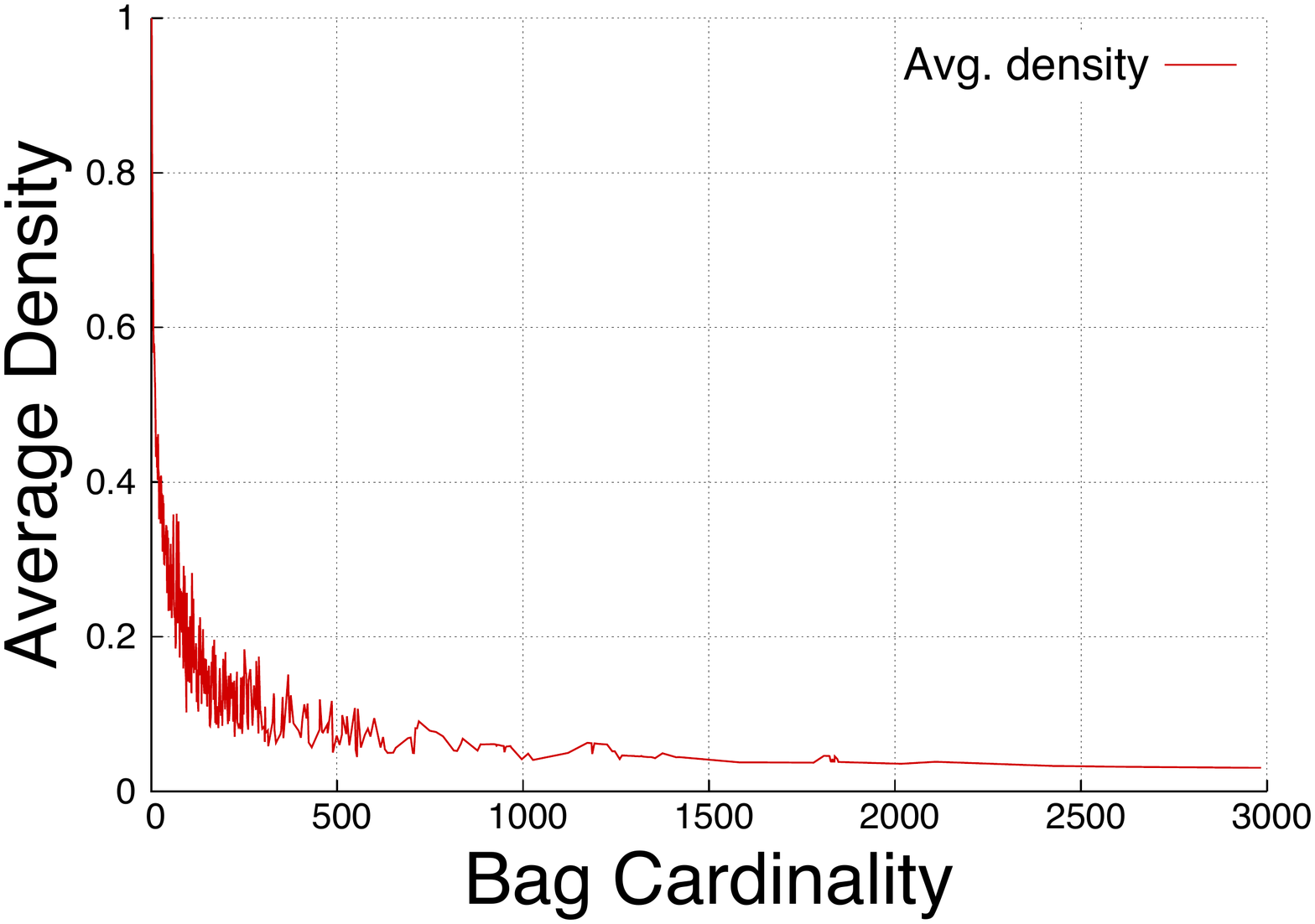}
\caption{\centering \textsc{FB-Lehigh}}
\end{subfigure}
\begin{subfigure}[h]{0.24\textwidth}
  \includegraphics[width=\textwidth]{./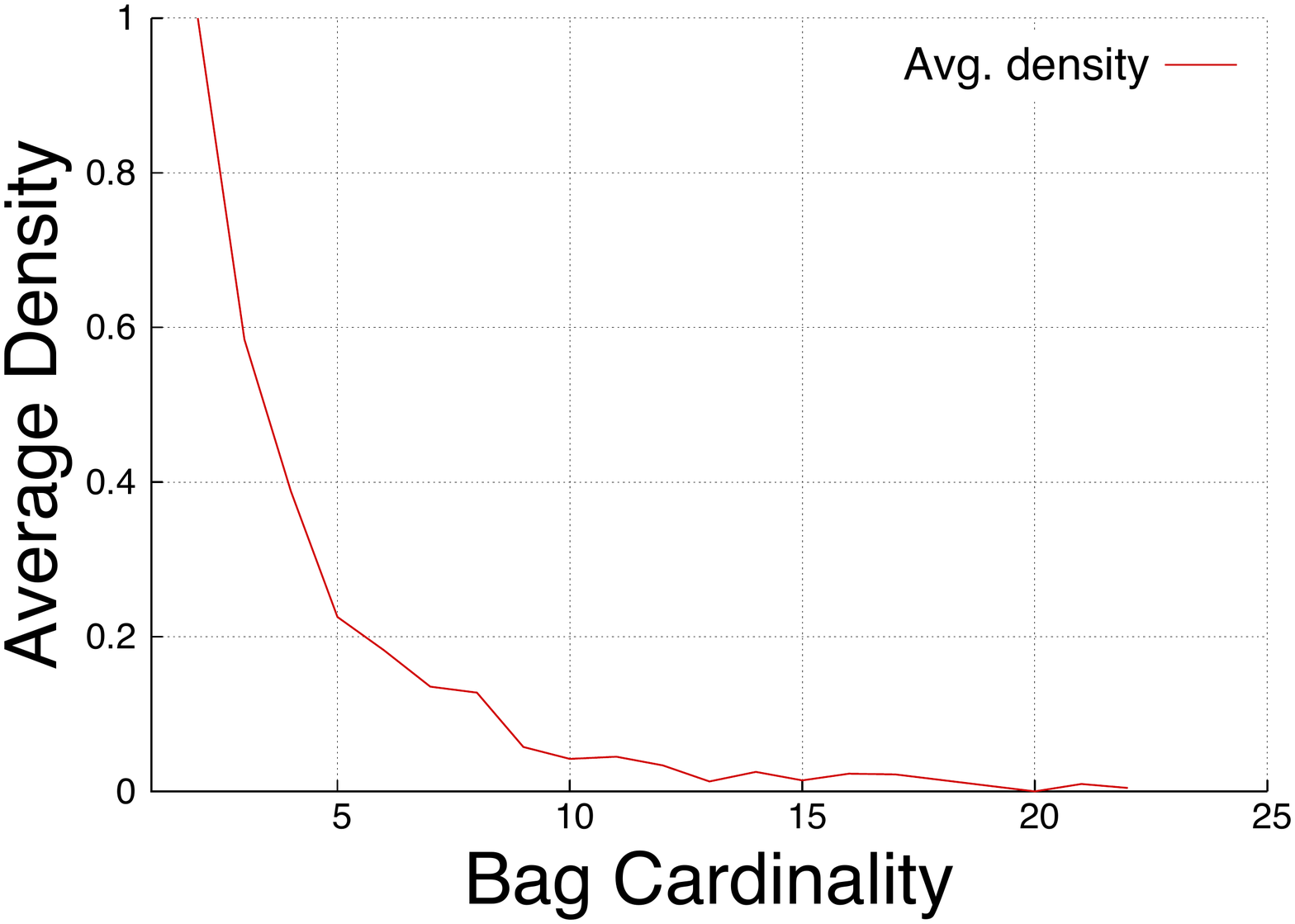}
\caption{\centering \textsc{PowerGrid}}
\end{subfigure}
\caption{Average bag density versus bag cardinality plots for a
  representative set of networks.  In the \textsc{PL} networks the
  small bags are dense---in the extreme case consisting of a single
  edge---like the \textsc{ER(1.6)} network; but the largest bags are
  larger and mostly connected, similar to the intermediate and central
  bags of \textsc{ER(32)}.  The real-world networks all show denser
  bags even at large size scales as compared to the synthetic
  networks, and this is due to the increased clustering present in
  these networks.}
\label{fig:bag_density}
\end{center}
\end{figure*}

\begin{figure*}[!ht]
\begin{center}
\begin{subfigure}[h]{0.24\textwidth}
  \includegraphics[width=\textwidth]{./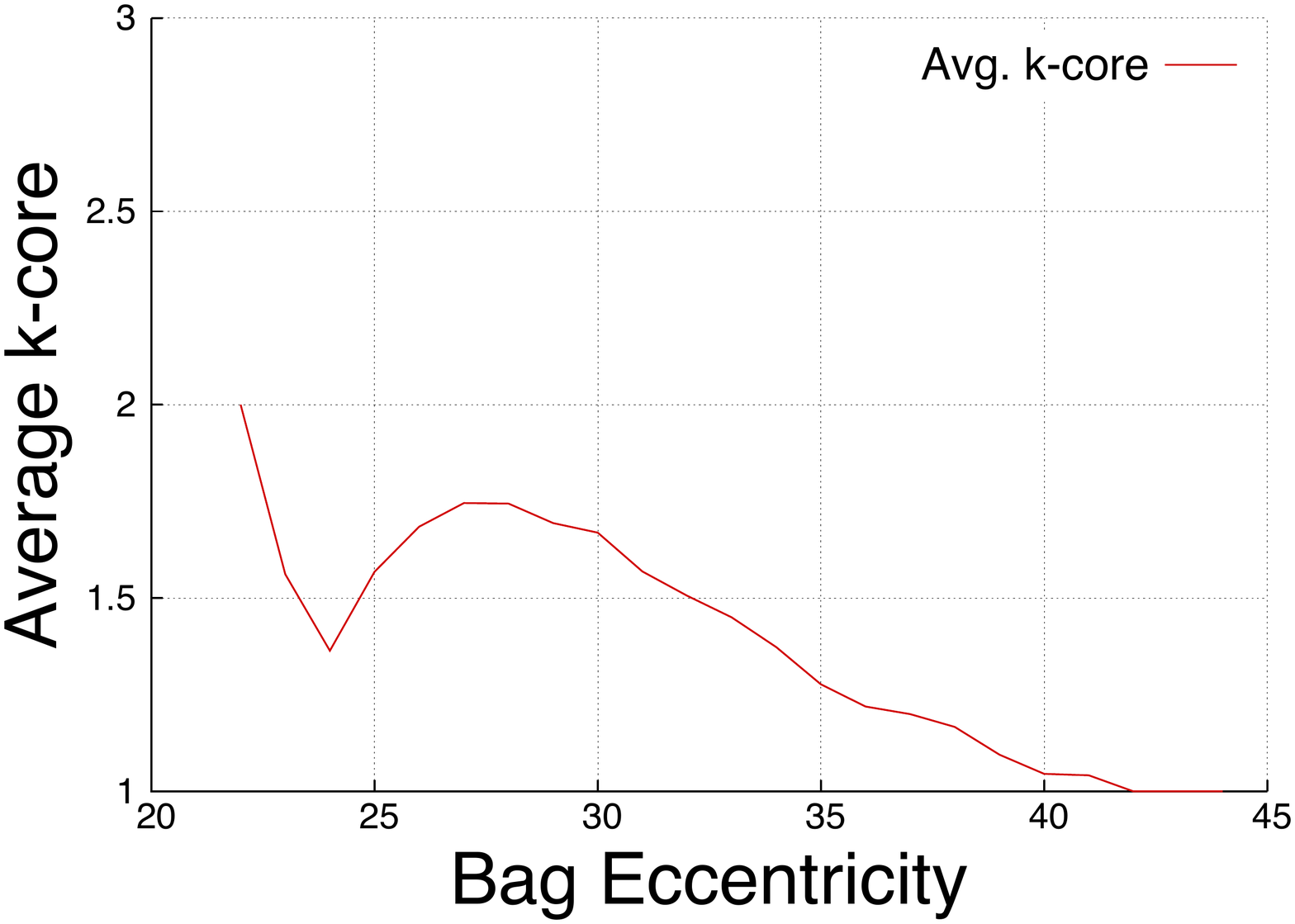}
\caption{\centering \textsc{ER(1.6)}}
\label{fig:bag_k_core-er_sparse}
\end{subfigure}
\begin{subfigure}[h]{0.24\textwidth}
\includegraphics[width=\textwidth]{./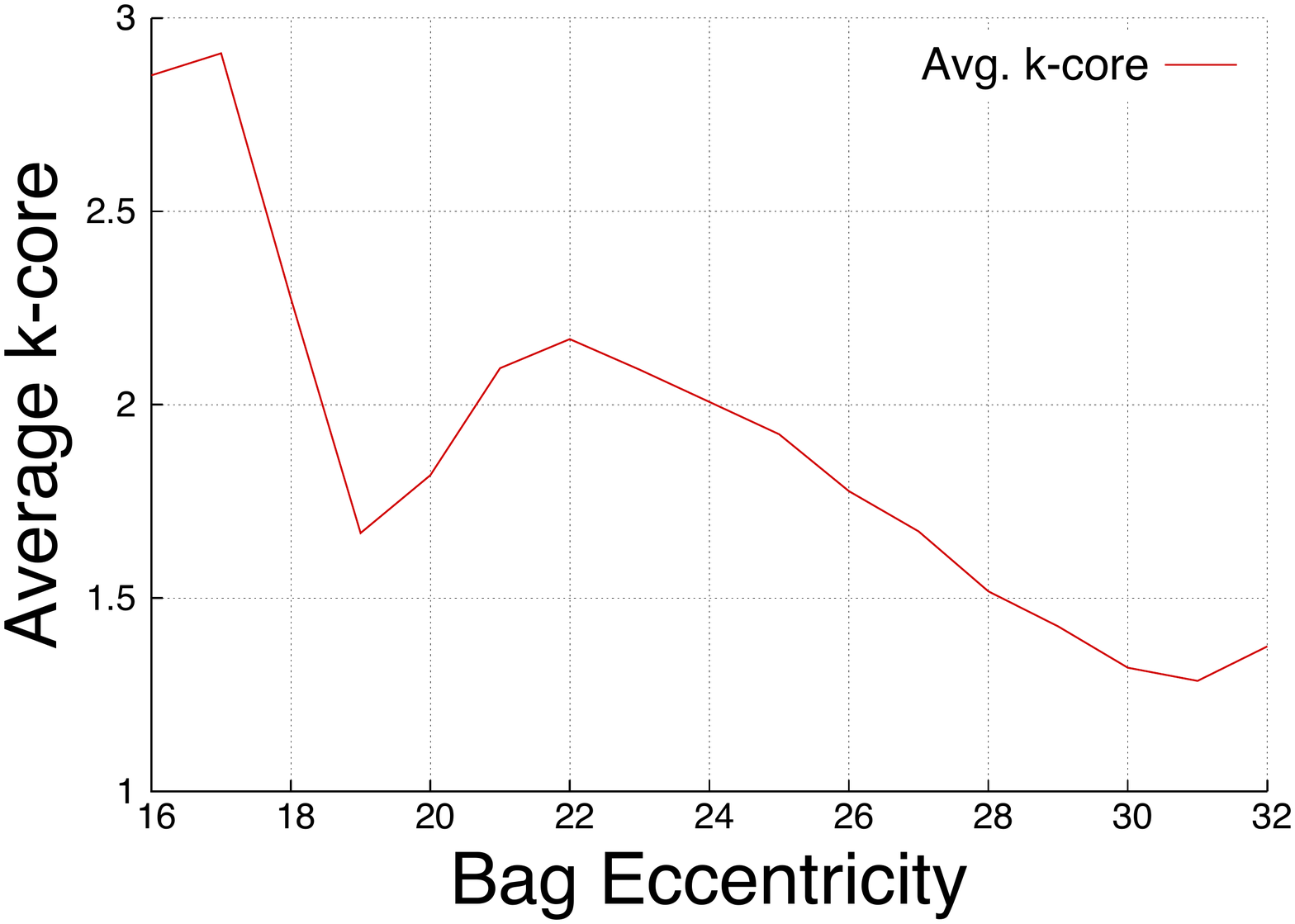}
\caption{\centering \textsc{PL(2.5)}}
\label{fig:pl_25_k_core}
\end{subfigure}
\begin{subfigure}[h]{0.24\textwidth}
\includegraphics[width=\textwidth]{./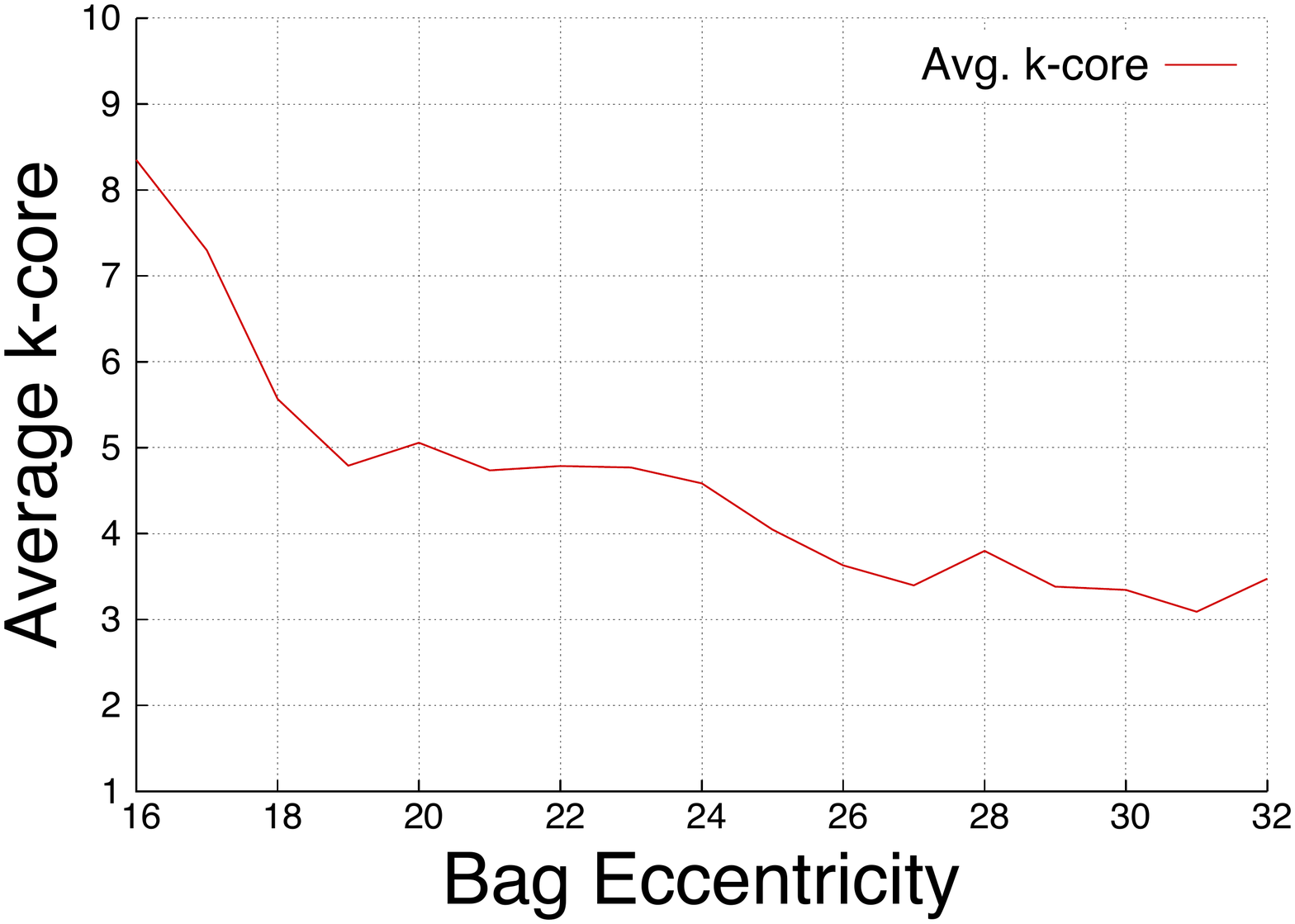}
\caption{\centering \textsc{as20000102}}
\label{fig:as_k_core}
\end{subfigure}
\begin{subfigure}[h]{0.24\textwidth}
\includegraphics[width=\textwidth]{./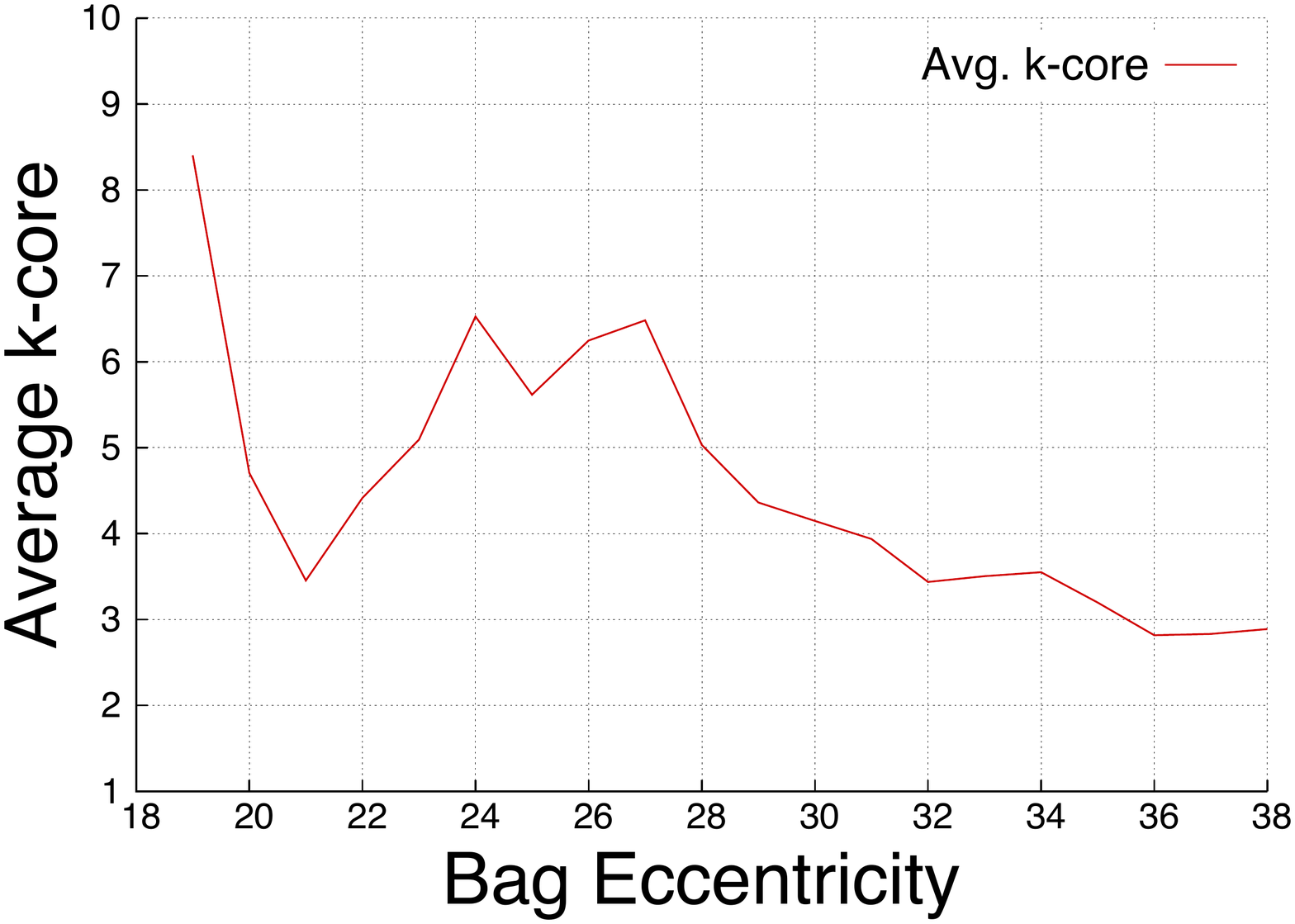}
\caption{\centering \textsc{CA-GrQc}}
\label{fig:cagrqc_k_core}
\end{subfigure}
\begin{subfigure}[h]{0.24\textwidth}
  \includegraphics[width=\textwidth]{./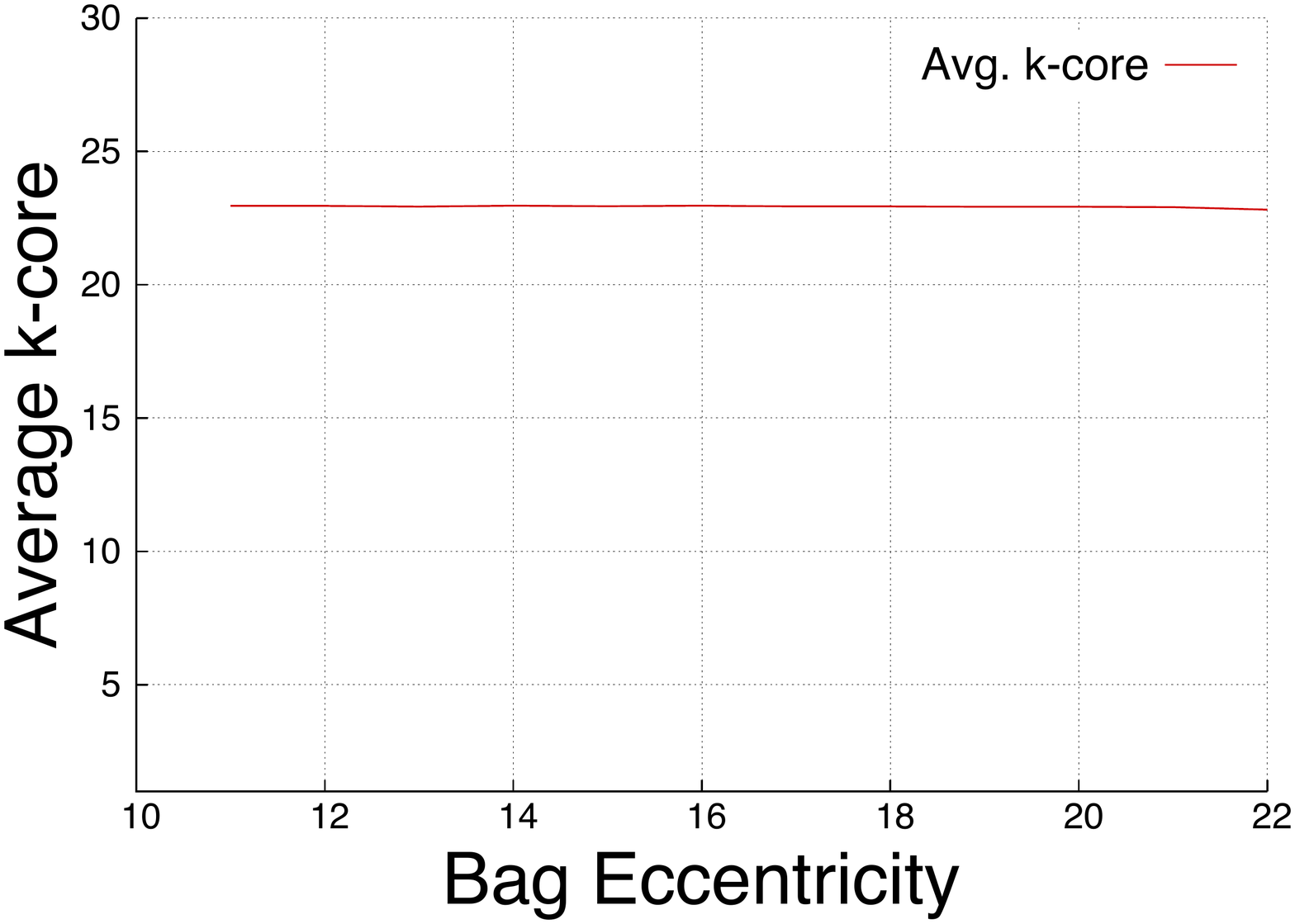}
\caption{\centering \textsc{ER(32)}}
\label{fig:bag_k_core-er_dense}
\end{subfigure}
\begin{subfigure}[h]{0.24\textwidth}
  \includegraphics[width=\textwidth]{./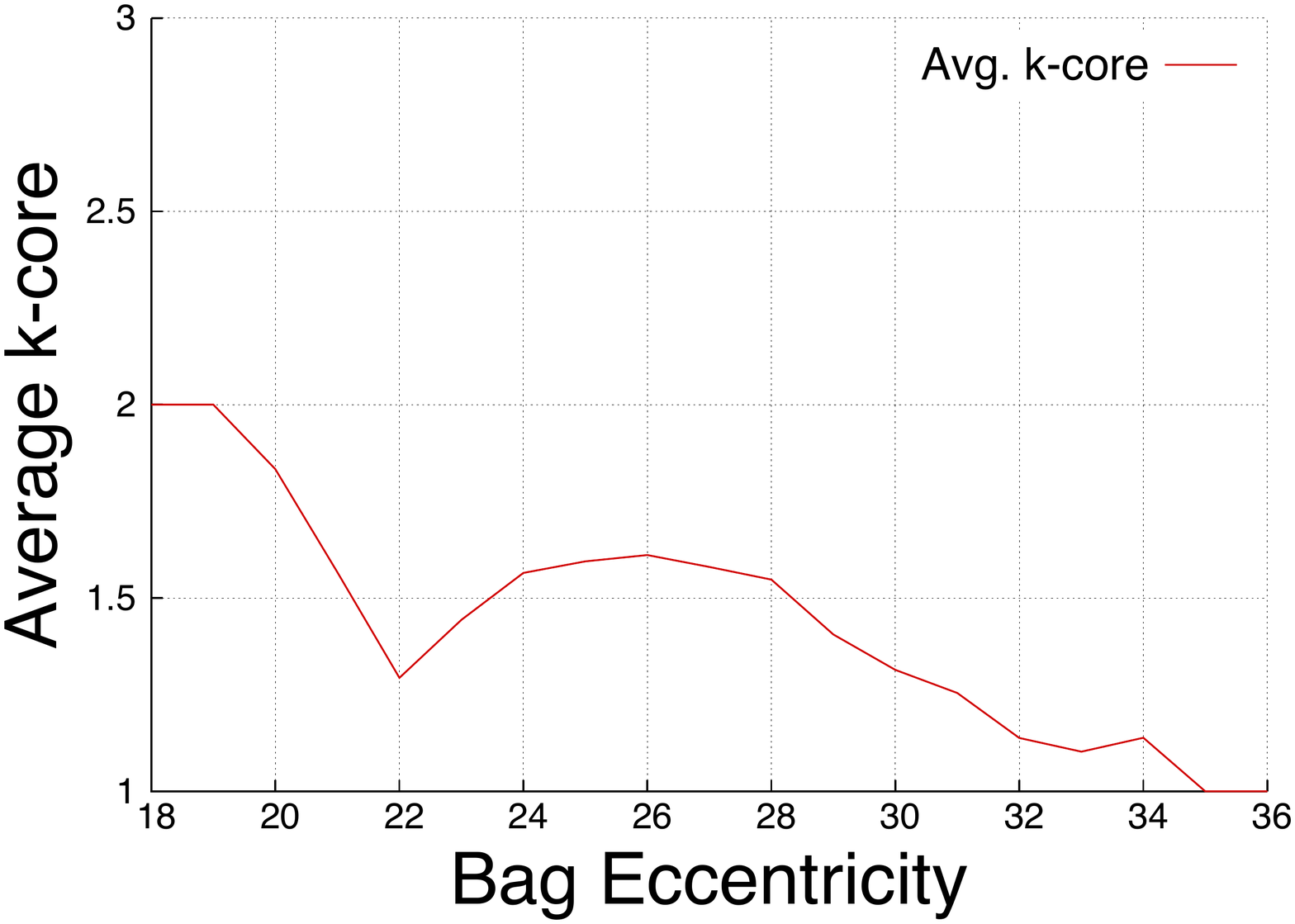}
\caption{\centering \textsc{PL(3.0)}}
\label{fig:pl_30_k_core}
\end{subfigure}
\begin{subfigure}[h]{0.24\textwidth}
  \includegraphics[width=\textwidth]{./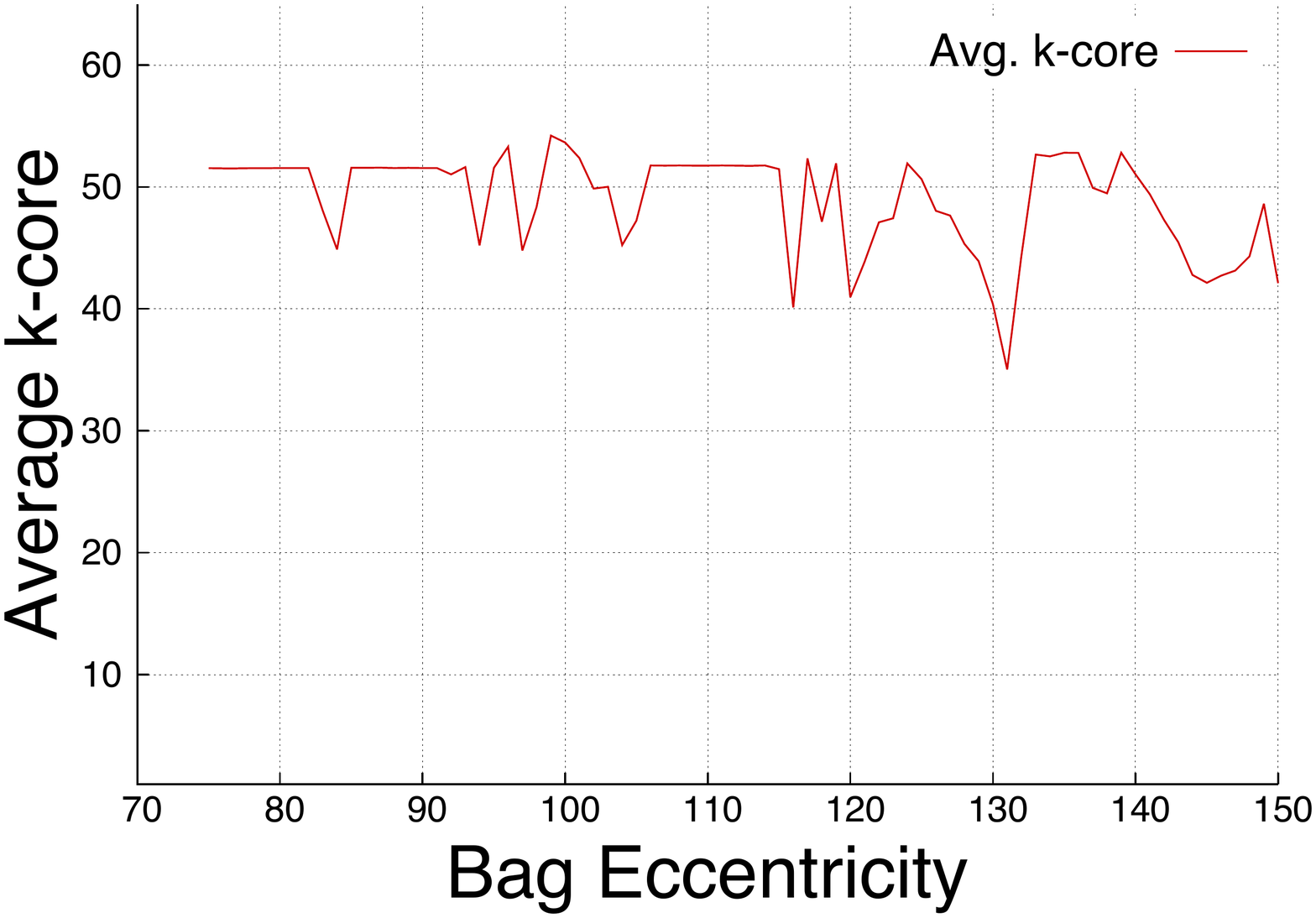}
\caption{\centering \textsc{FB-Lehigh}}
\label{fig:lehigh_k_core}
\end{subfigure}
\begin{subfigure}[h]{0.24\textwidth}
  \includegraphics[width=\textwidth]{./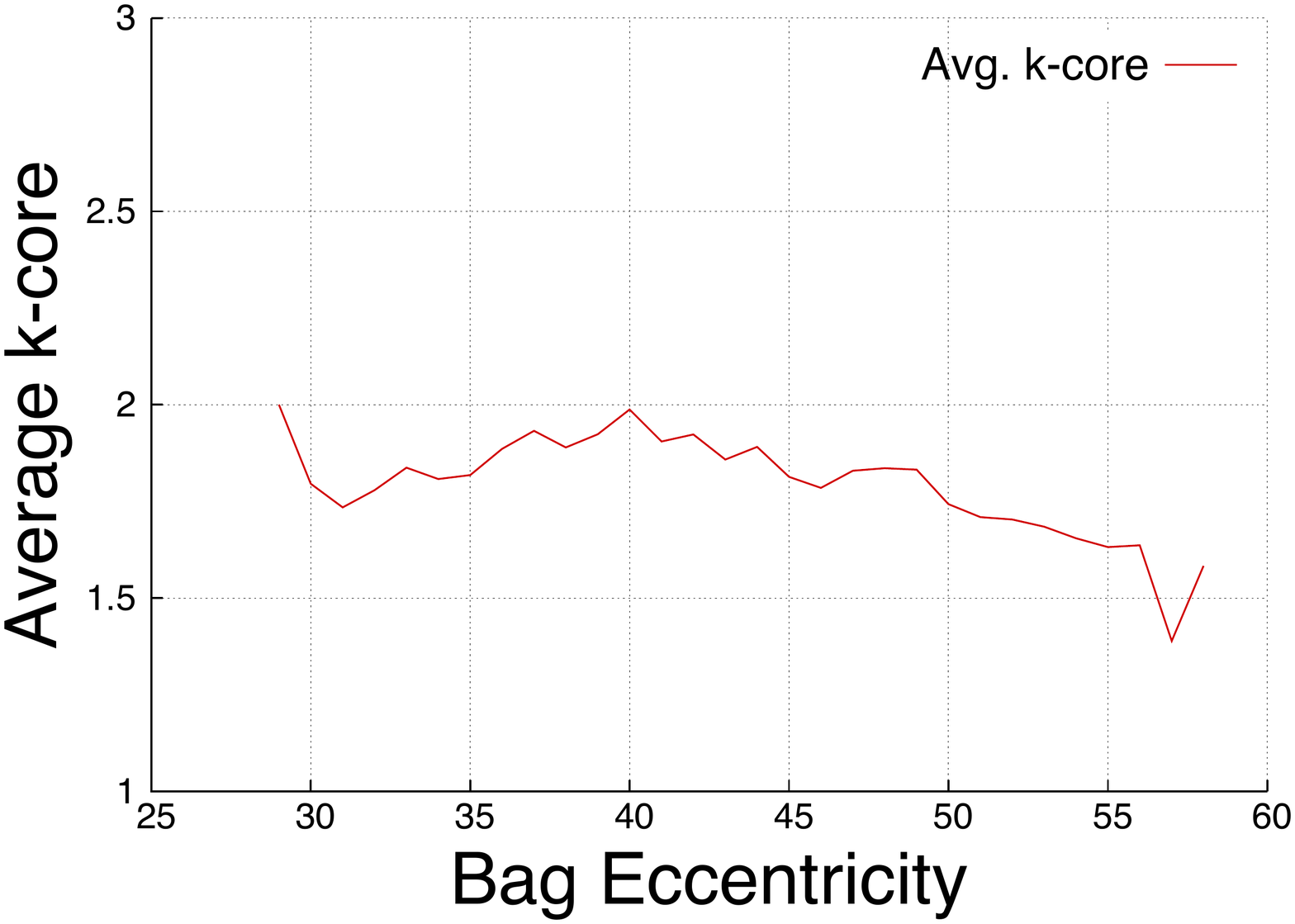}
\caption{\centering \textsc{PowerGrid}}
\label{fig:power_k_core}
\end{subfigure}
\caption{Average $k$-core versus bag eccentricity for a representative
  set of networks.  The correlation between the core-periphery
  structure and the central-perimeter bags can be seen in a downward
  slope in these plots.  Networks with no prominent core-periphery
  structure (\textsc{ER(32)} and \textsc{PowerGrid}, for two very
  different reasons) have a flat plot here; while networks with
  moderate core-periphery structure (the \textsc{PL} graphs and
  \textsc{ER(1.6)}) have a downward sloping line, but relatively
  shallow (i.e., not deep) cores.  \textsc{as20000102} and
  \textsc{CA-GrQc} both have prominent, deep core-periphery structures
  that reveal themselves in this plot.  The dips that show up at small
  eccentricities in several of the synthetic networks and
  \textsc{CA-GrQc} are due to the many small ``whiskers'' (in the
  sense of~\cite{LLDM09_communities_IM}) that hang off of the core
  bag.  \textsc{FB-Lehigh} also has a deep core-periphery structure
  (in the sense of $k$-core decompositions); but because of the long
  path-like nature of the TD and since most of the nodes are in the
  deepest cores, the plot is flat with larger downward dips as the bag
  eccentricity increases.}
\label{fig:bag_k_core}
\end{center}
\end{figure*}

We saw in Figure~\ref{fig:er_sparse_bag-central} the central bag for
\textsc{ER(1.6)}, and we interpreted it in terms of the output of
\textsc{amd} TD as due to overlapping cycles; the histograms in
Figure~\ref{fig:bag_hist-er_sparse} show that for \textsc{ER(1.6)}
(and the networks in the rest of Figures~\ref{fig:bag_hist}) 
there are only a very few such central bags.  On the other hand,
Figure \ref{fig:bag_hist} also shows that there are many very small
bags in \textsc{ER(1.6)}.  Two features we noticed about these two
types of bags are that there is a change in edge density between the
large and small bags and that there is a change in the $k$-cores
represented between the large and small bags.  
Figure~\ref{fig:bag_density} and
Figure~\ref{fig:bag_k_core}, where the average edge density of a bag is
plotted against the bag cardinality, show two ways of measuring this.  These figures show that small
peripheral bags are dense (relative to their small size---in the
extreme case, this could be a single edge), and they contain low
$k$-core nodes, as indicated, e.g., by the downward slope of the plots
in Figure~\ref{fig:bag_k_core-er_sparse}.

Many of the results for \textsc{ER(32)} are very different than for
\textsc{ER(1.6)}.  The histograms in Figure
\ref{fig:bag_hist-er_sparse} show that this network has a much larger
proportion of high-width bags than \textsc{ER(1.6)}.  The largely
homogeneous core-periphery structure of dense
\textsc{ER} networks should also be clear since the nodes, regardless of 
bag size, are mostly in the deepest core ($k = 23$).
These trends can be seen by comparing the density of the smallest bags
in \textsc{ER(1.6)} and \textsc{ER(32)} in
Figures~\ref{fig:bag_density-er_sparse}
and~\ref{fig:bag_density-er_dense}.  The flat plot of the average
$k$-core in Figure~\ref{fig:bag_k_core-er_dense}, which holds steady
close to the value of the maximum $k$-core, indicates the lack of a
core-periphery structure in the network.

Putting all of these results together, we can conclude that when it exists 
(e.g., in extremely sparse ER graphs), the core-periphery structure of 
\textsc{ER} networks is captured by the \textsc{amd} TD; and when the 
core-periphery structure does not exist (e.g., for ER graphs for other even 
moderately sparse values of $p$), the large width of the TD indicates that 
the most of the network is in the largest bag, which is analogous to 
most of the nodes being in the core of the network.

\subsection{TDs of PL Networks}
\label{sec:synth_results-pl}

Here, we give a summary of results of an analysis of TDs on PL random
graphs, with an emphasis on the behavior as the degree heterogeneity
parameter $\gamma$ is varied.
Recall that Table~\ref{tbl:networks-basic-stats} provides basic statistics 
for the PL graphs.
PL graphs are a class of ER-like random graphs, except that degree 
heterogeneity is exogenously-specified.  Previous work has shown that PL 
graphs have important similarities with extremely sparse ER graphs, when one
is interested in small-scale versus large-scale tree-like
structure~\cite{LLDM09_communities_IM,Adcock13_icdm,Jeub15}.  
In particular, the increased degree heterogeneity produces a large-scale
core-periphery structure in the PL networks, similar to the extremely sparse 
ER networks, but these PL networks also have some of the characteristics of
denser ER networks (e.g., the core is more strongly connected and the 
diameter of the network is smaller).  

We start with Table~\ref{tbl:pl_density_results}, which show the basic
features of the TDs of \textsc{PL} networks.  \textsc{PL(3.0)} has the
least amount of degree heterogeneity and has similar characteristics
to \textsc{ER(1.6)}, while the lower degree exponents
(\textsc{PL(2.75), PL(2.5)}) have characteristics similar to both the dense
and sparse \textsc{ER} networks.  Most notably, the maximum width
increases (as it would if the density increased), while the median
width and median bag density stay the same (low and high,
respectively), as in the \textsc{ER(1.6)}.  In the previous section, we
saw that the low median width and high density was related to the
presence of a core-periphery structure in the network.  As we will
see, this is also true of the PL networks, and the \textsc{amd} TDs are
again able to capture this structure.


Among other things, we find that for the \textsc{PL} networks,
for a given average degree, the presence of very high degree nodes
that tend to link to each other means that the density of the high-width 
bags, i.e., the core of the network, is greater, making
it more like the cores of the denser \textsc{ER} networks.  On the
other hand, the peripheries of these \textsc{PL} networks are still
very sparse, and TD bags including them look more like the peripheries
of the sparse \textsc{ER} networks.  The periphery results are
reflected in the results presented in
Table~\ref{tbl:pl_density_results}, where we see that the median width
is low and the median density is high (many bags with only a single
edge within).  The visualizations of the central, intermediate, and
peripheral bags from TDs of \textsc{PL} networks in
Figure~\ref{fig:pl_25_bags} reflect
this.  In particular, the central bag for \textsc{PL(2.5)} looks
somewhat like the intermediate \textsc{ER(32)} bags, while the central
bag for \textsc{PL(3.0)} is much less well-connected; and the
peripheral bags for both \textsc{PL} graphs look like the
\textsc{ER(1.6)} bags.  The TD reflects the
core-periphery structure via the central-peripheral bags as reflected
in the downward slope of Figure \ref{fig:pl_25_k_core}.

\begin{figure}[!ht]
\begin{center}
\begin{subfigure}[h]{0.16\textwidth}
\includegraphics[width=\textwidth]{./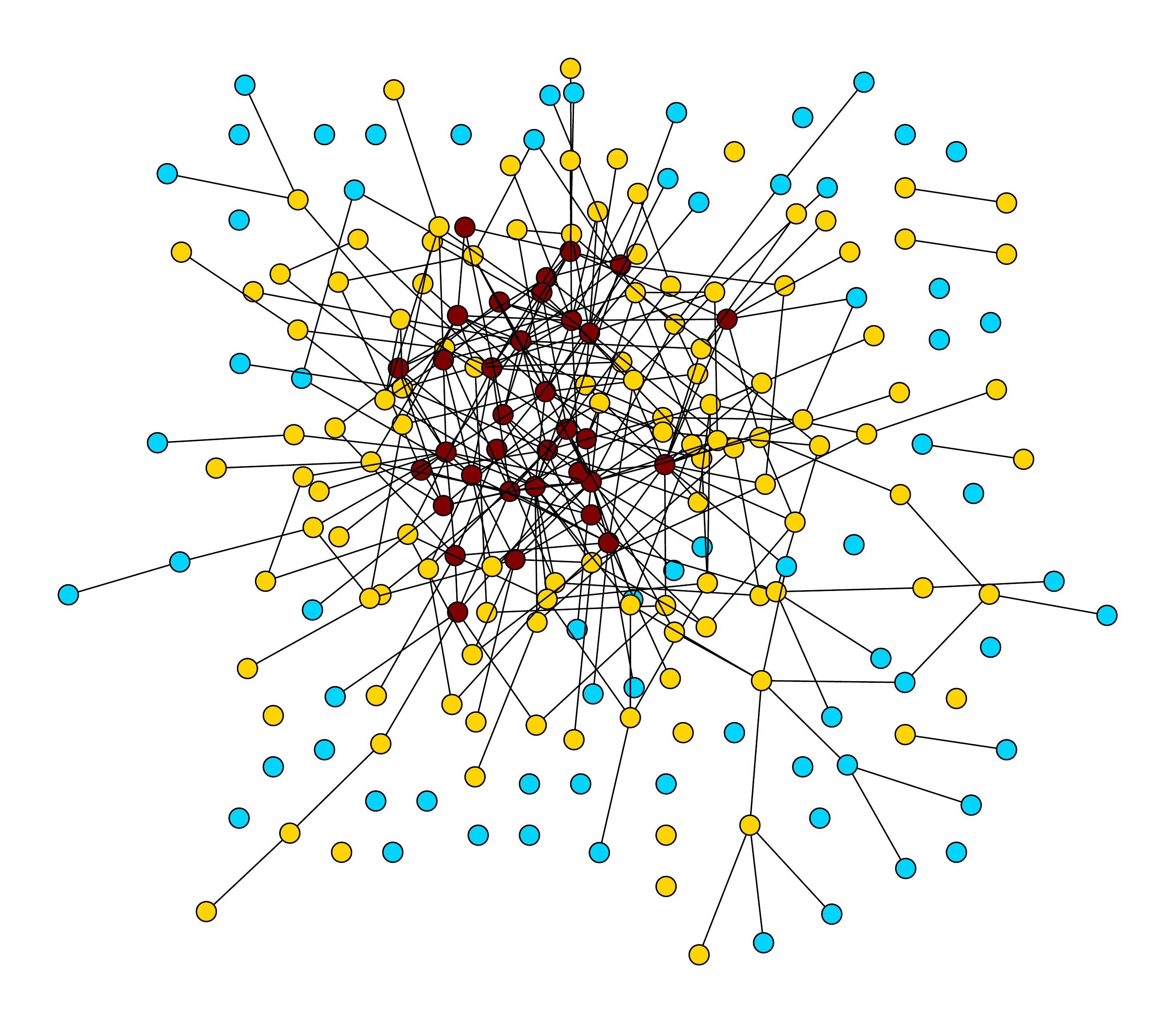}
\caption{\centering \textsc{PL(2.5)} central bag subgraph.}
\end{subfigure}
\begin{subfigure}[h]{0.16\textwidth}
\includegraphics[width=\textwidth]{./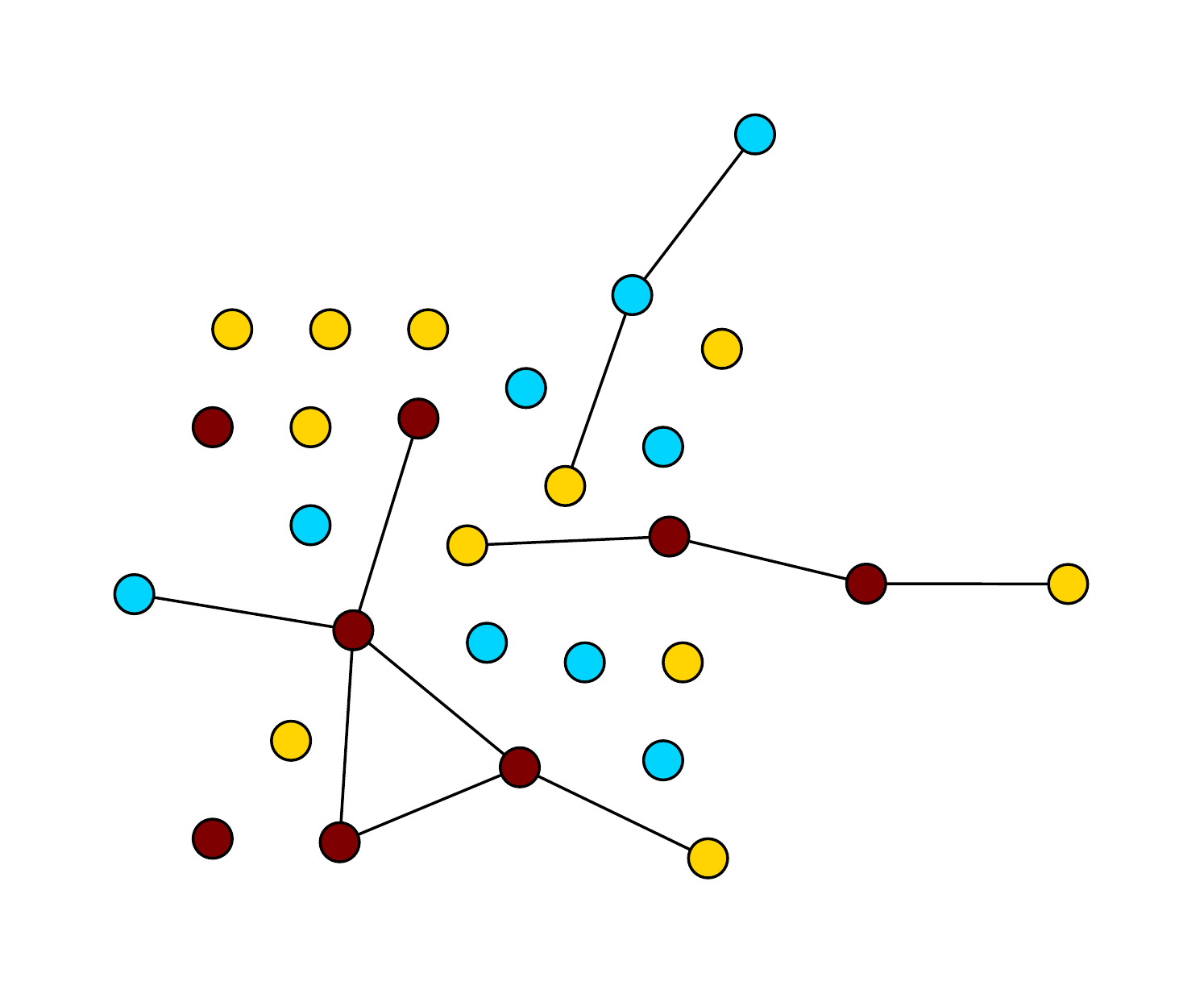}
\caption{\centering \textsc{PL(2.5)} intermediate bag subgraph.}
\end{subfigure}
\begin{subfigure}[h]{0.14\textwidth}
\includegraphics[width=\textwidth]{./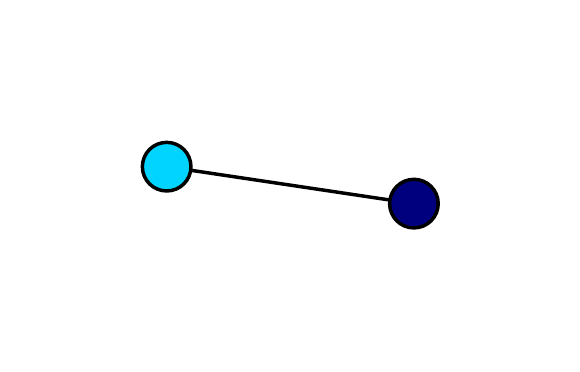}
\caption{\centering \textsc{PL(2.5)} peripheral  bag subgraph.}
\end{subfigure}
\begin{subfigure}[h]{0.16\textwidth}
\includegraphics[width=\textwidth]{./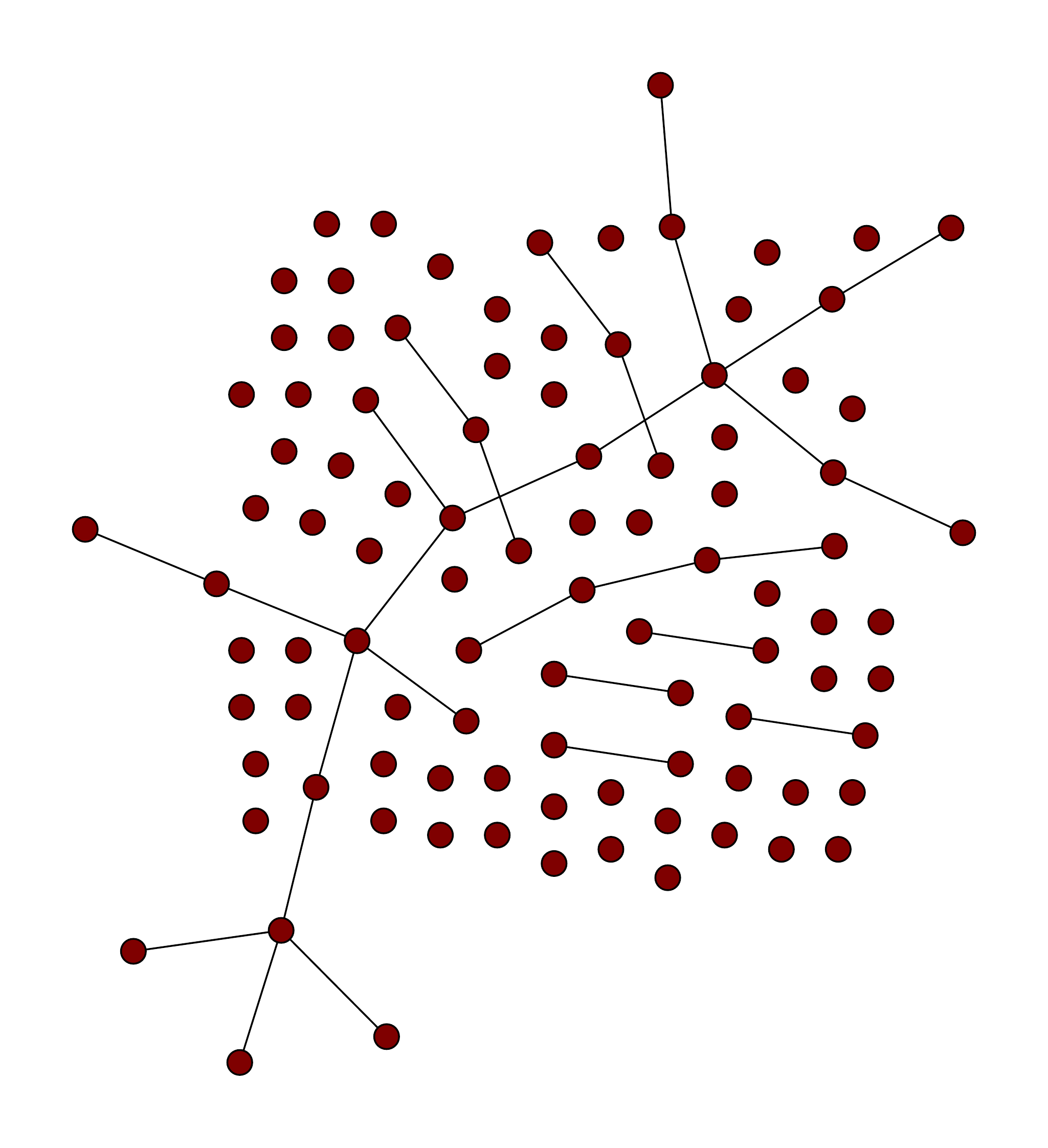}
\caption{\centering \textsc{PL(3.0)} central bag subgraph.}
\end{subfigure}
\begin{subfigure}[h]{0.16\textwidth}
\includegraphics[width=\textwidth]{./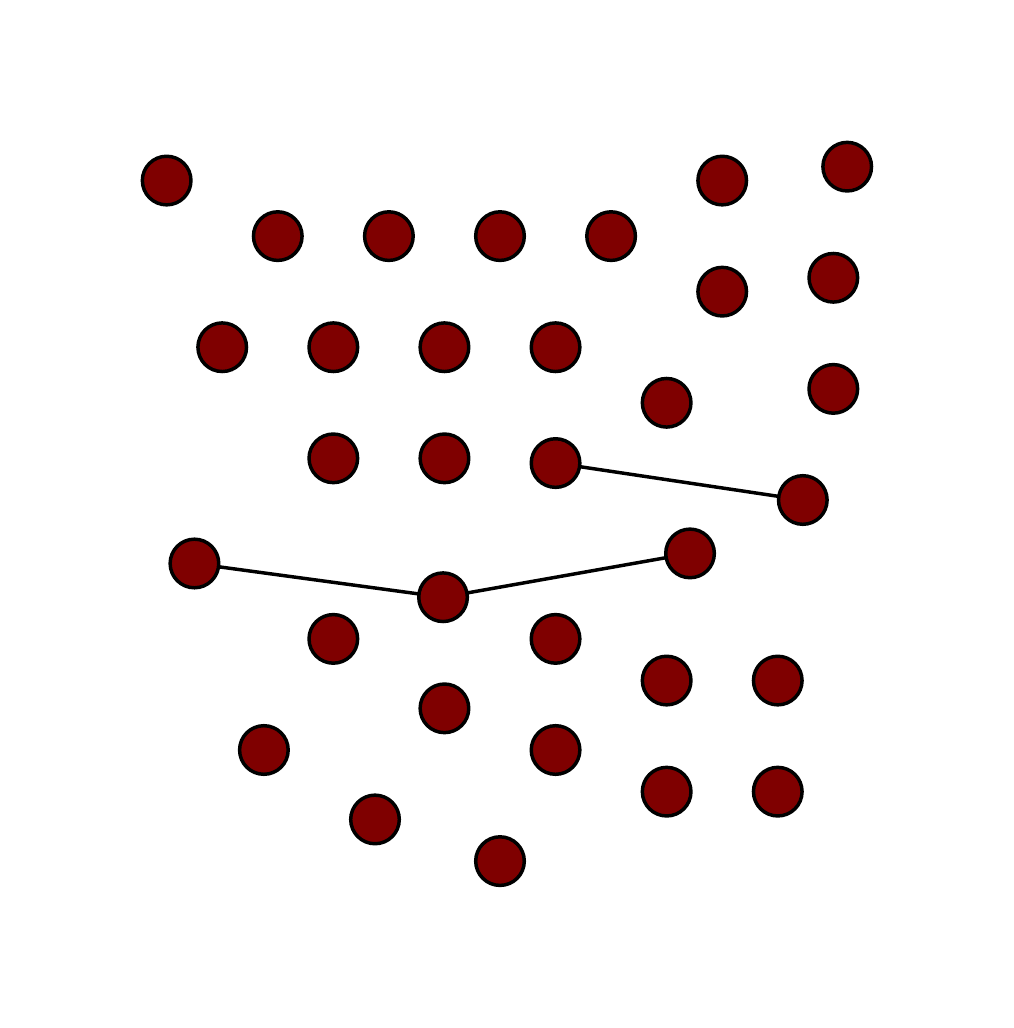}
\caption{\centering \textsc{PL(3.0)} intermediate bag subgraph.}
\end{subfigure}
\begin{subfigure}[h]{0.14\textwidth}
\includegraphics[width=\textwidth]{./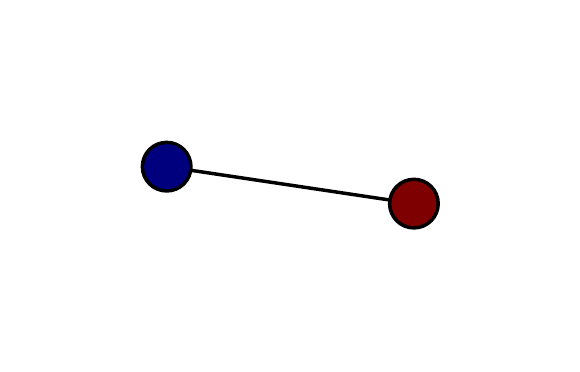}
\caption{\centering \textsc{PL(3.0)} peripheral  bag subgraph.}
\end{subfigure}
\caption{\textsc{PL(2.5) and PL(3.0)} bag subgraphs, colored by $k$-core number of
the node.}
\label{fig:pl_30_bags}\label{fig:pl_25_bags}
\end{center}
\end{figure}

As the power law exponent $\gamma$ is increased, recall that the
amount of degree heterogeneity in the resulting network is reduced, i.e.,
the number of high degree nodes specified by the power
law degree distribution is decreased.
As a necessary consequence of maintaining this distribution as the
nodes are connected, the high degree nodes are likely to be connected
to other high degree nodes.  This causes a core-periphery structure to
emerge (see \cite{Adcock13_icdm} for empirical measurements between the
relationship between $\gamma$ and the $k$-core structure).

For example, the core-periphery structure of \textsc{PL(3.0)} is
shallower than that of \textsc{PL(2.5)}, as seen in Figure
\ref{fig:bag_k_core}.  Similarly, the width of the \textsc{PL(2.5)}
\textsc{amd} TD is larger than that of the \textsc{PL(3.0)}
\textsc{amd} TD.  In all cases these widths are less
than the corresponding \textsc{ER(2)} network, whereas one might
expect these networks to have larger widths because of the increased
core-periphery structure.  This occurs because there are several
factors to consider as $\gamma$ is decreased.  The core \emph{does}
become denser as more edges are added to the core, causing these nodes
to become more difficult to separate; but most of those extra edges
come from the outer regions of the expander-like core, thus shrinking
the size of the core and increasing the size of the periphery.  In
other words, when only a few medium degree nodes are added, then there
are still cyclical structures in the core, as is observed in
\textsc{ER(1.6)}, except smaller; but as higher degree nodes are
added, the core becomes denser and begins to become larger, as this
forces larger and larger pieces of the core to be placed in the same
bag, as is observed in \textsc{ER(32)}.

The TD results on \textsc{ER} and \textsc{PL} graphs demonstrate that
TDs (in particular, with the \textsc{amd} heuristic) can capture the
core-periphery structure of two common random network models.  In both
cases, to the extent that there was a core-periphery structure (which
itself depended on the sparsity parameter $p$ or the degree
heterogeneity parameter $\gamma$), the central and peripheral bags in
the TD from \textsc{amd} were correlated with this structure.  The
peripheral bags were smaller and much sparser than the central bags
and contained nodes from the shallow (low) $k$-cores of the network.
With the exception of the tree-like periphery of the extremely sparse
\textsc{ER} and the \textsc{PL} networks, the structures observed in
TDs of the random network models were largely driven by loosely
connected core structures (e.g., overlapping loops in sparser regions and
expander-like cores in the denser regions).  This is consistent both
with the results on the toy networks and previous work involving the
structure of ER networks \cite{erdos60random, Percus08}.  (The one exception
to this is the denser \textsc{ER(32)}, where the central bag contained 
77\% of the network; in this case, the network does not exhibit the
core-periphery structure of the other networks looked at in this
section.)  We will see how we obtain similar results when applying the
\textsc{amd} heuristic to real-world networks.

\section{Tree decompositions of real-world networks }
\label{sec:real_results}

In this section, we will describe the results of using a variety of TD 
heuristics on a set of real-world networks.  
Our goal is to use the insights from the previous sections to evaluate the 
performance of existing TD heuristics on real social and information 
networks and to understand how those TDs can be used to obtain an improved 
understanding of the properties of these realistic networks.

Our main results in this section are three-fold.
First, in Section~\ref{sec:real_results-core_periphery},
we summarize results of a detailed empirical evaluation of the 
\textsc{amd} TD heuristic applied to our suite of realistic networks.%
\footnote{We should note that we ran these computations with many different TD heuristics.  In most of this section, however, we only show results from (the most scalable) \textsc{amd} heuristic.  This is simply for brevity.  There were some differences from heuristic to heuristic, but we feel this one is representative of the type of behavior found.}
The main focus is to illustrate how these TDs capture 
previously-identified core-periphery structure, and also to
illustrate how the internal structure of TD bags can be understood in 
terms of large-scale cycles and small-scale clustering in the
original graph.
Second, in Section~\ref{sec:real_results-ncp},
we evaluate the ability of \textsc{amd} to identify 
small-scale good-conductance communities such as those 
previously-identified by the NCP with local spectral 
methods~\cite{LLDM09_communities_IM,Jeub15}.
We show connections between bags that are more 
peripheral in the TD and small good-conductance communities responsible for 
``dips'' in the NCP.
Third, in Section~\ref{sec:real_results-gt},
we illustrate that TD heuristics can be used to identify 
certain other types of large-scale non-conductance-based 
``ground truth'' communities.
In particular, we will show connections between bags that are more
central in the TD and large-scale community-like (by a ``ground truth'' 
metric but not by conductance quality) clusters.

\subsection{Results on identifying core-periphery structure}
\label{sec:real_results-core_periphery}

Here, we will describe the results of an empirical evaluation of the
\textsc{amd} TD heuristic applied to our suite of real-world networks (those
in Table \ref{tbl:networks-basic-stats}). 
We will begin in Table~\ref{tbl:td_real_results} with a brief survey of all 
of our real networks, and we will then focus on four representative 
networks: \textsc{as20000102}, \textsc{CA-GrQc}, \textsc{FB-Lehigh}, and 
\textsc{PowerGrid}.
The first three all exhibit some form of 
previously-recognized core-periphery structure~\cite{Adcock13_icdm}, while 
\textsc{PowerGrid} is known to lack a strong core-periphery structure 
(basically since it is heavily tied to the underlying 
locally-Euclidean geometry of the Earth~\cite{Adcock13_icdm}).  
Note, though, that the Facebook networks are very core-heavy, in the sense 
that they have many nodes in deep cores, essentially because of their 
significantly higher average degree (see, e.g., \cite{Adcock13_icdm} and 
Table \ref{tbl:networks-basic-stats}). 
(Thus, informed by previous results on $k$-core 
decompositions and related tree-like 
techniques~\cite{Adcock13_icdm, ST08, LLDM09_communities_IM, MSL11},  
we expect to see evidence of the core-periphery structure in the TDs associated 
with \textsc{as20000102}, \textsc{CA-GrQc}, and, to a lesser extent, 
\textsc{FB-Lehigh}, but a lack of substantial core-periphery structure in the 
TD of \textsc{PowerGrid}.)

\subsubsection{Overview of core-periphery results for all networks}

In Table \ref{tbl:td_real_results}, we present 
the number of bags in the \textsc{amd} TD ($N_{\textsc{amd}}$),
the maximum eccentricity (diameter) of the TD ($E_{\textsc{amd}}$), 
the maximum and median width of the TD ($W$ and $\tilde{W}$, respectively),
and median bag density ($\tilde{D}$). 
These measurements provide us with an idea of how large is the most connected 
part of the network (maximum width); how numerous are small bags (median 
width), which is indicative of areas of the network that have small 
separators, in our case small peripheral regions of the network; and whether 
the small separators are more clique-like or consist of mostly disjoint 
nodes (median density), with disjoint nodes being indicative of cycles and 
clique-like structures being indicative of more meaningful communities.  
A large maximum width combined with a low median width is 
evidence for a deep core and a shallow periphery, and high median density 
is evidence for a periphery based on more community-like separators, rather 
than more disparate separators.  
These observations assume that high density bags are mostly 
small-width bags.
This assumption is plausible, given that as the width $w$ of a bag 
increases, the number of edges required to maintain a constant density 
increases like $w^2$; and in many cases we have confirmed this assumption 
indirectly or by direct observation.
For example, see our discussion of bag density and bag width below, as 
well as Figure \ref{fig:bag_density} below for empirical evidence that high 
density bags are generally the smallest width bags.  

\begin{table}[!h]
\hskip1em
\begin{center}
{\footnotesize
\begin{tabular}{l|r|r|r|r|r|}
Network & $N_{\textsc{amd}}$ & $E_{\textsc{amd}}$ & $W$ &
$\tilde{W}$ & $\tilde{D}$ \\
\hline \hline
\textsc{CA-GrQc} & 3014 & 39 & 222 & 2 & 1.0 \\  
\textsc{CA-AstroPh} & 10708 & 78 & 3616 & 5 & 1.0 \\
\textsc{as20000102} & 6364 & 33 & 88 & 2 & 1.0 \\
\textsc{Gnutella} & 6475 & 33 & 1629 & 2 & 0.67 \\
\textsc{Email-Enron} & 26781 & 78 & 2237 & 3 & 1.0 \\
\hline \hline
\textsc{FB-Caltech} & 395 & 30 & 357 & 18 & 0.53 \\
\textsc{FB-Haverford} & 516 & 56 & 891 & 37.5 & 0.38 \\
\textsc{FB-Lehigh}  & 1919 & 151 & 2983 & 31 & 0.32 \\
\textsc{FB-Rice} & 1481 & 76 & 2553 & 31 & 0.37 \\
\textsc{FB-Stanford} & 4809 & 100 & 6674 & 16 & 0.38 \\
\hline \hline
\textsc{PowerGrid} & 4666 & 59 & 21 & 2 & 0.67 \\
\textsc{Polblogs} & 899 & 49 & 294 & 6 & 0.57 \\
\textsc{road-TX} & $1.25 \times 10^6$ & 170 & 197 & 3 & 0.5 \\
\textsc{web-Stanford} & $2.10 \times 10^5$ & 500 & 1419 & 5 & 0.83 \\
\end{tabular}}
\end{center}
\caption{Statistics for TDs of real networks. Notation is the same as in Table~\ref{tbl:density_results}.}
\label{tbl:td_real_results}
\end{table}

Networks based on an underlying Euclidean geometry (e.g., \textsc{road-TX},
\textsc{PowerGrid}) have low maximum widths and low median widths, which 
indicates that they do \emph{not} have a strong core-periphery structure.  
While these networks have many small width bags, which is indicative of 
tree-like sections of the network, the internal subgraphs have a low 
median density (e.g., as compared to certain ER networks).
More social networks, such as \textsc{Polblogs} and the Facebook networks,
all have higher average degrees and, consequently, higher widths, with lower 
median widths.  This is one indicator of a core-periphery structure.  
As the median widths are higher in these networks, as compared to the other 
real networks (although lower than the ER 
networks), the peripheral structure in these social graphs tends to be 
denser than in the other networks.
Also, the median density, while low compared to the other real networks, is 
very high compared to the median density of the densest \textsc{ER} networks 
in Table \ref{tbl:er_density_results}.  
Thus, although the periphery is more difficult to separate into small
good-conductance community-like clusters than some of the other real 
networks, e.g., \textsc{CA-GrQc} or \textsc{CA-AstroPh}, it is still formed 
from more community-like pieces than similar ER (or PL) networks.  

Observe also that the two web networks, \textsc{Gnutella} and 
\textsc{web-Stanford}, have high widths, like the Facebook networks, but 
lower median widths; and that they have higher median densities (especially 
when compared to \textsc{ER(4), ER(8)}, and \textsc{ER(16)}).
This indicates that these networks have a sparser, more tree-like periphery 
than other social networks, which is also consistent with previous
results~\cite{LLDM09_communities_IM}.  
Importantly, and also consistent with previous
results~\cite{LLDM09_communities_IM}, is that the sparsest, most tree-like 
peripheries belong to the collaboration, email, and autonomous systems 
networks (\textsc{CA-GrQc, CA-AstroPh, Email-Enron, as20000102}).
These networks all have low median widths, high median densities, and high 
maximum widths, indicating that they exhibit the cleanest core-periphery 
structure, also consistent with the 
upward-sloping NCPs~\cite{LLDM09_communities_IM,Jeub15}.

\subsubsection{More details on core-periphery structure of four representative networks}

We will now look 
at several representative networks in greater detail.  
Let us start by discussing Figures~\ref{fig:bag_hist}, \ref{fig:bag_density}, 
and~\ref{fig:bag_k_core} from Section \ref{sec:synth_results}.  
Figure \ref{fig:bag_hist} clearly shows that most of the bags in the TDs 
are small-width bags.
In fact, proportionally, \textsc{FB-Lehigh} has the largest fraction of 
large bags, and yet 80\% of the bags are below width 200 (in a graph with 
5073 nodes, where the TD has a maximum width bag of 2983).  
Since the bags and edges of TDs form separators in the network, this indicates 
that there are many relatively small separators.
These are the largest in Facebook networks, where the separators tend to have 
around 100 nodes, while in most other networks many of the separators have 
around 10 nodes.
This is consistent with typical views of core-periphery structure, with a few 
more highly-connected nodes in the core and many less well-connected nodes in 
the periphery.  
That is, in order to separate off most pieces of the periphery (where 
``piece'' is defined by the end of branches in the TD), only 10 or fewer 
nodes are needed for most of the social/information networks, while ca. 
100 nodes are needed for the Facebook networks.

In Figure \ref{fig:bag_density}, we see the average edge density for bags of 
a given cardinality plotted against the bag cardinality, showing
that small-width bags have 
high densities.  
An important distinguishing feature of the three representative real networks 
(that are not tied to an underlying Euclidean geometry) is that the curve has a
heavier tail than in the synthetic networks.  
This indicates that separators, up to much larger size scales, are less 
disparate (e.g., are denser or clumpier) than in the synthetic networks.  
In \textsc{PowerGrid}, on the other hand, the underlying Euclidean geometry 
leads the density to falls off more quickly.
It falls off similarly to the sparse ER network, except that
the tail of the curve is shorter.  
In this case, only the smallest bags have tight separators.  

In Figure \ref{fig:bag_k_core}, we consider the relationship between the 
core structure and low eccentricity (central) bags, and we compare that 
with the relationship between the periphery structure and the high 
eccentricity (perimeter) bags in the TD.
Figures~\ref{fig:as_k_core} and~\ref{fig:cagrqc_k_core}
show that for \textsc{as20000102} and \textsc{CA-GrQc} there is a clear 
downward trend as the bag eccentricity is increased.  
This indicates that low eccentricity bags contain more high $k$-core nodes 
on average and that the high eccentricity bags contain more low $k$-core 
nodes on average.  
Figure \ref{fig:lehigh_k_core} shows that \textsc{FB-Lehigh}, due to 
its greater density, has a mostly flat profile until the most extreme 
reaches of eccentricity are met, at which point some of the bags begin to 
contain nodes of a lower $k$-core.  
Thus, the core-periphery structure is present in \textsc{FB-Lehigh}, but 
the core-periphery structure is moderated by a very large core which 
produces long path-like sets of nodes that in turn lead to large core 
bags and hence a much larger eccentricity.
(This is typical of the results for most of the Facebook 
networks, which is consistent with their flat NCP~\cite{Jeub15}.)  
Finally, \textsc{PowerGrid}, which is not expected to have exhibit a 
correlation between $k$-core structure and bag eccentricity, has a flat 
profile.  

To illustrate these findings, we present 
visualizations in Figures~\ref{fig:cagrqc_bags} and \ref{fig:power_bags}.
Shown are a central or very deep core bag, a perimeter or 
very peripheral bag, and an intermediate bag, for each of our four 
networks.
These figures show the community-like nature of typical bags for the 
three information networks, i.e., \textsc{as20000102}, 
\textsc{CA-GrQc}, and \textsc{FB-Lehigh}, as well as the more disparate 
separators of the \textsc{PowerGrid}.
The coloring of the visualizations in these figures is by $k$-core:
the red nodes are in deep (high) $k$-cores while the blue nodes are in 
shallow (low) $k$-cores.

\begin{figure}[!htb]
\begin{center}
\begin{subfigure}[h!]{0.15\textwidth}
\includegraphics[width=\textwidth]{./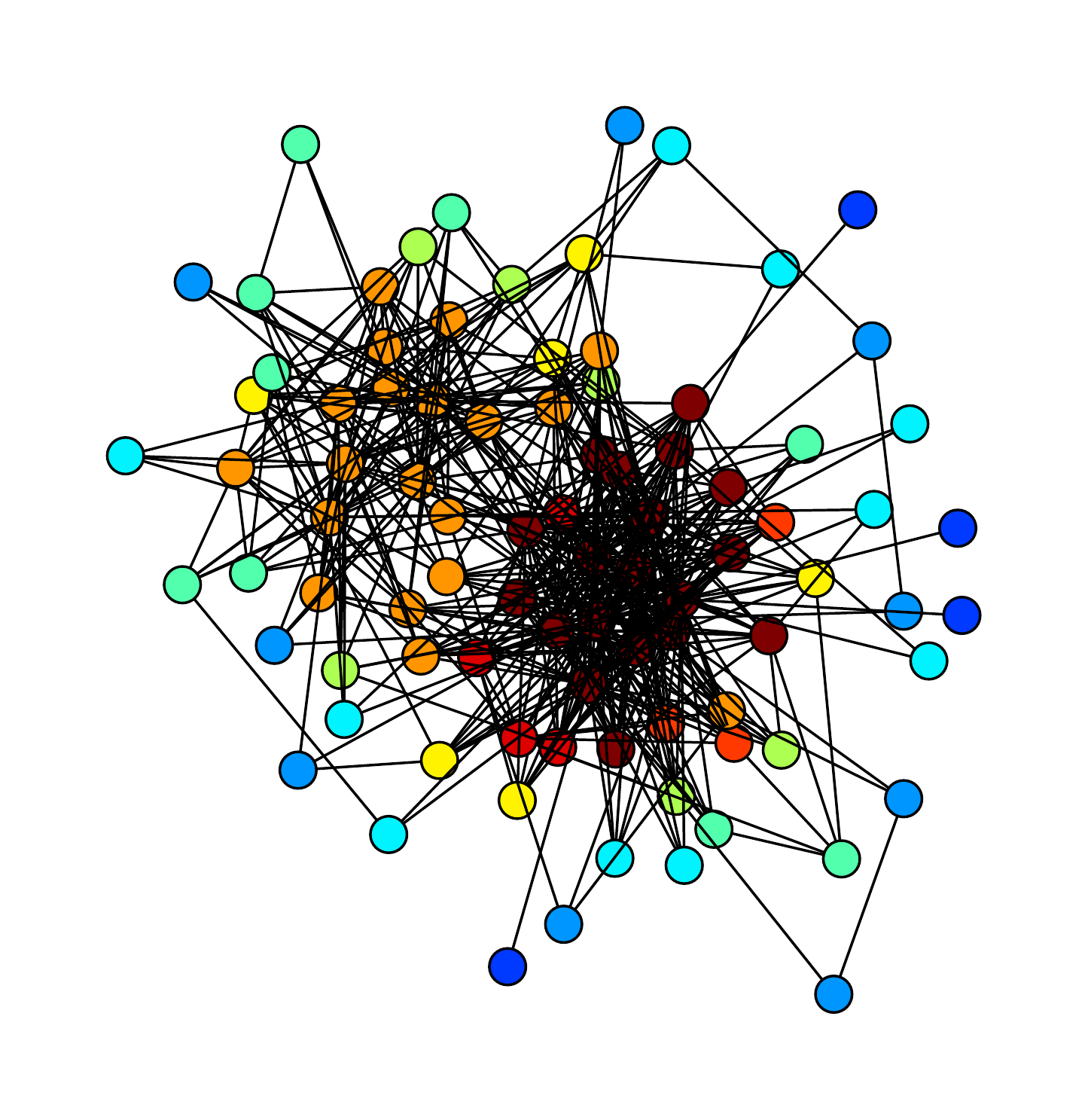}
\caption{\centering \textsc{as20000102} central bag}
\end{subfigure}
\begin{subfigure}[h!]{0.15\textwidth}
\includegraphics[width=\textwidth]{./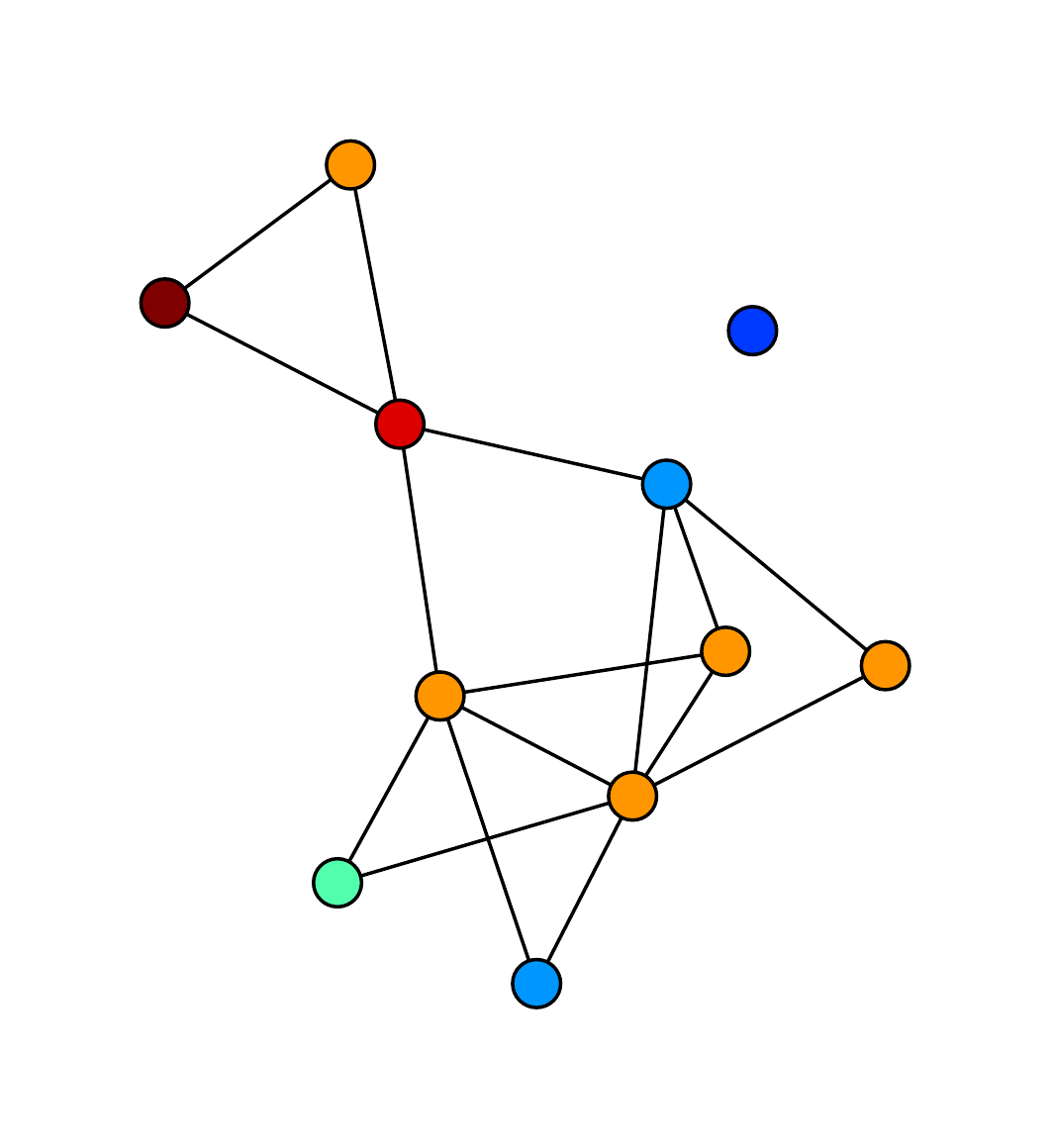}
\caption{\centering \textsc{as20000102} intermediate bag}
\end{subfigure}
\begin{subfigure}[h!]{0.15\textwidth}
\includegraphics[width=\textwidth]{./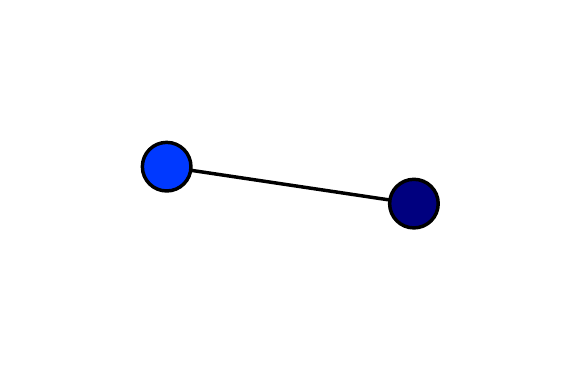}
\caption{\centering \textsc{as20000102} perimeter bag}
\end{subfigure}
\begin{subfigure}[h!]{0.15\textwidth}
  \includegraphics[width=\textwidth]{./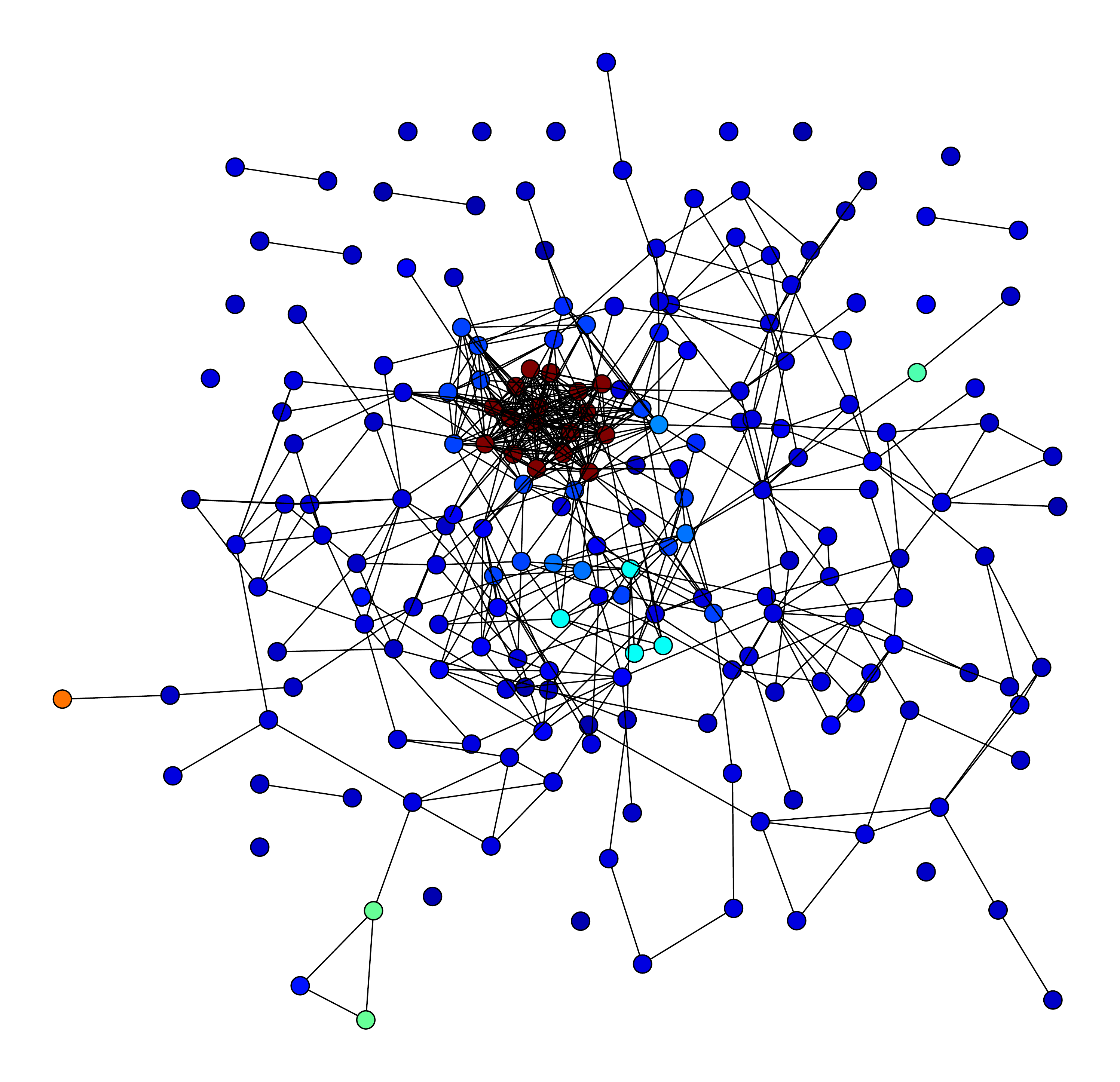}
\caption{\centering \textsc{CA-GrQc} central bag}
\end{subfigure}
\begin{subfigure}[h!]{0.15\textwidth}
\includegraphics[width=\textwidth]{./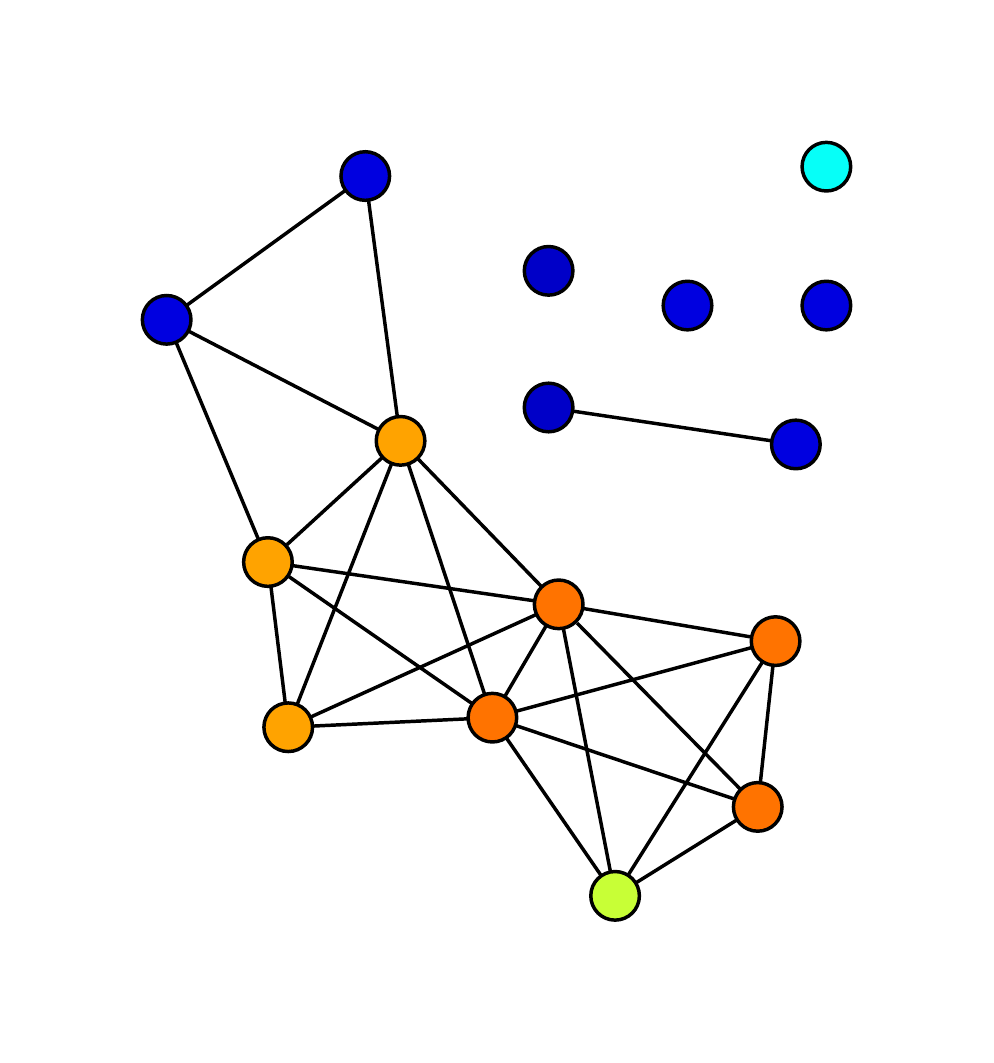}
\caption{\centering \textsc{CA-GrQc} intermediate bag}
\end{subfigure}
\begin{subfigure}[h!]{0.15\textwidth}
\includegraphics[width=\textwidth]{./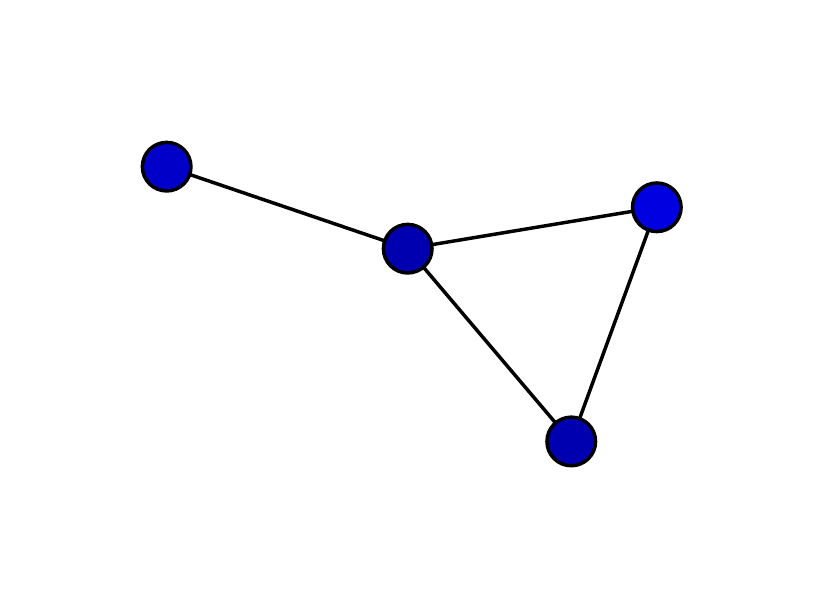}
\caption{\centering \textsc{CA-GrQc} perimeter bag}
\end{subfigure}
\caption{\textsc{as20000102}  and \textsc{CA-GrQc} \textsc{amd} bag subgraphs, colored by $k$-core number, with red indicating deep/high $k$-cores and blue indicating shallow/low $k$-cores.}
\label{fig:cagrqc_bags}\label{fig:as_bags}
\end{center}
\end{figure}

\begin{figure}[!htb]
\begin{center}
\begin{subfigure}[h]{0.15\textwidth}
\includegraphics[width=\textwidth]{./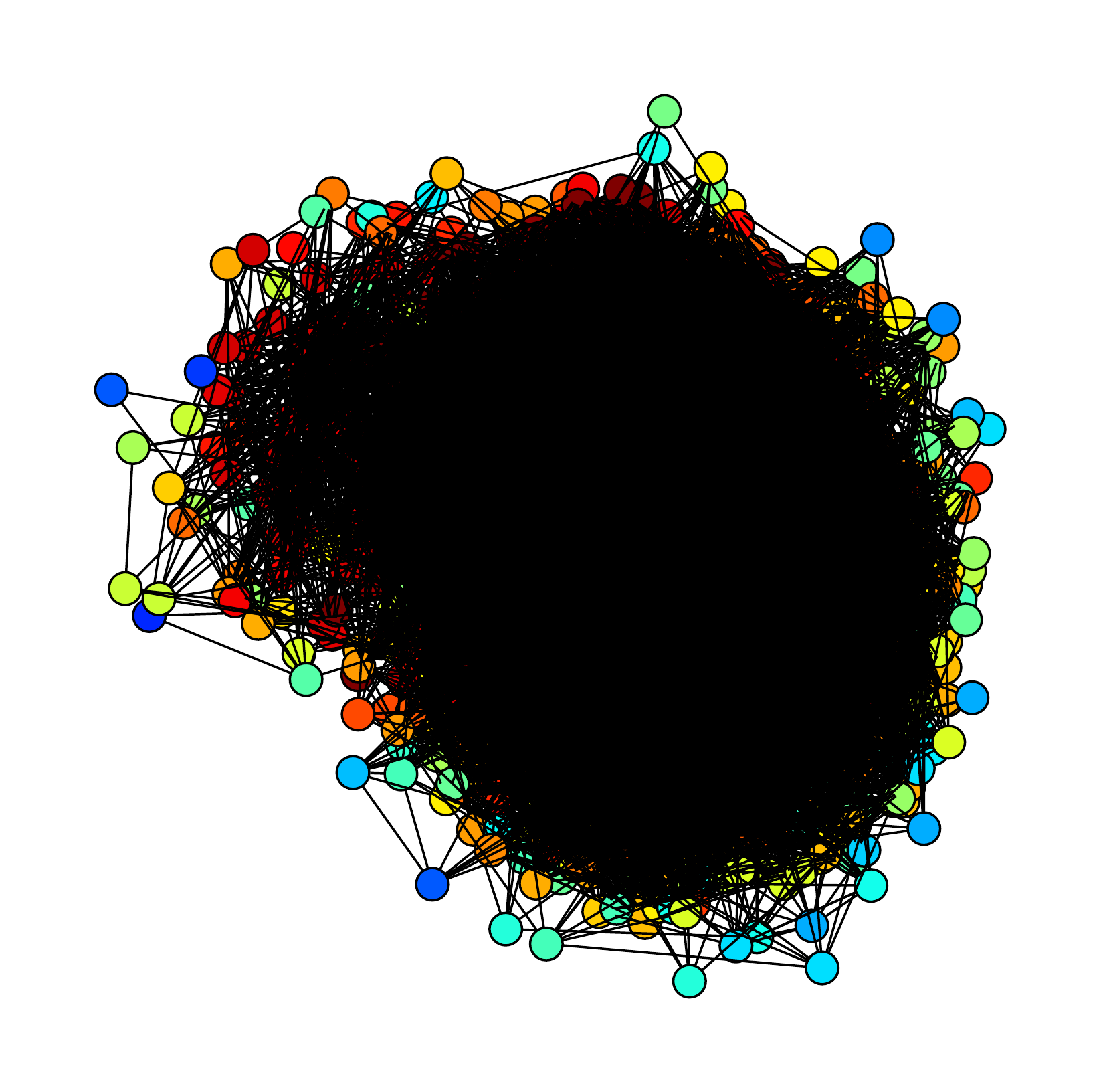}
\caption{\centering \textsc{FB-Lehigh} central bag}
\end{subfigure}
\begin{subfigure}[h!]{0.15\textwidth}
\includegraphics[width=\textwidth]{./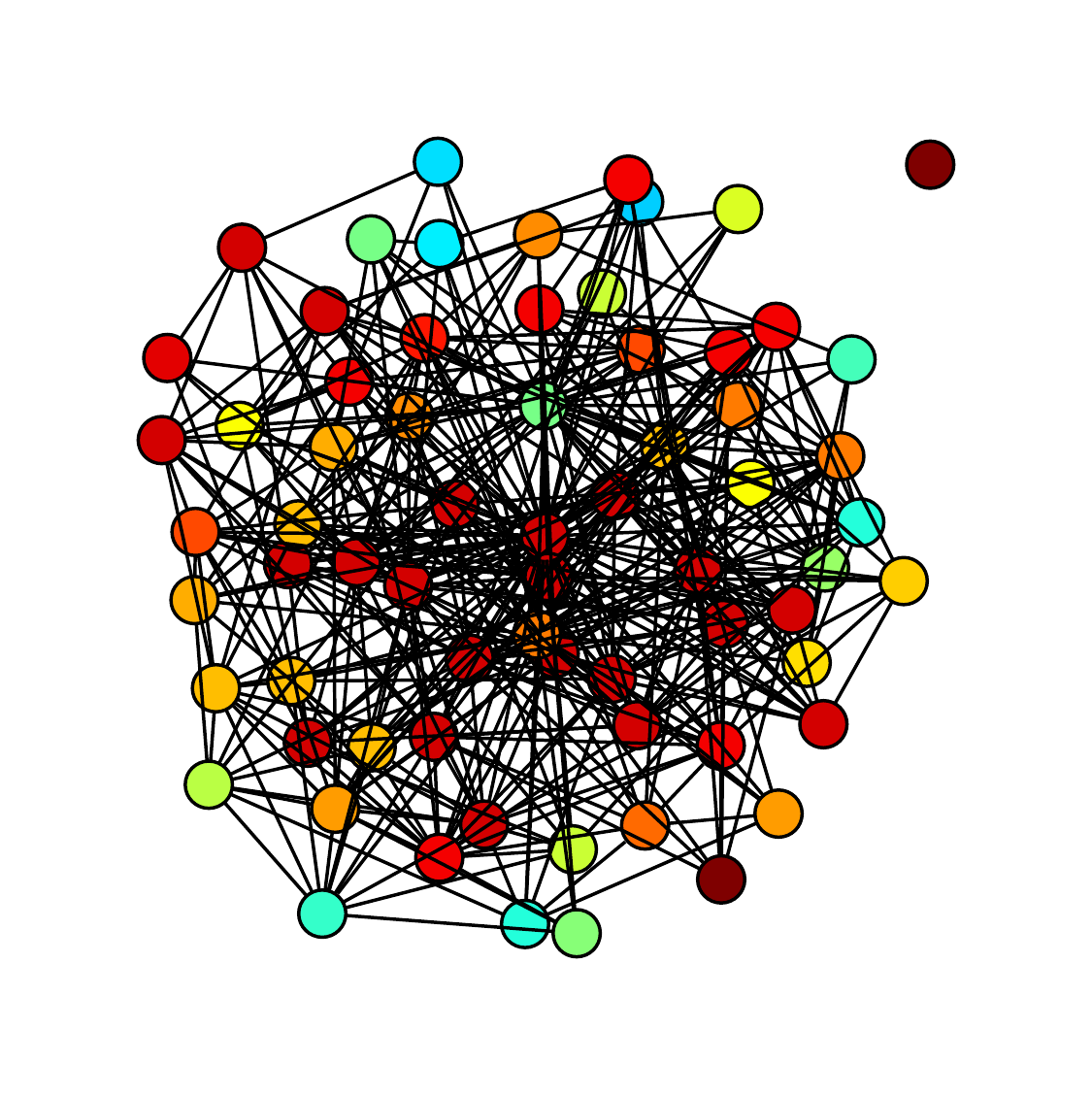}
\caption{\centering \textsc{FB-Lehigh} intermediate bag}
\end{subfigure}
\begin{subfigure}[h!]{0.15\textwidth}
\includegraphics[width=\textwidth]{./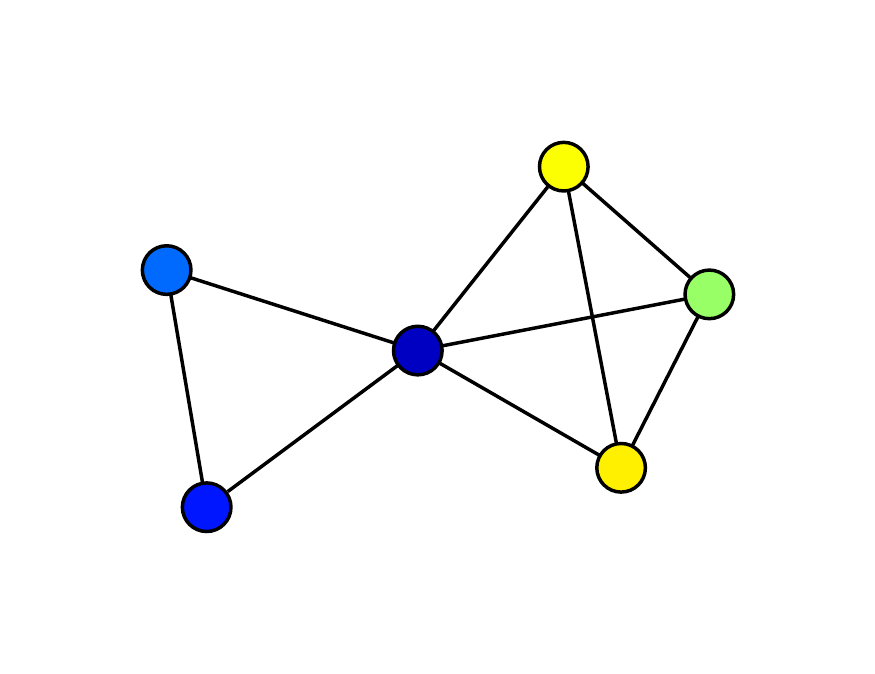}
\caption{\centering \textsc{FB-Lehigh} perimeter bag}
\end{subfigure}
\begin{subfigure}[h!]{0.15\textwidth}
\includegraphics[width=\textwidth]{./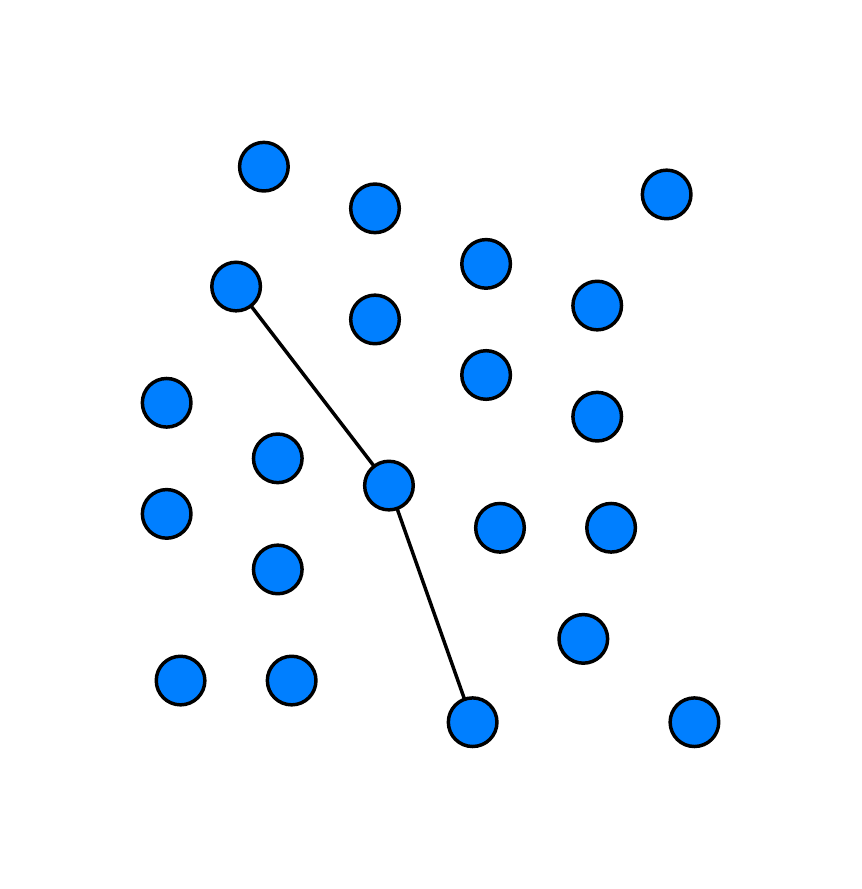}
\caption{\centering \textsc{PowerGrid} central bag}
\end{subfigure}
\begin{subfigure}[h!]{0.15\textwidth}
\includegraphics[width=\textwidth]{./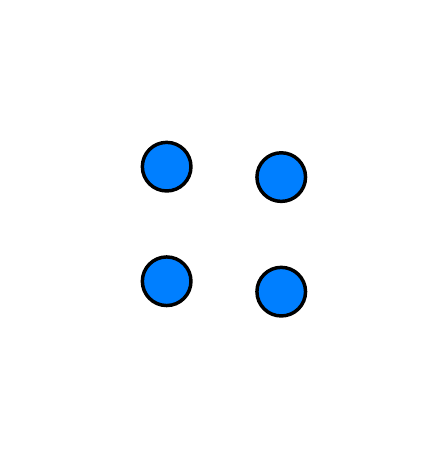}
\caption{\centering \textsc{PowerGrid} intermediate bag}
\end{subfigure}
\begin{subfigure}[h!]{0.15\textwidth}
\includegraphics[width=\textwidth]{./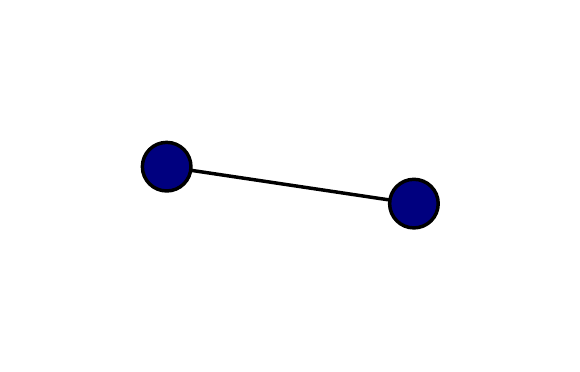}
\caption{\centering \textsc{PowerGrid} perimeter bag}
\end{subfigure}
\caption{\textsc{FB-Lehigh} and \textsc{PowerGrid} \textsc{amd} bag subgraphs, colored by $k$-core number, with red indicating deep/high $k$-cores and blue indicating shallow/low $k$-cores.}
\label{fig:power_bags}\label{fig:lehigh_bags}
\end{center}
\end{figure}

One final observation we would like to make is to address the ``dips'' in 
the average $k$-core curves shown in Figure \ref{fig:bag_k_core} (e.g.,
the dip in Figure \ref{fig:cagrqc_k_core} at a bag eccentricity of 21 or in 
Figure \ref{fig:lehigh_k_core} throughout).
These dips are due to what we will call ``twigs,'' where a twig is a small 
(low width and short) branch off of a much larger (high width and long) 
trunk-like structure of the TD.  
For example, in \textsc{FB-Lehigh} and in the other Facebook networks, the 
high average degree results not only in high widths, but in larger 
collections of bags of high width.  
These are arranged in a long path (a ``trunk'') with many branches at either 
end.  
Along this main trunk, there are occasional twigs which contain peripheral 
nodes.  
Since the trunk is long, the average $k$-core at the point where the twig is
attached is slightly lower, resulting in the dip in the curve.
In Figure \ref{fig:twig}, we provide a visualization of the twigs 
responsible for three of these dips.

\subsubsection{Summary of large-scale core-periphery and tree-like structure in real-world networks}

These empirical observations suggest that many realistic 
social/information networks have a non-trivial core-periphery structure; and 
that in many cases this is caused by many small overlapping cluster-like or 
moderately clique-like structures.
That is, there is local non-tree-like (combinatorial) structure that ``fits together'' into 
a global core-periphery structure that is tree-like (in a metric and/or cut
sense) when viewed from large size scales.  
This is in sharp contrast with many models and intuitions.
Most obviously, this is in contrast with the random networks (in particular, 
the not extremely sparse ER networks and to a lesser extent the PL networks, 
but many other more popular random generative models) which have a locally 
tree-like, but globally loopy structure.
Less obviously, this is also in sharp contrast with networks such as 
\textsc{PowerGrid}, \textsc{PlanarGrid}, and \textsc{road-TX} that are strongly 
tied to an underlying Euclidean geometry.
Said another way, many realistic social/information networks have a more 
tightly-connected core-like structure than is present in typical random 
networks, and they have peripheral and intermediate regions that are 
``clumpier'' than these random networks.  
While these claims are perhaps intuitive, our empirical observations 
demonstrate that they can be meaningfully identified with TDs and interpreted as leading to large-scale 
cut-based tree-like structure.

Interestingly, aside from the local clumpiness, the real-world 
social/information networks do have a core-periphery structure 
that is reminiscent of that which is also seen in extremely sparse ER 
graphs and PL graphs with greater degree heterogeneity.
(This too is consistent with prior results suggesting that extreme sparsity 
coupled with randomness/noise is responsible for the dips in the 
NCP~\cite{LLDM09_communities_IM,Jeub15}.)
It is also worth emphasizing that in most of the intermediate bags of the 
real social/information networks, there are still a small number of 
disconnected nodes.
This indicates that there are still a small number of alternate paths, which 
are disparate from the clusters, to the rest of the nodes in the network.  
The most prominent exceptions to these general observations are networks that 
either do not have a strong core-periphery structure, e.g., \textsc{PowerGrid} 
that is tied to a two-dimensional underlying Euclidean geometry, or networks 
that have a relatively low clustering coefficient, e.g., \textsc{Gnutella09}.  
In both of these cases (but for different reasons), the internal subgraphs of 
the intermediate and peripheral bags have a larger number of disconnected 
nodes than the other realistic networks.

\begin{figure}[!htb]
\begin{center}
\begin{subfigure}[h!]{0.3\textwidth}
\includegraphics[width=\textwidth]{./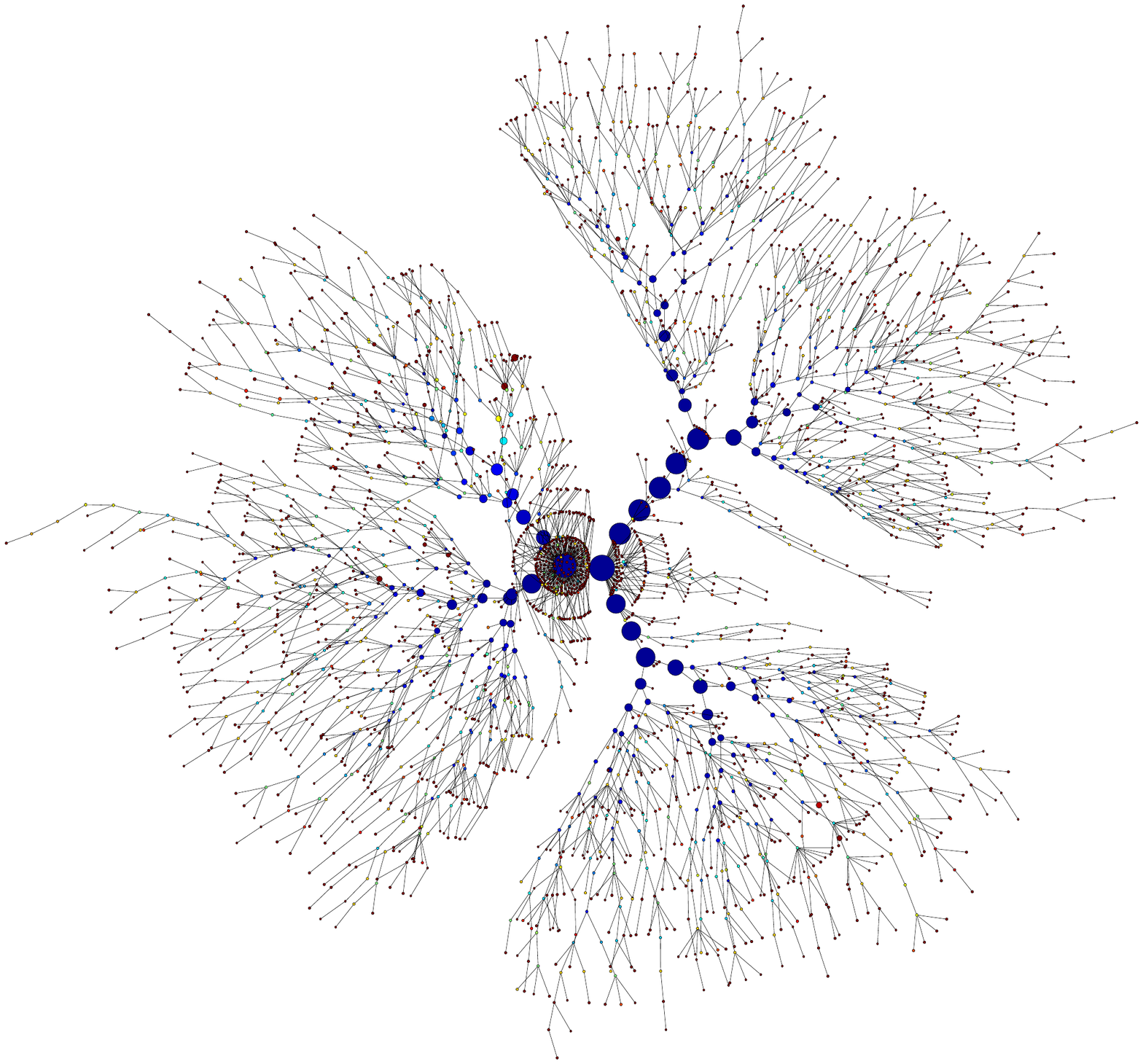}
\caption{\centering \textsc{CA-GrQc} twigs on central bag.}
\end{subfigure}
\begin{subfigure}[h!]{0.3\textwidth}
\includegraphics[width=\textwidth]{./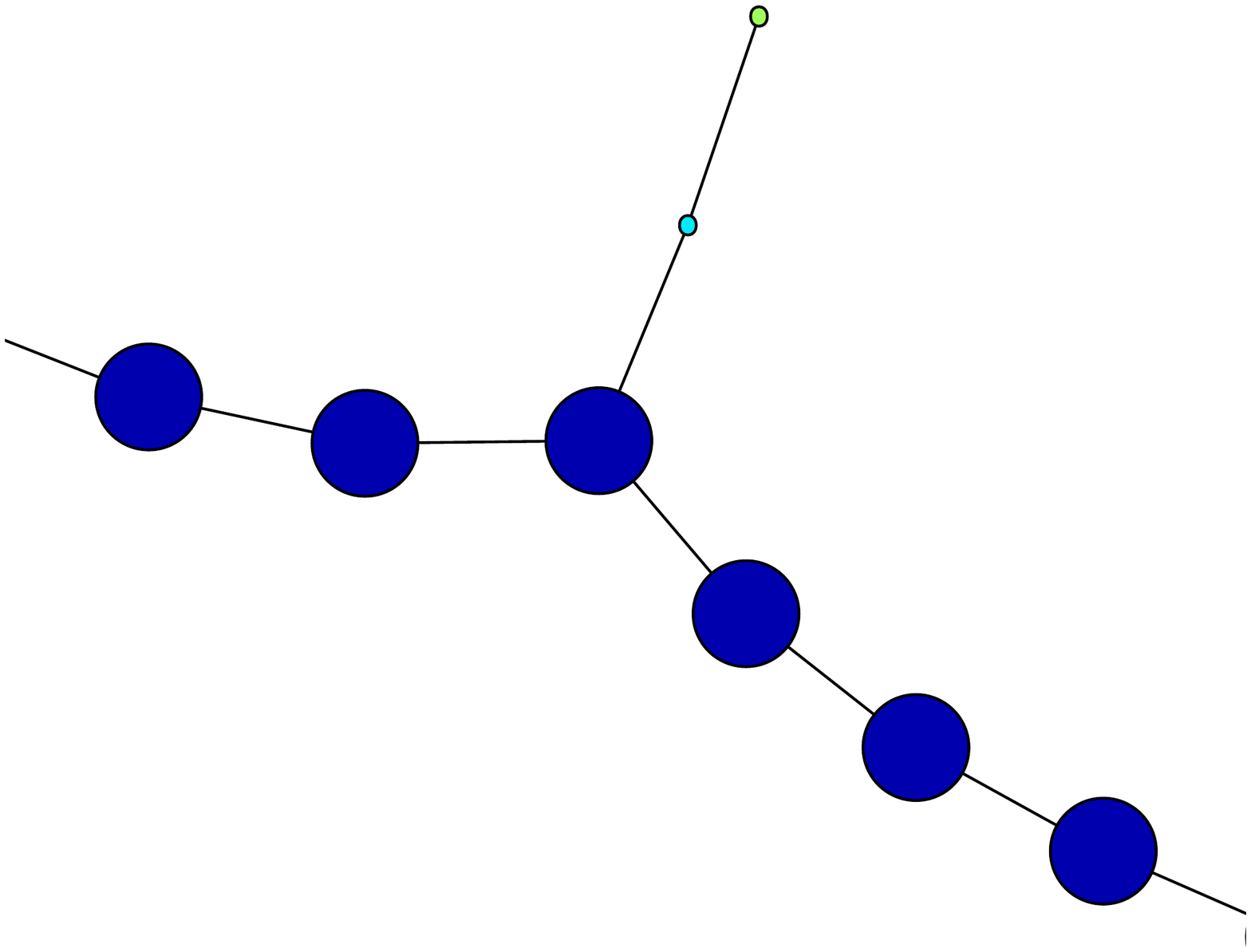}
\caption{\centering \textsc{FB-Lehigh} small twig on central trunk.}
\end{subfigure}
\begin{subfigure}[h!]{0.3\textwidth}
\includegraphics[width=\textwidth]{./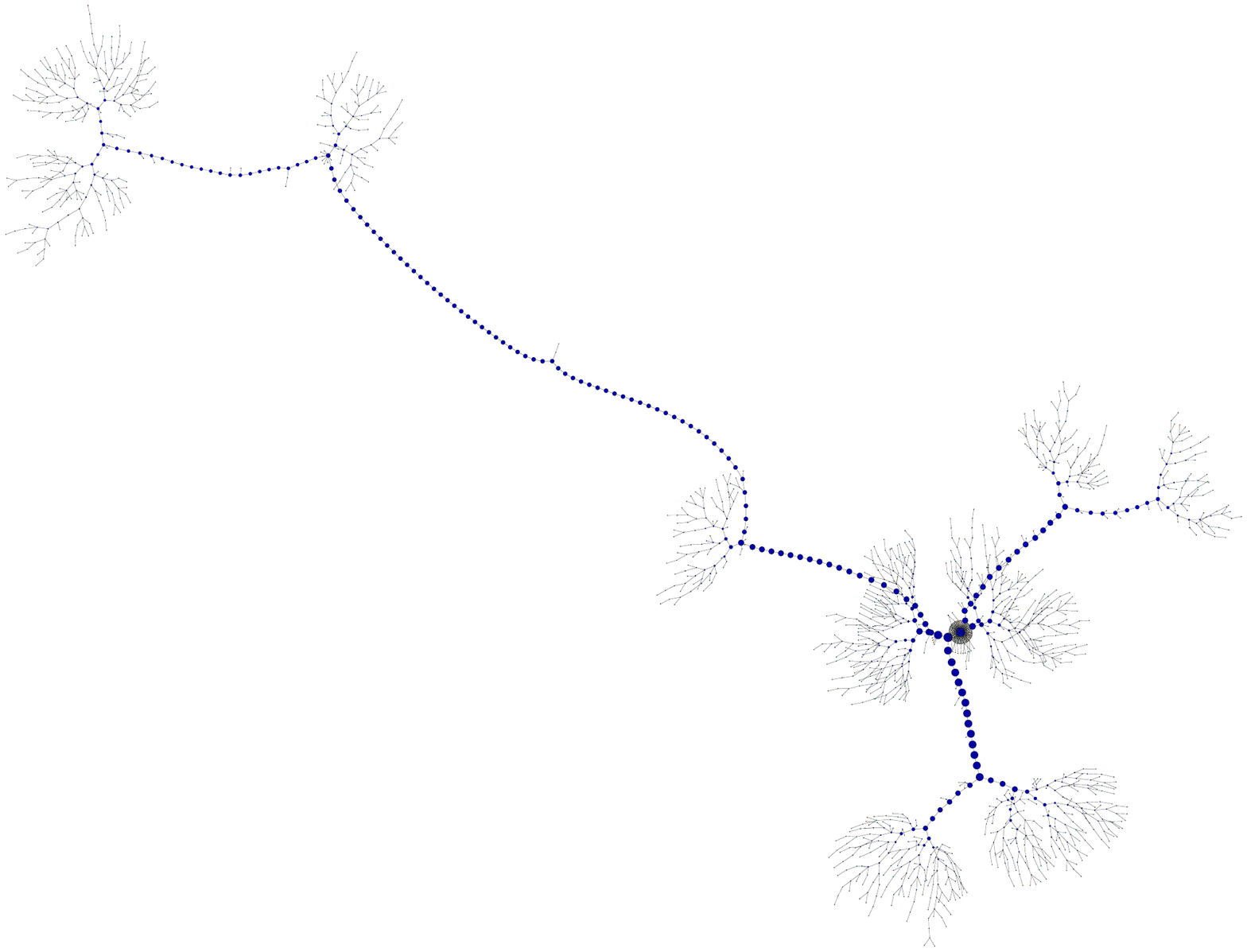}
\caption{\centering \textsc{FB-Lehigh} large branching twig on central
  trunk.}
\end{subfigure}
\caption{Twigs on \textsc{FB-Lehigh} and \textsc{CA-GrQc}.
  Bags are colored by the density, blue indicating
  low density and red indicating high density.  A small twig and a
  larger, branching twig on the \textsc{FB-Lehigh} trunk are shown.
  These twigs, combined with the long, path-like trunk,
  cause dips in the $k$-core eccentricity plot.  In
  \textsc{CA-GrQc}, the concentration of the twigs on one or two
  central bags causes only a single, large dip as compared to the
  multiple dips in \textsc{FB-Lehigh}.  Synthetic networks in
  Figure \ref{fig:bag_k_core}, (\textsc{PL(2.5), PL(3.0),} and
  \textsc{ER(1.6)}) also have twigs similar to \textsc{CA-GrQc}.}
\label{fig:twig}
\end{center}
\end{figure}

\subsection{Connections with good-conductance communities results}
\label{sec:real_results-ncp}

Here, we will consider how the peripheral part of the tree-like 
core-periphery structure identified by TDs relates to low-conductance 
clusters/communities that were previously-identified by the NCP 
method~\cite{LLDM09_communities_IM,Jeub15}.
To do so, observe that one way to determine whether a TD ``captures'' 
clustering/community structure is to see if those clusters/communities are 
well-localized in the TD.  By ``well-localized,'' we mean here that the
cluster/community is contained in a relatively small number of (contiguous) 
bags.  
We followed previous personalized page rank (PPR) local spectral procedures~\cite{andersen06local}
to generate a set of candidate
clusters~\cite{LLDM09_communities_IM,Jeub15}.
Then, given a set of candidate clusters, we looked at how many bags in the 
TD contain at least one node from this cluster, i.e., we measured how 
well-localized the community is in the TD.  
As a crude threshold of whether a cluster/community is localized, we consider 
it to be localized if it is contained in fewer bags than there are 
nodes in the community.%
\footnote{To understand this threshold, consider the
  following example: if a community of size $n$ is a tree, e.g.,
  whiskers in \textsc{ER(1.6)}, then it will be contained in $n$ bags
  in the (ideal) TD; if the community is a ``clique
  whisker,'' i.e., a clique connected to the rest of the network by
  only one edge, it will be contained in just one or two bags; and if
  the community contains deep core nodes which are connected to many
  nodes outside of the community, the community will be spread across
  many bags in the network.
  Other measures of TD locality showed similar
  results.}  
We apply this method using the \textsc{amd} heuristic.

Our results for several real-world and synthetic networks are presented in
Figures~\ref{fig:synth_td_local}--\ref{fig:planar_td_local}.  
For each figure/subfigure, the horizontal axis represents community 
size in number of nodes (on log scale), and the vertical axis is either
the conductance of the best community found using the PPR method (recall
that a low conductance 
represents a better community) or the number of number of bags that contain 
members of the community (again, on log scale).  
In the bag plots, the red line represents the number of bags 
which contain a node \emph{from the community in the corresponding NCP plot}, 
and the green dashed line represents the locality threshold.
When the number of bags for a given community is localized by our
definition, the red plot will be below the green threshold.

\begin{figure}[!htb]
\begin{center}
\begin{subfigure}[h]{0.23\textwidth}
\includegraphics[width=\textwidth]{./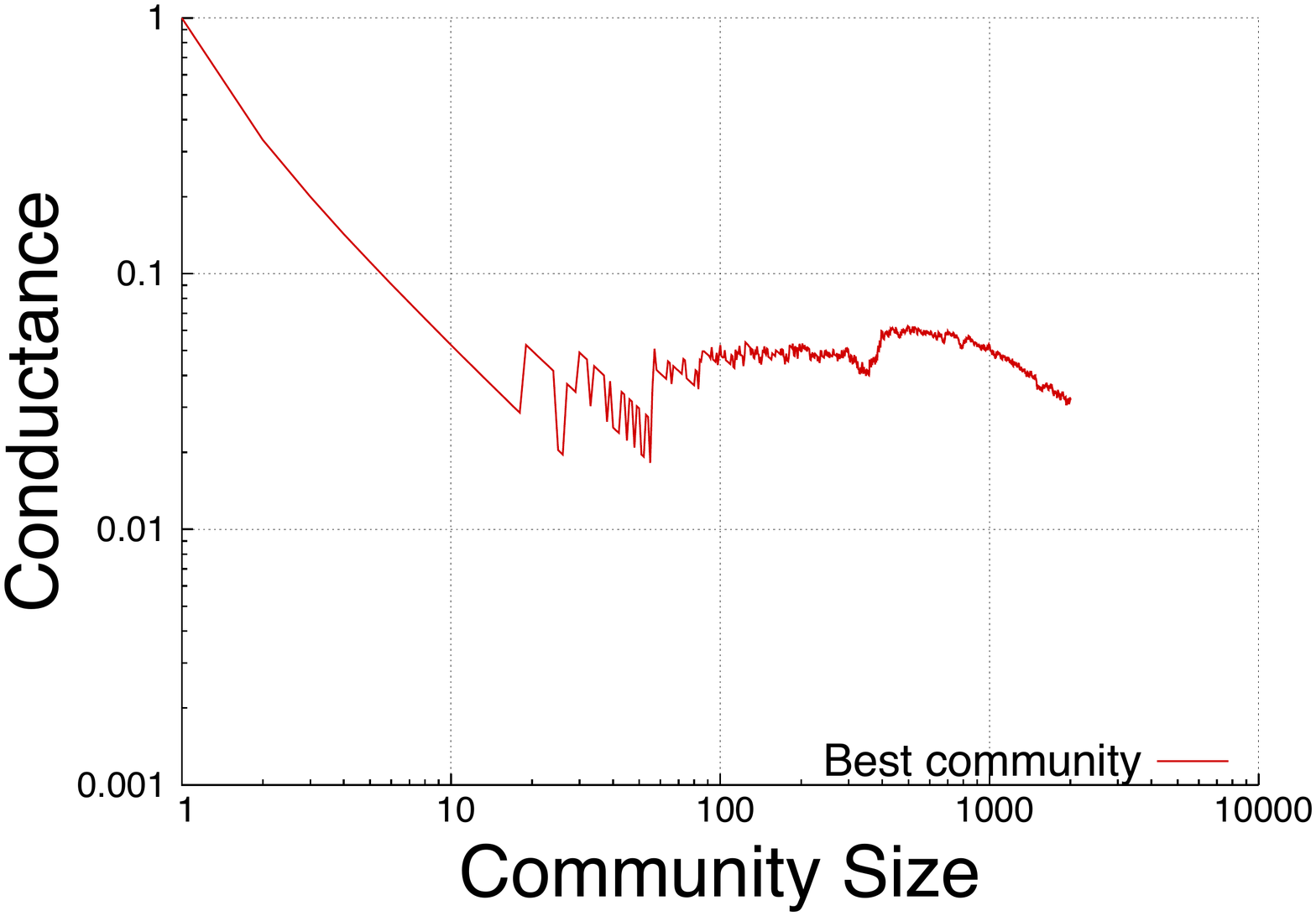}
\caption{\centering \textsc{ER(1.6)} NCP plot}
\end{subfigure}
\begin{subfigure}[h]{0.23\textwidth}
\includegraphics[width=\textwidth]{./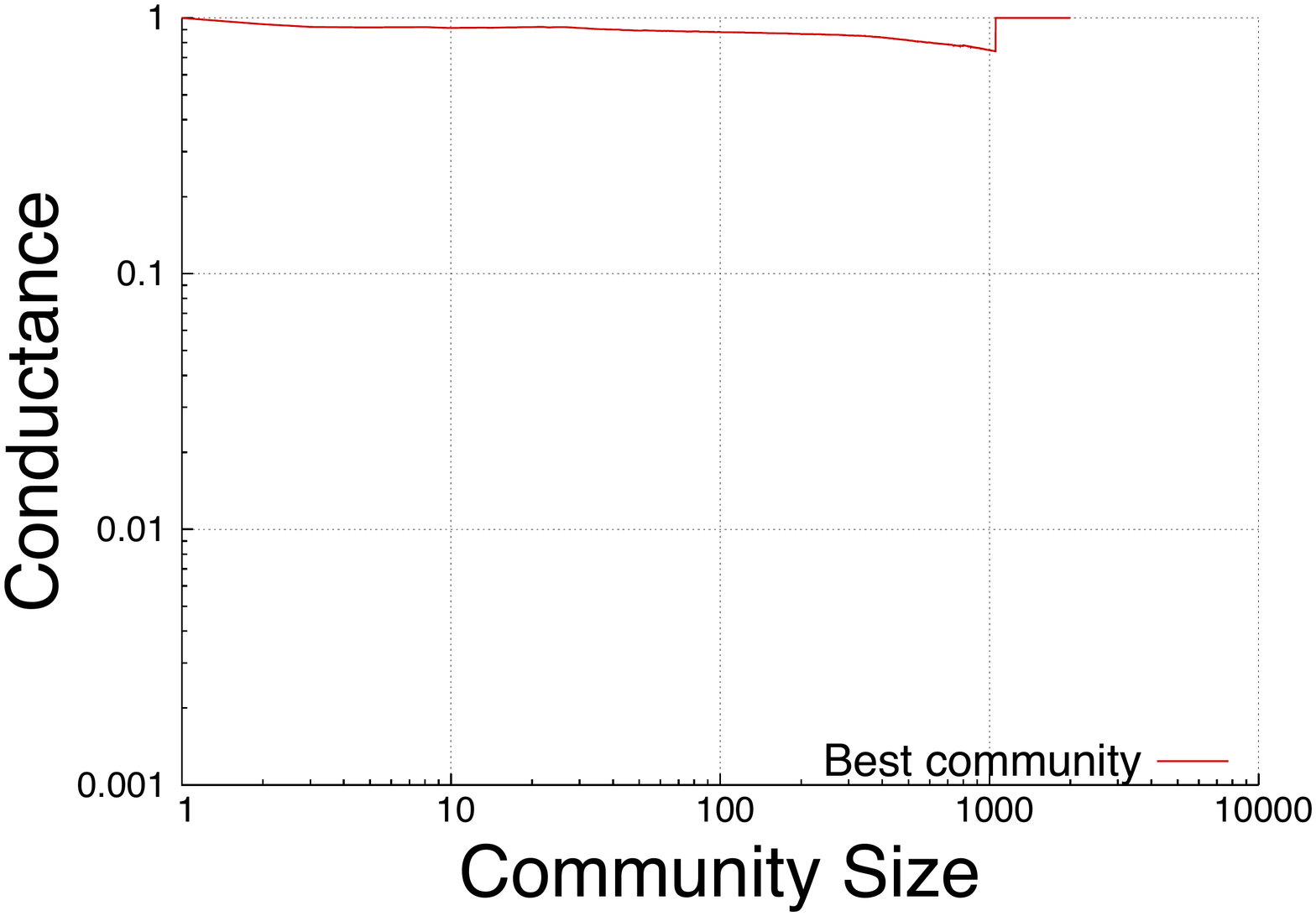}
\caption{\centering \textsc{ER(32)} NCP plot}
\end{subfigure}
\begin{subfigure}[h]{0.23\textwidth}
\includegraphics[width=\textwidth]{./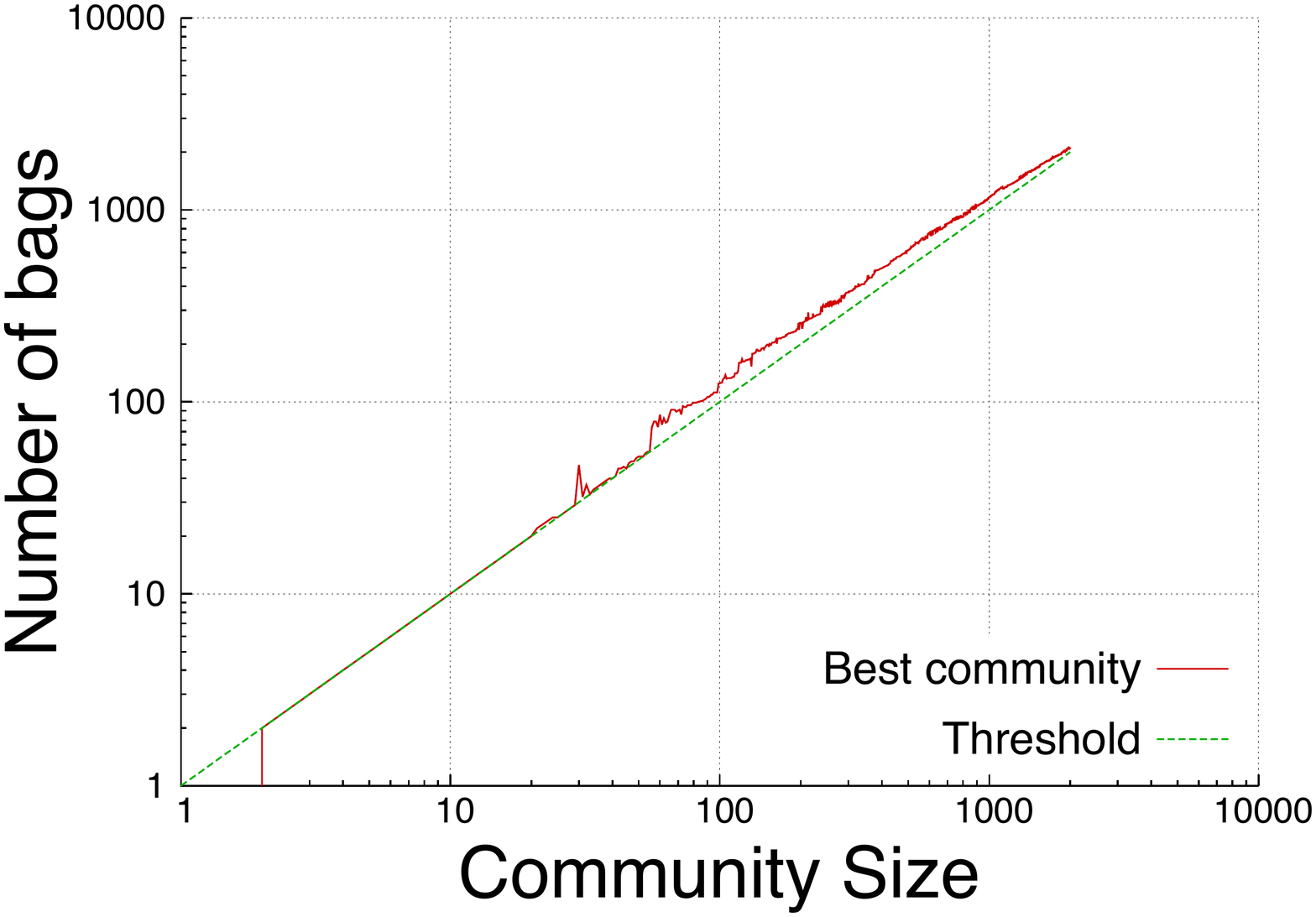}
\caption{\centering \textsc{ER(1.6)} bag localization}
\end{subfigure}
\begin{subfigure}[h]{0.23\textwidth}
\includegraphics[width=\textwidth]{./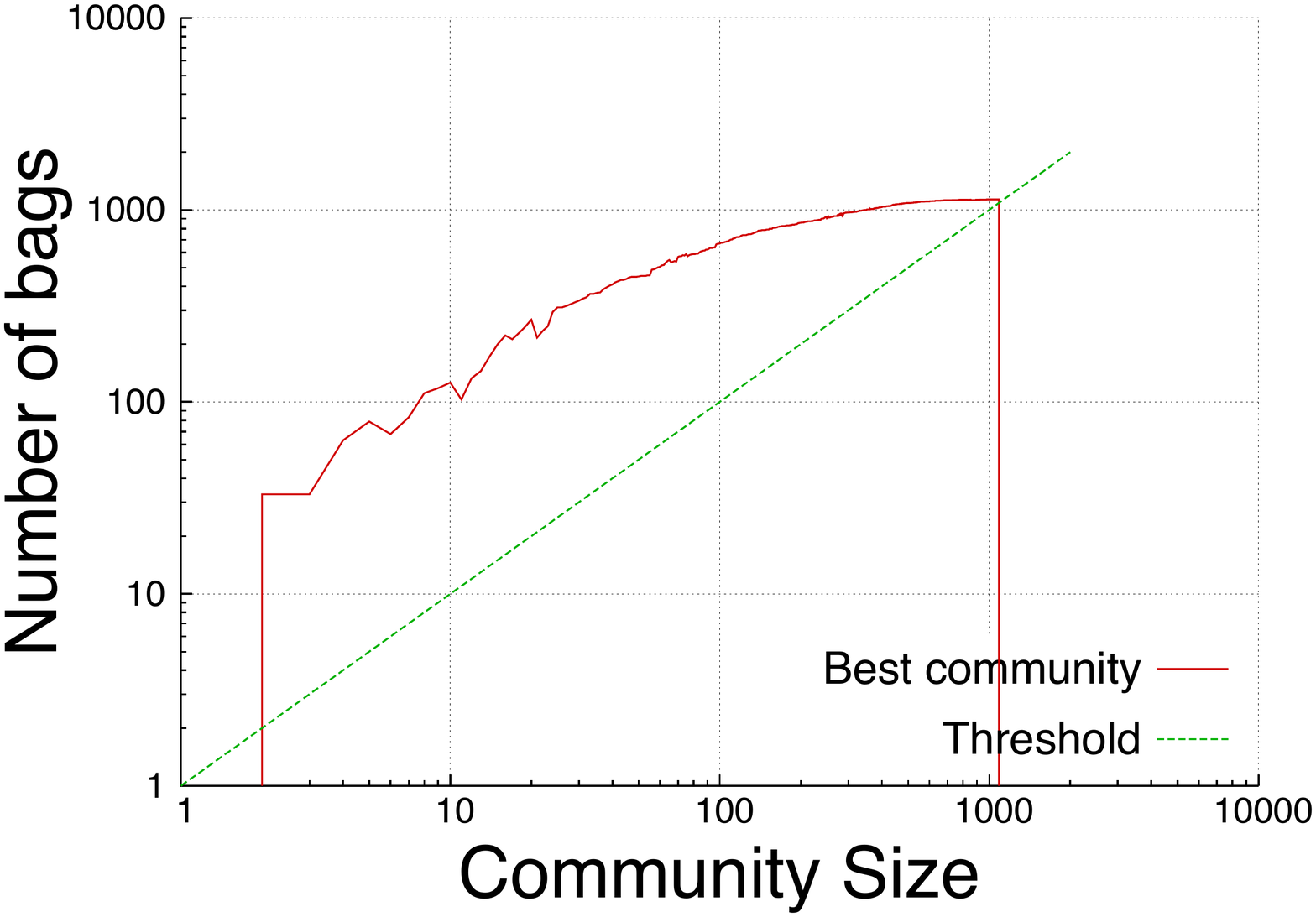}
\caption{\centering \textsc{ER(32)} bag localization}
\end{subfigure}
\caption{\textsc{ER(1.6} and \textsc{ER(32)} NCP plots and tree 
localization plots.  The localization threshold is plotted in green.}
\label{fig:synth_td_local}
\end{center}
\end{figure}

\begin{figure}[!htb]
\begin{center}
\begin{subfigure}[h]{0.23\textwidth}
\includegraphics[width=\textwidth]{./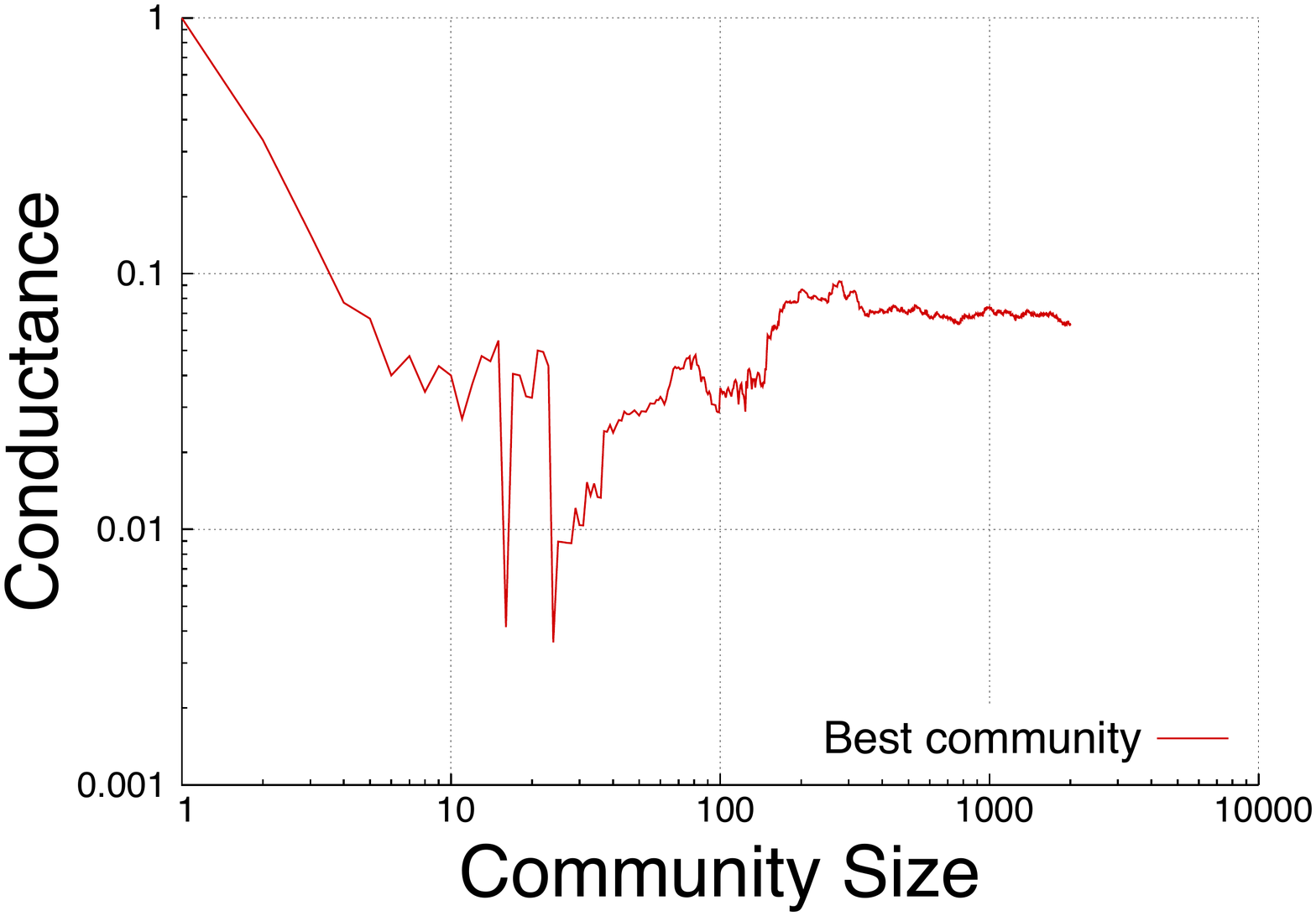}
\caption{\centering \textsc{CA-GrQc} NCP plot}
\end{subfigure}
\begin{subfigure}[h]{0.23\textwidth}
\includegraphics[width=\textwidth]{./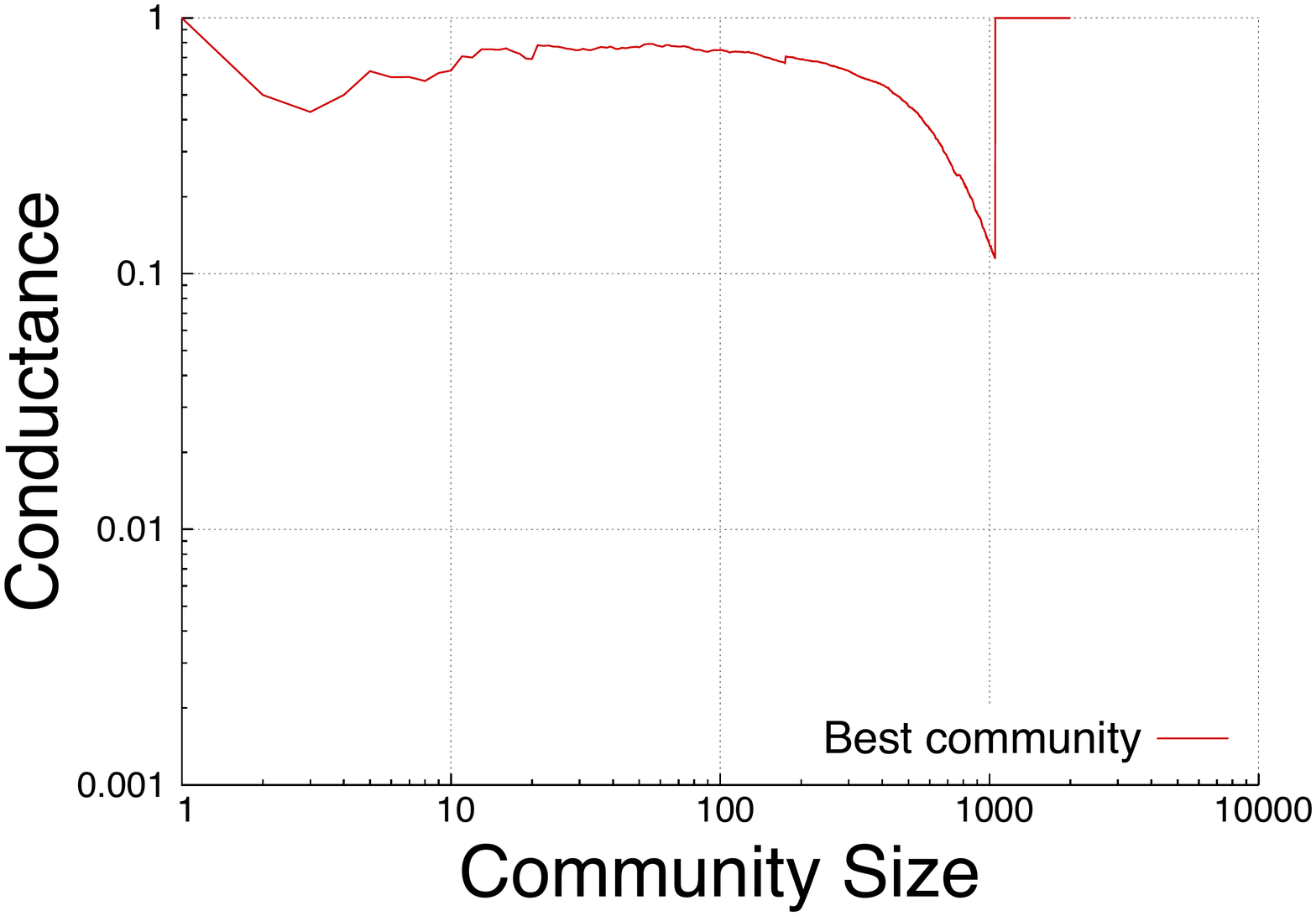}
\caption{\centering \textsc{FB-Lehigh} NCP plot}
\label{fig:lehigh_ncp}
\end{subfigure}
\begin{subfigure}[h]{0.23\textwidth}
\includegraphics[width=\textwidth]{./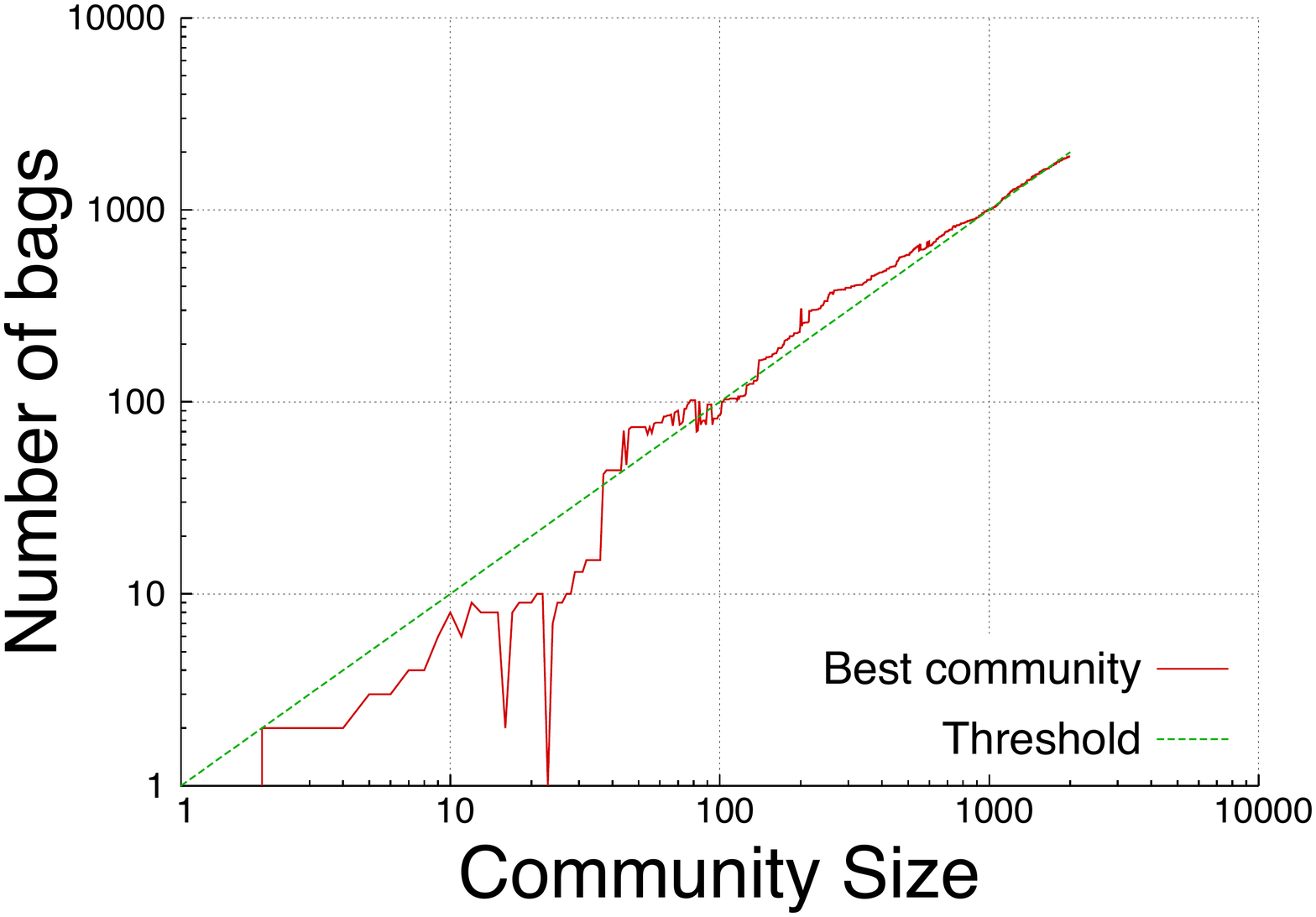}
\caption{\centering \textsc{CA-GrQc} bag localization}
\end{subfigure}
\begin{subfigure}[h]{0.23\textwidth}
\includegraphics[width=\textwidth]{./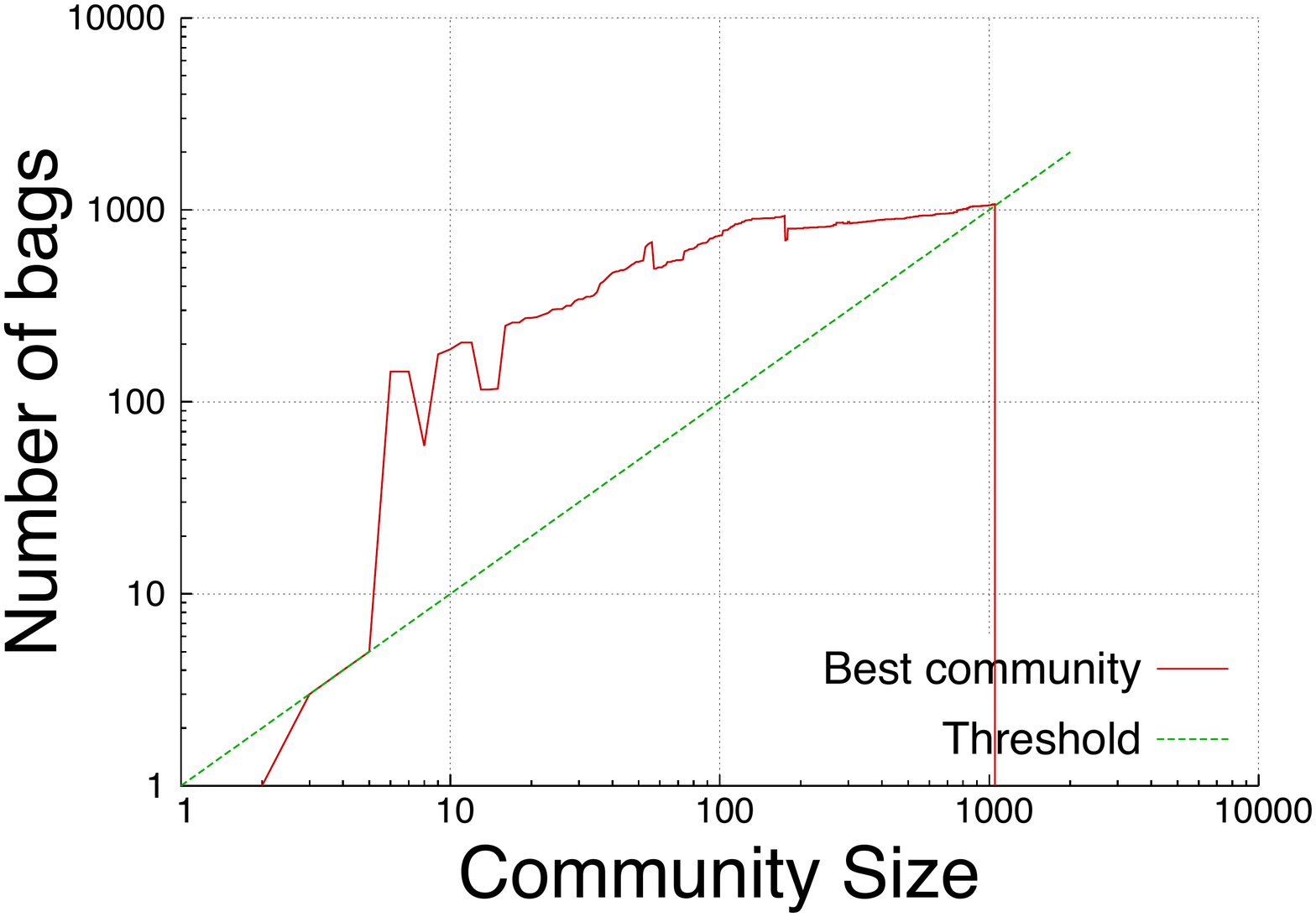}
\caption{\centering \textsc{FB-Lehigh} bag localization}
\end{subfigure}
\caption{\textsc{CA-GrQc} and \textsc{FB-Lehigh} NCP plots and tree
localization plots.  The localization threshold is plotted in green.}
\label{fig:collaboration_td_local}
\end{center}
\end{figure}

\begin{figure}[!htb]
\begin{center}
\begin{subfigure}[h]{0.23\textwidth}
\includegraphics[width=\textwidth]{./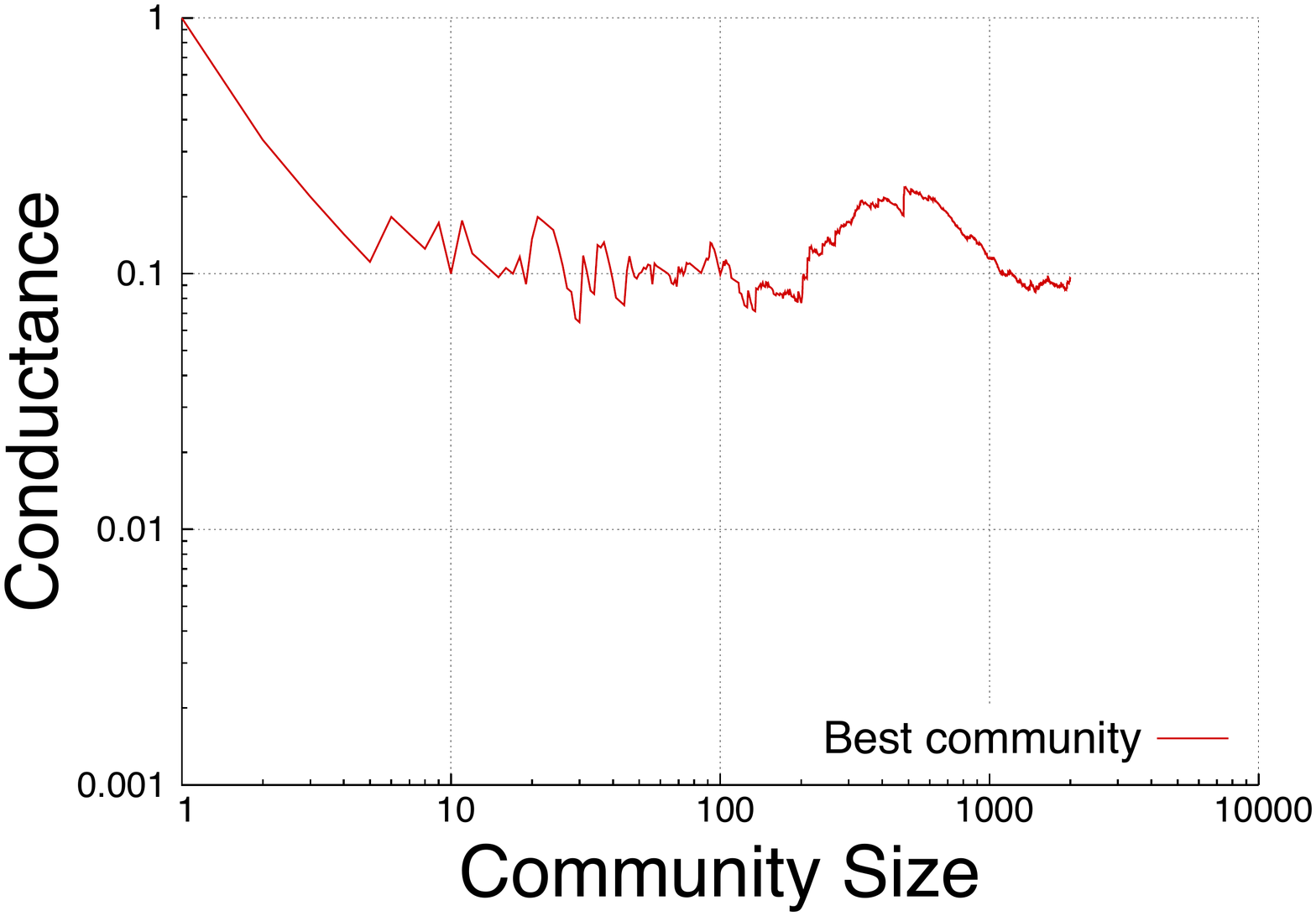}
\caption{\centering \textsc{as20000102} NCP plot}
\end{subfigure}
\begin{subfigure}[h]{0.23\textwidth}
\includegraphics[width=\textwidth]{./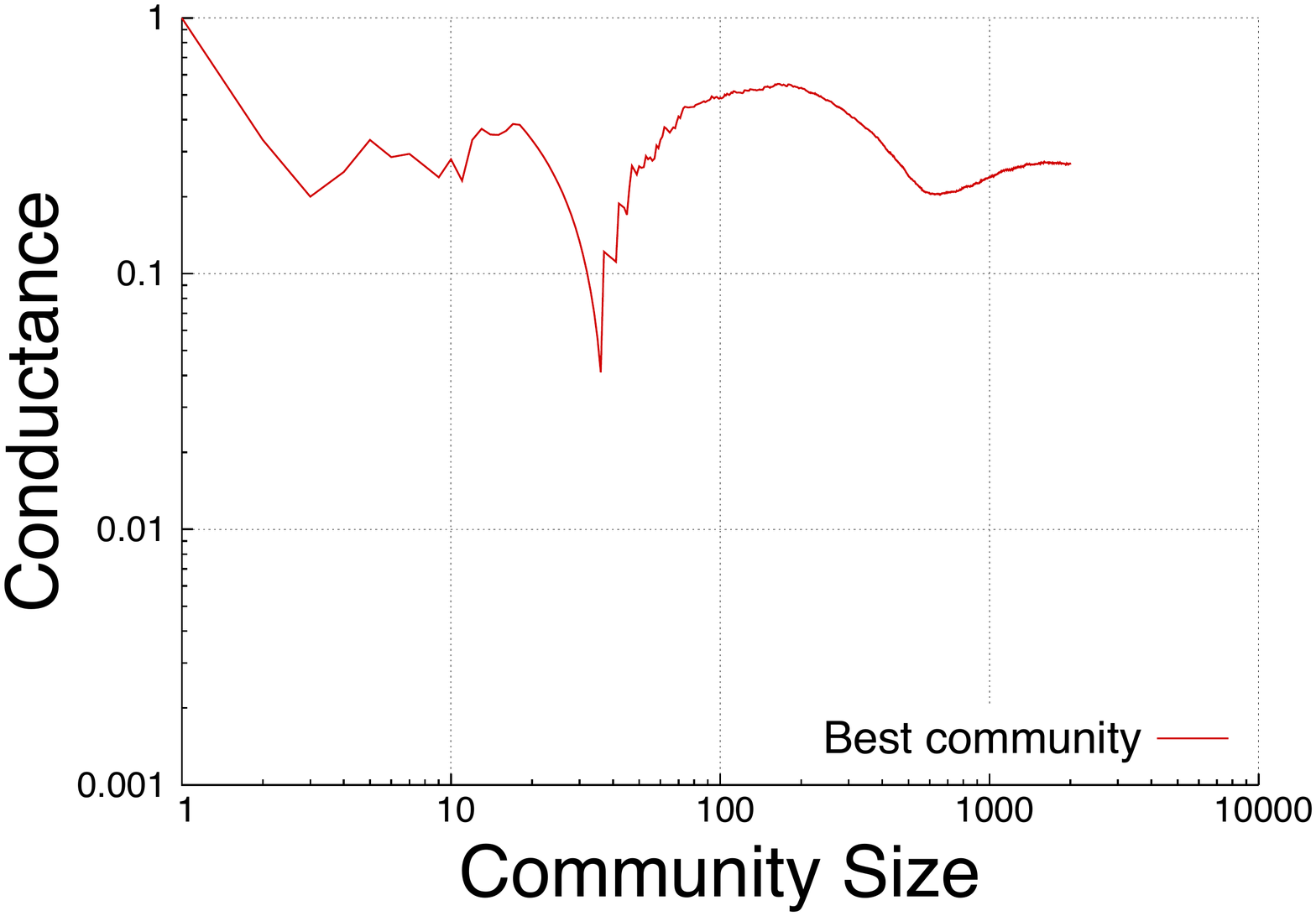}
\caption{\centering \textsc{Gnutella09} NCP plot}
\end{subfigure}
\begin{subfigure}[h]{0.23\textwidth}
\includegraphics[width=\textwidth]{./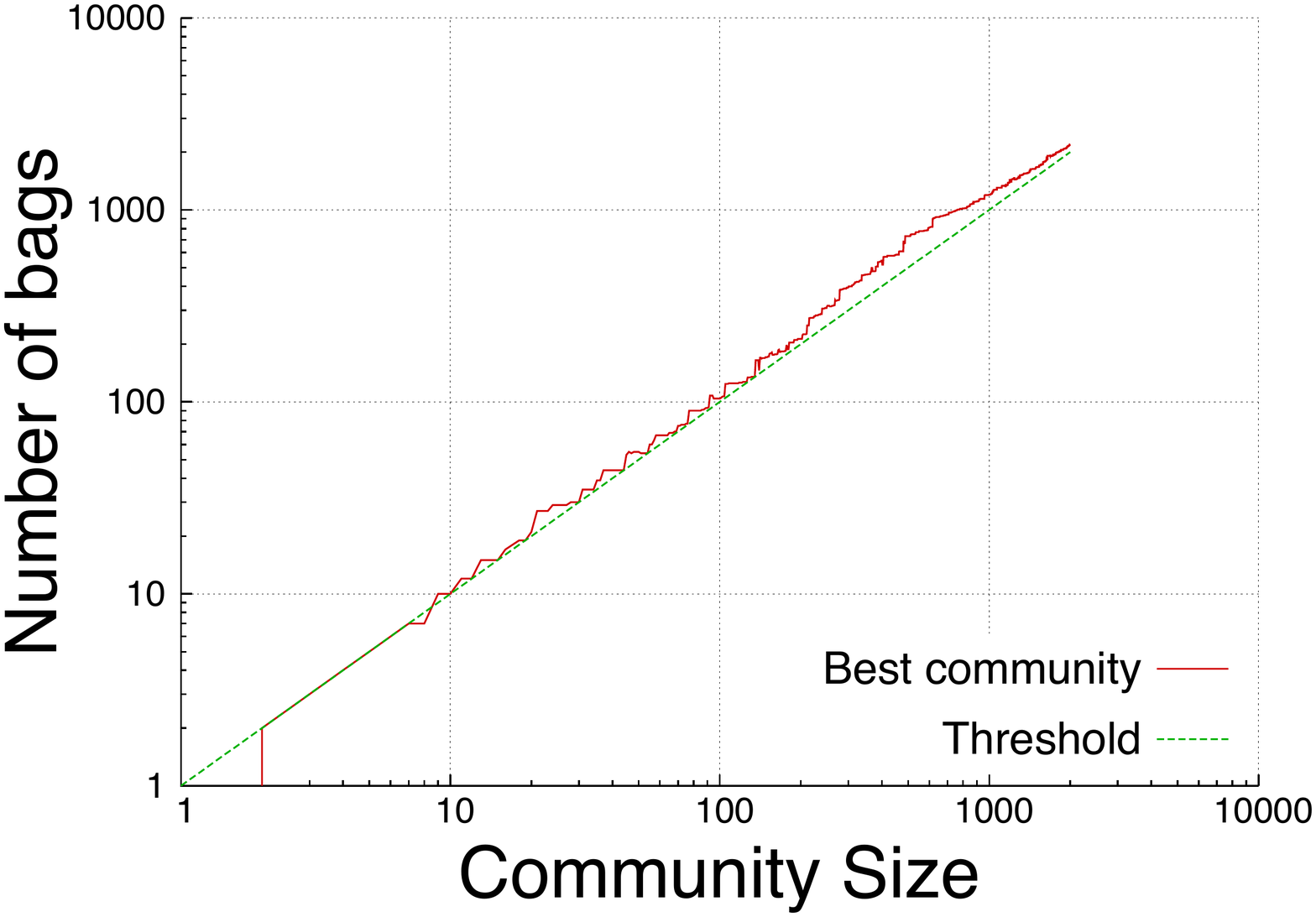}
\caption{\centering \textsc{as20000102} bag localization}
\end{subfigure}
\begin{subfigure}[h]{0.23\textwidth}
\includegraphics[width=\textwidth]{./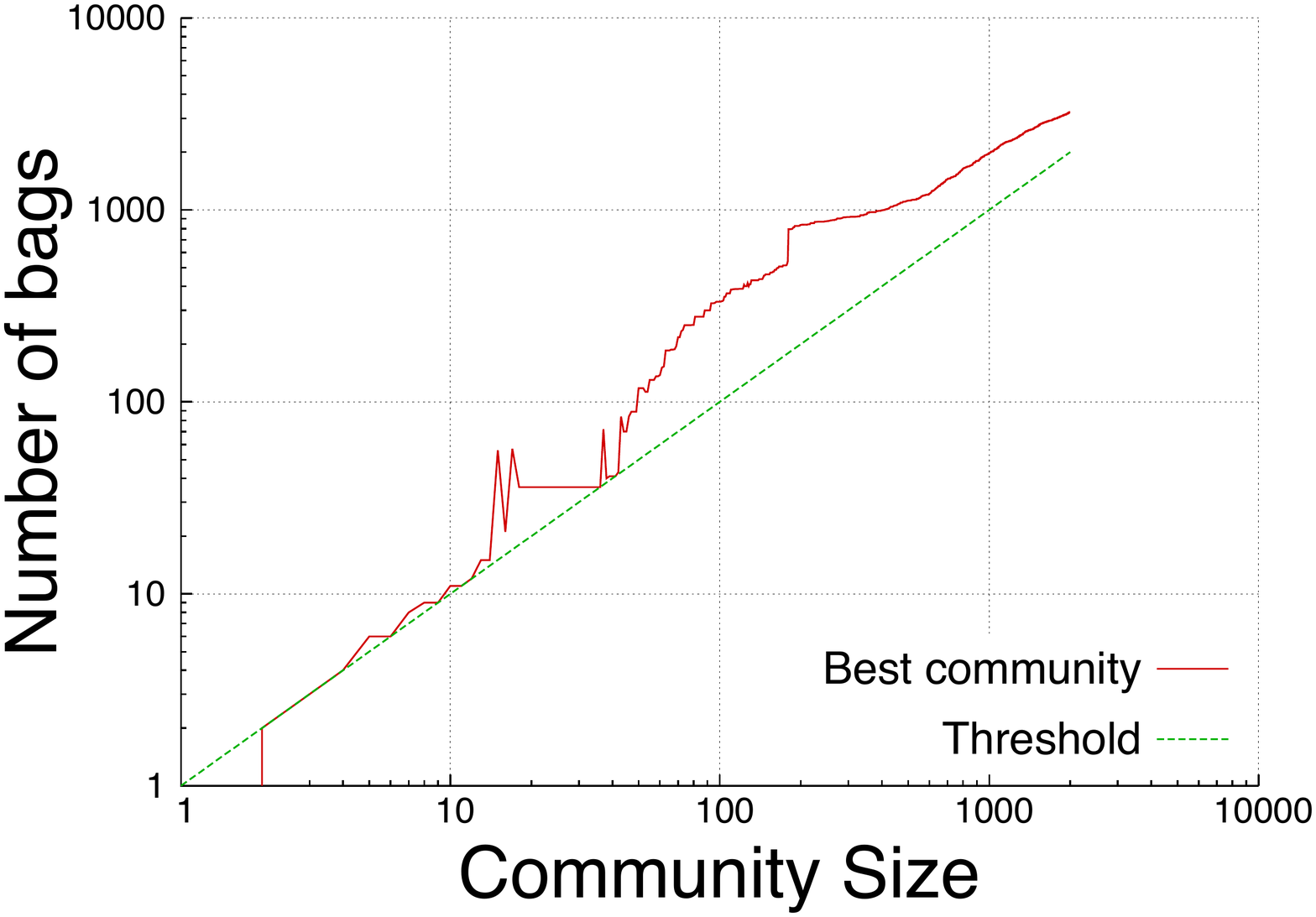}
\caption{\centering \textsc{Gnutella09} bag localization}
\end{subfigure}
\caption{\textsc{as20000102} and \textsc{Gnutella09} NCP plots and tree 
localization plots.  The localization threshold is plotted in green.}
\label{fig:internet_td_local}
\end{center}
\end{figure}

\begin{figure}[!htb]
\begin{center}
\begin{subfigure}[h]{0.23\textwidth}
\includegraphics[width=\textwidth]{./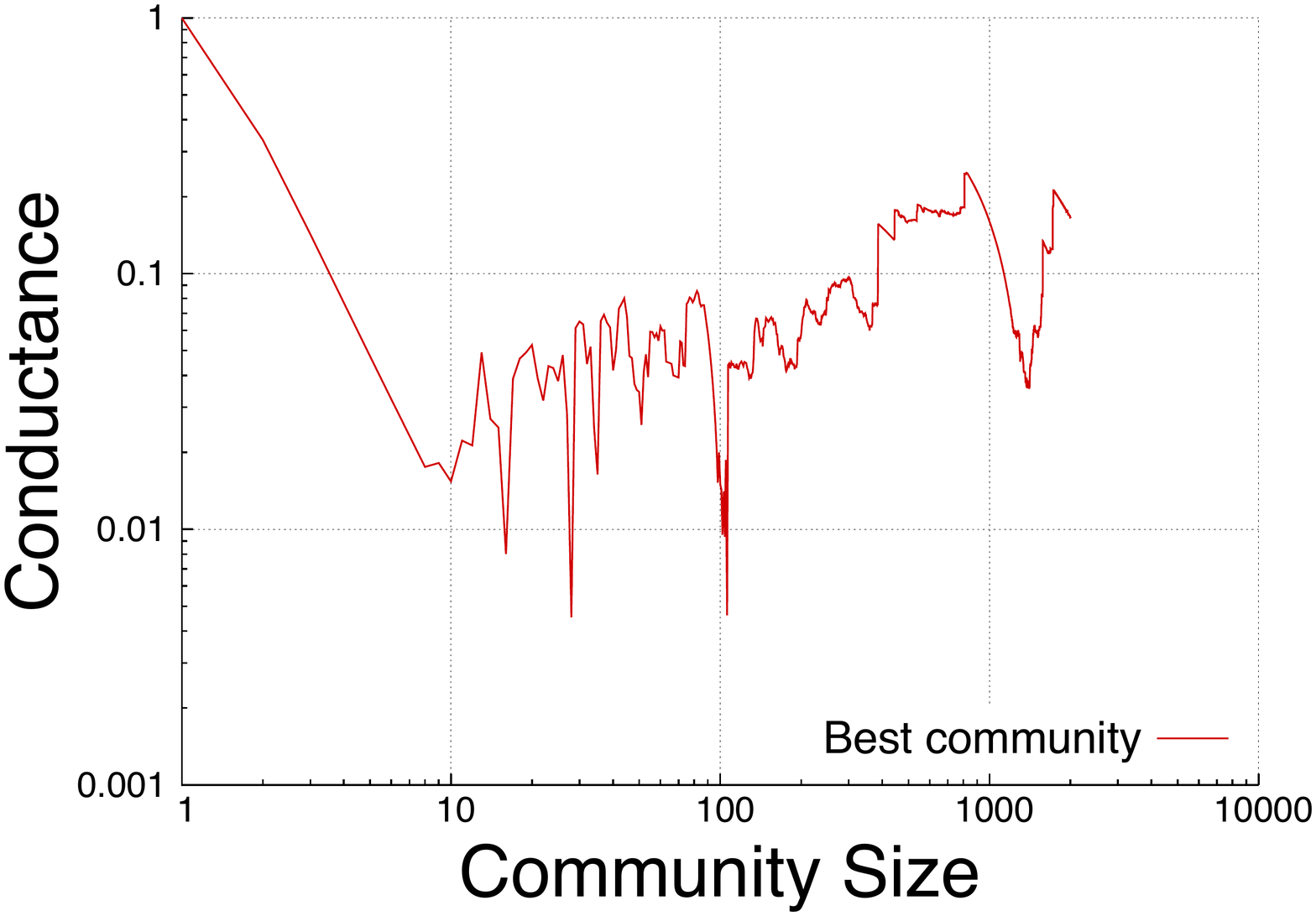}
\caption{\centering \textsc{Email-Enron} NCP plot}
\end{subfigure}
\begin{subfigure}[h]{0.23\textwidth}
\includegraphics[width=\textwidth]{./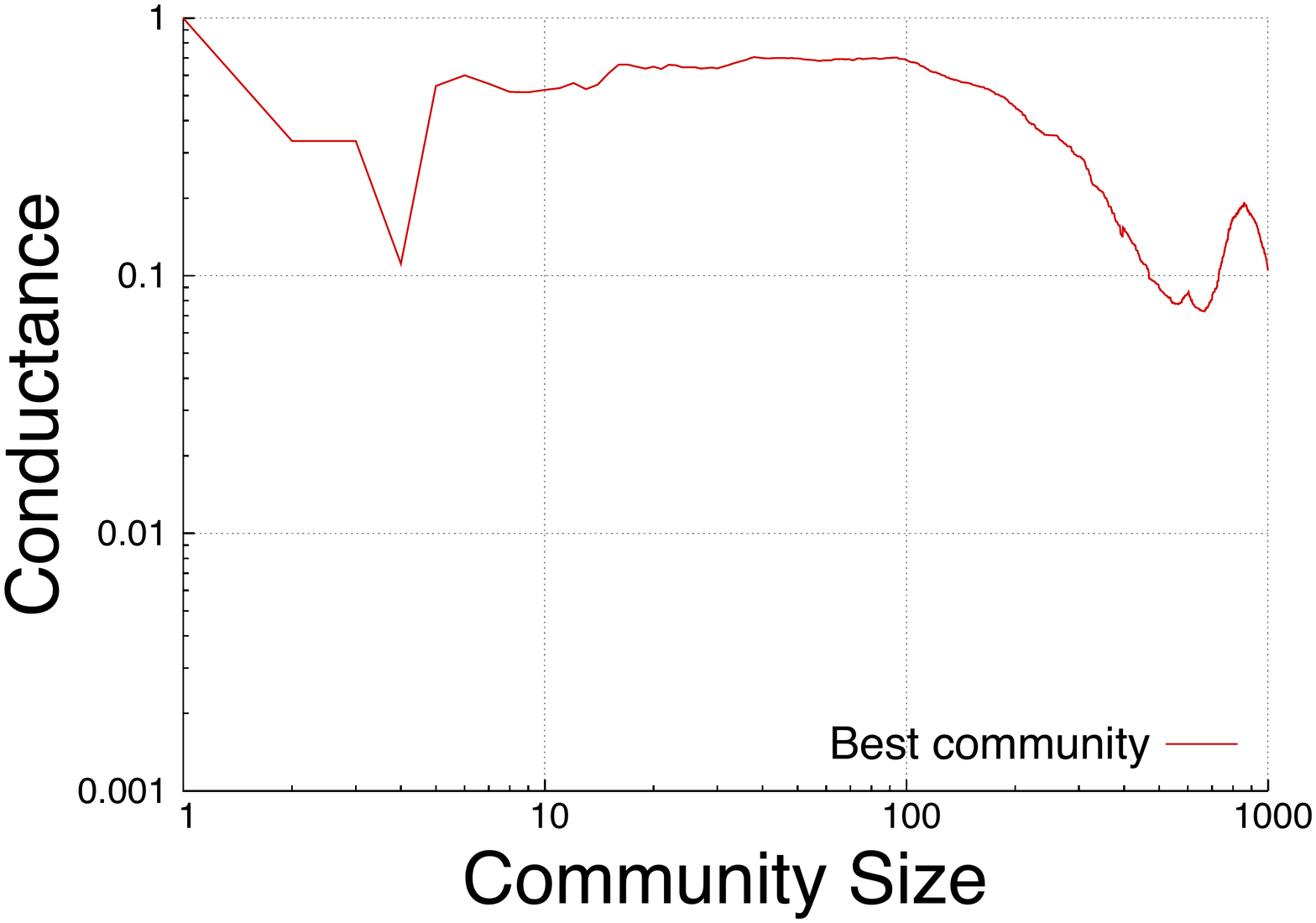}
\caption{\centering \textsc{Polblogs} NCP plot}
\end{subfigure}
\begin{subfigure}[h]{0.23\textwidth}
\includegraphics[width=\textwidth]{./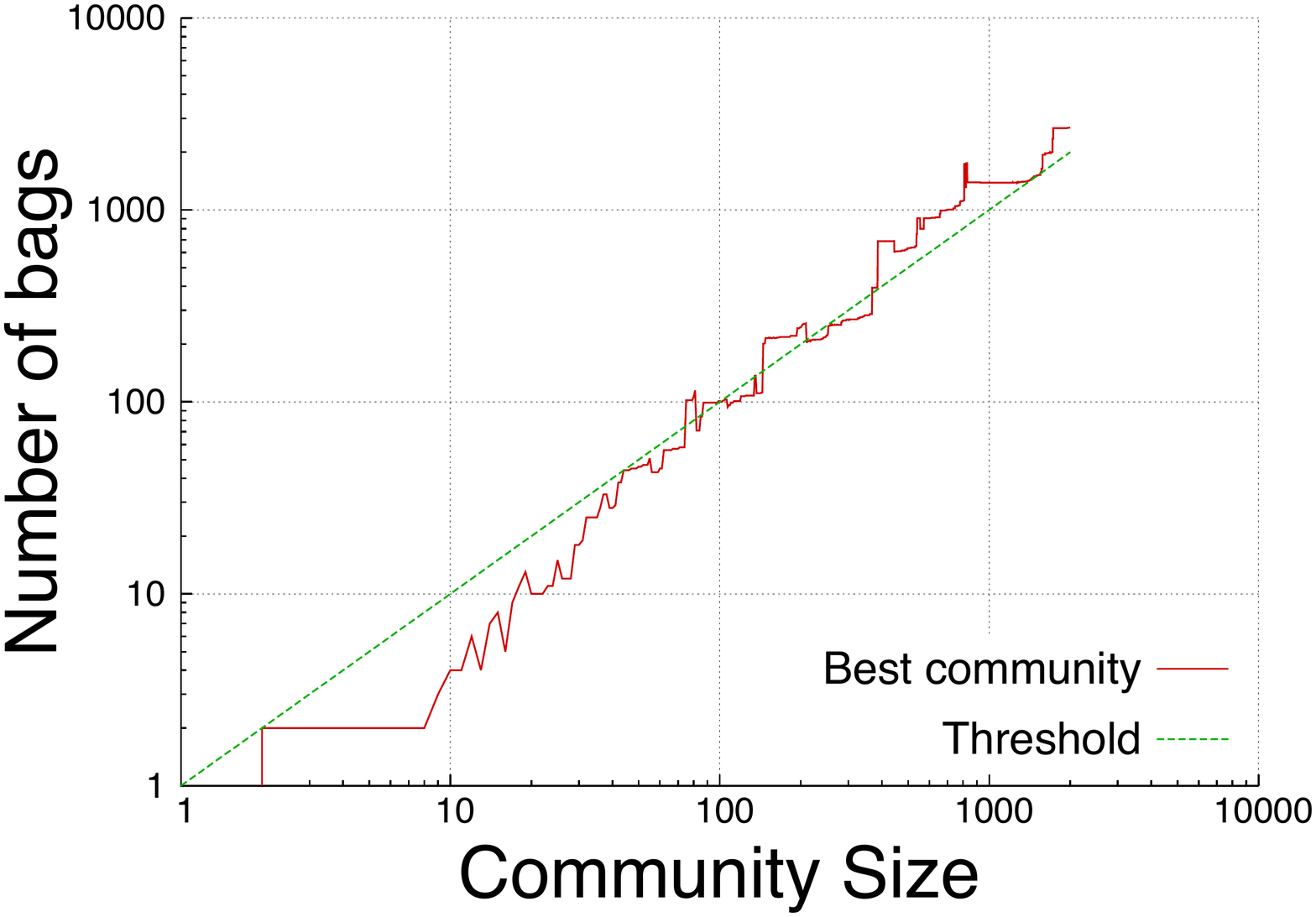}
\caption{\centering \textsc{Email-Enron} bag localization}
\end{subfigure}
\begin{subfigure}[h]{0.23\textwidth}
\includegraphics[width=\textwidth]{./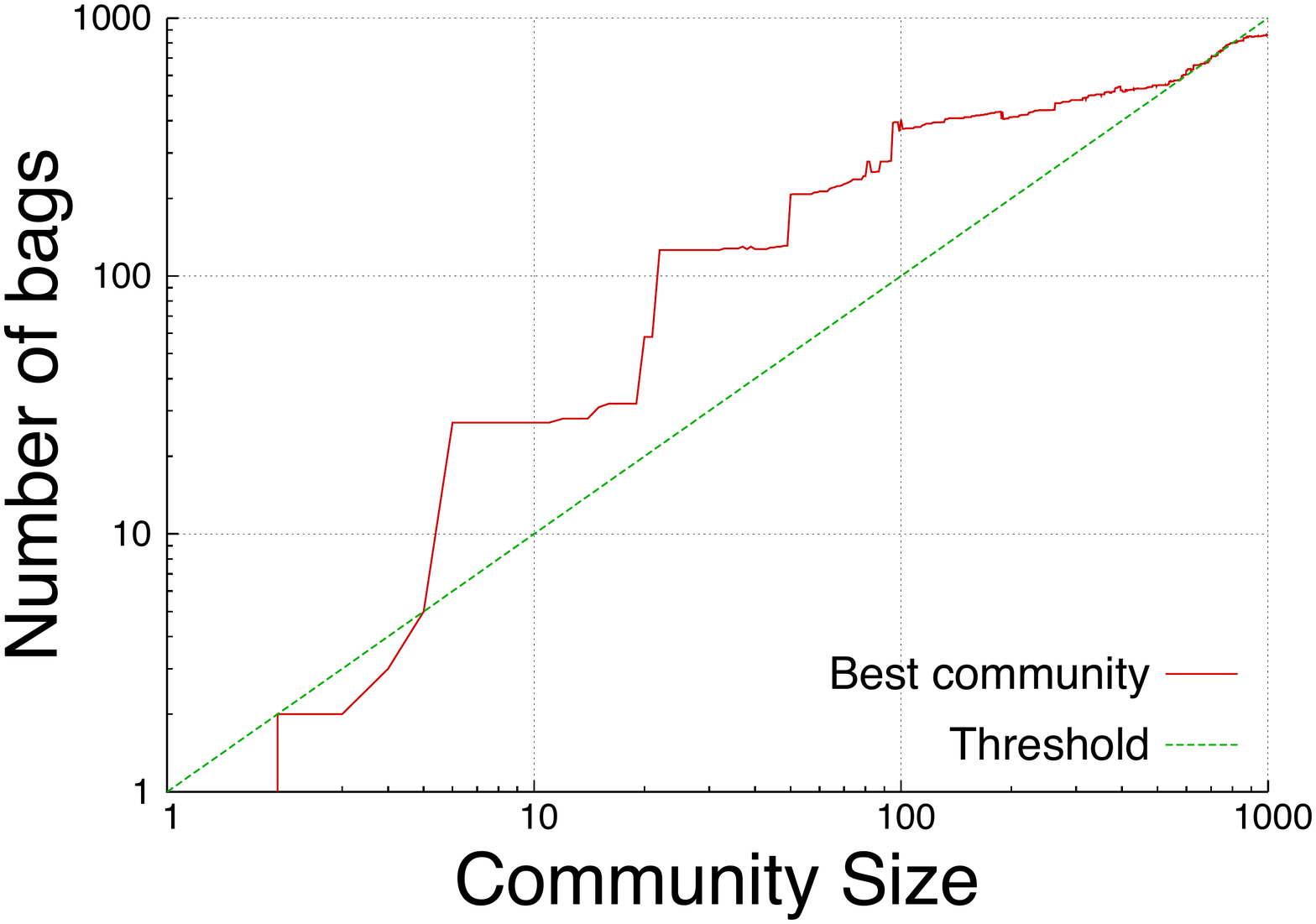}
\caption{\centering \textsc{Polblogs} bag localization}
\end{subfigure}
\caption{\textsc{Email-Enron} and \textsc{Polblogs} NCP plots and tree 
localization plots.  The localization threshold is plotted in green.}
\label{fig:misc_td_local}
\end{center}
\end{figure}

\begin{figure}[!htb]
\begin{center}
\begin{subfigure}[h]{0.23\textwidth}
\includegraphics[width=\textwidth]{./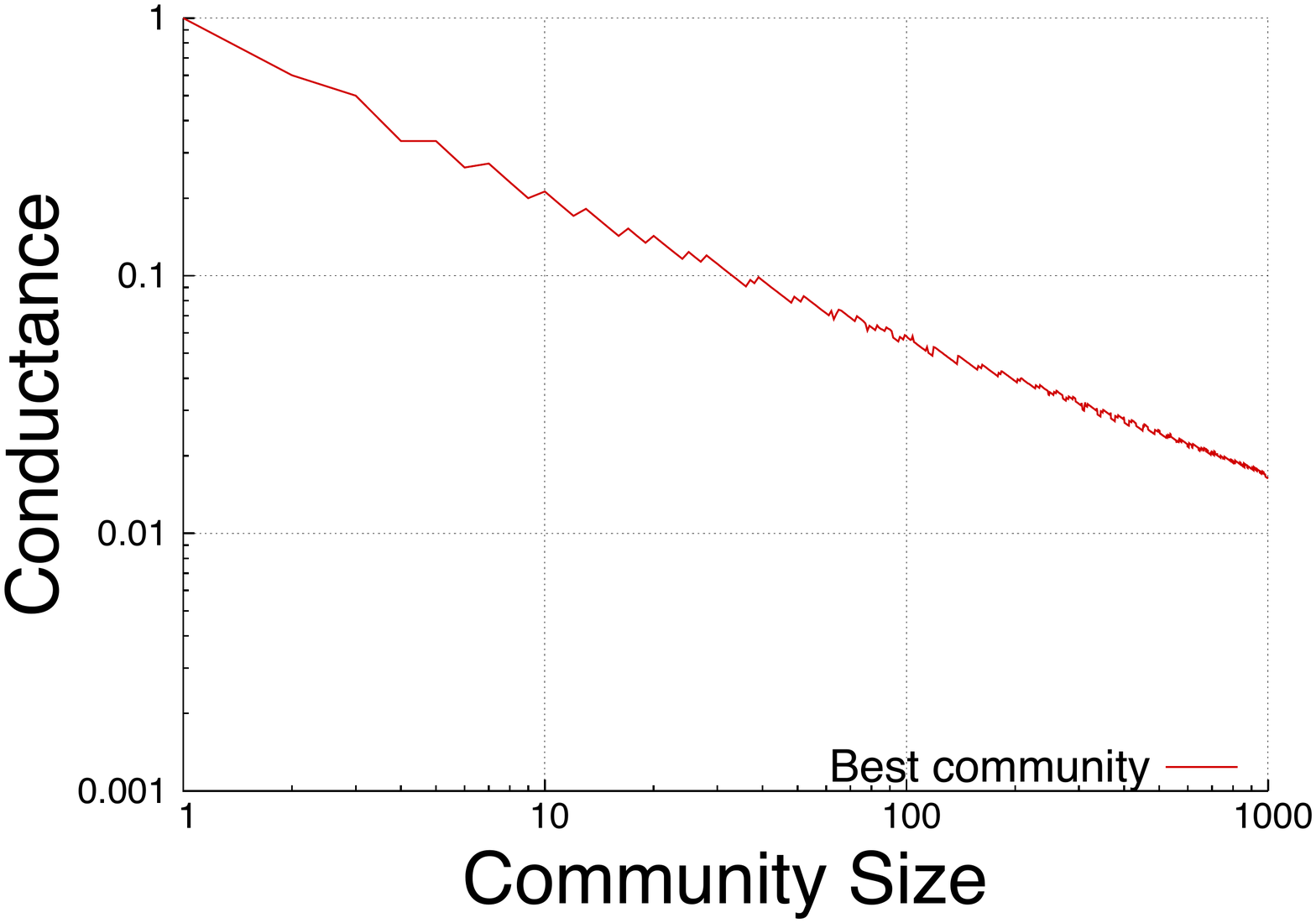}
\caption{\centering \textsc{Planar} NCP plot}
\end{subfigure}
\begin{subfigure}[h]{0.23\textwidth}
\includegraphics[width=\textwidth]{./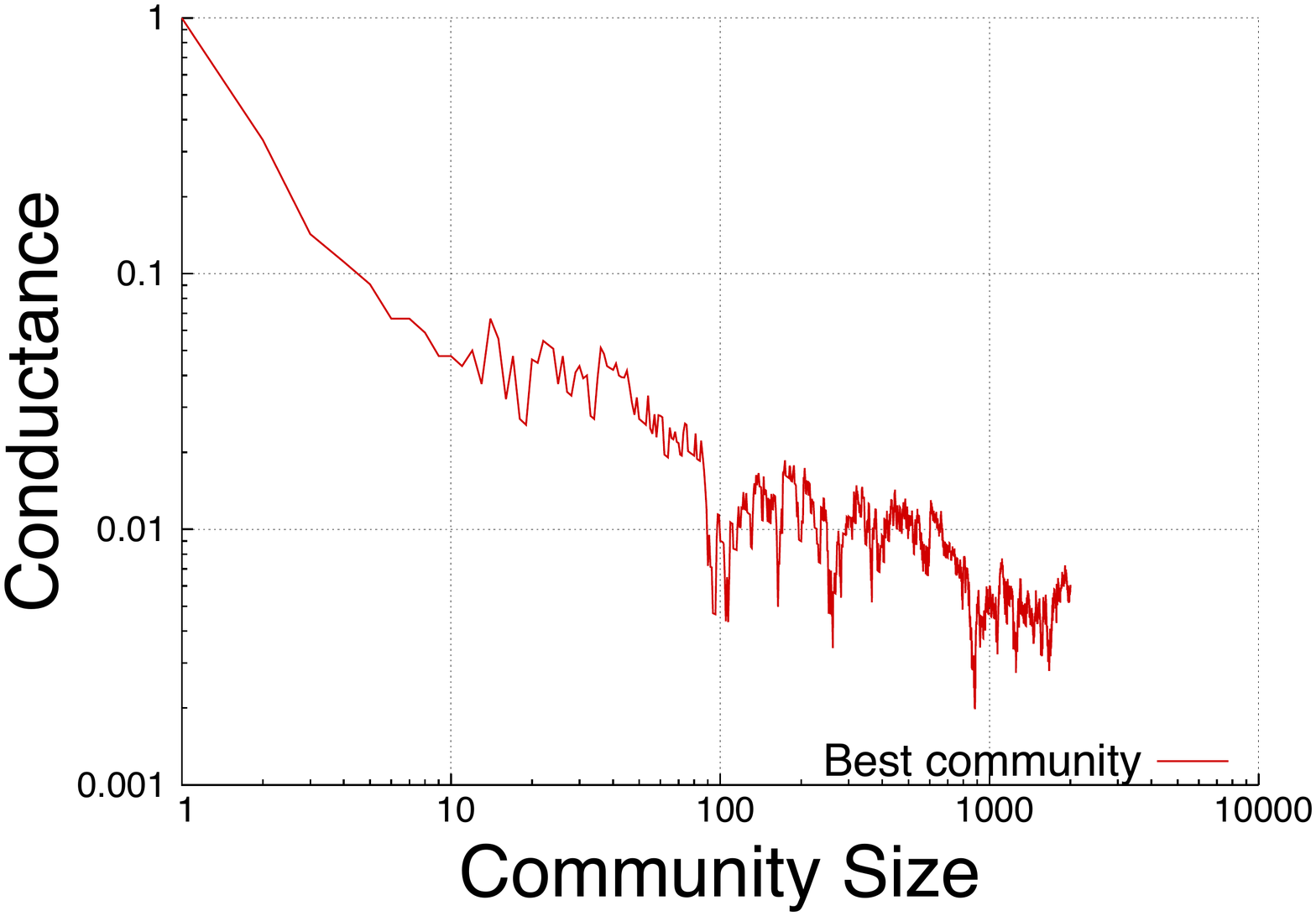}
\caption{\centering \textsc{PowerGrid} NCP plot}
\end{subfigure}
\begin{subfigure}[h]{0.23\textwidth}
\includegraphics[width=\textwidth]{./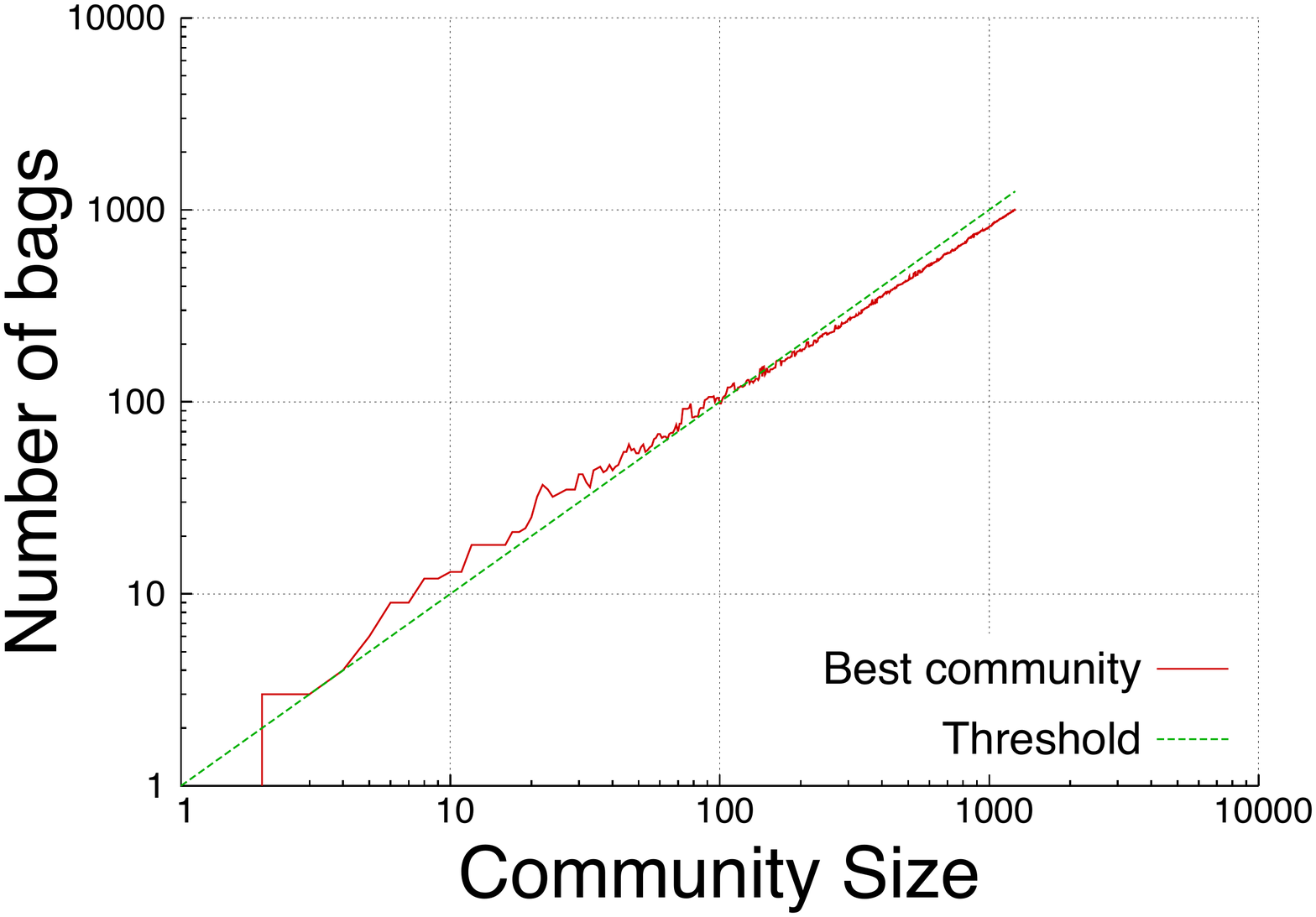}
\caption{\centering \textsc{Planar} bag localization}
\end{subfigure}
\begin{subfigure}[h]{0.23\textwidth}
\includegraphics[width=\textwidth]{./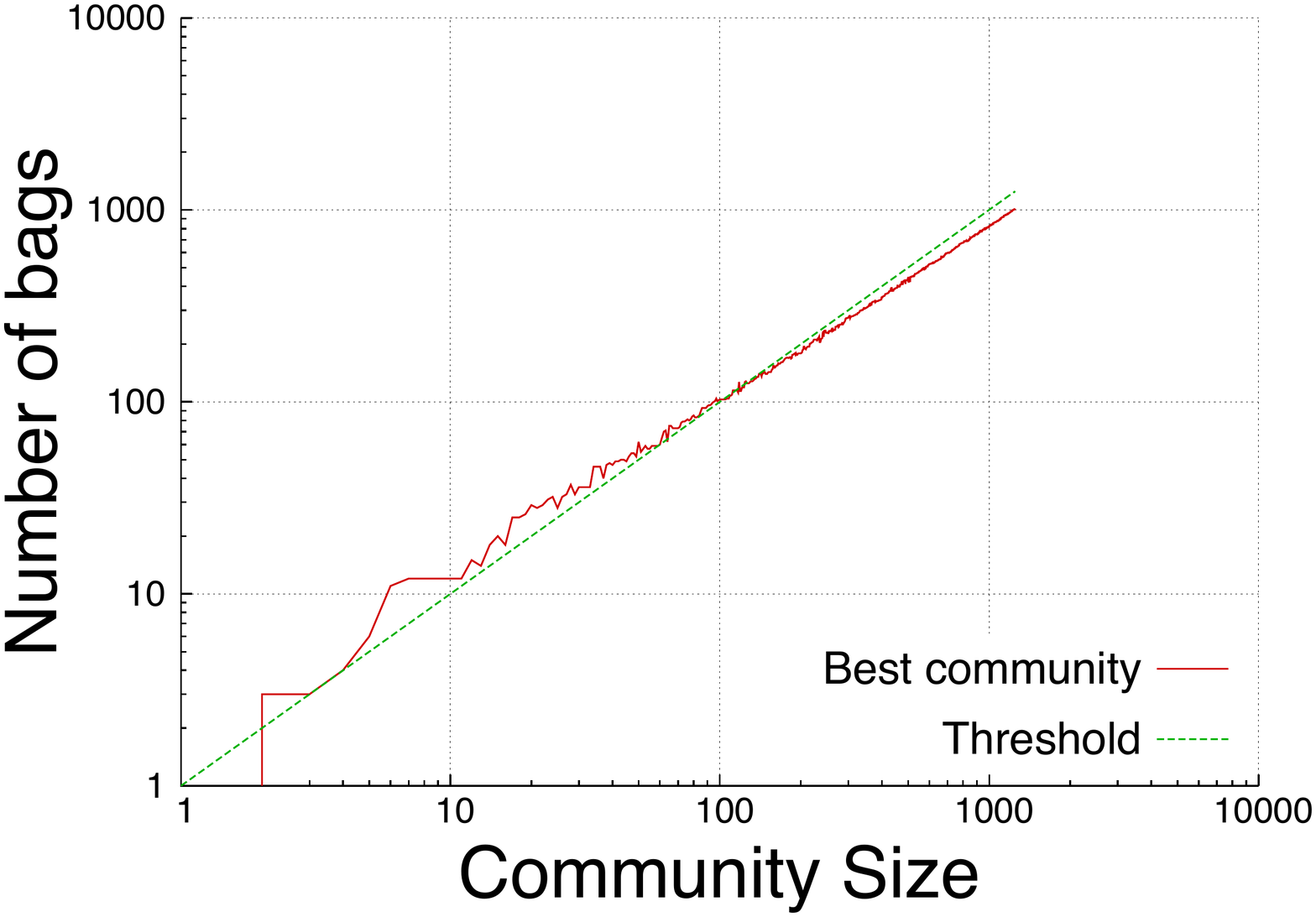}
\caption{\centering \textsc{PowerGrid} bag localization}
\end{subfigure}
\caption{\textsc{Planar} and \textsc{PowerGrid} NCP plots and tree 
  localization plots.  The localization threshold is plotted in green.}
\label{fig:planar_td_local}
\end{center}
\end{figure}

As a reference, consider the extremely sparse and somewhat denser 
\textsc{ER} networks, which are shown in Figure~\ref{fig:synth_td_local}.  
Since it is so sparse, \textsc{ER(1.6)} does have some very small 
good-conductance clusters.
As shown in the figure, however, the small ``communities'' are contained in 
roughly the same number of bags as there are in the community.  
This is expected, as these communities are largely peripheral tree-like
whiskers in the network (Section \ref{sec:synth_results}).  
For larger communities, which include core nodes, the localization is
slightly above the line defining our threshold.  
On the other hand, for the denser \textsc{ER(32)}, there are no 
good-conductance clusters at any size, and bag localization is 
above the line defining the localization threshold, indicating that 
the localization is poor at all size scales.

For the small and intermediate-sized clusters in many of the real networks (including many of those from~\cite{LLDM09_communities_IM}), the smaller good-conductance clusters found 
using the PPR method are reasonably well-localized within the TD, while the 
larger poorer-conductance clusters are not.
Consider, e.g., \textsc{CA-GrQc} in Figure~\ref{fig:collaboration_td_local} 
as an example.  
On the other hand, both large and small clusters found with the PPR method
applied to the denser graphs from the \textsc{Facebook100} set (i.e., those 
that do not have even small-cardinality good-conductance 
clusters~\cite{Jeub15}) are not well-localized in the TD.  
Consider, e.g., \textsc{FB-Lehigh} in Figure~\ref{fig:collaboration_td_local}
as an example.  
Figure~\ref{fig:internet_td_local} shows \textsc{as20000102} and 
\textsc{Gnutella09}, which also shows NCP plots that do not yield small good 
conductance clusters, and which shows that the outputs of the PPR method are 
not particularly well-localized in the TD.  
Figure~\ref{fig:misc_td_local} shows that \textsc{Email-Enron} does have 
some of its small good-conductance clusters well-localized, and it
also shows that the output of the PPR algorithm applied to \textsc{Polblogs} 
leads to medium-to-large clusters with poor conductance values that are 
poorly-localized in the~TD.

Finally, although networks with an underlying Euclidean geometry are of less 
interest for social/information network applications, for completeness it is 
worth considering how these TD methods apply to them.
Figure~\ref{fig:planar_td_local} presents results for \textsc{Planar} and 
\textsc{PowerGrid}.  
Both of these networks have downward-sloping NCP plots which are different
from the other social and information networks, 
reflecting the Euclidean geometry underlying 
these networks.  
In both cases, fairly uninteresting results are obtained, suggesting that 
the localization metric we propose is more interesting for realistic social 
graphs with non-trivial tree-like core-periphery structure.

Although our results demonstrate that good-conductance clusters/communities 
in several realistic social graphs are well-localized in TDs found with 
existing heuristics, it is not obvious how to address the reverse question 
of finding good-conductance communities from a TD.  
One could attempt to look at all or some large number of combinations of 
bags in the TD. 
Since one is usually interested in well-connected communities/clusters, the 
running intersection property of TDs could be used to restrict attention to 
connected subsets of a TD.
There are, however, two obvious issues.
First, there does not exist an obvious analogue of the ``sweep cut'' used in 
the spectral partitioning method for finding the best community from a TD.  
Second, as a related practical matter, the presence of high degree (or deep 
core) nodes in the intermediate and central bags of a TD cause bags to be 
poor conductance communities.  
These nodes have many connections and increase the ``surface area'' of most 
cuts, even if there is only a small number of them in a cluster.  
We observed that, in the clusters we found using the PPR method, each 
cluster is typically well-represented by a set of small bags plus a couple of
nodes in the larger bags.  
If we then attempt to form clusters by combining bags, we get \emph{all} of 
the nodes in the larger bags, including deep core nodes.  
Additional methods of filtering nodes for the larger bags, such as ordering 
by node degree or $k$-core combined with a sweep cut, may improve these 
results.

\subsection{Results on identifying ground-truth communities}
\label{sec:real_results-gt}

Here, we will consider other ways in which the output of TDs can be useful 
in identifying clusters/communities of interest to the domain analyst.
In particular,
we describe 
two examples from the demographic data associated with the Facebook100 
dataset~\cite{Traud12}.

Consider, first, Figure \ref{fig:Haverford_age}, where we show the
\textsc{amd} TD of the \textsc{FB-Haverford} network, and where each
bag is colored-coded by the average graduation year of the constituent
nodes.  There is a large
linear or trunk-like structure that dominates the large-scale structure
of the TD.  We observe that there is a strong
overlap between the nodes that comprise successive bags in that trunk, and
we note that this trunk-like structure is typical of most of the
\textsc{Facebook100} networks (but is not seen in most other social
graphs we have considered).  Also, each
end of the long trunk correlates strongly with graduation year, and
there is a gradual change in the average graduation year of each
bag as we move across the trunk.  Thus, to the extent that one accepts
graduation year as some sort of easily-quantifiable ``ground truth''
community, the large bags in the TD of this network seem to be
capturing a legitimate ground-truth structure in the network. This
fits well with prior results that report that in most of the Facebook
networks graduation year is best predictor of the existence of edges
between two nodes~\cite{Traud12}.

Consider, next, Figure~\ref{fig:Caltech_res}.  
It is known that for a small number of the \textsc{Facebook100} networks 
(e.g., \textsc{FB-Caltech}, \textsc{FB-Rice}, and \textsc{FB-UCSC}), 
residence hall rather than class year is the best edge 
predictor~\cite{Traud12}.  
Thus, we considered the \textsc{amd} TD of (the students-only subset of) \textsc{FB-Caltech}.  
In this case, a single simple trunk-like structure is not dominant, but 
there are several relatively large peripheral branches, and many of the 
peripheral branches are dominated by a particular residence hall.  
In Figure~\ref{fig:Caltech_res}, the bags are colored by the fraction 
of students in residence hall 170 (chosen arbitrarily).
These examples are of particular interest since good-conductance clusters 
do not exist in \textsc{Facebook100} networks~\cite{Jeub15}.

\begin{figure}
\centering
\begin{subfigure}{.5\textwidth}
\centering

\includegraphics[width=0.90\textwidth]{./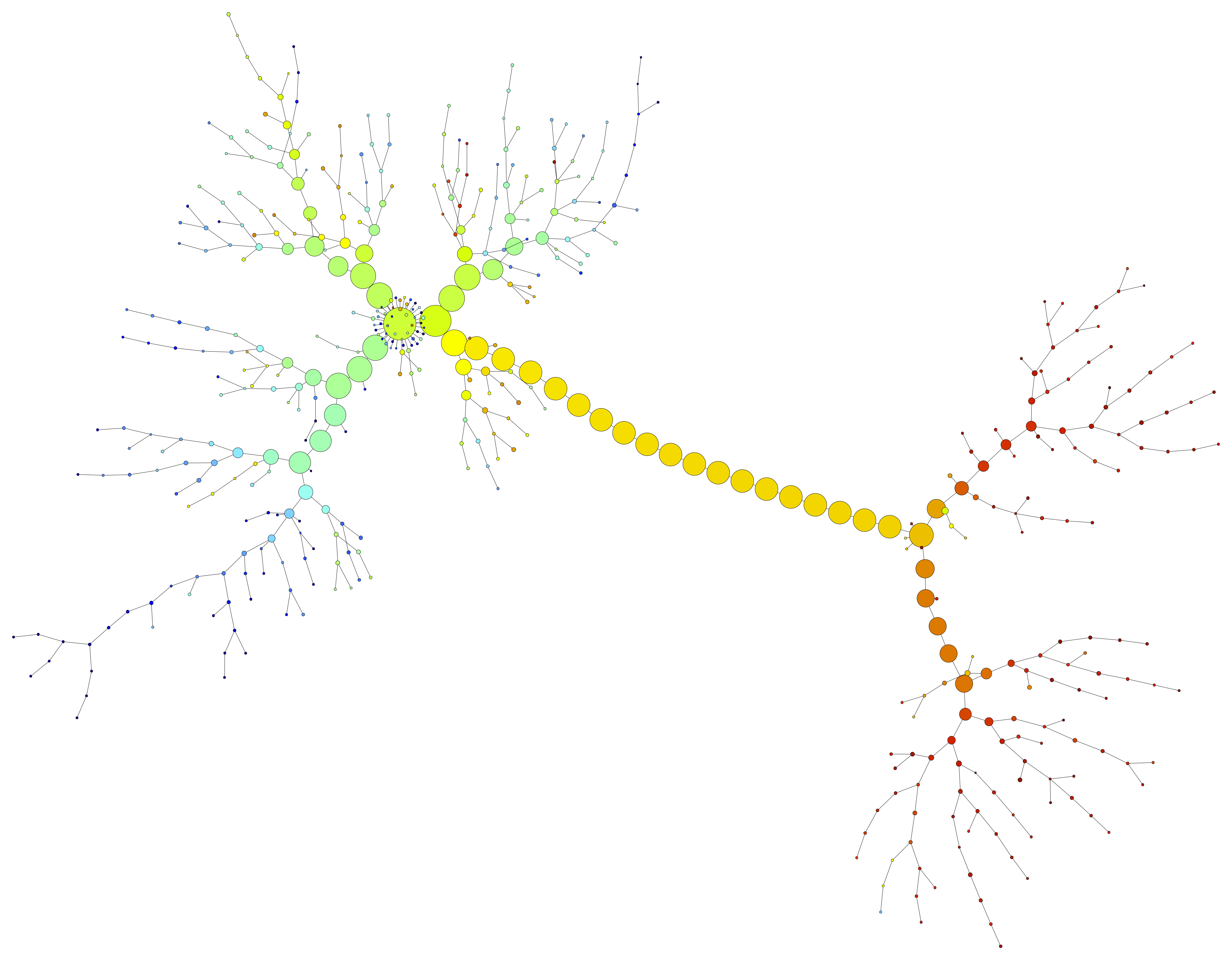}

\caption{\textsc{amd} TD of \textsc{FB-Haverford}, colored by
  graduation year (red = freshman, blue = alumni).  The long,
  path-like trunk of this (and most other) Facebook networks is driven by
  the propensity of students to be friends with students of a similar
  graduation year.  
}
\label{fig:Haverford_age}
\end{subfigure}%
\begin{subfigure}{.5\textwidth}
\centering

\includegraphics[width=0.90\textwidth]{./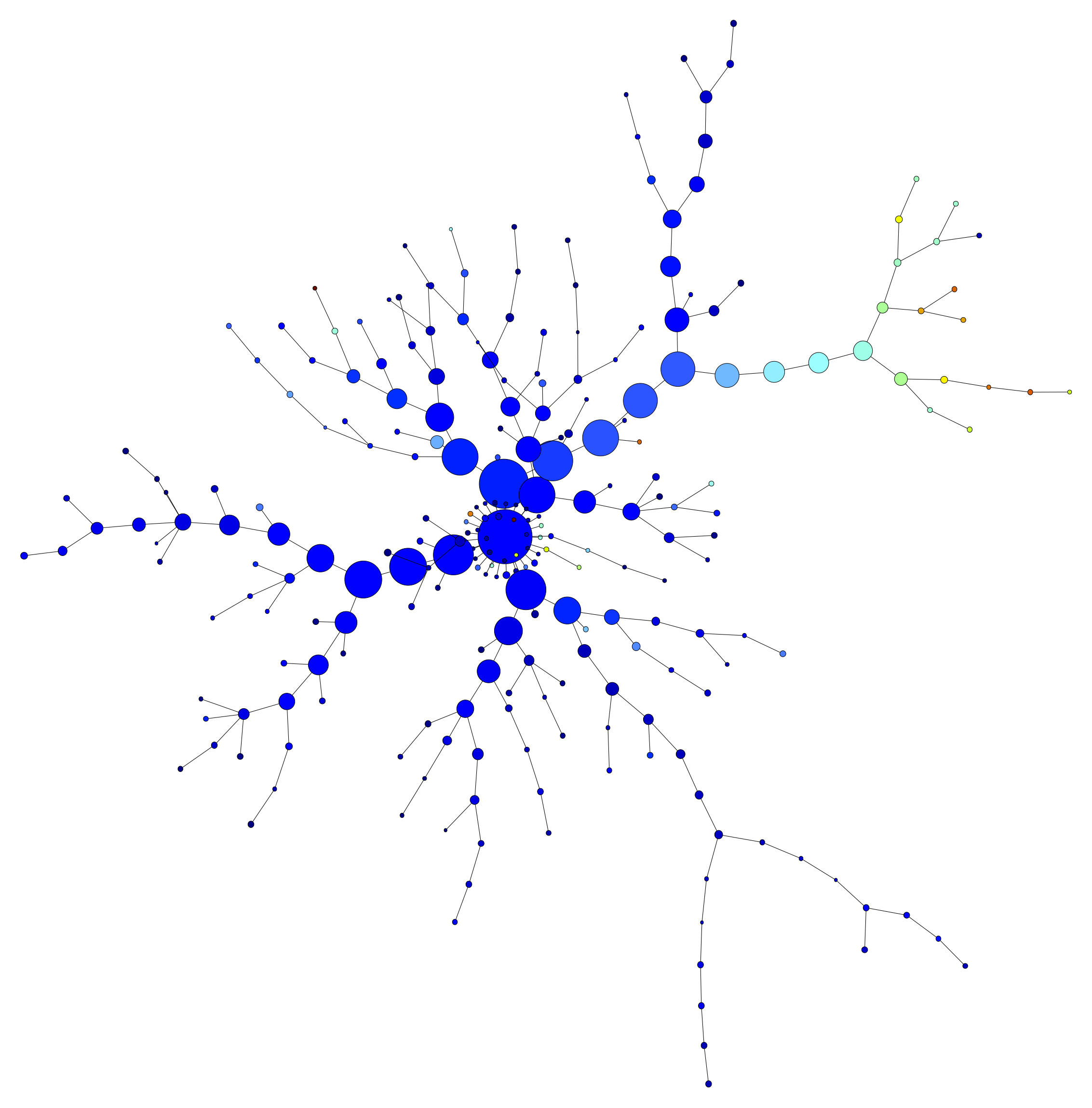}

\caption{\textsc{amd} TD of (the students-only subset of)
  \textsc{FB-Caltech}, colored by the fraction of students in
  residence hall 170 (blue = no nodes belong to residence, ..., red =
  all nodes belong to residence).  
}
\label{fig:Caltech_res}
\end{subfigure}

\caption{\textsc{amd} TD of \textsc{FB-Haverford} and \textsc{FB-Caltech}.
\textsc{FB-Haverford} is presented, rather than \textsc{FB-Lehigh} (which has similar large-scale TD structure), because its smaller eccentricity (56 rather than 150) makes it easier to visualize.
For \textsc{FB-Caltech}, this is a graphical representation of data presented in Table~\ref{tbl:community_precision} for the \textsc{amd} TD and residence hall 170.
}
\label{fig:Haverford_Caltech_residences}
\end{figure}

By looking at bags where the concentration of a particular community
node is higher than the incidence of that community throughout the TD,
we can form a very simple classification rule.
In particular, given residence hall
$X$, we collected all bags whose fraction of nodes which were
listed as belonging to residence $X$ was higher than the fraction of
nodes belonging to that hall in the network (the incidence in the
network is given in column $F$ in Tables \ref{tbl:community_recall}
and \ref{tbl:community_precision}).  We then kept the largest
contiguous set of bags and used membership in this set as the
classifier.  
Although this is an overly simple classifier, the goal in this section 
is simply to provide a baseline about how the residence communities are 
located in the TD.  

We performed this procedure on the students-only restriction of
\textsc{FB-Caltech}.  Tables \ref{tbl:community_recall} and
\ref{tbl:community_precision} provide a summary of the classification
results using this method on \textsc{FB-Caltech} network, with the
(anonymized) listed residence hall for that student as the community.
Table~\ref{tbl:community_recall} shows the fraction of
the ``ground truth'' community captured by the largest contiguous set
of bags described above.  This is analogous to the recall of
classifying the community using this branch in the TD.  Table
\ref{tbl:community_precision} shows the fraction of the nodes in the
union of all bags in this largest contiguous set which belong to the
community.  This is analogous to the precision of classifying the
community using this branch.  Since \textsc{FB-Caltech} is very
small, we can use a much larger variety of TD as classifiers
than is possible for larger networks, and we present results for all
of these TD classifiers. 

\begin{table}[!htb]
\begin{center}
{\footnotesize
\begin{tabular}{l|r|r|r|r|r|r|r|}
Hall & $F$ & \textsc{mindeg} & \textsc{minfill} & \textsc{lexm} & \textsc{mcs} & \textsc{amd} & \textsc{metnnd} \\
\hline \hline
None & .134 & .270 & .257 & .270 & .324 & .284 & .297 \\ 
165  & .066 & .472 & .528 & .556 & .528 & .472 & \textbf{.861} \\ 
166  & .090 & .736 & .736 & \textbf{.925} & .811 & .642 & .792 \\ 
167  & .134 & \textbf{.642} & .566 & .453 & .585 & .491 & .566 \\ 
168  & .116 & .746 & .762 & \textbf{.952} & .889 & .825 & .143 \\ 
169  & .136 & .726 & .712 & \textbf{.904} & .877 & .658 & .740 \\ 
170  & .090 & .725 & .725 & .739 & .783 & \textbf{.855} & .362 \\ 
171  & .136 & .714 & .673 & .776 & \textbf{.857} & .592 & .429 \\ 
172  & .098 & .630 & .630 & .534 & .699 & .548 & \textbf{.863} \\ 
\end{tabular}}
\caption{Fraction of each \textsc{FB-Caltech} residence hall captured
  in the largest contiguous set of ``frequent'' bags.  A frequent bag
  is a bag where the fraction of students who belong to the given
  residence hall is greater than the fraction of students who belong
  to that residence in the entire network. Column $F$ gives the
  fraction of students who identified as being in the associated hall
  (i.e., the threshold for being a frequent bag
  for that residence hall).  This procedure was also performed for the
  nodes which did not have a residence hall listed for comparison.}
\label{tbl:community_recall}
\end{center}
\end{table}

\begin{table}[!htb]
\begin{center}
{\footnotesize
\begin{tabular}{l|r|r|r|r|r|r|r|}
Hall & $F$ & \textsc{mindeg} & \textsc{minfill} & \textsc{lexm} & \textsc{mcs} & \textsc{amd} & \textsc{metnnd} \\
\hline \hline
None & .134 & .065 & .064 & .068 & .071 & .071 & .100 \\ 
165  & .066 & .055 & .057 & .052 & .052 & .059 & \textbf{.086} \\ 
166  & .090 & .104 & .110 & .111 & .111 & .092 & \textbf{.129} \\ 
167  & .134 & \textbf{.102} & .097 & .092 & .087 & .088 & .092 \\ 
168  & .116 & .137 & .140 & .150 & .147 & \textbf{.157} & .043 \\ 
169  & .136 & .150 & .151 & .147 & .163 & .138 & \textbf{.187} \\ 
170  & .090 & .134 & .139 & .145 & .140 & \textbf{.171} & .103 \\ 
171  & .136 & \textbf{.105} & .099 & .100 & .104 & .084 & .074 \\ 
172  & .098 & .131 & .129 & .100 & .137 & .115 & \textbf{.199} \\ 
\end{tabular}}
\caption{Fraction of the nodes contained in the largest contiguous set
  of frequent bags for a given residence hall which \emph{actually
    belong} to the given residence hall.  A frequent bag is a bag
  where the fraction of students who belong to the given residence
  hall is greater than the fraction of students who belong to that
  residence in the entire network. Column $F$ gives the fraction of
  students who identified as being in the associated hall 
  (i.e., the threshold for being a frequent bag for that
  residence hall).  This procedure was also performed for the nodes
  which did not have a residence hall listed for comparison.}
\label{tbl:community_precision}
\end{center}
\end{table}

Although the communities do seem to be well-captured by the TDs, 
there are also many other nodes in the same bags as
these communities (see Table \ref{tbl:community_precision}).
Although the only bags selected were bags where the
residence hall in question was over-represented, combining these bags
actually resulted in a lower concentration of residents than were
present in the network for some residence halls (see Table
\ref{tbl:community_precision} where the values are lower the $F$ for a
given residence hall).  This occurs since the non-resident nodes in
each of these bags are different, while the resident nodes are largely
the same for each bag in the branch.

In terms of heuristic performance, the \textsc{mindeg, minfill,} and
\textsc{amd} seem to have similar performance given a residence,
although there seems to be a larger gap between \textsc{amd} than the
other heuristics.  This is not surprising as these are all greedy
heuristics which work by reducing fill (or minimum degree, which is a
proxy for fill) in each step.  \textsc{lexm} and \textsc{mcs} also
seem to behave similarly, and they have the best performance
in terms of recall (Table \ref{tbl:community_recall}).
\textsc{metnnd} has a different profile from the other networks and
seems to do the best in terms of precision (Table
\ref{tbl:community_precision}).  
These results are comparable to what can be obtained with other simple 
classification rules, and they suggest that TDs could be useful in these 
types of machine learning applications.

Overall, these results demonstrate that for these 
realistic social/information networks, several types of plausible ``ground 
truth'' communities are well-correlated with the large-scale structure
identified by existing TD heuristics.  
This striking since these heuristics make local greedy decisions about how 
to form the TDs, and it suggests that improved results could be obtained in 
this application by considering TD heuristics designed for graphs with this
type of structure.

\section{More details on tree decomposition methods}
\label{sec:theory_new}

In this section, we consider the question of whether TDs and their 
treewidths can be related to other parameters for tree-like structure, 
specifically the Gromov $\delta$ hyperbolicity.%
\footnote{As we mentioned in Section~\ref{sec:background}, this is not the main focus of our paper, but there has 
been recent theoretical and empirical interest in this and related 
questions; see, 
e.g.,~\cite{BKM01, WZ11, Dourisboure05, lokshtanov, GM09, KLNS12, Dra13, Ata16, abuata_diss}.}
It might appear that there is no relation between TDs and $\delta$ (since, 
e.g., treewidth and $\delta$ take on opposite extremal values on cliques and
cycles), but there are in fact structural characterizations for when they 
align. 
We will present here our new theoretical results on relating TDs and 
$\delta$-hyperbolicity.  
Although this result is a relatively-straightforward extension of previous 
work~\cite{Muller12_TR}, and although most of the rest of the paper can 
be understood without this result, we include it here for completeness: 
first, since motivating prior work in~\cite{Adcock13_icdm} demonstrates 
an empirical connection between the cut-based tree-like notion from TDs 
and the metric-based tree-like notion from $\delta$-hyperbolicity; and
second, since our results in Section~\ref{sec:real_results} demonstrate
the inadequacy of a na\"{\i}ve optimization of treewidth and the importance 
of large cycles for realistic social graphs.

\subsection{Treewidth, Treelength, and Hyperbolicity}
\label{sec:theory_new-mainresults}

We start with the following definition, which provides another quality 
measure of a TD; this was first introduced by Dourisboure and 
Gavoille~\cite{Dourisboure05}. 
See also~\cite{DL07}.

\begin{definition}\label{def:tl}
Let $\mathcal{T} = \left (\{X_i\}, T=(I,F) \right )$  be a tree 
decomposition of a graph $G$. 
The \emph{length} of $\mathcal{T}$ is defined to be 
$\max_{i \in I, x,y \in X_i} d_G(x,y)$, where $d_G(x,y)$ is the shortest 
path distance in $G$. 
Analogously to treewidth, the \emph{treelength} of $G$, denoted $tl(G)$, is 
the minimum length achieved by any tree decomposition of $G$.
\end{definition}

\noindent
It is straight-forward to see that the treelength is at most the diameter of 
$G$. 
Like with treewidth, finding a tree decomposition achieving minimum length 
(and in fact the treelength itself) is NP-hard~\cite{lokshtanov}. 
Given this, one might ask whether treelength and treewidth can be 
simultaneously approximated. 
For general graphs, Dourisboure and Gavoille proved a negative result.

\begin{theorem}\label{thm:dg}~\cite{Dourisboure05} 
Any algorithm computing a tree decomposition approximating the treewidth (or 
the treelength) of an $n$-vertex graph by a factor $\alpha$ or less does not 
give an $\alpha$-approximation of the treelength (resp. the treewidth) unless 
$\alpha = \Omega(n^{1/5})$. 
\end{theorem}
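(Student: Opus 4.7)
My plan is to build a family $\{G_n\}$ of $n$-vertex graphs on which the two parameters must be traded off against one another, and then derive the claimed bound by an averaging argument. The guiding intuition comes from two extremal examples: the clique $K_n$ realizes the maximum-treewidth / minimum-treelength regime ($tw=n-1$, $tl=1$), while the cycle $C_n$ realizes the opposite extreme ($tw=2$, $tl=\Theta(n)$). The witness graph should encode both behaviors simultaneously, so that no single tree decomposition can be both narrow and short.

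The construction I would propose is a \emph{ring of cliques}: arrange $b$ copies of $K_a$ along a cycle and identify a single vertex between consecutive cliques. Choosing $a,b$ with $ab = \Theta(n)$ makes the total number of vertices $\Theta(n)$. By the clique-containment property recalled in Section~\ref{sec:tdprelim}, every clique must fit inside some bag, so $tw(G_n) = \Theta(a)$. The underlying long cycle forces $tl(G_n) = \Theta(b)$, with a matching upper bound realized by a TD that walks around the ring. Balancing $a = \Theta(n^{3/5})$ against $b = \Theta(n^{2/5})$ will calibrate the exponents to the $n^{1/5}$ target.

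The central technical step is a trade-off lemma of the form
\begin{equation*}
\mathrm{width}(\mathcal{T}) \cdot \mathrm{length}(\mathcal{T}) \;=\; \Omega\bigl(a \cdot b \cdot n^{2/5}\bigr)
\end{equation*}
for every tree decomposition $\mathcal{T}$ of $G_n$. Granting such a bound, if a single $\mathcal{T}$ achieves $\mathrm{width}(\mathcal{T}) \le \alpha \cdot tw(G_n)$ and $\mathrm{length}(\mathcal{T}) \le \alpha \cdot tl(G_n)$ simultaneously, then $\alpha^2 \cdot ab = \Omega(ab \cdot n^{2/5})$, yielding $\alpha = \Omega(n^{1/5})$. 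The second half of the theorem (roles of $tw$ and $tl$ swapped) then follows from the same construction by a symmetric argument.

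The main obstacle will be proving the trade-off lemma. My approach is to pick two vertices $u, v$ placed on diametrically opposite cliques of the ring, consider the unique tree-path $P$ in $\mathcal{T}$ from a bag containing $u$ to one containing $v$, and exploit the running-intersection property: as one walks along $P$, intermediate bags must successively ``hand off'' responsibility through the shared vertices joining consecutive cliques, and every clique separating $u$ from $v$ in $G_n$ must contribute a separator that appears in some bag along $P$. A Menger-type count, bounding how many distinct clique-separators can share a single bag without blowing up the width, gives a lower bound on the number of bags on $P$; combining this with the fact that some bag on $P$ must contain two vertices whose $G_n$-distance is proportional to the arc of the ring they jointly cover yields the product lower bound. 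Calibrating the constants in this Menger-style argument to land on the $n^{1/5}$ exponent is the delicate part, and is where I would most closely follow the original construction of~\cite{Dourisboure05}.
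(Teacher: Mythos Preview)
The paper does not prove this theorem; it is quoted from~\cite{Dourisboure05}, and the only information the paper gives about the original argument is that the witness family consists of ``modifications of the $2$-dimensional mesh.'' So the relevant comparison is between your ring-of-cliques construction and the mesh-based one used in the cited source.

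Your construction does not work. For the ring of $b$ cliques $Q_1,\dots,Q_b$ of size $a$ (with $Q_i\cap Q_{i+1}=\{v_{i+1}\}$, indices mod $b$), consider the path decomposition with bags $X_i = Q_i \cup \{v_1\}$ for $i=1,\dots,b$. Every vertex and edge is covered, and the running-intersection property holds since $v_1$ lies in every bag while each $v_i$ lies in the two consecutive bags $X_{i-1},X_i$. This decomposition has width $a$, which is within an additive constant of $tw(G)=a-1$, and its length is $\max_i d_G(v_1,Q_i)=\Theta(b)$, which is within a constant factor of $tl(G)=\Theta(b)$ (the geodesic cycle through $v_1,\dots,v_b$ forces $tl=\Omega(b)$). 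Thus a \emph{single} tree decomposition simultaneously $O(1)$-approximates both parameters on your family, and no nontrivial lower bound on $\alpha$ follows. The trade-off lemma you propose, $\mathrm{width}\cdot\mathrm{length}=\Omega(ab\cdot n^{2/5})$, is therefore false for this graph class.

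The reason the mesh works where your construction fails is that every small balanced separator of an $m\times m$ grid is geometrically ``spread out'': any bag of size $O(m)$ that separates the grid must contain vertices at pairwise distance $\Omega(m)$, so low width forces large length. In a ring of cliques, by contrast, the natural small separators (pairs of cut vertices) can be absorbed into bags that are already metrically large for independent reasons, and no genuine tension arises. If you want to repair the argument, you should start from a grid (or a subdivision thereof) and establish a width--length product lower bound via the structure of grid separators, as in~\cite{Dourisboure05}.
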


The specific examples used by~\cite{Dourisboure05} to prove their 
negative result are modifications of the $2$-dimensional mesh (i.e., a
lattice), which---due to long induced cycles---is not $\delta$-hyperbolic 
for small values of $\delta$. 
This suggests that the situation might be very different for ``real-world'' 
graphs---which have small diameter and which have non-trivial embedding 
properties into low-dimensional hyperbolic spaces. 
(This is an open area of research more generally.) 
Chepoi et al.  \cite{CDEHV08} showed that if 
$tl(G) \leq \lambda$, then $G$ is $\lambda$-hyperbolic, and that a 
$\delta$-hyperbolic graph $G$ on $n$ vertices satisfies 
$tl(G) \leq 17 + 12\delta + 8\delta \log_2 n$.  
Unfortunately, for many real networks of interest, this is \emph{not} an 
improvement on the trivial bound of diameter as their diameter alone will 
be less than $O(\log_2 n)$. 
We conjecture that under minimal additional conditions, a 
$\delta$-hyperbolic graph with diameter $D$ has treelength at most a 
function of $\log_2 D$, a vast improvement on both known~bounds. 

We turn to the question of using additional 
structural properties to characterize the interplay between $\delta$, 
$tw(G)$, and $tl(G)$. 
The following theorem is our main result; this theorem follows from the 
work of M\"{u}ller on atomic TDs~\cite{Muller12_TR}, and
its proof is in 
Section~\ref{sec:theory_new-mainproof}.

\begin{theorem}~\cite{ReidlSullivan14}
\label{thm:holes}
Say a subgraph $H$ of $G$ is geodesic if $d_H(u,v) = d_G(u,v)$ for all $u,v \in V(H)$. Let
$\nu(G)$ be the length of a longest geodesic cycle in $G$. 
Then 
$$ \delta(G) \leq tl(G) \leq (tw(G)+1)\cdot \nu(G) .$$ 
Further, this result is tight---there is a graph class $\mc G$ of unbounded treewidth and containing arbitrarily long geodesic cycles such that $\delta(G) = \Theta(tw(G)\cdot \nu(G))$ for every graph $G \in \mc G$. 
\end{theorem}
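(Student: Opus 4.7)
The first inequality $\delta(G) \leq tl(G)$ is exactly the result of Chepoi--Dragan--Estellon--Habib--Vax\`{e}s cited earlier in the section (CDEHV08): any graph of treelength $\lambda$ is $\lambda$-hyperbolic. So the work is entirely in the second inequality and in tightness.

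For $tl(G) \leq (tw(G)+1)\cdot\nu(G)$, the plan is to start with an \emph{atomic} tree decomposition $\mathcal{T}$ of $G$ of width $tw(G)$, whose existence follows from M\"{u}ller's construction. The key structural property I will use is that in an atomic TD every bag $B$ induces a clique in the associated minimal triangulation $G^+$, and each non-edge $uv \in \binom{B}{2}\setminus E(G)$ arises as a fill edge because $u$ and $v$ lie together on a common induced cycle of $G$ that the triangulation breaks. I will then argue that one may in fact find such a cycle that is \emph{geodesic}: starting from an arbitrary induced cycle $C$ through $u,v$ witnessed by the fill edge, replace any non-geodesic arc of $C$ by a shortest path between its endpoints; a short case analysis (with the minimality of the separator structure in an atomic TD) shows the result remains an induced cycle through $u,v$, and iterating gives a geodesic cycle $C^*$ with $u,v \in V(C^*)$. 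Hence $d_G(u,v) \leq |C^*|/2 \leq \nu(G)$ for every pair of vertices that share a bag and are adjacent in the triangulation.

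To boost this pairwise bound to a bound on bag diameter in $G$, enumerate the vertices of a bag $B$ as $v_0, v_1, \ldots, v_k$ with $k \leq tw(G)$. Because $B$ is a clique in $G^+$, each consecutive pair $(v_i, v_{i+1})$ is joined by either an edge of $G$ or a fill edge, so the preceding paragraph gives $d_G(v_i,v_{i+1}) \leq \nu(G)$. By the triangle inequality in $G$,
\[
  d_G(v_0, v_k) \;\leq\; \sum_{i=0}^{k-1} d_G(v_i, v_{i+1}) \;\leq\; k\cdot \nu(G) \;\leq\; (tw(G)+1)\cdot \nu(G),
\]
and applying this to every pair in every bag yields $\ell(\mathcal{T}) \leq (tw(G)+1)\cdot\nu(G)$, hence the claimed bound on $tl(G)$.

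For tightness, I will exhibit a family $\mc G$ built by taking a $k\times k$ toroidal/cylindrical arrangement of long induced cycles of length $\ell$ glued along shared vertices (a ``thickened'' long cycle, or equivalently a long-girth grid-like product), so that (i) the treewidth grows like $\Theta(k)$ by a standard grid-minor / bramble argument, (ii) $\nu(G) = \Theta(\ell)$ because each constituent long cycle remains geodesic, and (iii) a direct distance estimate forces four points in ``general position'' on the large-scale structure to realize a hyperbolic four-point condition defect of order $\Theta(k\cdot\ell)$. Parameterizing $k$ and $\ell$ independently then gives graphs with $\delta(G) = \Theta(tw(G)\cdot\nu(G))$, matching the upper bound up to a constant.

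The main obstacle I expect is the ``geodesic-cycle extraction'' step in the second paragraph: turning the induced cycle provided by the atomic-TD/fill-edge analysis into a genuinely geodesic cycle through the same pair $u,v$ without destroying either induced-ness or the fact that it passes through $u$ and $v$. All the other steps (the lower bound, the triangle-inequality chaining, and the tightness construction) are essentially bookkeeping built on top of that single structural lemma.
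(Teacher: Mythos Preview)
Your proposal has a genuine gap that is not just a technical obstacle---the key lemma you are relying on is false, and if it held your argument would prove too much.

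You claim that every fill edge $uv$ in a bag of an atomic TD can be witnessed by a \emph{geodesic} cycle through $u$ and $v$, giving $d_G(u,v) \leq |C^*|/2 \leq \nu(G)$. But you also (correctly) observe that every bag $B$ is a clique in the associated triangulation $G^+$; hence \emph{every} pair in $B$ is either an edge or a fill edge. If your lemma held, you would conclude $d_G(u,v) \leq \nu(G)$ for \emph{all} $u,v \in B$---no chaining needed---and therefore $tl(G) \leq \nu(G)$ independent of treewidth. This is refuted by the very tightness family: the $n\times n$ planar grid has $\nu(G)=4$ (the only geodesic cycles are the $4$-faces), yet any width-$n$ TD must contain a bag of $n+1$ vertices, and for $n\ge 5$ no such bag can fit in a radius-$1$ ball, so some bag pair is at distance $\ge 3 > \nu(G)/2$. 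Your arc-replacement procedure cannot repair this, because the target statement (``co-bagged vertices share a geodesic cycle'') is simply not true; replacing a non-geodesic arc by a shortest path will in general either drop $u$ or $v$ or create chords.

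The paper's argument avoids this by \emph{not} trying to place a single geodesic cycle through the pair. Instead, for a bag $X$ it considers the whole family $\mathcal{C}_X$ of geodesic cycles meeting $X$, uses the atomic-TD structure (Propositions~\ref{prop:connected-or-else} and~\ref{prop:good-component}) to show any two vertices of $X$ lie on \emph{some} cycle, and then invokes the fact that geodesic cycles generate the cycle space to conclude $G[V(\mathcal{C}_X)]$ is connected. A BFS from $x$ then encounters a new vertex of $X$ at least once every $\nu(G)$ layers (by hopping along overlapping geodesic cycles), so $y$ is reached within $|X|\cdot\nu(G)$ steps. This is precisely where the factor $tw(G)+1$ enters---through $|X|$, not through a chain of fill edges. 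For tightness the paper uses the $k$-subdivision of the $n\times n$ grid (treewidth $n$, $\nu = 4(k+1)$, $\delta = (n-1)(k+1)-1$), which is a cleaner and fully explicit version of the ``grid-like product'' you sketch.
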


\noindent
In other words, if we can eliminate long distance-preserving cycles and 
obstructions to low treewidth (large grid minors), then $G$ will 
embed well in low-dimensional hyperbolic~space. 

\subsection{Proof of Theorem~\ref{thm:holes} }
\label{sec:theory_new-mainproof}

Before we can give the proof of Theorem \ref{thm:holes}, we need a few additional definitions. First, given a rooted tree~$T$ and a node $s \in T$, define $T_s$ to be the subtree of $T$ with root $s$: 
$$T_s := T[\{ t \in T \mid s~\text{is an ancestor of}~t \} ].$$ 
For a graph $G = (V,E)$ with tree decomposition $(\{X_i\},T)$ where $T$ is rooted arbitrarily, for $s \in T$ define 
$G_s := G[\bigcup_{t \in T_s} X_t]$ to be the graph induced by those bags that
are equal to or below $X_s$ in the decomposition. We will write $N(S)$ for the neighbors of a set $S$ -- more precisely, $N(S) = \{ u \in V \;|\; (u,s) \in E \text{ for some } s \in S\} \setminus S$. Finally, for notational convenience, for $x \in V$ and $e \in E$,  we will write $G - x$ for the graph $(V\setminus\{x\}, E)$ and $G - e$ for the graph $(V, E\setminus\{e\})$.  
We now define a special type of tree decomposition (so-called {\em atomic tree decompositions}), 
and give a crucial property of all vertices that co-occur in one of its bags. 

\begin{definition}\label{atomicTD} [atomic tree decomposition, as in~\cite{Bellenbaum_12}] 
	Let $G$ be a graph
	on $n$ vertices. The \textit{fatness} of a tree decomposition of $G$ is the $n$-tuple
	$(a_0,  \ldots, a_n)$, where $a_h$ denotes the number of bags that have exactly $n-h$
	vertices. A tree decomposition of lexicographically minimal fatness is called	
	an \textit{atomic tree decomposition}.
\end{definition}

\begin{proposition}\label{prop:connected-or-else} {\em [Lemma 3.9 in M\"uller~\cite{Muller12_TR}]}
	Let $(\{X_i\},T)$ be an atomic tree decomposition of a connected graph~$G=(V,E)$.
	Then for any two distinct vertices $x,y$ that occur together in some
	bag $X_t$, either $(x,y) \in E$ or there exists a neighbor $s$ of $t$ in $T$
	such that $\{x,y\} \subseteq V_s \cap V_t$.
\end{proposition}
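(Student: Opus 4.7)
The plan is to argue by contradiction from atomicity. Suppose an atomic tree decomposition $(\{X_i\},T)$ of $G$ contains a bag $X_t$ with distinct vertices $x,y$ such that $(x,y)\notin E$ and no neighbor $s$ of $t$ in $T$ satisfies $\{x,y\}\subseteq X_s\cap X_t$. I will surgically modify the decomposition to obtain a valid tree decomposition whose fatness is lexicographically strictly smaller, contradicting atomicity.

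The surgery: delete the node $t$ from $T$ and replace it with two new nodes $t^x, t^y$ joined by a tree edge, setting $X_{t^x}:=X_t\setminus\{y\}$ and $X_{t^y}:=X_t\setminus\{x\}$. Each former neighbor $s$ of $t$ in $T$ is then reattached to $t^x$ if $x\in X_s$, to $t^y$ if $y\in X_s$, and to $t^x$ (say) if $X_s$ meets neither. The hypothesis that no neighbor of $t$ contains both $x$ and $y$ is exactly what makes this assignment well-defined.

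Next, I would verify that the resulting pair is a genuine tree decomposition by checking each condition of Definition~\ref{def:td}. Coverage of $V$ is immediate. For the edge condition, an edge $(x,w)$ with $w\neq y$ was covered either by $X_t$ itself (now by $X_{t^x}\ni w$) or by a bag in some subtree rooted at a neighbor $s$ with $x\in X_s$, which was reattached to $t^x$ and whose internal bags are untouched; the case of $y$-incident edges is symmetric, and $(x,y)$ needs no witness. For the running-intersection property, I would split into cases on the vertex $v$: if $v=x$ the bags containing $v$ are $X_{t^x}$ together with the reattached subtrees where $x$ already appeared; if $v\in X_t\setminus\{x,y\}$ the vertex sits in both $X_{t^x}$ and $X_{t^y}$, so the $t^x$--$t^y$ edge bridges the two sides; and if $v\notin X_t$, the crucial observation is that in $T$ the bags containing $v$ already form a subtree avoiding $t$, hence lie entirely in one component of $T-t$, which is reattached as a block and so stays connected.

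Finally, I would compare fatness: the modification trades one bag of size $|X_t|$ for two bags of size $|X_t|-1$, so setting $h:=n-|X_t|$ the count $a_h$ drops by one and $a_{h+1}$ rises by two, while every $a_i$ with $i<h$ is preserved. The fatness tuples therefore first differ at coordinate $h$, where the new one is strictly smaller, contradicting the lex-minimality of the original decomposition. The main obstacle I expect is the careful bookkeeping in the running-intersection step, in particular justifying that a neighbor $s$ of $t$ with $X_s\cap\{x,y\}=\emptyset$ can be reattached to either of $t^x,t^y$ without breaking any vertex's subtree; the key input there is precisely the fact that a vertex not in $X_t$ must already have all its bags in one component of $T-t$.
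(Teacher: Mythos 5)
Your proof is correct and complete: the split of $X_t$ into $X_t\setminus\{y\}$ and $X_t\setminus\{x\}$ is well-defined precisely because no neighbor of $t$ contains both vertices, the verification of the three tree-decomposition axioms (including the case analysis for the running-intersection property) is sound, and the fatness comparison correctly shows the modified decomposition is lexicographically smaller, contradicting atomicity. Note that the paper itself offers no proof of this proposition --- it is imported verbatim as Lemma~3.9 of the cited Diestel--M\"uller reference~\cite{Muller12_TR} --- and your splitting-plus-fatness argument is essentially the standard proof given there, so there is nothing to compare beyond observing that you have supplied the argument the paper chose to cite.
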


We also need the following proposition, which follows from Lemmas 3.7 and 3.8 in M\"uller~\cite{Muller12_TR}.  
\begin{proposition}\label{prop:good-component}
	Let $(\{X_i\},T)$ be an atomic tree decomposition of a connected graph~$G$, $e = (s,t) \in E(T)$ be any edge and let $T_t$ be the connected component
	of $T - e$ rooted at~$t$, and set $X = X_s \cap X_t$. Then there exists a connected component $C_t$ in $G_t \setminus X$ such that $N(C_t) = X$ and $X_t \subseteq C_t \cup X$.
\end{proposition}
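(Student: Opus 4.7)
The plan is to prove the proposition by strong induction on $|T_t|$, arguing by contradiction against the atomicity of $(\{X_i\},T)$: if either conclusion fails, I will construct a valid tree decomposition with lexicographically strictly smaller fatness. Write $Y := X_t \setminus X$. First I dispose of $Y = \emptyset$: then $X_t \subseteq X_s$, so deleting $t$ and reattaching its children to $s$ yields a valid decomposition with one fewer bag, a contradiction. Assume $Y \neq \emptyset$. The base case $|T_t| = 1$ is immediate since $G_t = G[X_t]$: if $Y$ splits across multiple components of $G[Y]$, replace $X_t$ by smaller bags $X \cup Y_i$ in a path attached to $X_s$; if some $x \in X$ has no neighbor in $Y$, delete $x$ from $X_t$. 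Both operations strictly decrease fatness, and since $t$ is a leaf, running intersection is preserved.

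For the inductive step, first establish the existence of $C_t$. Let $D_1, \ldots, D_k$ be the components of $G_t \setminus X$ meeting $Y$, set $Y_i := Y \cap D_i$, and suppose $k \geq 2$. Replace $X_t$ by bags $Z_i := X \cup Y_i$ arranged as a path with $Z_1$ attached to $X_s$, and attach each child subtree $T_u$ of $t$ to the unique $Z_j$ with $X_u \cap Y \subseteq Y_j$. The main obstacle is verifying the existence of such $j$ for every child $u$: a direct application of Proposition~\ref{prop:connected-or-else} at $X_u$ to $y_1, y_2 \in X_u \cap Y$ yields only $X_t$ as a co-occurrence bag, providing no new information. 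The resolution invokes the inductive hypothesis at the edge $(t, u)$, which is applicable since $|T_u| < |T_t|$: this produces a component $C_u$ of $G_u \setminus (X_u \cap X_t)$ with $X_u \setminus X_t \subseteq C_u$ and $N(C_u) = X_u \cap X_t$. Since $y_1, y_2 \in X_u \cap Y \subseteq N(C_u)$, each has a neighbor in $C_u$; connectedness of $C_u$ yields a $y_1$-to-$y_2$ path in $G_u$ avoiding $X_u \cap X_t$, and hence avoiding $X$ (since $V(G_u) \cap X \subseteq X_u \cap X_t$ by running intersection). So $y_1, y_2$ lie in a common component of $G_t \setminus X$, forcing $X_u \cap Y \subseteq Y_j$ for some $j$. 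The new decomposition is valid and each $|Z_i| < |X_t|$, so fatness strictly decreases, a contradiction. Hence $k = 1$, and $C_t := D_1$ satisfies $X_t \subseteq C_t \cup X$.

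Finally, I verify $N(C_t) = X$. The inclusion $N(C_t) \subseteq X$ is immediate from the separator property. For the reverse, suppose $x \in X$ has no neighbor in $C_t$; since $Y \subseteq C_t$, $x$ is nonadjacent to every $y \in Y$. Pick any $y \in Y$ and apply Proposition~\ref{prop:connected-or-else} to $x, y \in X_t$: since $(x, y) \notin E$, there is a neighbor $v$ of $t$ in $T$ with $x, y \in X_v$, and $v \neq s$ since $y \notin X_s$. Thus $v$ is a child of $t$, and the inductive hypothesis at $(t, v)$ yields a component $C_v$ of $G_v \setminus (X_v \cap X_t)$ with $N(C_v) = X_v \cap X_t$. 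The vertex $y \in X_v \cap X_t = N(C_v)$ has a neighbor $z \in C_v$; since $z \notin X$ (by running intersection, as $V(G_v) \cap X \subseteq X_v \cap X_t$ and $z$ avoids that set), the edge $yz$ lies in $G_t \setminus X$, placing $z \in C_t$, so $C_v \cap C_t \neq \emptyset$. Connectedness of $C_v$ in $G_t \setminus X$ then gives $C_v \subseteq C_t$. But $x \in X_v \cap X_t = N(C_v)$ has a neighbor in $C_v \subseteq C_t$, contradicting the hypothesis on $x$. This completes the induction.
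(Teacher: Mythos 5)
Your proof is correct. Note that the paper does not actually prove Proposition~\ref{prop:good-component} at all --- it is stated as following from Lemmas 3.7 and 3.8 of M\"uller's technical report --- so you have supplied an argument where the paper only supplies a citation. Your strong induction on $|T_t|$, driven by fatness-improvement contradictions (deleting a bag contained in its neighbour when $X_t \subseteq X_s$, splitting $X_t$ into the bags $X \cup Y_i$ along the components of $G_t \setminus X$, and deleting from a leaf bag a vertex of $X$ with no neighbour in $Y$), is exactly the style of argument by which such component lemmas for atomic decompositions are established, and the genuinely delicate step --- showing that each child bag's intersection with $Y$ lies in a single component of $G_t \setminus X$, so that the child subtrees can be reattached to the correct $Z_j$ --- is handled correctly by applying the inductive hypothesis to the edge $(t,u)$ and routing a $y_1$--$y_2$ path through $C_u$, which avoids $X$ because $V(G_u)\cap X \subseteq X_u \cap X_t$. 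The same device cleanly resolves $X \subseteq N(C_t)$ in the inductive step, where the naive ``delete $x$ from $X_t$'' move (which works at a leaf) would break the running-intersection property at an internal node. One caveat worth recording: your argument for $X \subseteq N(C_t)$ invokes Proposition~\ref{prop:connected-or-else} as a black box. Within this paper both propositions are independent citations, so nothing is circular here; but in the cited source the connected-or-else lemma is itself derived from the component lemmas, so your proof should be read as reducing one cited fact to the other rather than as establishing Proposition~\ref{prop:good-component} from the definition of atomicity alone.
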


Finally, we are ready to give a bound on treelength in terms of a graph's treewidth and its longest geodesic cycle. 
Our proof relies heavily on tools from~\cite{Muller12_TR}.

\begin{theorem}\cite{ReidlSullivan14}\label{thm:tl_nu}
	For any graph~$G=(V,E)$ it holds that $\tl(G) \leq \nu(G) \cdot (\tw(G)+1)$
	where~$\nu(G)$ is the length of the longest geodesic cycle in~$G$.
\end{theorem}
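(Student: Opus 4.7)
My plan is to work with an atomic tree decomposition $(\{X_i\},T)$ of $G$ of width $\tw(G)$ and bound its length directly. Fix any bag $X_t$ and two distinct vertices $x,y \in X_t$; it suffices to prove $d_G(x,y) \le \nu(G)(\tw(G)+1)$, since taking the maximum over such pairs and over $t$ then bounds the length of this decomposition, and hence $\tl(G)$.

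The first step is to use Propositions~\ref{prop:connected-or-else} and~\ref{prop:good-component} to produce a cycle in $G$ through any pair of vertices sharing a bag. If $u,v \in X_t$ are not adjacent in $G$, then Proposition~\ref{prop:connected-or-else} supplies a neighbor $s$ of $t$ in $T$ with $\{u,v\} \subseteq X_s \cap X_t =: X$, and Proposition~\ref{prop:good-component} produces connected components $C_s$ and $C_t$ on the two sides of $X$ with $N(C_s) = N(C_t) = X$. Since $u,v \in X$, each has a neighbor in both components, so concatenating a $u$-$v$ path through $C_s$ with a $u$-$v$ path through $C_t$ yields a cycle $\Gamma_{uv}$ in $G$ passing through $u$ and $v$.

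The second step is a cycle-shortening argument that extracts from $\Gamma_{uv}$ a geodesic cycle through $u,v$. I would iterate as follows: whenever the current cycle $\Gamma$ is not geodesic in $G$, locate a pair $(p,q) \in V(\Gamma)$ with $d_G(p,q)$ strictly smaller than the distance along $\Gamma$, and replace the longer arc of $\Gamma$ between $p$ and $q$ by a shortest $p$-$q$ path, subject to the constraint that both $u$ and $v$ survive on the resulting cycle. Since cycle lengths are positive integers, this terminates at a geodesic cycle $\Gamma^*_{uv}$ of length at most $\nu(G)$ containing $u$ and $v$, yielding $d_G(u,v) \le \nu(G)$. Then, enumerating $X_t = \{v_0 = x, v_1, \dots, v_{k-1} = y\}$ with $k = |X_t| \le \tw(G)+1$, the bound applies to every consecutive pair $v_i, v_{i+1} \in X_t$, so by the triangle inequality $d_G(x,y) \le \sum_{i=0}^{k-2} d_G(v_i,v_{i+1}) \le (k-1)\nu(G) \le \nu(G)(\tw(G)+1)$.

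The main obstacle will lie in step two: the constraint that $u$ and $v$ remain on the cycle throughout the shortening may forbid the most natural shortcut, namely one whose long arc contains $u$ or $v$. The key is to show that whenever the current cycle is not yet geodesic, a length-strictly-decreasing shortcut preserving both marked vertices always exists. The handle on this comes from the tight-separator structure guaranteed by Proposition~\ref{prop:good-component}, since $C_s$ and $C_t$ are connected with $X$ itself as neighborhood: if a candidate chord would delete a marked vertex, the other connected side can be used to reroute without crossing it. Showing that such a replacement can always be chosen so as to strictly reduce the cycle length (so the iteration must terminate at a geodesic cycle through $u,v$) is the crux of the argument, and is where the atomicity of the decomposition plays its essential role; the $(\tw(G)+1)$ factor then enters harmlessly from chaining through the at most $\tw(G)+1$ vertices of the bag.
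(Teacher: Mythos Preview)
Your step two is where the argument breaks, and it cannot be repaired along the lines you sketch. You claim that from a cycle through $u,v$ one can always pass, by iterated shortcutting, to a \emph{geodesic} cycle still containing both $u$ and $v$, giving $d_G(u,v)\le\nu(G)$ for every pair in a bag. But this would yield $\tl(G)\le\nu(G)$ outright (your chaining at the end is then superfluous, since the bound already applies to the pair $x,y$ itself), and that is false: the paper's own tightness family, the $k$-subdivision of the $n\times n$ grid, has $\nu=4(k+1)$ and $\tl=n(k+1)$, so for $n>4$ no decomposition can have all bag-pairs at distance $\le\nu$. Concretely, two vertices that share a bag need not lie on any common geodesic cycle at all; a long cycle with a single chord already exhibits this. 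The ``reroute through the other side'' idea you propose does not help, because the obstruction is global: every cycle through the two marked vertices may be non-geodesic, so no shortcut preserving both can terminate at a geodesic cycle.

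The paper's proof avoids this trap by never trying to put $x$ and $y$ on a single geodesic cycle. Instead it looks at the family $\mathcal C_X$ of \emph{all} geodesic cycles meeting the bag $X$, shows (via the cycle argument you reproduce, plus the fact that geodesic cycles generate the cycle space) that $G[V(\mathcal C_X)]$ is connected and covers $X$, and then runs a BFS from $x$: using connectivity of $\mathcal C_X$, one argues that every $\nu(G)$ layers the search discovers a new vertex of $X$, so $y$ is reached within $|X|\cdot\nu(G)$ layers. The factor $\tw(G)+1$ thus enters as $|X|$, not through chaining along an enumeration of the bag. You have the right opening move (atomic decompositions, the two propositions, the existence of a cycle through any bag-pair), but the second half needs to be replaced by this BFS/layer argument. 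You should also first reduce to the two-connected case, since the cycle-through-every-pair step uses it.
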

\begin{proof}
We will prove a stronger statement, namely that {\em any atomic tree decomposition}
of a two-connected graph has treelength at most~$\nu(G) \cdot(\tw(G)+1)$.
Let us first show how this proves the lemma for graphs that are not two-connected.

Assume~$G$ is not two-connected and $x \in V$ is a cut vertex ($G - x$ has at least two connected components). Let $H_1,\dots,H_\ell$ be
the connected components of $G - x$. If we prove that the graphs $G[H_i \cup \{x\}]$,
$1 \leq i \leq \ell$ have tree decompositions $\mc T_i$ with treelength bounded as in
the statement of the theorem, then we can easily construct a tree decomposition
for $G$ with the same property: we simply introduce a single new bag 
$V_x = \{x\}$ and connect it to an arbitrary bag containing $x$ in each of the individual tree
decompositions $\mc T_i$ (since these graphs all contain
the vertex $x$ such a bag must exist). Note that the treelength of this decomposition
is simply $\max_{1 \leq i \leq \ell} \tl(\mc T_i)$ since the bag $V_x$ we added contains
only the vertex $x$ and thus cannot increase the treelength.
Since we will show the statement for two-connected graphs in the following, we recursively decompose the graph $G$
over cut vertices until the remaining connected components are all two-connected and
then construct a tree decomposition of $G$ as described above.

We may now assume $G$ is two-connected. Given an atomic tree decomposition $(\{X_i\},T)$ of~$G$, we show that for every
two vertices $x,y$ that occur in a common bag $X := X_t$,  $x$ and $y$ are
connected by a path whose length depends only on $|X|$ and $\nu(G)$. To this end, let $\mc C_X$ be the collection of geodesic cycles in~$G$ that
have at least one vertex in $X$. We first show that if $G[\mc C_X]$ is connected
and $X \subseteq V(\mc C_X)$, then every pair of vertices in $X$ is connected 
by a path of length at most $|X|\cdot \nu(G[\mc C_X])$.

Consider $x,y \in X$. Start a breadth-first search (bfs) from~$x$ that stops as soon as it reaches~$y$. 
Let $L_1, L_2, \cdots L_p$ be
the layers of the bfs-tree where $L_1 = \{x\}$ is the starting layer.
We claim that for all $L_i$ with $L_i \cap X \neq \emptyset$, there is a $j$ such that 
$i < j \leq i + \nu(G[\mc C_X])$ and  $L_j \cap X \neq \emptyset$. Consider such an $L_i$, and denote by $X_l \subseteq X$
those vertices of $X$ that are contained in $\bigcup_{k = 1}^{i} L_k$. Denote 
by $X_r = X \setminus X_l$ those vertices of $X$ that have not been visited until
step~$i$. If there exists a geodesic cycle $C$ in $\mc C_X$ with vertices in both
$X_l$ and $X_r$ we are done -- the bfs will have seen all of $C$ in at most $\nu(G[\mc C_X])$ steps  (and thus found $C \cap X_r$). 
Otherwise, since $\mc C_X$ is connected, there
exist two geodesic cycles $C_l, C_r \in \mc C_X$ with $C_l \cap X_l \neq \emptyset$,
$C_r \cap X_r \neq \emptyset$ and $C_r \cap C_l \neq \emptyset$. Since the bfs
will visit all vertices of $C_r \cup C_l$ in at most $(|C_r| + |C_l|)/ 2 \leq \nu(G[\mc C_X])$
steps, the claim follows. Therefore the number of layers $p \leq \nu(G[\mc C_X]) \cdot |X_t|$ and thus the distance
between $x$ and $y$ is bounded by $\nu(G[\mc C_X]) \cdot |X_t|$ as claimed.

Therefore, if we show that for every bag $X$, the set $\mc C_X$ of geodesic cycles
touching $X$ induces a connected graph $G[V(\mc C_X)]$, we are done: then every vertex
pair $x,y \in X$ is indeed connected by a path of length at most 
$\nu(G[V( \mc C_X)]) |X|$, which (by the definition of treewidth and the fact $C_X$ is a family of 
geodesic cycles) is bounded by $\nu(G) (\tw(G)+1)$.

We first prove that for any choice of $X := X_t$ and any pair of vertices $x,y \in X$, $x$ and $y$ lie 
on some cycle of $G$. By Proposition~\ref{prop:connected-or-else},
the vertices $x,y$ are either connected by an edge (in which case we are done: $G$ is
two-connected, so every edge lies on some cycle) or
there exists some node $s \in N_T(t)$ such that $\{x,y\} \subseteq V_s \cap V_t$.
In the latter case, we invoke Proposition~\ref{prop:good-component}: for $i \in \{s,t\}$ we can find 
connected components $H_i$ of $G_i \setminus X$ such that
$N(H_i) = X$ and $V_i \subseteq H_i \cup X$. Therefore, there exist two
$x$-$y$-paths: one inside $H_s$ and another in $H_t$, hence $x$ and $y$ lie on a 
cycle. 

Since the set of geodesic cycles forms a basis for the cycle space of a graph (see Theorem 3.1 of~\cite{GS09}),
it follows that for every $t \in T$, $G[V(\mc C_{X_t})]$ is connected. The
distance between any vertices in $X_t$ is thus bounded by $\nu(\mc C_{X_t}) \cdot |X_t|$, implying that 
$tl(G)$ is at most $\nu(G) \cdot (\tw(G)+1)$, as claimed.
\end{proof}

Finally, we put all the pieces together and show why these bounds are tight. \\

\noindent \textbf{Proof of Theorem~\ref{thm:holes}}
This follows directly from Theorem~\ref{thm:tl_nu},  Chepoi's result that hyperbolicity is at most the treelength~\cite{CDEHV08}, and the observation that for any  non-negative integers $n$ and $k$, the $k$-subdivision of the $n \times n$ planar grid has treelength $n(k+1)$, treewidth $n$, a longest geodesic cycle of length $4(k+1)$, and hyperbolicity $(n-1)(k+1)-1$. \qed

\section{Discussion and Conclusion}
\label{sec:conc}


Clearly, there is a need to develop TD heuristics that are better-suited for 
the properties of realistic informatics graphs.  This might involve making more sophisticated choices 
than greedily minimizing degree or fill, but it might also involve optimizing other
parameters such as treelength (which has connections with $\delta$-hyperbolicity) 
or minimizing the width of bags that are not central (associated with the deep core).
In addition, it would be interesting to use TDs to help to combine small 
local clusters found with other methods, e.g., local spectral methods, 
into larger overlapping clusters, in order to understand better what might 
be termed the ``local to global'' properties of realistic informatics 
graphs. Since these graphs are not well-described by simple low-dimensional 
structures or simple constant-degree expander-like structures, this coupling 
is particularly counterintuitive, but it is very important for applications 
such as the diffusion of information. 
Finally, given the connections between TDs and graphical models, it would be 
interesting to understand better the implications of our results for 
improved graphical modeling and/or for improved inference on realistic 
network data.
We expect that this will be a particularly challenging but promising 
direction for future work on social (as well as non-social)~graphs.

\vspace{5mm}
\textbf{Acknowledgments.}
We would like to thank Felix Reidl for considerable help in simplifying the
proof of Theorem~\ref{thm:holes}.
We would also like to thank Mason Porter for helpful discussions and for 
providing several of the networks that we considered as well as Dima Krioukov and 
his collaborators for providing us access to their code for generating 
networks based on their hyperbolic model.
In addition, we would like to acknowledge financial support from the Air 
Force Office of Scientific Research, the Army Research Office, the Defense 
Advanced Research Projects Agency, the National Consortium for Data Science, 
and the National Science Foundation. 
Any opinions, findings, and conclusions or recommendations expressed in this publication are
those of the author(s) and do not necessarily reflect the views of any of the above funding agencies. 

\vspace{5mm}

\begin{small}

\bibliographystyle{unsrt}


\begin{thebibliography}{100}

\bibitem{LLDM09_communities_IM}
J.~Leskovec, K.J. Lang, A.~Dasgupta, and M.W. Mahoney.
\newblock Community structure in large networks: Natural cluster sizes and the
  absence of large well-defined clusters.
\newblock {\em Internet Mathematics}, 6(1):29--123, 2009.
\newblock Also available at: arXiv:0810.1355.

\bibitem{Jeub15}
L.~G.~S. Jeub, P.~Balachandran, M.~A. Porter, P.~J. Mucha, and M.~W. Mahoney.
\newblock Think locally, act locally: Detection of small, medium-sized, and
  large communities in large networks.
\newblock {\em Physical Review E}, 91:012821, 2015.

\bibitem{Adcock13_icdm}
A.~B. Adcock, B.~D. Sullivan, and M.~W. Mahoney.
\newblock Tree-like structure in large social and information networks.
\newblock In {\em Proc. of the 2013 IEEE ICDM}, pages 1--10, 2013.

\bibitem{Batagelj02}
V.~Batagelj and M.~Zaversnik.
\newblock Generalized cores.
\newblock Technical report.
\newblock Preprint: arXiv:cs.DS/0202039 (2002).

\bibitem{Batagelj03}
V.~Batagelj and M.~Zaversnik.
\newblock An {$O(m)$} algorithm for cores decomposition of networks.
\newblock Technical report.
\newblock Preprint: arXiv:cs.DS/0310049 (2003).

\bibitem{BZ11}
V.~Batagelj and M.~Zaversnik.
\newblock Fast algorithms for determining (generalized) core groups in social
  networks.
\newblock {\em Advances in Data Analysis and Classification}, 5(2):129--145,
  2011.

\bibitem{Robertson86}
N.~Robertson and P.~D. Seymour.
\newblock Graph minors. {II}. {A}lgorithmic aspects of tree-width.
\newblock {\em Journal of Algorithms}, 7(3):309--322, 1986.

\bibitem{Arnborg89}
S.~Arnborg and A.~Proskurowski.
\newblock Linear time algorithms for {NP}-hard problems restricted to partial
  k-trees.
\newblock {\em Discrete Applied Mathematics}, 23(1):11--24, 1989.

\bibitem{Bern87}
M.~W. Bern, E.~L. Lawler, and A.~L. Wong.
\newblock Linear-time computation of optimal subgraphs of decomposable graphs.
\newblock {\em Journal of Algorithms}, 8(2):216--235, 1987.

\bibitem{KosterHK02}
A.~M. C.~A. Koster, S.~P.~M. van Hoesel, and A.~W.~J. Kolen.
\newblock Solving partial constraint satisfaction problems with tree
  decomposition.
\newblock {\em Networks}, pages 170--180, 2002.

\bibitem{Lagergren96}
J.~Lagergren.
\newblock Efficient parallel algorithms for graphs of bounded tree-width.
\newblock {\em Journal of Algorithms}, 20(1):20--44, 1996.

\bibitem{Hicks05}
I.~V. Hicks, A.~M. C.~A. Koster, and E.~Koloto{\u g}lu.
\newblock Branch and tree decomposition techniques for discrete optimization.
\newblock {\em TutORials in Operation Research: INFORMS--New Orleans}, 2005.

\bibitem{ZhaoMC06}
J.~Zhao, R.~L. Malmberg, and L.~Cai.
\newblock Rapid {\it ab initio} {RNA} folding including pseudoknots via graph
  tree decomposition.
\newblock In {\em Proceedings of the 6th International Workshop on Algorithms
  in Bioinformatics}, pages 262--273, 2006.

\bibitem{ZhaoCC07}
J.~Zhao, D.~Che, and L.~Cai.
\newblock Comparative pathway annotation with protein-{DNA} interaction and
  operon information via graph tree decomposition.
\newblock In {\em Pacific Symposium on Biocomputing}, pages 496--507, 2007.

\bibitem{Liu06}
C.~Liu, Y.~Song, B.~Yan, Y.~Xu, and L.~Cai.
\newblock Fast de novo peptide sequencing and spectral alignment via tree
  decomposition.
\newblock In {\em Pacific Symposium on Biocomputing}, pages 255--266, 2006.

\bibitem{Lauritzen88}
S.~L. Lauritzen and D.~J. Spiegelhalter.
\newblock {Local computations with probabilities on graphical structures and
  their application to expert systems (with discussion)}.
\newblock {\em Journal of the Royal Statistical Society series B}, 50:157--224,
  1988.

\bibitem{KS01}
D.~Karger and N.~Srebro.
\newblock Learning {M}arkov networks: maximum bounded tree-width graphs.
\newblock In {\em Proceedings of the 12th ACM-SIAM Symposium on Discrete
  algorithms}, pages 392--401, 2001.

\bibitem{Chen04}
H.~Chen.
\newblock Quantified constraint satisfaction and bounded treewidth.
\newblock In {\em Proceedings of the 16th European Conference on Artificial
  Intelligence}, pages 161--165, 2004.

\bibitem{Bodlaender93}
H.~L. Bodlaender and R.~H. M\"{o}hring.
\newblock The pathwidth and treewidth of cographs.
\newblock {\em SIAM Journal on Discrete Mathematics}, 6(2):181--188, 1993.

\bibitem{Chekuri14}
C.~Chekuri and J.~Chuzhoy.
\newblock Polynomial bounds for the grid-minor theorem.
\newblock In {\em Proceedings of the 46th Annual ACM Symposium on Theory of
  Computing}, pages 60--69, 2014.

\bibitem{Seymour94}
P.D. Seymour and R.~Thomas.
\newblock Call routing and the ratcatcher.
\newblock {\em Combinatorica}, 14(2):217--241, 1994.

\bibitem{Bodlaender10}
H.~L. Bodlaender and A.~M. C.~A. Koster.
\newblock Treewidth computations {I}. {U}pper bounds.
\newblock {\em Inf. Comput.}, 208(3):259--275, 2010.

\bibitem{Bodlaender96}
H.~L. Bodlaender.
\newblock A linear-time algorithm for finding tree-decompositions of small
  treewidth.
\newblock {\em SIAM Journal on Computing}, 25(6):1305--1317, 1996.

\bibitem{Amir10}
E.~Amir.
\newblock Approximation algorithms for treewidth.
\newblock {\em Algorithmica}, 56(4):448--479, 2010.

\bibitem{Rohrig98}
H.~R{\"{o}}hrig.
\newblock Tree decomposition: a feasibility study.
\newblock Master's thesis, Universit{\"{a}}t des Saarlandes, Saarbr{\"{u}}cken,
  Germany, 1998.

\bibitem{ShoikhetG97}
K.~Shoikhet and D.~Geiger.
\newblock A practical algorithm for finding optimal triangulations.
\newblock In {\em Proceedings of AAAI/IAAI}, pages 185--190, 1997.

\bibitem{Groer12}
C.~Gro{\"e}r, B.~D. Sullivan, and D.~Weerapurage.
\newblock {INDDGO}: Integrated network decomposition \& dynamic programming for
  graph optimization.
\newblock Technical Report ORNL/TM-2012/176, Oak Ridge National Laboratory,
  2012.

\bibitem{inddgo}
B.~D. Sullivan et~al.
\newblock {Integrated Network Decompositions and Dynamic programming for Graph
  Optimization (INDDGO)}, 2012, 2013.
\newblock http://github.com/bdsullivan/inddgo.

\bibitem{Gavril74}
F.~Gavril.
\newblock The intersection graphs of subtrees in trees are exactly the chordal
  graphs.
\newblock {\em Journal of Combinatorial Theory, Series B}, 16(1):47--56, 1974.

\bibitem{Rose75}
D.~J. Rose and R.~E. Tarjan.
\newblock Algorithmic aspects of vertex elimination.
\newblock In {\em Proceedings of the 7th Annual ACM Symposium on Theory of
  Computing}, pages 245--254, 1975.

\bibitem{Berry02}
A.~Berry, J.~R.~S. Blair, and P.~Heggernes.
\newblock Maximum cardinality search for computing minimal triangulations.
\newblock In {\em Proceeding of the 28th International Workshop on
  Graph-Theoretic Concepts in Computer Science}, pages 1--12, 2002.

\bibitem{Berry04}
A.~Berry, J.~R.~S. Blair, P.~Heggernes, and B.~W. Peyton.
\newblock Maximum cardinality search for computing minimal triangulations of
  graphs.
\newblock {\em Algorithmica}, 39(4):287--298, 2004.

\bibitem{RoseTarjanLueker1976}
D.~Rose, R.~Tarjan, and G.~Lueker.
\newblock Algorithmic aspects of vertex elimination on graphs.
\newblock {\em SIAM Journal on Computing}, 5:266--283, 1976.

\bibitem{TarjanYannakakis1984}
R.~E. Tarjan and M.~Yannakakis.
\newblock Simple linear-time algorithms to test chordality of graphs, test
  acyclicity of hypergraphs, and selectively reduce acyclic hypergraphs.
\newblock {\em SIAM Journal on Computing}, 13:566--579, 1984.

\bibitem{Tarjan85}
R.~E. Tarjan and M.~Yannakakis.
\newblock Addendum: Simple linear-time algorithms to test chordality of graphs,
  test acyclicity of hypergraphs, and selectively reduce acyclic hypergraphs.
\newblock {\em SIAM Journal on Computing}, 14(1):254--255, 1985.

\bibitem{BeckerG01}
A.~Becker and D.~Geiger.
\newblock A sufficiently fast algorithm for finding close to optimal clique
  trees.
\newblock {\em Artificial Intelligence}, 125(1--2):3--17, 2001.

\bibitem{Bodlaender95}
H.~L. Bodlaender, J.~R. Gilbert, H.~Hafsteinsson, and T.~Kloks.
\newblock Approximating treewidth, pathwidth, and minimum elimination tree
  height.
\newblock {\em Journal of Algorithms}, 18:238--255, 1995.

\bibitem{Bouchitte04}
V.~Bouchitt{\'e}, D.~Kratsch, H.~M\"{u}ller, and I.~Todinca.
\newblock On treewidth approximations.
\newblock {\em Discrete Appl. Math.}, 136(2-3):183--196, 2004.

\bibitem{Reed92}
B.~A. Reed.
\newblock Finding approximate separators and computing tree width quickly.
\newblock In {\em Proceedings of the 24th Annual ACM Symposium on Theory of
  Computing}, pages 221--228, 1992.

\bibitem{George73}
J.~A. George.
\newblock Nested dissection of a regular finite element mesh.
\newblock {\em SIAM Journal of Numerical Analysis}, 10:345--363, 1973.

\bibitem{Gilbert86}
J.~R. Gilbert and R.~E. Tarjan.
\newblock The analysis of a nested dissection algorithm.
\newblock {\em Numerische Mathematik}, 50(4):377--404, 1986.

\bibitem{karypis98_metis}
G.~Karypis and V.~Kumar.
\newblock A fast and high quality multilevel scheme for partitioning irregular
  graphs.
\newblock {\em SIAM Journal on Scientific Computing}, 20:359--392, 1998.

\bibitem{Markowitz57}
H.~M. Markowitz.
\newblock The elimination form of the inverse and its application to linear
  programming.
\newblock {\em Management Science}, 3(3):255--269, 1957.

\bibitem{Amestoy04}
P.~R. Amestoy, T.~A. Davis, and I.~S. Duff.
\newblock Algorithm 837: {AMD}, an approximate minimum degree ordering
  algorithm.
\newblock {\em ACM Transactions on Mathematical Software (TOMS)},
  30(3):381--388, 2004.

\bibitem{Amestoy96}
P.~R. Amestoy, T.~A. Davis, and I.~S. Duff.
\newblock An approximate minimum degree ordering algorithm.
\newblock {\em SIAM Journal on Matrix Analysis and Applications},
  17(4):886--905, 1996.

\bibitem{Koller09}
D.~Koller and N.~Friedman.
\newblock {\em Probabilistic Graphical Models: Principles and Techniques}.
\newblock MIT Press, 2009.

\bibitem{Bod93}
H.~L. Bodlaender.
\newblock A tourist guide through treewidth.
\newblock {\em Acta Cybernetica}, 11:1--23, 1993.

\bibitem{Bod05}
H.~L. Bodlaender.
\newblock Discovering treewidth.
\newblock In {\em Proceedings of the 31st international conference on Theory
  and Practice of Computer Science}, pages 1--16, 2005.

\bibitem{Bod06}
H.~L. Bodlaender.
\newblock Treewidth: Characterizations, applications, and computations.
\newblock In {\em Proceeding of the 32nd International Workshop on
  Graph-Theoretic Concepts in Computer Science}, pages 1--14, 2006.

\bibitem{BK07}
H.~L. Bodlaender and A.~M. C.~A. Koster.
\newblock Combinatorial optimization on graphs of bounded treewidth.
\newblock {\em The Computer Journal}, 51(3):255--269, 2007.

\bibitem{BP93}
J.~R.~S. Blair and B.~Peyton.
\newblock An introduction to chordal graphs and clique trees.
\newblock In A.~George, J.~R. Gilbert, and J.~W.~H. Liu, editors, {\em Graph
  Theory and Sparse Matrix Computation}, The IMA Volumes in Mathematics and its
  Applications, Volume 56, pages 1--29. Springer-Verlag, 1993.

\bibitem{Amir01}
E.~Amir.
\newblock Efficient approximation for triangulation of minimum treewidth.
\newblock In {\em Proceedings of the 17th Annual Conference on Uncertainty in
  Artificial Intelligence}, pages 7--15, 2001.

\bibitem{koster01}
A.~M. C.~A. Koster, H.~L. Bodlaender, and S.~P.~M. van Hoesel.
\newblock Treewidth: Computational experiments.
\newblock {\em Electronic Notes in Discrete Mathematics}, 8:54--57, 2001.

\bibitem{Berry03}
A.~Berry, P.~Heggernes, and G.~Simonet.
\newblock The minimum degree heuristic and the minimal triangulation process.
\newblock In H.~L. Bodlaender, editor, {\em Graph-Theoretic Concepts in
  Computer Science}, Lecture Notes in Computer Science, pages 58--70. Springer,
  2003.

\bibitem{Heg06}
P.~Heggernes.
\newblock Minimal triangulations of graphs: A survey.
\newblock {\em Discrete Mathematics}, 306(3):297--317, 2006.

\bibitem{WLCX11}
C.~Wang, T.~Liu, P.~Cui, and K.~Xu.
\newblock A note on treewidth in random graphs.
\newblock In {\em Proceeding of the 5th International Conference on
  Combinatorial Optimization and Applications}, pages 491--499, 2011.

\bibitem{Gao12}
Y.~Gao.
\newblock Treewidth of {Erd\H{o}s-R\'{e}nyi} random graphs, random intersection
  graphs, and scale-free random graphs.
\newblock {\em Discrete Applied Mathematics}, 160(4--5):566--578, 2012.

\bibitem{Adcock13_openmp}
A.~B. Adcock, B.~D. Sullivan, O.~R. Hernandez, and M.~W. Mahoney.
\newblock Evaluating {OpenMP} tasking at scale for the computation of graph
  hyperbolicity.
\newblock In {\em Proc. of the 9th IWOMP}, pages 71--83, 2013.

\bibitem{Adcock14_diss}
A.~B. Adcock.
\newblock {\em Characterizing, identifying, and using tree-like structure in
  social and information networks}.
\newblock PhD thesis, Stanford University, 2014.

\bibitem{Gro87}
M.~Gromov.
\newblock Hyperbolic groups.
\newblock In S.~M. Gersten, editor, {\em Essays in Group Theory}, Math. Sci.
  Res. Inst. Publ., 8, pages 75--263. Springer, 1987.

\bibitem{Alonso90}
J.~M. Alonso, T.~Brady, D.~Cooper, V.~Ferlini, M.~Lustig, M.~Mihalik,
  H.~Shapiro, and H.~Short.
\newblock Notes on word hyperbolic groups.
\newblock In E.~Ghys, A.~Haefliger, and A.~Verjovski, editors, {\em Group
  Theory from a Geometrical Viewpoint, ICTP Trieste Italy}, pages 3--63. World
  Scientific, 1991.

\bibitem{Jonckheere08}
E.~A. Jonckheere, P.~Lohsoonthorn, and F.~Bonahon.
\newblock Scaled {G}romov hyperbolic graphs.
\newblock {\em Journal of Graph Theory}, 57(2):157--180, 2008.

\bibitem{Jonckheere11}
E.~A. Jonckheere, P.~Lohsoonthorn, and F.~Ariaei.
\newblock Scaled {G}romov four-point condition for network graph curvature
  computation.
\newblock {\em Internet Mathematics}, 7(3):137--177, 2011.

\bibitem{CFHM13_IM}
W.~Chen, W.~Fang, G.~Hu, and M.~W. Mahoney.
\newblock On the hyperbolicity of small-world and tree-like random graphs.
\newblock {\em Internet Mathematics}, 9(4):434--491, 2013.
\newblock Also available at: arXiv:1201.1717.

\bibitem{VS14}
K.~Verbeek and S.~Suri.
\newblock Metric embedding, hyperbolic space, and social networks.
\newblock In {\em Proceedings of the 30th Annual Symposium on Computational
  Geometry}, pages 501--510, 2014.

\bibitem{BKM01}
G.~Brinkmann, J.~H. Koolen, and V.~Moulton.
\newblock On the hyperbolicity of chordal graphs.
\newblock {\em Annals of Combinatorics}, 5(1):61--69, 2001.

\bibitem{WZ11}
Y.~Wu and C.~Zhang.
\newblock Hyperbolicity and chordality of a graph.
\newblock {\em The Electronic Journal of Combinatorics}, 18(1):P43, 2011.

\bibitem{Dourisboure05}
Y.~Dourisboure and C.~Gavoille.
\newblock Tree-decompositions with bags of small diameter.
\newblock {\em Discrete Mathematics}, 307(16):2008--2029, 2007.

\bibitem{lokshtanov}
D.~Lokshtanov.
\newblock On the complexity of computing treelength.
\newblock In {\em Proceedings of the 32nd international conference on
  Mathematical Foundations of Computer Science}, pages 276--287, 2007.

\bibitem{GM09}
M.~Grohe and D.~Marx.
\newblock On tree width, bramble size, and expansion.
\newblock {\em Journal of Combinatorial Theory Series B}, 99(1):218--228, 2009.

\bibitem{KLNS12}
A.~Kosowski, B.~Li, N.~Nisse, and K.~Suchan.
\newblock $k$-chordal graphs: From cops and robber to compact routing via
  treewidth.
\newblock In {\em Proceedings of the 39th international colloquium conference
  on Automata, Languages, and Programming}, pages 610--622, 2012.

\bibitem{Dra13}
F.~F. Dragan.
\newblock Tree-like structures in graphs: A metric point of view.
\newblock In {\em Proceeding of the 39th International Workshop on
  Graph-Theoretic Concepts in Computer Science}, pages 1--4, 2013.

\bibitem{Ata16}
M.~Abu-Ata and F.~F. Dragan.
\newblock Metric tree-like structures in real-life networks: an empirical
  study.
\newblock {\em Networks}, 67(1):49--68, 2016.

\bibitem{abuata_diss}
M.~M. Abu-Ata.
\newblock {\em Tree-Like Structure in Graphs and Embeddability to Trees}.
\newblock PhD thesis, Kent State University, 2014.

\bibitem{ST08}
Y.~Shavitt and T.~Tankel.
\newblock Hyperbolic embedding of {I}nternet graph for distance estimation and
  overlay construction.
\newblock {\em IEEE/ACM Transactions on Networking}, 16(1):25--36, 2008.

\bibitem{Rombach14}
M.~P. Rombach, M.~A. Porter, J.~H. Fowler, and P.~J. Mucha.
\newblock Core-periphery structure in networks.
\newblock {\em SIAM Journal on Applied Mathematics}, 74(1):167--190, 2014.

\bibitem{Sei83}
S.~B. Seidman.
\newblock Network structure and minimum degree.
\newblock {\em Social Networks}, 5(3):269--287, 1983.

\bibitem{Alvarez-H05}
J.~Ignacio Alvarez-Hamelin, L.~Dall'Asta, A.~Barrat, and A.~Vespignani.
\newblock Large scale networks fingerprinting and visualization using the
  k-core decomposition.
\newblock In {\em Annual Advances in Neural Information Processing Systems 18:
  Proceedings of the 2005 Conference}, pages 41--50, 2006.

\bibitem{Alvarez-H08}
J.~Ignacio Alvarez-Hamelin, L.~Dall'Asta, A.~Barrat, and A.~Vespignani.
\newblock K-core decomposition of internet graphs: hierarchies, self-similarity
  and measurement biases.
\newblock {\em Networks and Heterogeneous Media}, 3(2):371--393, 2008.

\bibitem{HJMA08}
J.~Healy, J.~Janssen, E.~Milios, and W.~Aiello.
\newblock Characterization of graphs using degree cores.
\newblock In {\em WAW '08: Proceedings of the 6th Workshop on Algorithms and
  Models for the Web-Graph}, pages 137--148, 2008.

\bibitem{Pajek03}
V.~Batagelj and A.~Mrvar.
\newblock Pajek---analysis and visualization of large networks.
\newblock In {\em Proceedings of Graph Drawing}, pages 477--478, 2001.

\bibitem{CKCO11}
J.~Cheng, Y.~Ke, S.~Chu, and M.~T. Ozsu.
\newblock Efficient core decomposition in massive networks.
\newblock In {\em Proceedings of the 27th IEEE International Conference on Data
  Engineering}, pages 51--62, 2011.

\bibitem{Colomer13}
P.~Colomer-de Simon, A.~Serrano, M.~G. Beiro, J.~Ignacio Alvarez-Hamelin, and
  M.~Boguna.
\newblock Deciphering the global organization of clustering in real complex
  networks.
\newblock {\em Scientific Reports}, 3:2517, 2013.

\bibitem{Kitsak10}
M.~Kitsak, L.~K. Gallos, S.~Havlin, F.~Liljeros, L.~Muchnik, H.~E. Stanley, and
  H.~A. Makse.
\newblock Identification of influential spreaders in complex networks.
\newblock {\em Nature Physics}, 6(11):888--893, 2010.

\bibitem{Ugander12}
J.~Ugander, L.~Backstrom, C.~Marlow, and J.~Kleinberg.
\newblock Structural diversity in social contagion.
\newblock {\em Proceedings of the National Academy of Sciences},
  109(16):5962--5966, 2012.

\bibitem{RMKBGA09}
V.~Ramasubramanian, D.~Malkhi, F.~Kuhn, M.~Balakrishnan, A.~Gupta, and
  A.~Akella.
\newblock On the treeness of internet latency and bandwidth.
\newblock In {\em Proceedings of the 2009 ACM SIGMETRICS International
  Conference on Measurement and modeling of computer systems}, pages 61--72,
  2009.

\bibitem{MSL11}
F.~de~Montgolfier, M.~Soto, and L.~Viennot.
\newblock Treewidth and hyperbolicity of the internet.
\newblock In {\em Proceedings of the 10th IEEE International Symposium on
  Network Computing and Applications (NCA)}, pages 25--32, 2011.

\bibitem{MAIK14}
T.~Maehara, T.~Akiba, Y.~Iwata, and K.~Kawarabayashi.
\newblock Computing personalized {PageRank} quickly by exploiting graph
  structures.
\newblock {\em Proceedings of the VLDB Endowment}, 7:1023--1034, 2014.

\bibitem{CM93}
B.~Courcelle and M.~Mosbah.
\newblock Monadic second-order evaluations on tree-decomposable graphs.
\newblock {\em Theoretical Computer Science}, 109(1–2):49--82, 1993.

\bibitem{Percus08}
A.~G. Percus, G.~Istrate, B.~Goncalves, R.~Z. Sumi, and S.~Boettcher.
\newblock The peculiar phase structure of random graph bisection.
\newblock {\em Journal of Mathematical Physics}, 49(12):125219, 2008.

\bibitem{ChungLu:2006}
F.R.K. Chung and L.~Lu.
\newblock {\em Complex Graphs and Networks}, volume 107 of {\em CBMS Regional
  Conference Series in Mathematics}.
\newblock American Mathematical Society, 2006.

\bibitem{snap-index}
{Supporting website}.
\newblock \url{http://snap.stanford.edu/data/index.html}.

\bibitem{Traud12}
A.~L. Traud, P.~J. Mucha, and M.~A. Porter.
\newblock Social structure of {F}acebook networks.
\newblock {\em Physica A}, 391:4165--4180, 2012.

\bibitem{adamic05election}
L.A. Adamic and N.~Glance.
\newblock The political blogosphere and the 2004 {U.S.} election: divided they
  blog.
\newblock In {\em LinkKDD '05: Proceedings of the 3rd International Workshop on
  Link Discovery}, pages 36--43, 2005.

\bibitem{watts98collective}
D.J. Watts and S.H. Strogatz.
\newblock Collective dynamics of small-world networks.
\newblock {\em Nature}, 393:440--442, 1998.

\bibitem{Gansner00}
E.~R. Gansner and S.~C. North.
\newblock An open graph visualization system and its applications to software
  engineering.
\newblock {\em Software—Practice and Experience}, 30(11):1203--1233, 2000.

\bibitem{Davis11}
T.~A. Davis and Y.~Hu.
\newblock {The University of Florida Sparse Matrix Collection}.
\newblock {\em ACM Transactions on Mathematical Software (TOMS)},
  38(1):1:1--1:25, 2011.

\bibitem{Malisiewicz10}
T.~Malisiewicz.
\newblock Open source code: Graphviz matlab magic.
\newblock \url{https://github.com/quantombone/graphviz_matlab_magic}, May 2010.

\bibitem{erdos60random}
P.~Erd\H{o}s and A.~R\'{e}nyi.
\newblock On the evolution of random graphs.
\newblock {\em Publ. Math. Inst. Hungar. Acad. Sci.}, 5:17--61, 1960.

\bibitem{bollobas85_rg}
B.~Bollob\'{a}s.
\newblock {\em Random Graphs}.
\newblock Academic Press, London, 1985.

\bibitem{andersen06local}
R.~Andersen, F.R.K. Chung, and K.~Lang.
\newblock Local graph partitioning using {PageRank} vectors.
\newblock In {\em FOCS '06: Proceedings of the 47th Annual IEEE Symposium on
  Foundations of Computer Science}, pages 475--486, 2006.

\bibitem{Muller12_TR}
R.~Diestel andM. M{\"{u}}ller.
\newblock Connected tree-width.
\newblock Technical report.
\newblock Preprint: arXiv:arXiv:1211.7353 (2012).

\bibitem{DL07}
F.~F. Dragan and I.~Lomonosov.
\newblock On compact and efficient routing in certain graph classes.
\newblock {\em Discrete Applied Mathematics}, 155(11):1458--1470, 2007.

\bibitem{CDEHV08}
V.~Chepoi, F.~Dragan, B.~Estellon, M.~Habib, and Y.~Vax\`{e}s.
\newblock Diameters, centers, and approximating trees of $\delta$-hyperbolic
  geodesic spaces and graphs.
\newblock In {\em Proceedings of the 24th Annual Symposium on Computational
  Geometry}, pages 59--68, 2008.

\bibitem{ReidlSullivan14}
F.~Reidl and B.~Sullivan.
\newblock Personal communication, 2014.

\bibitem{Bellenbaum_12}
P.~Bellenbaum and R.~Diestel.
\newblock Two short proofs concerning tree-decompositions.
\newblock {\em Combinatorics, Probability, and Computing}, 11:541--547, 2002.

\bibitem{GS09}
A.~Georgakopoulos and P.~Sprussel.
\newblock Geodesic topological cycles in locally finite graphs.
\newblock {\em The Electronic Journal of Combinatorics}, 16(1):R144, 2009.

\end{thebibliography}

\end{small}

\end{document}